
\typeout{KR2022 Instructions for Authors}


\documentclass{article}
\pdfpagewidth=8.5in
\pdfpageheight=11in

\usepackage{kr}

\usepackage{times}
\usepackage{soul}
\usepackage{url}
\usepackage[hidelinks]{hyperref}
\usepackage[utf8]{inputenc}
\usepackage[small]{caption}
\usepackage{graphicx}
\usepackage{amssymb,amsmath}
\usepackage{amsthm}
\usepackage{booktabs}
\usepackage{algorithm}
\usepackage{algorithmic}
\urlstyle{same}

\usepackage{apxproof}

\usepackage{pifont} 

\usepackage{tikz}
\usetikzlibrary{automata}
\usetikzlibrary{positioning,arrows,fit,calc}
\usetikzlibrary{decorations.pathreplacing}
\usetikzlibrary{matrix}

\renewcommand{\qed}{\hfill\ding{113}}

\newcommand{\nb}[1]{}
\newcommand{\nz}[1]{}



\pgfdeclarelayer{background}
\pgfdeclarelayer{foreground}
\pgfsetlayers{background,main,foreground}

\newcommand{\avec}[1]{\boldsymbol{#1}}

\newcommand{\LTL}{\textsl{LTL}}

\newcommand{\U}{\mathbin{\mathsf{U}}}

\newcommand{\Qp}{\mathcal{Q}_p}
\newcommand{\Qpi}{\mathcal{P}^\sigma}

\newcommand{\ind}{\mathit{ind}}

\newcommand{\dep}{\textit{tdp}}
\newcommand{\rdep}{\textit{rdp}}
\newcommand{\suc}{\textit{suc}}
\newcommand{\tp}{\textit{tp}}

\newcommand{\D}{\mathcal{D}}

\newcommand{\I}{\mathcal{I}}

\newcommand{\Diamondw}{\Diamond_{\!r}}
\newcommand{\nxt}{{\ensuremath\raisebox{0.25ex}{\text{\scriptsize$\bigcirc$}}}}

\renewcommand{\L}{{\boldsymbol{L}}}
\newcommand{\op}{{\boldsymbol{o}}}

\newcommand{\var}{\textit{var}}
\newcommand{\len}{\max}

\newcommand{\A}{\mathcal A}
\newcommand{\q}{{\avec{q}}}
\newcommand{\el}{\avec{l}}
\newcommand{\s}{\avec{s}}
\renewcommand{\r}{\avec{r}}

\newcommand{\sig}{\mathit{sig}}

\newcommand{\qw}{\bar{\q}}


\newtheorem{theorem}{Theorem}

\newtheorem{example}[theorem]{Example}

\newtheorem{lemma}[theorem]{Lemma}

\newtheorem{corollary}[theorem]{Corollary}



\pdfinfo{
/TemplateVersion (KR.2022.0)
}

\title{
	Unique Characterisability and Learnability of Temporal Instance Queries}

\author{%
	M.~Fortin$^1$\and
	B.~Konev$^1$\and
	V.~Ryzhikov$^2$\and
	Y.~Savateev$^{2}$\and
	F.~Wolter$^1$\and
	M.~Zakharyaschev$^2$ \\
	\affiliations
	$^1$Department of Computer Science, University of Liverpool, UK\\
	$^2$Department of Computer Science, Birkbeck, University of London, UK\\
	\emails
	\{marie.fortin,boris.konev,wolter@liverpool.ac.uk\}, \{vlad,yury,michael\}@dcs.bbk.ac.uk}

\begin{document}
	
	\maketitle
	
\begin{abstract}
We aim to determine which temporal instance queries can be uniquely characterised by a (polynomial-size) set of positive and negative temporal data examples. We start by considering queries formulated in fragments of propositional linear temporal logic \LTL{} that correspond to conjunctive queries (CQs) or extensions thereof induced by the until operator. Not all of these queries admit polynomial characterisations but by restricting them further to path-shaped queries we identify natural classes that do.
We then investigate how far the obtained characterisations can be lifted to temporal knowledge graphs queried by 2D languages combining \LTL{} with concepts in description logics $\mathcal{EL}$ or $\mathcal{ELI}$ (i.e., tree-shaped CQs). While temporal operators in the scope of description logic constructors can destroy polynomial characterisability, we obtain general transfer results for the case when description logic constructors are within the scope of temporal operators. Finally, we   apply our characterisations to establish (polynomial) learnability of temporal instance queries using membership queries in the active learning framework.
\end{abstract}


\section{Introduction}\label{intro}

Constructing queries or, more generally, logical concepts describing individuals of interest, can be difficult. Providing support to a user to cope with this problem has been a major research topic in databases, logic, and knowledge representation. For instance, in reverse engineering of database queries and concept descriptions~\cite{martins2019reverse,DBLP:journals/ml/LehmannH10,DBLP:conf/kr/JungLPW20}, one aims to identify a query using a set of positively and negatively labelled examples of answers and non-answers, respectively; and in active learning approaches, one aims to identify a query by asking an oracle (e.g., domain specialist) whether an example is an answer or a non-answer to the query~\cite{DBLP:journals/ml/AngluinFP92,DBLP:conf/ijcai/FunkJL21,DBLP:conf/icdt/CateD21}.
	
Recently, the \emph{unique characterisation} of a query by a finite (ideally, polynomial-size) set of positive and negative example answers has been identified as a fundamental link between queries and data~\cite{DBLP:conf/icdt/CateD21}. Namely, we say that a query $\q$ \emph{fits} a pair $E=(E^{+},E^{-})$ of sets $E^{+}$ and $E^{-}$ of pointed databases $(\mathcal{D},a)$ if $\mathcal{D}\models \q(a)$ for $(\mathcal{D},a) \in E^{+}$, and $\mathcal{D}\not\models \q(a)$ for $(\mathcal{D},a) \in E^{-}$. Then $E$ \emph{uniquely characterises} $\q$ within a class $\mathcal{Q}$ of queries if $\q$ is the only (up to equivalence) query in $\mathcal{Q}$ that fits $E$.

Unique (polynomial) characterisations can be used to illustrate, explain, and construct queries. They are also a `non-proce\-dural' necessary condition for (polynomial) learnability using membership queries in Angluin's~(\citeyear{DBLP:journals/ml/Angluin87}) framework of active learning,  where membership queries to the oracle take the form `does $\mathcal{D}\models \q(a)$ hold?'\!. It is shown by ten Cate and Dalmau~(\citeyear{DBLP:conf/icdt/CateD21}) that, for classes of conjunctive queries (CQs), it is often a sufficient condition as well.

In many applications, queries are required to capture the temporal evolution of individuals, making their formulation even harder. The aim of this paper is to start an investigation of the (polynomial) characterisability of temporal instance queries.
We first consider one-dimensional data instances of the form $(\delta_{0},\dots,\delta_{n})$, where $\delta_{i}$ is the set of atomic propositions that are true at timestamp $i$, describing the temporal behaviour of a single individual, and queries formulated in fragments of propositional linear temporal logic \LTL{}.
Although rather basic as a temporal data model, this restriction allows us to focus on the purely temporal aspect of unique characterisability. We then generalise our results, where possible, to standard two-dimensional temporal data instances, in which the $\delta_{i}$ are replaced by non-temporal data instances and queries are obtained by combining fragments of \LTL{} with $\mathcal{ELI}$-concept queries (or tree-shaped CQs), thereby combining a well established formalism for accessing temporal data~\cite{Chomicki2018}
with the basic concept descriptions for tractable data access from description logic~\cite{DL-Textbook}.


Our initial observation is that already very primitive temporal queries are not uniquely characterisable. For example\footnote{For detailed explanations and omitted proofs, the reader is referred to the Appendix.}\!, consider the query $\q = \Diamondw (A\wedge B)$ with the operator $\Diamondw$ `now or later' (interpreted by $\leq$ over linearly ordered timestamps). By the pigeonhole principle, no finite example set $E$ can distinguish $\q$ from a query $\q' = \Diamondw (A \land (\Diamondw B \land \Diamondw (A \land \dots)))$ with sufficiently many alternating $A$ and $B$. Similarly, the query $\q = \nxt A$ with the `next time' operator $\nxt$ is not distinguishable by a finite example set from $\q' = (\nxt \dots\nxt A) \U A$ with the strict `until' operator $\U$ and sufficiently many $\nxt$ on its left-hand side.
	

Aiming to identify natural and useful classes of temporal queries enjoying (polynomial) characterisability, in this paper we consider the conjunctive fragment of \LTL.
To begin with, we focus on two classes of path CQs: the class $\mathcal{Q}_{p}[\nxt,\Diamondw]$ of queries of the form
	\begin{equation}\label{dnpath}
		\q = \rho_0 \land \op_1 (\rho_1 \land \op_2 (\rho_2 \land \dots \land \op_n \rho_n) ),
\end{equation}
where $\op_i \in \{\nxt, \Diamondw\}$ and $\rho_i$ is a conjunction of atomic propositions, and also  the class $\Qp^\sigma[\U]$ of $\U$-queries of the form
	\begin{equation}\label{upath}
		\q = \rho_0 \land (\lambda_1 \U (\rho_1 \land ( \lambda_2 \U ( \dots (\lambda_n \U \rho_n) \dots )))),
	\end{equation}
	where $\lambda_i$ is a conjunction of atoms or $\bot$. The superscript $\sigma$ in $\Qp^\sigma[\U]$ indicates that queries are formulated in a finite signature $\sigma$, a condition required
because of the universal quantification implicit in $\U$.
Our first main result is a syntactic criterion of (polynomial) characterisability of $\mathcal{Q}_{p}[\nxt,\Diamondw]$-queries. In fact, it turns out that the query $\Diamond_{r}(A \wedge B)$ mentioned above epitomises the cause of non-characterisability in $\mathcal{Q}_{p}[\nxt,\Diamondw]$.
It follows, in particular, that the restriction $\mathcal{Q}_{p}[\nxt,\Diamond]$ of
$\mathcal{Q}_{p}[\nxt,\Diamondw]$ to queries with $\nxt$ and strict eventuality $\Diamond=\nxt\Diamondw$ is polynomially characterisable.
Our second main result is that all $\Qp^\sigma[\U]$-queries
with $\subseteq$-incomparable $\lambda_{i}$ and $\rho_{i}$, for each $i$, are polynomially characterisable within $\Qp^\sigma[\U]$.
Although we show that all $\Qp^\sigma[\U]$-queries are exponentially characterisable, it remains open whether they are polynomially characterisable in $\Qp^\sigma[\U]$.

The essential property that distinguishes $\mathcal{Q}_{p}[\nxt,\Diamondw]$ and $\Qp^\sigma[\U]$ from other queries is that they do not admit temporal branching as, for instance, in $\Diamond A \wedge \Diamond B$. In fact, we show that even within the class of queries using only $\land$ and $\Diamond$ and with a bound on the number of branches, not all queries are polynomially characterisable. A first step towards positive results covering non-path queries is made for the case of queries in which all branches are of equal length.

Our next aim is to generalise the obtained results to 2D temporal queries combining $\LTL{}$ with the description logic constructor $\exists P$ of $\mathcal{ELI}$. Our first main result is negative: even queries of the form
$
\exists P.\q_{1} \wedge \cdots \wedge \exists P.\q_{n}
$, in which $\q_{i}\in \mathcal{Q}_{p}[\nxt,\Diamond]$, are not polynomially characterisable.
The situation changes drastically, however, if we consider queries of the form \eqref{dnpath} or \eqref{upath}, in which $\rho_{i}$ and $\lambda_{i}$ are $\mathcal{ELI}$-queries. Indeed, we generalise our polynomial characterisability results for $\mathcal{Q}_{p}[\nxt,\Diamondw]$ and $\mathcal{Q}_{p}^{\sigma}[\U]$ to such queries using recent results on the computation of frontiers in the lattice of $\mathcal{ELI}$-queries~\cite{DBLP:conf/icdt/CateD21} and proving a new result on split partners in the lattice of $\mathcal{EL}$-queries (where $\mathcal{EL}$ denotes $\mathcal{ELI}$ without inverse roles).
	
Finally, we discuss applications of our results to learning temporal instance queries using membership queries of the form `does $\mathcal{D}\models\q$ hold?'\!.
As we always construct example sets effectively, our unique (exponential) characterisability results imply (exponential-time) learnability with membership queries. Obtaining polynomial-time learnability from
polynomial characterisations is more challenging. A main result here is that $\mathcal{Q}_{p}[\nxt,\Diamondw]$ with $\mathcal{ELI}$-queries is polynomial-time learnable with membership queries, assuming the learner is given the target query size in advance.



\section{Related Work}
Our contribution is closely related to work on active learnability of formal languages and on  learning temporal logic formulas interpreted over finite and infinite traces.
It is also related to learning database queries and other formal expressions for accessing data. In the former area, the seminal paper by Angluin~(\citeyear{DBLP:journals/iandc/Angluin87}) has given rise to a large body of work on active learning of regular languages or variations, for example,~\cite{shahbaz2009inferring,aarts2010learning,DBLP:journals/fac/CasselHJS16,DBLP:conf/dagstuhl/HowarS16}. This work has mainly focused on learning various types of finite state machines or automata using a combination of membership queries with other powerful types of queries such as equivalence queries. The use of two or more types of queries is motivated by the fact that otherwise one cannot efficiently learn a wide variety of important languages, including regular languages. In fact, the main difference between this work and our contribution is that we focus on queries for which the corresponding formal languages form only a small subset of the regular languages and it is this restriction that enables us to focus on characterisability and learnability with membership queries.

Rather surprisingly, there has hardly been any work on active learning of temporal formulas over finite or infinite traces; we refer the reader to~\cite{DBLP:conf/aips/CamachoM19}, also for a discussion of the relationship between learning automata and \LTL{}-formulas. In contrast, passive learning of \LTL{}-formulas has recently received significant attention; see~\cite{lemieux2015general,DBLP:conf/fmcad/NeiderG18,DBLP:journals/corr/abs-2102-00876,dl202l} and, in the context of explainable AI, also~\cite{DBLP:conf/aips/CamachoM19} for an overview.

In the database and KR communities, there has been extensive work on identifying queries and concept descriptions from data examples. For instance, in reverse engineering of queries, the goal is typically to decide whether there is a query that fits (or separates) a set of positive and negative examples. Relevant work under the closed world assumption include~\cite{DBLP:journals/tods/ArenasD16,DBLP:conf/icdt/Barcelo017} and under the open world assumption~\cite{GuJuSa-IJCAI18,DBLP:conf/ijcai/FunkJLPW19}.
Related work on active learning not yet discussed include the identification of
$\mathcal{EL}$-queries~\cite{DBLP:conf/ijcai/FunkJL21} and ontologies~\cite{DBLP:conf/aaai/KonevOW16,DBLP:journals/jmlr/KonevLOW17}, and of schema-mappings~\cite{DBLP:journals/tods/CateDK13,DBLP:conf/pods/CateK0T18}. Again this work differs from our contribution as it focuses on learning using membership and equivalence queries rather than only the former.
The use of unique characterisations to explain and construct schema mappings has been promoted and investigated by Kolaitis~(\citeyear{kolaitis:LIPIcs:2011:3359}) and Alexe et al.~(\citeyear{DBLP:journals/tods/AlexeCKT11}).

Combining \LTL{} and description logics for temporal conceptual modelling and data access has a long tradition~\cite{DBLP:conf/time/LutzWZ08,DBLP:conf/time/ArtaleKKRWZ17}. For querying purposes,  sometimes description logic concepts have been replaced by general CQs. Our restriction to $\mathcal{ELI}$-concepts instead of general CQs is motivated by results of~\cite{DBLP:conf/icdt/CateD21} showing that only CQs that are acyclic modulo cycles through the answer variables are polynomially characterisable within the class of CQs. Hence very strong acyclicity conditions are needed to ensure polynomial characterisability. We conjecture that our results can be extended to this class.

The class of queries in which no $\exists P$ is within the scope of temporal operators was first introduced by~\cite{DBLP:journals/ws/BaaderBL15,DBLP:conf/ijcai/BorgwardtT15} in the context of monitoring applications. The lcs and msc in temporal DLs are considered by Tirtarasa and Turhan~\citeyear{TiTu-SAC2022}.


\section{Preliminaries}\label{Sec:prelims}

By a \emph{signature} we mean any finite set $\sigma \ne \emptyset$ of \emph{atomic concepts} $A,B,C,\dots$ representing observations, measurements, events, etc. A $\sigma$-\emph{data instance} is any finite sequence $\D = (\delta_0, \dots, \delta_n)$ with $\delta_i \subseteq \sigma$, saying that $A \in \delta_i$ happened at moment $i$. The \emph{length} of $\D$ is $\len(\D) = n$ and the \emph{size} of $\D$ is $|\D| = \Sigma_{i\le n} |\delta_i|$. We do not distinguish between $\D$ and its variants of the form $(\delta_0,\dots,\delta_n, \emptyset, \dots, \emptyset)$.

We access data by means of \emph{queries}, $\q$, constructed from atoms, $\bot$ and $\top$ using $\land$ and the temporal operators $\nxt$, $\Diamond$, $\Diamondw$  and $\U$. The set of atomic concepts occurring in $\q$ is denoted by $\sig(\q)$. The set of queries that only use the operators from $\Phi \subseteq \{\nxt, \Diamond, \Diamondw, \U\}$ is denoted by $\mathcal{Q}[\Phi]$; $\mathcal{Q}^\sigma[\Phi]$ is the restriction of $\mathcal{Q}[\Phi]$ to a signature $\sigma$. The \emph{size} $|\q|$ of $\q$ is the number of symbols in $\q$, and the \emph{temporal depth} $\dep(\q)$ of $\q$ is the maximum number of nested temporal operators in $\q$.

$\mathcal{Q}[\nxt, \Diamond, \Diamondw]$-queries can be equivalently defined as \emph{tree-shaped} conjunctive queries (CQs) with the binary predicates $\suc$, $<$, $\leq$ over $\mathbb N$, and atomic concepts as unary predicates. Each such CQ is a set $Q(t_{0})$ of assertions of the form $A(t)$, $\suc(t,t')$, $t<t'$, and $t\leq t'$, with a distinguished variable $t_0$, such that, for every variable $t$ in $Q(t_{0})$, there exists exactly one path from $t_{0}$ to $t$ along the binary predicates $\suc$, $<$, $\leq$.

The set of $\mathcal{Q}[\nxt, \Diamond, \Diamondw]$-queries with \emph{path-shaped} CQ counterparts is denoted by $\Qp[\nxt,\Diamond, \Diamondw]$. Such queries $\q$ take the form~\eqref{dnpath}, where $\op_i \in \{\nxt, \Diamond, \Diamondw \}$ and $\rho_i$ is a conjunction of atoms (the empty conjunction is $\top$).
%
%
Similarly, $\Qp[\U]$-queries take the form~\eqref{upath}.


Given a data instance $\D = (\delta_0,\dots,\delta_n)$, the \emph{truth-relation} $\D,\ell \models \q$, for $\ell < \omega$, is defined as follows:
\begin{align*}
& \D,\ell \models A \ \text{ iff } \ A \in \delta_\ell, \qquad \D,\ell \models \top, \qquad \D,\ell \not\models \bot,\\
& \D,\ell \models \q_1 \land \q_2 \text{ iff } \D,\ell \models \q_1 \text{ and } \D,\ell \models \q_2,\\
& \D,\ell \models \nxt \q \ \text{ iff } \ \D,\ell+1 \models \q,\\
& \D,\ell \models \Diamond \q \ \text{ iff } \ \D,m \models \q, \text{ for some $m > \ell$}, \\
& \D,\ell \models \Diamondw \q \ \text{ iff } \ \D,m \models \q, \text{ for some $m \ge \ell$}, \\
& \D,\ell \models \q_1 \U \q_2 \ \text{ iff \ \ there is $m > \ell$ such that } \D,m \models \q_2\\
& \hspace*{2.2cm} \text{ and } \D,k \models \q_1, \text{ for all $k$ with $\ell < k < m$}.
\end{align*}
Note that $\D,n \models \Diamond \top \land \nxt \top \land (\q \U \top)$ as $(\delta_0,\dots,\delta_n,\emptyset)$ is a variant of $\D$.
We write $\q \models \q'$ if $\D,\ell \models \q$ implies \mbox{$\D,\ell \models \q'$} for any $\D$ and $\ell$. If $\q \models \q'$ and $\q' \models \q$, we call $\q$ and $\q'$ \emph{equivalent} and write $\q \equiv \q'$. Since $\nxt \q \equiv \bot \U \q$, $\Diamond \q \equiv \top \U \q$ and $\Diamond \q \equiv \nxt\Diamondw \q$, one can assume that $\mathcal{Q}[\nxt,\Diamond] \subseteq \mathcal{Q}[\U]$, $\mathcal{Q}[\Diamond] \subseteq \mathcal{Q}[\nxt, \Diamondw]$ and $\mathcal{Q}[\nxt,\Diamondw] = \mathcal{Q}[\nxt, \Diamondw,\Diamond]$.


\section{Unique Characterisability}
\label{sec:chara}

An \emph{example set} is a pair $E = (E^+,E^-)$ with finite sets $E^+$ and $E^-$ of data instances  that are called \emph{positive} and \emph{negative examples}, respectively. A query $\q$ \emph{fits} $E$ if $\D^+,0 \models \q$ and $\D^-,0 \not\models \q$, for all $\D^+ \in E^+$ and $\D^- \in E^-$. We say that $E$ \emph{uniquely} \emph{characterises} $\q$ within a class $\mathcal{Q}$ of queries if $\q$ fits $E$ and $\q \equiv \q'$ for any $\q' \in \mathcal{Q}$ that fits $E$. If all $\q \in \mathcal{Q}$ are characterised by some $E$ within $\mathcal{Q}' \supseteq \mathcal{Q}$, we call $\mathcal{Q}$ \emph{uniquely} \emph{characterisable} within $\mathcal{Q}'$. Further, $\mathcal{Q}$ is \emph{polynomially characterisable} within $\mathcal{Q}' \supseteq \mathcal{Q}$ if there is a polynomial $f$ such that every $\q \in \mathcal{Q}$ is characterised within $\mathcal{Q}'$ by some $E$ of size $|E| \le f(|\q|)$, where $|E| = \Sigma_{\D\in (E^+ \cup E^-)} |\D|$. Let $\mathcal{Q}^{n}$ be the set of queries in $\mathcal{Q}$ of size at most $n$. We say that $\mathcal{Q}$ is \emph{polynomially characterisable for bounded query size} if there is a polynomial $f$ such that every $\q \in \mathcal{Q}^{n}$ is characterised by some $E$ of size $\le f(n)$ within $\mathcal{Q}^{n}$.

Observe that (polynomial) characterisability is anti-monotone:
if a query $\q$ is (polynomially) characterisable within $\mathcal{Q}$ and $\mathcal{Q}'\subseteq\mathcal{Q}$, then $\q$ is (polynomially)
characterisable within $\mathcal{Q}'$. In counterexamples to characterisability,
we therefore only provide the smallest natural class of queries within which non-characterisability holds.
The following examples illustrate (non-)characterisability within the classes $\mathcal{Q}_{p}[\Diamondw]$ and $\mathcal{Q}_{p}[\nxt,\Diamondw]$.
\begin{example}\label{EX:lone}\em
	$(i)$ Recall from Section~\ref{intro} that \mbox{$\Diamondw (A \land B)$} is not uniquely characterisable within $\Qp[\Diamondw]$. The same argument shows non-characterisability of \mbox{$\Diamond (A \land B)$} within $\Qp[\Diamondw,\Diamond]$. On the other hand, the query $\Diamond (A \wedge B)$ is characterised within $\mathcal{Q}_{p}[\Diamond,\nxt]$ by the example set with positive examples $(\emptyset,\{A,B\})$ and $(\emptyset,\emptyset,\{A,B\})$ and negative examples $(\emptyset,\{A\})$ and $(\emptyset,\{B\})$.
	
	$(ii)$ The conjunction of atoms does not always lead to non-characterisability within classes of queries with $\Diamondw$.
	For example, $\q = \Diamondw (A \wedge \nxt (A\wedge B))$ is characterised within $\mathcal{Q}_{p}[\nxt,\Diamondw]$ by $E=(E^+,E^-)$ in which $E^+$ contains two data instances $(\{A\},\{A,B\})$ and $(\emptyset,\{A\}, \{A,B\})$ and $E^-$ also two instances:
\begin{align*}
& (\emptyset,\emptyset,\{A, B\}), \quad 
(\emptyset,\{A\},\{A\},\{B\},\{A, B\}).
\end{align*}
	The intuition here is that some instances from $E^{-}$ have to satisfy the query $\Diamondw(A \wedge \nxt(B \wedge \Diamondw (A \land B)))$ as well as the query $\Diamondw (A \wedge \nxt(A \wedge \Diamondw (A \land B)))$.
	
	$(iii)$ While the query $\Diamondw (A \wedge B)$ from $(i)$ is not characterisable,
	there is a polynomial $f$ such that, for all $n \in \mathbb{N}$, it is
	characterisable within $\Qp^n[\nxt,\Diamondw]$ by some $E_n$ of size $\le f(n)$.
	Namely, we take $E^+=\{(\{A,B\}), (\emptyset,\{A,B\})\}$ and
	$E^-=\{(\underbrace{ \{A\},\{B\},\ldots,\{A\},\{B\} }_{n\text{ times}})\}$.
\end{example}
Observe that one can always separate $\q\in \mathcal{Q}[\nxt,\Diamondw]$
from any other $\q'\in \mathcal{Q}[\nxt,\Diamondw]$ with $\textit{sig}(\q') \supsetneq \text{sig}(\q) = \sigma$ using the positive example $(\sigma,\dots,\sigma)$ with $\dep(\q) +1$-many copies of $\sigma$. One can therefore focus on characterisability within the relevant class of queries over the same signature as the input query. However, this is not the case for $\mathcal{Q}[\U]$:
\begin{example}
\label{ex:second}\em
The query $\q = \bot \U A \equiv \nxt A$ is not uniquely characterisable within $\mathcal{Q}_{p}[\U]$. Indeed, suppose $\q$ fits $E$ and $\sigma$ comprises all atoms occurring in $E$. Then \mbox{$\mathcal{D},0 \models C \U A$} iff
$\mathcal{D},0\models \nxt A$, for all $\mathcal{D}$ in $E$ and $C\not\in\sigma$, and
so $E$ does not characterise $\q$. On the other hand, for the signature $\sigma = \{A,B\}$, the query $\q$ is characterised within $\mathcal{Q}^\sigma_{p}[\U]$ by the example set  $(E^+,E^-)$ in which  $E^+ = \{(\emptyset, \{A\})\}$ and $E^-= \{(\sigma, \{B\}, \{A\})\}$  
as $A \U A \equiv (A\land B) \U A \equiv \nxt A$.
\end{example}

As noted in Section~\ref{intro}, $\bot \U A$ is not uniquely characterisable within $\mathcal{Q}^{\{A\}}[\U]$ because of nested $\U$-operators on the left-hand side of $\U$. This observation prompts us to consider the subclass $\mathcal{Q}_-^\sigma[\U]$ of $\mathcal{Q}^\sigma[\U]$-queries $\q$ in which any subquery $\q' \U \q''$ does not contain occurrences of $\U$ in $\q'$. Note that $\Qp^\sigma[\U] \subseteq \mathcal{Q}_-^\sigma[\U]$.
We show that $\mathcal{Q}_-^\sigma[\U]$ is uniquely characterisable. To simplify notation, we give $\sigma$-data instances as \emph{words} over the alphabet $2^\sigma$ using the standard notation of regular languages. Instead of $\D,0 \models \q$ we simply write $\D \models \q$. By the semantics of $\U$, for any $\q \in \mathcal{Q}_-^\sigma[\U]$, we have
\begin{align}\label{depth}
&\sigma^{d} \not\models \q \ \text{ for $d \le \dep(\q)$}, \quad \sigma^{d} \models \q \ \text{ for $d > \dep(\q)$}
\end{align}
where $\sigma^d$ is a word with $d$-many $\sigma$.
Note also that there are finitely-many, say $N_d < \omega$, pairwise non-equivalent queries of any depth $d < \omega$ in $\mathcal{Q}_-^\sigma(\U)$.

\begin{lemma}\label{gen-size}
If $\q,\q' \in \mathcal{Q}_-^\sigma[\U]$ are of depth $d$ and $\q \not\models \q'$, then there is  $\D$ such that $\len(\D) \le N_d$, $\D\models \q$ and $\D\not\models \q'$.
\end{lemma}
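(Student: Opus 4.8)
The plan is to take a \emph{minimal-length} witness for $\q\not\models\q'$ and shorten it by a pumping argument based on the ``types'' of suffixes, where the decisive point is that the number of types realisable \emph{along a single word} is bounded by $N_d$ rather than by the a priori bound $2^{N_d}$.

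Since $\q\not\models\q'$, fix some $\D=(\delta_0,\dots,\delta_n)$ with $\D\models\q$ and $\D\not\models\q'$ of minimal length $n=\len(\D)$. Write $\D_{\ge\ell}=(\delta_\ell,\dots,\delta_n)$ and $\D_{<i}=(\delta_0,\dots,\delta_{i-1})$, and for a word $w$ let $\tp_d(w)$ be the set of equivalence classes of queries $\psi\in\mathcal{Q}_-^\sigma[\U]$ of depth at most $d$ with $w\models\psi$; abbreviate $\tp_d(\ell)=\tp_d(\D_{\ge\ell})$. The whole argument rests on a \emph{congruence lemma}: if $\tp_d(v)=\tp_d(v')$, then $\tp_d(u\,v)=\tp_d(u\,v')$ for every prefix $u$. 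Granting this, if two positions $i<j$ satisfied $\tp_d(i)=\tp_d(j)$, then splicing out the middle block, i.e.\ replacing the suffix $\D_{\ge i}$ by $\D_{\ge j}$ to form $\D'=\D_{<i}\cdot\D_{\ge j}$, would give $\tp_d(\D')=\tp_d(\D)$; since $\q,\q'$ have depth $d$, $\D'$ would still satisfy $\q$ and refute $\q'$ while being strictly shorter, contradicting minimality. Hence the types $\tp_d(0),\dots,\tp_d(n)$ are pairwise distinct.

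I would prove the congruence lemma by induction on $|u|$, peeling off one letter at a time, and I expect this to be the main obstacle. Writing any $\psi$ of depth at most $d$ as $\rho\land\bigwedge_s(\lambda_s\U\psi_s)$ with $\rho,\lambda_s$ conjunctions of atoms and $\dep(\psi_s)\le d-1$ — here the $\mathcal{Q}_-$-restriction that left-hand arguments of $\U$ are $\U$-free is exactly what makes $\lambda_s$ propositional — evaluating $\lambda_s\U\psi_s$ at position $0$ of $a\,u_1v$ reduces to evaluating, at position $0$ of $u_1v$, the Boolean combination $\psi_s\lor(\lambda_s\land(\lambda_s\U\psi_s))$ of the depth-at-most-$d$ queries $\psi_s$, $\lambda_s$ and $\lambda_s\U\psi_s$. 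A disjunction surfaces at this boundary even though the class contains none; the resolution is that the inductive hypothesis gives $\tp_d(u_1v)=\tp_d(u_1v')$, and equality of \emph{full} types is preserved under arbitrary Boolean combinations. This yields the same truth value for $\lambda_s\U\psi_s$ on both sides, and hence $\tp_d(a\,u_1v)=\tp_d(a\,u_1v')$. Edge cases such as $\lambda_s=\bot$ or $\psi\equiv\bot$ collapse the relevant formula and are harmless.

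Finally I would bound the number of realised types by $N_d$. For each position $\ell$ let $\Theta_\ell$ be the conjunction of one representative of every class in $\tp_d(\ell)$; this is a single query of $\mathcal{Q}_-^\sigma[\U]$ of depth at most $d$ (conjunction does not raise temporal depth), and $w\models\Theta_\ell$ iff $\tp_d(\ell)\subseteq\tp_d(w)$. Testing this against $w=\D_{\ge\ell}$ shows that $\tp_d(\ell)\ne\tp_d(\ell')$ forces $\Theta_\ell\not\equiv\Theta_{\ell'}$, so $\ell\mapsto[\Theta_\ell]$ injects the (pairwise distinct) types into the $N_d$ equivalence classes of queries of depth at most $d$. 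Combined with the second paragraph this gives $n+1\le N_d$, whence $\len(\D)=n\le N_d$, as required.
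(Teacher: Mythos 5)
Your proof is correct and follows essentially the same route as the paper's: take a minimal-length separating instance, argue that two positions with equal types would allow the interval between them to be spliced out (contradicting minimality), and then bound the number of realisable types by $N_d$. The paper's proof is a two-line sketch that uses sets of subqueries of $\q$ and $\q'$ as types; your congruence lemma (via the until expansion law, which is what licenses the cut) and the conjunction-of-representatives counting argument are exactly the details it leaves implicit.
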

\begin{proof}
Consider $\D$ of minimal length such that $\D \models \q$ and $\D \not\models \q'$.  Let $\tp(i)$ comprise all of the subqueries $\s$ of $\q$ and $\q'$ with $\D,i \models \s$.
By the choice of $\D$, we have \mbox{$\tp(i) \ne \tp(j)$} for any distinct $i,j \in [0,\len(\D)]$ (otherwise we could cut the interval $[i,j)$ out of $\D$ to obtain a shorter instance separating $\q$ from $\q'$). It follows that $\len(\D) \le N_d$.
\end{proof}

\begin{theorem}\label{th:u-uc}
For any $\sigma$, $\mathcal{Q}_-^\sigma[\U]$ is uniquely characterisable.
\end{theorem}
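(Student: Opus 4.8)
The plan is to characterise $\q$ within $\mathcal{Q}_-^\sigma[\U]$ in two stages: first use a pair of ``all-$\sigma$'' examples to force every fitting query to have exactly the same temporal depth as $\q$, and then, since there are only finitely many pairwise non-equivalent queries of that depth, add one distinguishing example for each of them obtained from Lemma~\ref{gen-size}.

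Set $d = \dep(\q)$. For the first stage, I would put the word $\sigma^{d+1}$ into $E^+$ and the word $\sigma^{d}$ into $E^-$; by~\eqref{depth} both are classified correctly by $\q$. Now suppose some $\q' \in \mathcal{Q}_-^\sigma[\U]$ fits $E$. Applying~\eqref{depth} to $\q'$, the positive example $\sigma^{d+1} \models \q'$ forces $\dep(\q') \le d$, while the negative example $\sigma^{d} \not\models \q'$ forces $\dep(\q') \ge d$; hence $\dep(\q') = d$. In particular, every query of depth $\ne d$ already fails one of these two examples (a deeper query fails the positive $\sigma^{d+1}$, a shallower one fails the negative $\sigma^{d}$), so only the at most $N_d$ equivalence classes of depth-$d$ queries remain to be eliminated.

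For the second stage, fix a representative $\q'$ of each such class with $\q' \not\equiv \q$. Since $\q \not\equiv \q'$, at least one of $\q \not\models \q'$ or $\q' \not\models \q$ holds. If $\q \not\models \q'$, Lemma~\ref{gen-size} yields a $\D$ with $\D \models \q$ and $\D \not\models \q'$, which I add to $E^+$; if $\q' \not\models \q$, the same lemma with the roles of $\q,\q'$ swapped yields a $\D$ with $\D \models \q'$ and $\D \not\models \q$, which I add to $E^-$. In either case the example is classified correctly by $\q$, so $\q$ still fits $E$, whereas $\q'$ is excluded. As there are finitely many classes, the resulting $E$ is finite.

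Finally I would check that this $E$ does the job: $\q$ fits $E$ by construction, and any $\q'$ fitting $E$ has depth $d$ by the first stage and survives none of the second-stage separations, forcing $\q' \equiv \q$. The crux of the argument is the depth-forcing step: without it there would be infinitely many candidate queries (of unbounded depth, e.g.\ nested $\nxt$) and no finite example set could exclude them all; Lemma~\ref{gen-size} then reduces the remaining, bounded-depth problem to finitely many routine separations. Note that this yields only finite, not polynomial, characterisability, since $N_d$ may be very large and governs both the number and the length of the examples.
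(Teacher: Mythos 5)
Your proof is correct and follows essentially the same route as the paper: both force $\dep(\q') = \dep(\q)$ via~\eqref{depth} and then eliminate the finitely many remaining equivalence classes using the separating instances of length at most $N_{\dep(\q)}$ supplied by Lemma~\ref{gen-size}. The only (cosmetic) difference is that the paper simply takes \emph{all} data instances of length at most $N_{\dep(\q)}$, labelled according to $\q$, whereas you hand-pick the depth-forcing pair $\sigma^{d+1}, \sigma^{d}$ and one separating example per non-equivalent class; the underlying argument is identical.
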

\begin{proof}
Any $\q \in \mathcal{Q}_-^\sigma(\U)$ is uniquely characterised by $E$ with
\begin{align*}
& E^+ = \{ \D \models \q \mid \len(\D) \le N_{\dep(\q)} \},\\
& E^- = \{ \D \not\models \q \mid \len(\D) \le N_{\dep(\q)} \}.
\end{align*}
Indeed, let $\q' \in \mathcal{Q}_-^\sigma(\U)$ fit $E$. Then $\dep(\q') = \dep(\q)$ by \eqref{depth}, and so $\q \equiv \q'$ by Lemma~\ref{gen-size}.
\end{proof}

It follows from the proof that $\mathcal{Q}[\nxt,\Diamond]$ is uniquely characterisable as well.


%
%


\section{Characterisability in $\Qp[\nxt,\Diamondw]$}\label{Sec:weak-path}

In this section, we prove a criterion of (polynomial) unique characterisability of queries within $\Qp[\nxt,\Diamondw]$. The criterion is applicable to $\Qp[\nxt,\Diamond, \Diamondw]$-queries in a normal form, which is defined and illustrated  below.

\begin{example}\label{Ex:normal}\em
It is readily checked that the $\Qp[\nxt,\Diamondw]$-query
$
\q = \nxt \Diamondw \nxt \Diamondw (A \land B \land C \land \Diamondw (B \land \Diamondw ( B \land C)))
$
is equivalent to the $\Qp[\nxt,\Diamond,\Diamondw]$-query
$
\q^\textit{nf} = \Diamond\Diamond (A \land B \land C).
$
\end{example}

We define the normal form for $\Qp[\nxt,\Diamond,\Diamondw]$-queries represented as a  first-order CQ by a list of atoms. For example, the query $\q^\textit{nf}$ above is given by the CQ
\begin{align*}
\q^\textit{nf}(t_0) = t_0 < t_1, t_1 < t_2, A(t_2), B(t_2), C(t_2)
\end{align*}
with one free (answer) variable $t_0$ and existentially quantified $t_1$ and $t_2$.
In general, any $\q \in \Qp[\nxt,\Diamond,\Diamondw]$ is represented as a CQ
$$
\rho_{0}(t_0), R_{1}(t_0,t_1), \dots, \rho_{n-1}(t_{n-1}), R_{n}(t_{n-1},t_n),\rho_{n}(t_n),
$$
where $\rho_i$ is a set of atoms, $\rho_i(t_i) = \{A(t_i) \mid A \in \rho_i\}$ and $R_{i}\in \{\suc,<,\leq\}$.
We divide $\q$ into \emph{blocks} $\q_i$ such that
\begin{align}\label{fullq}
\q = \q_{0} \mathcal{R}_{1} \q_{1} \dots \mathcal{R}_{n} \q_{n}
\end{align}
with $\mathcal{R}_{i} = R_{1}^{i}(t_{0}^{i},t_{1}^{i}) \dots  R_{n_{i}}^{i}(t_{n_{i}-1}^{i},t_{n_{i}}^{i})$, for \mbox{$R_{j}^{i}\in \{<,\leq\}$},
\begin{align*}
\q_{i}= \rho_{0}^{i}(s_{0}^{i}) \suc (s_{0}^{i},s_{1}^{i}) \rho_{1}^{i}(s_{1}^{i}) \dots \suc(s_{k_{i}-1}^{i},s_{k_{i}}^{i}) \rho_{k_{i}}^{i}(s_{k_{i}}^{i})
\end{align*}
and $s_{k_{i}}^{i}=t_{0}^{i+1}$, $t_{n_{i}}^{i}=s_{0}^{i}$. If $k_{i}=0$, the block $\q_{i}$ is \emph{primitive}. A primitive block $\q_{i}=\rho_{0}^{i}(s_{0}^{i})$ with $i>0$ and $|\rho_{0}^{i}|\geq 2$ is called a \emph{lone conjunct} of $\q$.
\begin{example}\label{exm:abab}\em
The query $\Diamondw(A\wedge B)$ in Example~\ref{EX:lone}$(i)$, whose CQ representation is $t_{0}\leq t_{1}, \rho_{1}(t_{1})$, for $\rho_{1}=\{A,B\}$, has a lone conjunct $\rho_{1}(t_{1})$. In \mbox{$\Diamondw(A \wedge \nxt(A \wedge B))$} from Example~\ref{EX:lone}$(ii)$, represented as $t_{0}\leq t_{1}$,  $A(t_{1}),\suc(t_{1},t_{2}), \rho_{1}(t_{2})$, the conjunct $\rho_{1}(t_{2})$ is not lone.
\end{example}

Now, we say that $\q$ given by~\eqref{fullq} is in \emph{normal form} if the following conditions are satisfied:
\begin{description}
\item[(n1)] $\rho_{0}^{i}\not=\emptyset$ if $i>0$, and $\rho_{k_{i}}^{i}\not=\emptyset$ if $i>0$ or $k_{i}>0$
(thus, of all the first/last $\rho$ in a block, only $\rho_0^0$ can be empty);

\item[(n2)] each $\mathcal{R}_{i}$ is either a single $t_{0}^{i}\leq t_{1}^{i}$ or a sequence of $<$;
	
\item[(n3)] $\rho_{k_{i}}^{i}\not\supseteq \rho_{0}^{i+1}$ if $\q_{i+1}$ is primitive and $R_{i+1}$ is $\le$;

\item[(n4)] $\rho_{k_{i}}^{i}\not\subseteq \rho_{0}^{i+1}$ if $i>0$, $\q_{i}$ is primitive and $R_{i+1}$ is $\le$.
\end{description}
The queries in Example~\ref{exm:abab} are in normal form with two blocks each; the query $\q^\textit{nf}$ above is in normal form with two blocks $\q_0 = \top(t_0)$ and $\q_1 = A(t_2) \land B(t_2) \land C(t_2)$.

\begin{lemma}
Every query in $\Qp[\nxt, \Diamondw]$ is equivalent to a query in normal form that can be computed in linear time.
\end{lemma}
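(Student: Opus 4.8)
The plan is to work with the first-order CQ path representation of $\q\in\Qp[\nxt,\Diamondw]$ (legitimate since $\Diamond\equiv\nxt\Diamondw$, so $\Qp[\nxt,\Diamondw]=\Qp[\nxt,\Diamond,\Diamondw]$) and to reach normal form by repeatedly applying a fixed list of equivalence-preserving local rewrites to adjacent edges that share an \emph{unlabelled} node. The rewrites come from composing the three edge relations over $\mathbb N$: one has $\leq\circ\leq\equiv\leq$, ${<}\circ\leq\equiv\leq\circ{<}\equiv{<}$, and $\suc\circ\leq\equiv\leq\circ\suc\equiv{<}$ (each of which \emph{deletes} the shared unlabelled node), together with $\suc\circ{<}\equiv{<}\circ\suc\equiv{<}\circ{<}$ (which keeps the node but \emph{retypes} the $\suc$-edge to $<$); crucially ${<}\circ{<}$ does \emph{not} collapse, since $\Diamond\Diamond$ forces a distance of at least two. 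Each rewrite is justified by the semantics in Section~\ref{Sec:prelims} and, being a logical equivalence of a sub-path in its connecting variable, substitutes into the whole path by congruence of $\equiv$. I would add two end rules: delete a trailing unlabelled node (since $\op\,\top$ at the end is vacuous under padding) while keeping the root $t_0$ even when $\rho_0^0=\emptyset$.

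\emph{Phase I (conditions (n1), (n2)).} Apply the rules above to every unlabelled non-root node: collapse an interior node with its neighbours whenever it is incident to a $\leq$-edge; retype the adjacent $\suc$-edge to $<$ whenever the node is incident to a $<$-edge; and delete the node when it is the last on the path. Each step strictly decreases the measure $2\cdot(\text{number of nodes})+(\text{number of }\suc\text{-edges})$, so Phase I halts after $O(|\q|)$ steps. When no rule applies, every $\leq$-edge joins two labelled (or root) nodes, so each maximal run of order edges is a single $\leq$ or a chain of $<$'s, giving (n2); and any boundary node of a block (first or last node of a $\suc$-chain, other than the root) is incident to a $\suc$-edge together with an order edge, so were it unlabelled a rule would still fire --- hence all block boundaries are labelled, giving (n1).

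\emph{Phase II (conditions (n3), (n4)).} Now eliminate redundant primitive blocks. For (n3), whenever $\q_{i+1}$ is primitive, $R_{i+1}$ is $\leq$ and $\rho_0^{i+1}\subseteq\rho_{k_i}^i$, delete $\q_{i+1}$ and let the continuation start from the end $x$ of $\q_i$: writing $y$ for $\q_{i+1}$, one may take $y=x$ (the containment makes $\rho_0^{i+1}(x)$ hold), and conversely any witness has $y\ge x$, so the continuation --- which after Phase I begins with an order edge --- also holds from $x$. For (n4), whenever $\q_i$ is primitive with $i>0$, $R_{i+1}$ is $\leq$ and $\rho_0^i\subseteq\rho_0^{i+1}$, delete $\q_i$ and attach $\mathcal R_i$ directly to $\q_{i+1}$: here the relation $R$ ending $\mathcal R_i$ lies in $\{<,\leq\}$ and satisfies $R\circ\leq\equiv R$, so the configuration collapses by taking $x=y$. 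Both rewrites remove a (labelled) node, leave order runs homogeneous, and thus preserve (n1)--(n2), so they are applied until neither is enabled, again in $O(|\q|)$ steps.

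Composing the two phases yields an equivalent query in normal form, and since each phase performs $O(|\q|)$ constant-cost local rewrites (e.g.\ over a doubly linked list of edges with a worklist), the whole computation is linear. I expect the main obstacle to be the correctness of the Phase II equivalences: they genuinely rely on Phase I having been completed first, because the $\Leftarrow$ direction of (n3) needs the continuation to begin with $<$ or $\leq$ (a $\nxt$ there would break it), and (n4) needs the edge feeding the primitive block to already be an order edge with $R\circ\leq\equiv R$. Pinning down these adjacency guarantees, and verifying that the two phases do not reintroduce each other's violations, is the delicate part; the remainder is routine bookkeeping of the composition table.
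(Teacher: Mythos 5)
There is nothing in the paper to compare against here: this lemma is stated without proof (both in the main text and in the appendix, where even its 2D analogue for $\Qp[\nxt,\Diamondw](\mathcal{ELI})$ is again only asserted). So your proposal has to be judged on its own merits, and on the mathematical side it is correct. Your composition table over $\mathbb{N}$ is right (including the crucial facts that $<\circ<$ does not collapse and that $\suc\circ<$ must be \emph{retyped} rather than contracted, both valid only at unlabelled shared nodes); Phase I does yield \textbf{(n1)}--\textbf{(n2)}, since after it every unlabelled non-root, non-final node sits between two $<$'s (an $\mathcal{R}_i$-internal node) or two $\suc$'s (a block-internal node); and your Phase II equivalences for \textbf{(n3)}/\textbf{(n4)} are exactly the right ones --- the backward direction by collapsing the existential witness onto the neighbouring labelled node via the containment, the forward direction by monotonicity of composed order relations, which, as you note, needs the edge on the far side of the deleted node to be an order edge. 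One small correction: that last fact follows from the deleted block being \emph{primitive} (hence not incident to any $\suc$), not from Phase I per se; your parenthetical attribution is harmless but misplaced. The check that Phase II preserves \textbf{(n1)}--\textbf{(n2)} (it deletes only labelled nodes and never retypes surviving edges) is also correct, so the two phases compose as claimed.

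The one genuine soft spot is the complexity claim. Phase I rewrites are indeed constant-cost, but Phase II rewrites are guarded by set-containment tests $\rho\subseteq\rho'$, which are not constant-cost, so "$O(|\q|)$ constant-cost local rewrites" does not by itself give linear time; a naive implementation is quadratic. To rescue linearity you need two further observations: (a) labels and surviving inter-block relations never change and a pair of blocks, once adjacent, never becomes adjacent again after separation, so each unordered pair of blocks is tested at most once, and the number of tested pairs is at most the initial number of adjacencies plus the number of deletions, i.e.\ $O(|\q|)$; and (b) with hashed label sets and a size pre-check, each test costs $O(\min(|\rho|,|\rho'|))$, and an amortized charging argument (charge each test to a deleted block or to a once-only inert pair) bounds the total by $O(|\q|)$. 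Without (a) and (b) you have only a polynomial bound --- which, to be fair, is all the paper's subsequent results actually require, and matches what the paper claims in the 2D case.
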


A query $\q \in \mathcal{Q}_{p}[\nxt,\Diamond]$ is \emph{safe} if it is equivalent to a query $\q'\in \mathcal{Q}_{p}[\nxt,\Diamond]$ in normal form not containing lone conjuncts.
We are now in the position to formulate the criterion.
\begin{theorem}\label{thm:nextdiamond}
$(i)$ A query $\q\in \Qp[\nxt, \Diamondw]$ is uniquely characterisable within $\Qp[\nxt, \Diamondw]$ iff $\q$ is safe.

$(ii)$ Those queries that are uniquely characterisable within $\Qp[\nxt, \Diamondw]$ are actually polynomially characterisable within $\Qp[\nxt, \Diamondw]$.

$(iii)$ The class $\Qp[\nxt, \Diamondw]$ is polynomially characterisable for bounded query size.

$(iv)$ The class $\Qp[\nxt,\Diamond]$ is polynomially characterisable.
\end{theorem}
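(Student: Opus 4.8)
The plan is to use the homomorphism/frontier method for unique characterisation of CQs, adapted to the linear-order signature $\{\suc,<,\leq\}$ underlying $\Qp[\nxt,\Diamondw]$. I work throughout with the normal form of the normal-form lemma, so $\q = \q_0 \mathcal{R}_1 \q_1 \dots \mathcal{R}_n \q_n$ is a path CQ whose edges are $\suc$, $<$ or $\leq$. To characterise $\q$ within a class $\mathcal{Q}'$ it suffices to find (a) a set $E^+$ of models of $\q$ such that $\D \models \q'$ for all $\D \in E^+$ already forces $\q \models \q'$; and (b) a \emph{frontier} $F$ of $\q$ in $\mathcal{Q}'$: a finite set of strictly more general queries ($\q \models \q_f$, $\q_f \not\models \q$) such that every $\q'$ with $\q \models \q' \not\equiv \q$ has $\q_f \models \q'$ for some $\q_f \in F$. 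For each $\q_f$ I pick a small $\D_f$ with $\D_f \models \q_f$ and $\D_f \not\models \q$ (possible since $\q_f \not\models \q$) and set $E^- = \{\D_f\}$. Then $\q$ fits $E=(E^+,E^-)$, and any fitting $\q' \in \mathcal{Q}'$ has $\q \models \q'$ by (a); were $\q' \not\equiv \q$, the frontier would give $\q_f \models \q'$, hence $\D_f \models \q'$, contradicting $\D_f \in E^-$. So $\q' \equiv \q$, and everything reduces to bounding $|E^+|$ and $|F|$.

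A single canonical word does not force $\q \models \q'$, since a $\leq$-edge can be realised either by collapsing its endpoints or strictly, and collapsing wrongly validates tighter queries (for $\Diamondw A$ the collapsed word $(\{A\})$ also satisfies the stronger query $A$). I therefore take $E^+$ to be the base word realising every $\suc$ as an adjacency, every $<$ at distance one and every $\leq$ by a collapse, together with, for each $\leq$-edge, the variant expanding that one edge by a step; by (n2) the $\leq$-edges are isolated, so this yields $O(|\q|)$ positive examples, and a homomorphism argument shows they jointly force $\q \models \q'$. The frontier is the heart of the proof: I enumerate the minimal weakenings of $\q$ in $\Qp[\nxt,\Diamondw]$ — deleting one atom from some $\rho_i$, relaxing a $\suc$-edge to $<$ or a $<$-edge to $\leq$, and splitting a point reached by $\leq$ into two $\leq$-linked points sharing its atoms — and must show this finite list is a genuine frontier of polynomial size. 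This is exactly where \emph{safety} enters: at a lone conjunct $\rho_i(t_i)$ with $|\rho_i| \ge 2$ bordered by $\leq$ the splitting operation has no least result, since the atoms can be spread into an ever-longer alternating chain $\Diamondw(A_1 \land \Diamondw(A_2 \land \dots))$ with no minimal element. The normal-form conditions (n1)--(n4) are calibrated so that, once lone conjuncts are excluded, each listed weakening is realisable, pairwise incomparable and collectively covering; I verify covering by analysing where a homomorphism witnessing $\q \models \q'$ first slackens relative to $\q$. This gives $|F| = O(|\q|)$, establishing $(\Leftarrow)$ of $(i)$ and the polynomial bound $(ii)$.

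For $(\Rightarrow)$, suppose $\q$ is not safe: its normal form has a lone conjunct bordered by $\leq$, and spreading yields, for each $k$, a strictly more general $\q_k \in \Qp[\nxt,\Diamondw]$ with $\q \models \q_k$. Given any finite $E$ fitting $\q$, I choose $k$ exceeding the length of every instance in $E$; as in the $\Diamondw(A\land B)$ argument of Section~\ref{intro}, no such short instance distinguishes $\q$ from $\q_k$, so $\q_k$ also fits $E$ although $\q_k \not\equiv \q$, and no finite $E$ characterises $\q$. For $(iii)$, since $|\q_k|$ grows with $k$, within $\Qp^{n}[\nxt,\Diamondw]$ only boundedly many alternations can occur; adding a single sufficiently long alternating word to $E^-$ (as in Example~\ref{EX:lone}$(iii)$) refutes all such spreads of size $\le n$ at once, and combining it with the finite weakenings above characterises $\q$ within $\Qp^{n}[\nxt,\Diamondw]$ by a polynomial-size $E$.

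Finally, $(iv)$ follows by running the construction with $\mathcal{Q}' = \Qp[\nxt,\Diamond]$. The sole obstruction to a finite frontier is the $\leq$-spreading of a lone conjunct, and every query it produces uses $\Diamondw$ and so lies outside $\Qp[\nxt,\Diamond]$; hence within $\Qp[\nxt,\Diamond]$ every query admits a polynomial frontier, and the bound of $(ii)$ applies to the whole class, which is therefore polynomially characterisable. The main obstacle throughout is the frontier analysis of the second paragraph — proving the explicit list of weakenings covers all strict generalisations and stays polynomial under (n1)--(n4) — since this is what simultaneously powers $(i)$, $(ii)$ and $(iv)$.
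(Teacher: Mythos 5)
There is a genuine gap, and it sits exactly where your reduction places all of its weight: on claim (a), that the positive examples alone force $\q \models \q'$ for every fitting $\q'$. In your scheme this property is indispensable, because negative examples can never eliminate a fitting query that is strictly \emph{stronger} than $\q$: if $\q' \models \q$, then every model of $\q'$ is a model of $\q$, whereas every instance in $E^-$ is a non-model of $\q$, so such a $\q'$ automatically fits $E^-$. But your $E^+$ (the base word realising each $<$ at distance one, plus one expansion per $\leq$-edge) does not have property (a). Take the safe query $\q = \Diamond A$: your $E^+$ is the single word $(\emptyset,\{A\})$, which satisfies $\q' = \nxt A$; since $\nxt A \models \Diamond A$, no admissible negative example can refute $\nxt A$, so $\nxt A$ fits your entire example set even though $\Diamond A \not\models \nxt A$, i.e.\ $\nxt A \not\equiv \Diamond A$. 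The same failure recurs at every $<$-edge and $<$-chain (e.g.\ $\q = \Diamond\Diamond A$ is not separated from $\nxt\nxt A$, $\nxt\Diamond A$ or $\Diamond\nxt A$): realising $<$ ``at distance one'' wrongly validates $\suc$-tightenings, which is exactly the phenomenon you identified for $\leq$-collapses but left unguarded for $<$.

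This is precisely what the paper's positive examples are calibrated to prevent: the generic instance $\D_{b}$ separates consecutive blocks by gaps $\emptyset^{b}$, where $b$ exceeds the number of $\nxt$ and $\Diamond$ in $\q$, so no query of the relevant shape can force adjacency (or any particular distance) across a gap; it is then complemented by the exact-gap positive instances $\D_{i}$ and the shortened-gap negative instances $\D^-_{i}$, which together pin down whether each $\mathcal{R}_{i+1}$ is $\leq$ or a $<$-chain and, in the latter case, its exact length (the four-case analysis at the end of the paper's proof of part $(i)$). Apart from this, your outline does track the paper: your weakening list (atom deletion, $\suc\to{<}$, ${<}\to{\leq}$, splitting) corresponds in substance to the paper's negative-example rules (a)--(e), your $(\Rightarrow)$ direction via spreading a lone conjunct and pigeonhole is the paper's, as are the bounded-size rule for $(iii)$ and the observation powering $(iv)$ that the spreading queries need $\Diamondw$. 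But the construction cannot be repaired by tinkering with the frontier; you must replace the distance-one base word by gap-calibrated positive examples in the paper's style, and the covering argument (``where the homomorphism first slackens'') then has to be carried out relative to those instances, since only the large gaps make homomorphisms into the positive examples rigid enough (the paper's Claims 1 and 2) for that case analysis to be meaningful.
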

\begin{proofsketch}
A detailed proof is given in the Appendix. Here, we define
a polysize example set $E=(E^{+},E^{-})$ characterising a query $\q$
in normal form~\eqref{fullq}, which does not contain lone conjuncts. Let $b$ be the number of $\nxt$ and $\Diamond$ in $\q$ plus 1. For each block $\q_i$ in~\eqref{fullq}, we take two words
$$
\qw_{i} = \rho_{0}^{i} \dots \rho_{k_{i}}^{i}, \quad \qw_{i} \Join \qw_{i+1} = \rho_{0}^{i} \dots (\rho_{k_{i}}^{i}\cup \rho_{0}^{i+1}) \dots \rho_{k_{i+1}}^{i+1}.
$$
The set $E^{+}$ contains the data instances given by the words
\begin{itemize}
\item[--] $\D_{b} = \qw_{0} \emptyset^{b} \dots \qw_{i} \emptyset^{b} \qw_{i+1} \dots \emptyset^{b} \qw_{n}$,

\item[--] $\D_{i} = \qw_{0} \emptyset^{b} \dots \qw_{i} \! \Join \! \qw_{i+1} \dots \emptyset^{b} \qw_{n}$ if $\mathcal{R}_{i+1}$ is $\leq$,

\item[--] $\D_{i} = \qw_{0} \emptyset^{b} \dots \qw_{i} \emptyset^{n_{i+1}} \qw_{i+1} \dots \emptyset^{b} \qw_{n}$ otherwise.
\end{itemize}
Here, $\emptyset^b$ is a sequence of $b$-many $\emptyset$ and similarly for $\emptyset^{n_{i+1}}$.
The set $E^{-}$ contains all data instances of the form
\begin{itemize}
\item[--] $\D_i^- = \qw_{0} \emptyset^{b} \dots \qw_{i} \emptyset^{n_{i+1} - 1} \qw_{i+1} \dots \emptyset^{b} \qw_{n}$ if $n_{i+1} > 1$;

\item[--] $\D^-_i = \qw_{0} \emptyset^{b} \dots \qw_{i} \! \Join \! \qw_{i+1} \dots \emptyset^{b} \qw_{n}$ if $\mathcal{R}_{i+1}$ is a single $<$,
\end{itemize}
and also the data instances obtained from $\D_{b}$ by
\begin{description}
\item[\rm (a)] removing a single atom from some $\rho^i_j \ne \emptyset$ or removing the whole $\rho^i_j = \emptyset$, for $i \ne 0$ and $j \ne 0$, from some $\qw_i$;

\item[\rm (b)] replacing $\qw_i = \rho_0^i \dots \rho^i_l \rho^i_{l+1} \dots \rho_{k_i}^i$ ($k_i > 0$) by $\qw'_i \emptyset^b \qw''_i$, where $\qw'_i = \rho_0^i \dots \rho^i_l$, $\qw''_i = \rho^i_{l+1} \dots \rho_{k_i}^i$ and $l \ge 0$;


\item[\rm (c)] replacing some $\rho_l^i \ne \emptyset$, $0 < l < k_i$, by $\rho_l^i \emptyset^b \rho_l^i$;


\item[\rm (d)] replacing $\rho^i_{k_i}$ ($k_i > 0$, $|\rho^i_{k_i}| \ge 2$) with $\rho^i_{k_i} \setminus \{A\} \emptyset^b \rho^i_{k_i}$, for some $A \in \rho^i_{k_i}$, or replacing $\rho^i_{0}$ ($k_i > 0$, $|\rho^i_{0}| \ge 2$) with $\rho^i_{0}  \emptyset^b \rho^i_{0}\setminus \{A\}$, for some $A \in \rho^i_{0}$;





\item[\rm (e)] replacing $\rho_0^0 \ne \emptyset$ with $\rho^0_0 \setminus \{A\} \emptyset^b \rho^0_0$, for some $A\in \rho^0_0$,\\[1pt] if $k_0 = 0$, and with $\rho_0^0 \emptyset^b \rho_0^0$ if $k_0 > 0$.
\end{description}
The size of $E$ is clearly polynomial in $|\q|$. It is readily seen that $\D \models \q$ for all $\D \in E^+$. To continue the proof sketch, note that $\D \models \q$ iff there is a \emph{homomorphism} $h$ from the set $\var(\q)$ of variables in $\q$ to $[0,\len(\D)]$, i.e., $h(t_0) = 0$, $A(h(t)) \in \D$ if $A(t) \in \q$, $h(t') = h(t) +1$ if $\suc(t,t') \in \q$, and $h(t)\, R\, h(t')$ if $R(t,t') \in \q$ for $R \in \{<,\leq\}$. Using the assumption that $\q$ is in normal form, one can  show that there is no homomorphism witnessing $\D\models \q$, for any $\D\in E^{-}$.

Suppose now that $\q'\in \mathcal{Q}_{p}[\nxt,\Diamondw]$ in normal form is given and $\q' \not\equiv \q$. If $\D_{b}\not\models \q'$, we are done as $\D_{b}\in E^{+}$. Otherwise, let $h$ be a homomorphism witnessing $\D_{b}\models \q'$. Then one can show that either the restriction of $h$ to the blocks of $\q'$ is an isomorphism onto the blocks of $\q$ or there exists a data instance $\D$ obtained using one of the rules (a)--(e) such that a suitably modified $h$ is a homomorphism from $\q'$ to $\D$. In the latter case, we are done as $\D\in E^{-}$ and $\D\models \q'$. In the former case, $\q$ and $\q'$ coincide with the exception of the sequences of $\Diamond$ and $\Diamondw$
between blocks. Then $\q$ can be separated from $\q'$ using the examples $\D_{i}$ and $\D_{i}^{-}$.
\end{proofsketch}


\section{Polynomial Characterisability in $\Qp^\sigma[\U]$}\label{Sec:until}


\LTL-queries with $\U$ do not correspond to CQs (because of the universal quantification in its semantics), and so require a different approach. We view  them as defining regular languages.
With each $\Qp^\sigma[\U]$-query of the form~\eqref{upath} we associate the following  regular expression over the alphabet $2^\sigma$:
\begin{equation}\label{regupath}
\q = \rho_0 \lambda_1^* \rho_1 \lambda_2^* \dots \lambda_n^* \rho_n \lambda_{n+1}^*
\end{equation}
%
%
where $\lambda_{n+1} = \emptyset$ and $\bot^* = \varepsilon$. We regard the words of the language $\L(\q)$ over $2^\sigma$ as data instances. Clearly, $\D' \models \q$ iff there is $\D \in \L(\q)$ such that $\D \Subset \D'$, i.e., $\D = (\delta_0, \dots, \delta_k)$ and $\D' = (\delta'_0, \dots, \delta'_k)$, for some $k < \omega$, and $\delta_i \subseteq \delta'_i$, for all $i \le k$.
The language $\boldsymbol{L}_{\q}$ of all $\sigma$-data instances $\D\models\q$ (regarded as words over $2^\sigma$) can be given by the NFA $\mathfrak A_{\q}$ below, where each $\to_\alpha$, for $\alpha \ne \bot$, stands for all transitions $\to_\beta$ with $\alpha \subseteq \beta \subseteq \sigma$ (note that $\bot \notin\sigma$):\\[-5pt]
\centerline{
\begin{tikzpicture}[->,thick,node distance=2cm, transform shape, scale=0.75]
\node[state, initial] (i) {$0$};
\node[state, right  of =i] (i1) {$1$};
\node[right  of= i1] (dots) {$\dots$};
\node[state, right  of= dots] (s1) {$n$};
\node[state,accepting, right of=s1] (s0) {$n+1$};
\draw (i) edge [above] node{$\rho_0$} (i1)
(i1) edge [above] node{$\rho_{1}$} (dots)
(i1) edge [loop above, above] node{$\lambda_{1}$} (i1)
(dots) edge [above] node{$\rho_{n-1}$} (s1)
(s1) edge [above] node{$\rho_{n}$} (s0)
(s1) edge [loop above, above] node{$\lambda_{n}$} (s1)
(s0) edge [loop above, above] node{$\emptyset$} (s0)
;
\end{tikzpicture}
}
\\
Without loss of generality we assume that all our $\q$ are \emph{minimal} in the sense that by replacing any $\lambda_i \ne \bot$ with $\bot$ in $\q$ we obtain a query that is \emph{not equivalent} to $\q$. For example, in  minimal $\q$, $\rho_j \supseteq \dots \supseteq \rho_i\supseteq \lambda_{i}$ and $\lambda_{l}=\bot$ for all $l \in (j,i)$  imply $\rho_j \not \subseteq \lambda_{j}$ as otherwise
$
\lambda_{j} \U (\rho_j \land (\bot\U\ldots (\lambda_{i} \U \varphi) \dots))$ is equivalent to $\bot \U (\rho_j \land (\bot\U\ldots (\lambda_{i} \U \varphi) \dots))$.
%
%
%
Using standard automata-theoretic techniques, one can show:

\begin{theorem}\label{separUpath}
Any $\Qp^\sigma[\U]$-queries $\q \not\equiv \q'$ can be separated by some $\D$ with $\max(\D) \le O((\min \{\dep(\q), \dep(\q')\})^2)$.
\end{theorem}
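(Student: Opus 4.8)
The plan is to work with the NFA $\mathfrak A_{\q}$ introduced above, using two structural facts. First, $\boldsymbol{L}_{\q}$ is upward closed under $\Subset$. Second, every accepting run of $\mathfrak A_{\q}$ fires exactly $\dep(\q)+1$ forward transitions, so every word in $\boldsymbol{L}_{\q}$ has length at least $\dep(\q)+1$, whereas the word $\rho_0\rho_1\cdots\rho_{\dep(\q)}$ (reading each $\lambda_i^*$ zero times) lies in $\boldsymbol{L}_{\q}$ and has exactly that length. Write $d=\min\{\dep(\q),\dep(\q')\}$ and assume w.l.o.g.\ that $\dep(\q)\le\dep(\q')$.

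First I would dispose of the case $\dep(\q)<\dep(\q')$. Here $\rho_0\cdots\rho_{\dep(\q)}$ satisfies $\q$ but is strictly shorter than any word in $\boldsymbol{L}_{\q'}$, so it separates $\q$ from $\q'$ with $\max(\D)=\dep(\q)=d$. This shows that the only hard case is $\dep(\q)=\dep(\q')=d$, and --- crucially --- in that case \emph{both} automata have only $d+2$ states, which is exactly what lets the final bound depend on the minimum depth alone. So from now on $\dep(\q)=\dep(\q')=d$. Since $\q\not\equiv\q'$, after possibly swapping $\q$ and $\q'$ there is a word with $\D\models\q$ and $\D\not\models\q'$; take one of minimal length. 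Fixing an accepting run of $\mathfrak A_{\q}$ on it and replacing each letter by the exact label of the transition it fires (namely $\rho_i$ on the forward step out of state $i$, and $\lambda_i$ on a self-loop at state $i$) only shrinks letters, so upward-closedness preserves both $\D\models\q$ and $\D\not\models\q'$. Hence I may assume the separating word has block form
$$\D=\rho_0\,\lambda_1^{m_1}\,\rho_1\,\lambda_2^{m_2}\cdots\lambda_n^{m_n}\,\rho_n,\qquad n=d,$$
and it remains to bound each $m_i$ by $O(d)$.

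The key lemma concerns the dynamics of $\mathfrak A_{\q'}$ inside a block, i.e.\ on a \emph{fixed} letter $\lambda_i$ read repeatedly. Let $g_{\lambda_i}$ be the subset-transition of $\mathfrak A_{\q'}$ for $\lambda_i$, and let $T$ be the set of states of $\mathfrak A_{\q'}$ reachable at the start of block $i$. In this chain automaton tokens only move forward or stay, and no token is created (state $0$ has no self-loop), so along $T, g_{\lambda_i}(T), g_{\lambda_i}^2(T),\dots$ every token reaches a resting position (a sticky state, the accepting state, or death) within $d+2$ steps; hence the reachable set becomes constant after at most $d+2$ steps. Now the cut argument of Lemma~\ref{gen-size} applies inside the block: if $m_i>d+2$ then $g_{\lambda_i}^{d+2}(T)=g_{\lambda_i}^{m_i}(T)$, so deleting the $\lambda_i$'s between those two positions leaves the reachable set of $\mathfrak A_{\q'}$ after the block unchanged (so the accepting state, which never occurs since $\D\not\models\q'$, still never occurs) while merely shortening the stay at state $i$ (so $\D\models\q$ is preserved). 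This contradicts minimality, so $m_i\le d+2$, whence $|\D|\le (d+1)+d(d+2)$ and $\max(\D)=O(d^2)$.

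The main obstacle is precisely this key lemma. A naive product or subset construction would make the separating word exponential, or at best linear in $\dep(\q')$; the crux is that the monotone forward motion of tokens in the chain automaton forces stabilisation within $O(d)$ steps. Combined with the observation that a depth mismatch is handled for free by a length argument, this reduces everything to the equal-depth case, where every block can be shortened to length $O(d)$ and the total length therefore becomes $O(d^2)$ in the minimum depth.
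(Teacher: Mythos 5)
Your proof is correct and follows essentially the same route as the paper's: dispose of a depth mismatch with the short word $\rho_0\cdots\rho_{\dep(\q)}$, reduce to a minimal separating word in block form $\rho_0\lambda_1^{m_1}\cdots\lambda_n^{m_n}\rho_n$ (via upward-closure of satisfaction under $\Subset$), and bound each $m_i$ by observing that the subset construction on the chain automaton $\mathfrak A_{\q'}$ stabilises after $O(d)$ repetitions of a fixed letter. The paper states this last step as $\delta'_{\alpha^{n+1}}=\delta'_{\alpha^{n+2}}$ for the determinised automaton $\mathfrak B_{\q'}$, which is precisely your token-stabilisation lemma, so the two arguments coincide up to presentation and the value of the constant.
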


Using Theorem~\ref{separUpath} in the proof of Theorem~\ref{th:u-uc} we obtain:

\begin{corollary}
The class $\Qp^\sigma[\U]$ is exponentially characterisable within $\Qp^\sigma[\U]$.
\end{corollary}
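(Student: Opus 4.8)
The plan is to follow the proof of Theorem~\ref{th:u-uc} almost verbatim, replacing the counting bound $N_{\dep(\q)}$ coming from Lemma~\ref{gen-size} by the quadratic length bound supplied by Theorem~\ref{separUpath}. Concretely, given $\q \in \Qp^\sigma[\U]$, let $p(\cdot)$ be the polynomial hidden in the $O(\cdot)$ of Theorem~\ref{separUpath}, so that any two inequivalent $\Qp^\sigma[\U]$-queries are separated by a data instance whose length is at most $p$ evaluated at the smaller of their two depths. I would then define
\[
E^+ = \{ \D \models \q \mid \len(\D) \le p(\dep(\q)) \}, \quad
E^- = \{ \D \not\models \q \mid \len(\D) \le p(\dep(\q)) \},
\]
i.e.\ take all $\sigma$-data instances of length at most $p(\dep(\q))$ and sort them into $E^+$ and $E^-$ according to whether $\q$ accepts them.

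By construction $\q$ fits $E$. For the characterisation step, suppose $\q' \in \Qp^\sigma[\U]$ fits $E$ but $\q' \not\equiv \q$. By Theorem~\ref{separUpath} there is a data instance $\D$ separating $\q$ and $\q'$ with $\len(\D) \le O((\min\{\dep(\q),\dep(\q')\})^2) \le p(\dep(\q))$, the last inequality using $\min\{\dep(\q),\dep(\q')\} \le \dep(\q)$. Hence $\D$ is eligible for $E$: if $\D \models \q$ then $\D \in E^+$ while $\D \not\models \q'$, and if $\D \not\models \q$ then $\D \in E^-$ while $\D \models \q'$; either way $\q'$ fails to fit $E$, a contradiction. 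Thus $E$ uniquely characterises $\q$ within $\Qp^\sigma[\U]$.

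The observation that makes this work with a single, $\q$-dependent length bound is precisely that $\min\{\dep(\q),\dep(\q')\} \le \dep(\q)$ for \emph{every} competitor $\q'$: instances of length $p(\dep(\q))$ therefore already catch all separations, including those against queries $\q'$ of strictly larger depth. This is what lets $E$ be built from $\q$ alone, without any a priori bound on the depth of the unknown $\q'$. Finally I would bound $|E|$: there are at most $\sum_{\ell \le p(\dep(\q))} (2^{|\sigma|})^{\ell+1}$ data instances in $E$, each of size polynomial in $p(\dep(\q))$ and $|\sigma|$. Since $\dep(\q) \le |\q|$, $p$ is a polynomial, and $\sigma$ is fixed, this sum is exponential in $|\q|$, giving exponential characterisability.

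The argument is essentially routine once Theorem~\ref{separUpath} is available; the only point requiring care is the monotonicity step bounding $\min\{\dep(\q),\dep(\q')\}$ by $\dep(\q)$, which is exactly what reduces the naive $N_{\dep(\q)}$-style count (potentially more than exponential) to a count governed by the quadratic length bound, and hence to a merely exponential example set. No genuine obstacle remains beyond invoking the theorem.
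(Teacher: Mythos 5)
Your proof is correct and matches the paper's intended argument: the paper obtains this corollary precisely by substituting the quadratic length bound of Theorem~\ref{separUpath} for the bound $N_{\dep(\q)}$ in the example-set construction from the proof of Theorem~\ref{th:u-uc}, exactly as you do. Your further observation that the $\min$ in Theorem~\ref{separUpath} renders the depth-equality step (via \eqref{depth}) unnecessary is a valid minor streamlining of that same substitution.
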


%
%

The following examples illustrate difficulties in finding short unique  characterisations of $\Qp^\sigma[\U]$-queries, namely, that in general, data instances of different shapes and forms are needed to separate $\Qp^\sigma[\U]$-queries. To unclutter notation we omit $\{\}$ in singletons like $\{A\}$.

\begin{example}\label{baddies}\em
%
(a) The shortest data instance separating
\begin{align*}
\q & =  X \emptyset^*  A \bot^*B  \bot^* A B^* A A^*B\emptyset^* , \\
\q' & = X \emptyset^* A \bot^* B A^* A  B^* A \bot^* B\emptyset^*
\end{align*}
%
is $\D = XABABBAAB$ with $\D \models \q$ and $\D \not\models \q'$ (e.g.,  $XABABAAB$ satisfies both $\q$ and $\q'$).


(b) For $l>0$, let $\q_l=(AB^*)^{l-1} AA^*BB^*$. Then
\begin{align*}
& XA^*\q_{l_1}\q_{l_2}\dots\q_{l_k}X\emptyset^* \ \not\equiv \ X \bot^* \q_{l_1} \q_{l_2} \dots \q_{l_k} X\emptyset^* ,\\
& XA^*\q_{l_1}\q_{l_2}\dots\q_{l_k} A \emptyset^* \ \equiv \ X \bot^* \q_{l_1} \q_{l_2} \dots \q_{l_k} A\emptyset^*.
\end{align*}
If $1<l_1\le\dots\le l_k$, the former inequivalence is witnessed by the instance $XA^{l_1}BA^{l_2}B\dots A^{l_k}BA^{l_k}BX$.
Less generally, $XA^* \q_2\q_3X\emptyset^* \not\equiv X\bot^* \q_2\q_3X\emptyset^*$ can be shown by $XAABAAABAAABX$ or by $XAABABAABABX$ (spot the difference and see~$(\mathfrak n_2)$ below).
\end{example}


Here, we consider the class $\Qpi[\U]$ of \emph{peerless} queries given by~\eqref{regupath}, in which, for any $i$, either $\lambda_i = \bot$ or the sets $\lambda_i$ and $\rho_i$ are \emph{incomparable} with respect to $\subseteq$. Our main result is that $\Qpi[\U]$ is polynomially characterisable within $\Qp^\sigma[\U]$.

We start with a general observation. Consider two queries $\q = \rho_0 \lambda_1^*  \dots \lambda_n^* \rho_n\emptyset^* $ and $\q' = \rho_0 \mu_1^* \dots \mu_n^* \rho_n\emptyset^* $. We say that $\lambda_i \ne \bot$ \emph{subsumes} $\mu_j \ne \bot$ if either $i=j$ and $\mu_j \subseteq \lambda_i$, or $j < i$ and $\mu_j \rho_j \dots \rho_{i-1} \Subset \rho_j \dots \rho_{i-1} \lambda_i$, or $j > i$ and $\rho_i \dots \rho_{j-1} \mu_j \Subset \lambda_i \rho_i \dots \rho_{j-1}$. In the last two cases,
\begin{equation*}
\mu_j \subseteq \rho_j \subseteq  \dots \subseteq \rho_{i-1} \subseteq \lambda_i,\ \
 \mu_j \subseteq \rho_{j-1} \subseteq  \dots \subseteq \rho_{i} \subseteq \lambda_i,
\end{equation*}
respectively. Note that, for peerless $\q$, the last inclusion is impossible. If $\lambda_i$ and $\mu_j$ subsume each other, in which case $\lambda_i = \mu_j$, we call $(\lambda_i,\mu_j)$ a \emph{matching pair}. Observe also that, for $\D^{i}_{\q} = \rho_0\dots \rho_{i-1} \lambda_i \rho_i \dots \rho_n$, if
$\D^{i}_{\q}\models\q'$, then $\lambda_i$ subsumes some $\mu_j$: $\rho_0 \dots \rho_n \emptyset \Subset \D_\q^i$ means that $\lambda_i$ subsumes $\mu_{n+1} = \emptyset$, and $\rho_0 \dots \mu_j \dots \rho_n \Subset \D_\q^i$ that $\lambda_i$ subsumes $\mu_{j}$. The proof of the next lemma is given in the Appendix:

\begin{lemma}\label{matching}
For any queries $\q$ and $\q'$ as above, either $(i)$ each $\lambda_i \ne \bot$ subsumes $\mu_j$ occurring in some matching pair $(\lambda_k,\mu_j)$ or $(ii)$ $\q$ and $\q'$ are separated by a data instance of the form $\D^i_\q$ 
 or $\smash{\D^j_{\q'}}$
. Also, if $\q$ is peerless, $\lambda_i$ can only subsume $\mu_j$ in the matching pair $(\lambda_i,\mu_j)$ with $i \ge j$, in which case
$
\mu_j = \rho_j =  \dots = \rho_{i-1} = \lambda_i.
$
\end{lemma}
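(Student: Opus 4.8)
The plan is to exploit that $\q$ and $\q'$ differ only in their loop labels, so that very short witnesses pin the loops down. The entry point is the observation recorded just before the lemma: since $\D\models\q'$ holds iff some word of $\L(\q')$ embeds into $\D$ under $\Subset$, and the one-loop instance $\D^i_{\q}=\rho_0\cdots\rho_{i-1}\lambda_i\rho_i\cdots\rho_n$ has length exactly one above the base word $\rho_0\cdots\rho_n$, any word of $\q'$ witnessing $\D^i_{\q}\models\q'$ must itself use a single loop, i.e.\ be of the form $\rho_0\cdots\rho_{j-1}\mu_j\rho_j\cdots\rho_n$ (with $\mu_{n+1}=\emptyset$ allowed). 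Comparing this word with $\D^i_{\q}$ position by position is precisely the assertion that $\lambda_i$ subsumes $\mu_j$, and symmetrically for $\D^j_{\q'}\models\q$.

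First I would set up the dichotomy. As $\D^i_{\q}\models\q$ and $\D^j_{\q'}\models\q'$ always hold, a single $\D^i_{\q}$ with $\D^i_{\q}\not\models\q'$, or a single $\D^j_{\q'}$ with $\D^j_{\q'}\not\models\q$, already separates $\q$ from $\q'$ and lands us in case $(ii)$. So I may assume the complementary situation, in which every $\D^i_{\q}\models\q'$ and every $\D^j_{\q'}\models\q$; by the entry point this means every $\lambda_i\neq\bot$ subsumes some $\mu_j$ and, symmetrically, every $\mu_j\neq\bot$ subsumes some $\lambda_i$. The task is then to promote ``subsumes some $\mu_j$'' to ``subsumes some $\mu_j$ lying in a matching pair'', which is case $(i)$.

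For the combinatorial core I would use that subsumption forces set inclusion of the subsumed loop into the subsumer: reading off the end components of the $\Subset$-inequality gives $\mu_j\subseteq\lambda_i$, and dually $\lambda_i\subseteq\mu_j$. Hence iterating the two subsumption maps yields weakly $\subseteq$-decreasing sequences of subsets of $\sigma$, which stabilise by finiteness; on the stabilised cyclic part all sets are equal, and when a $\lambda$ and a $\mu$ are equal as sets the $\Subset$-inequalities degenerate to equalities, so both subsumption directions hold at once and a matching pair appears. A key ingredient is that subsumption \emph{composes} along chains of decreasing indices: concatenating the inclusion chains $\mu_l\subseteq\rho_l\subseteq\cdots\subseteq\rho_{k-1}\subseteq\lambda_k$, $\lambda_k\subseteq\rho_k\subseteq\cdots\subseteq\rho_{j-1}\subseteq\mu_j$ and $\mu_j\subseteq\rho_j\subseteq\cdots\subseteq\rho_{i-1}\subseteq\lambda_i$ (for $l\le k\le j\le i$) collapses the interior $\lambda_k,\mu_j$ and shows that $\lambda_i$ already subsumes $\mu_l$. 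Thus a $\lambda_i$ that subsumes anything also subsumes the matchable $\mu$ at the foot of its chain, giving $(i)$. I expect the handling of non-monotone index patterns in the fully general case to be the main obstacle, since there the $\rho$-interval part of the subsumption condition does not concatenate as cleanly, and one must re-route the chain to keep it monotone.

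Finally, the peerless refinement is where incomparability does the work, and it is the form actually needed downstream; here the argument is clean and, pleasingly, needs no appeal to minimality. Assume the complementary (non-$(ii)$) situation and let $\q$ be peerless. If $\lambda_i\neq\bot$ subsumes $\mu_j$, the case $j>i$ is impossible because it forces $\rho_i\subseteq\lambda_i$, contradicting incomparability of $\rho_i$ and $\lambda_i$; hence $i\ge j$. Now pick, using the complementary situation, a non-$\bot$ loop $\lambda_k$ that $\mu_j$ subsumes and run a case analysis on $k$: for $k<j$ and for $j<k<i$ the chains force $\lambda_k\subseteq\rho_k$, for $k=j<i$ they force $\lambda_j\subseteq\mu_j\subseteq\rho_j$, and for $k>i$ they force $\rho_i\subseteq\lambda_i$ — each contradicting peerlessness at $k$, resp.\ at $i$. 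Hence $k=i$, so $\lambda_i$ and $\mu_j$ subsume each other, $(\lambda_i,\mu_j)$ is a matching pair, and $\lambda_i=\mu_j$; feeding this back into $\mu_j\subseteq\rho_j\subseteq\cdots\subseteq\rho_{i-1}\subseteq\lambda_i$ collapses every inclusion to equality, yielding $\mu_j=\rho_j=\cdots=\rho_{i-1}=\lambda_i$, as required.
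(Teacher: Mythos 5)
Your overall strategy is the same as the paper's: translate satisfaction of the one-loop instances $\D^i_\q$, $\D^j_{\q'}$ into subsumption statements, follow the resulting alternating chain of subsumptions, and use peerlessness to force the chain into a matching pair $(\lambda_i,\mu_j)$ with $i\ge j$ and the collapsed chain of equalities. Your peerless case analysis (ruling out $k<j$, $k=j<i$, $j<k<i$ and $k>i$ by deriving $\lambda_k\subseteq\rho_k$, $\lambda_j\subseteq\rho_j$ or $\rho_i\subseteq\lambda_i$, each contradicting incomparability) is correct and complete, and it is in fact more careful than the paper's own justification: the paper simply asserts that, for peerless $\q$, a subsumed $\mu_j$ cannot satisfy $\mu_j\subsetneq\lambda_i$, which is false as a standalone claim (take $\q=A\,\{B,D\}^*\,C\,\emptyset^*$ and $\q'=A\,B^*\,C\,\emptyset^*$, both minimal: here $\lambda_1=\{B,D\}$ subsumes $\mu_1=\{B\}$ strictly, and the lemma survives only because $\D^1_{\q'}=A\,B\,C$ separates the two queries, i.e., case $(ii)$ holds). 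The assertion becomes true exactly under your additional hypothesis that $\mu_j$ in turn subsumes some $\lambda_k$ — guaranteed either by the non-$(ii)$ situation or by $\mu_j$ being matched — which is how you correctly set things up.

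The one genuine lacuna is the point you flag yourself: in the general case you need the original $\lambda_i$ to subsume the matched $\mu$ sitting at the stabilised end of the chain, and your composition argument is only carried out for monotone index patterns. This gap closes immediately once you notice that subsumption is nothing but the relation $\Subset$ between one-loop expansions: by definition, $\lambda_i$ subsumes $\mu_j$ iff $\D^j_{\q'}\Subset\D^i_\q$ (and likewise for $\mu$'s subsuming $\lambda$'s, since the two queries share the same $\rho_0,\dots,\rho_n$), and $\Subset$ is transitive, being a position-wise comparison of words of equal length. Hence subsumption composes along arbitrary alternating chains, with no monotonicity assumption or re-routing needed, and your stabilisation argument then yields case $(i)$ in full: the terminal matching pair's $\mu$ is $\Subset$-below $\D^i_\q$, i.e., subsumed by $\lambda_i$. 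For what it is worth, the paper's own proof is equally terse at this point — it follows the strictly decreasing chain to termination but never verifies that $\lambda_i$ subsumes the $\mu$ of the terminal matching pair — so the transitivity observation is needed to complete that argument as well.
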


Note that the number of data instances of the form $\D^i_{\q'}$ for all possible $\Qp^\sigma[\U]$-queries $\q'$ can be exponential in $|\sigma|$. The following example indicates how to overcome this issue.

\begin{example}\label{twolambdas}\em
Let $\sigma = \{A,B,C,D,X\}$. To separate the query $X\{C,D\}^*A \emptyset^*$ from any  $X\lambda^*A\emptyset^*$ with $A,D \notin \lambda$, we can use $\D = X\sigma \setminus \{A,D\}A$.
\end{example}

\begin{theorem}\label{uno}
The class $\Qpi[\U]$ is polynomially characterisable within $\Qp^\sigma[\U]$.
\end{theorem}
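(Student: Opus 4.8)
The plan is to fix a minimal, peerless query $\q = \rho_0 \lambda_1^* \rho_1 \cdots \lambda_n^* \rho_n \emptyset^*$ as in \eqref{regupath} and to exhibit a polynomial-size example set $E=(E^+,E^-)$ such that every minimal $\q' \in \Qp^\sigma[\U]$ fitting $E$ is equivalent to $\q$; I work throughout in the language view, using that $\D \models \q$ iff some word of $\L(\q)$ embeds into $\D$ under $\Subset$. For $E^+$ I would take the minimal word $w_{\q} = \rho_0 \rho_1 \cdots \rho_n \in \L(\q)$ together with the \emph{witness instances} $\D^i_{\q} = \rho_0 \cdots \rho_{i-1}\, \lambda_i\, \rho_i \cdots \rho_n$ for each $i$ with $\lambda_i \neq \bot$; each $\D^i_{\q}$ is literally a word of $\L(\q)$ (take the $i$-th star once, the others zero), so these are genuine positive examples. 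For $E^-$ I would use two kinds of instances: \textbf{(a)} local perturbations of $w_{\q}$ — removing a single atom from some $\rho_i$, or deleting/splitting a position — in the spirit of rules (a)--(e) in the proof sketch of Theorem~\ref{thm:nextdiamond}, and \textbf{(b)} \emph{maximal-fill} instances following Example~\ref{twolambdas}, one for each inter-block gap and each loop, in which a single position is filled with the largest subset of $\sigma$ that still keeps the instance unsatisfied by $\q$.

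The argument then splits into two stages. In \textbf{Stage~1} I would pin down the $\rho$-skeleton: since $w_{\q}\in E^+$, any fitting $\q'$ admits a word $w' \in \L(\q')$ with $w' \Subset w_{\q}$ of the same length $n+1$, and minimality of $\q$ ensures the perturbation negatives in (a) genuinely fail $\q$. Arguing that they would instead be satisfied whenever $\q'$ merges, skips, splits, or shrinks a block, I would conclude $n'=n$ and $\rho'_i = \rho_i$ for all $i$. This is exactly the hypothesis needed to bring Lemma~\ref{matching} into play, since that lemma compares two queries sharing the same $\rho$-sequence.

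In \textbf{Stage~2} I would control the loops. Applying Lemma~\ref{matching}, the positive witnesses $\D^i_{\q}$ rule out its alternative $(ii)$ (any separating $\D^i_{\q}$ or $\D^j_{\q'}$ of that form is already exhausted by the examples), forcing each $\lambda_i \neq \bot$ to subsume a $\mu_j$ in a matching pair; because $\q$ is peerless, the lemma's rigidity $\mu_j = \rho_j = \cdots = \rho_{i-1} = \lambda_i$ with $i \ge j$ holds, so the matching is essentially positional and only $O(n)$ witnesses are needed. The maximal-fill negatives of kind (b) then forbid $\q'$ from carrying any loop $\mu_j$ that is too large, or sits at a gap where $\q$ has $\bot$: such a $\q'$ would embed one of these negatives, contradicting $\D \in E^-$. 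Combining the two directions yields $\mu_i = \lambda_i$ for every $i$, hence $\q' \equiv \q$. Since peerless rigidity localises everything, $E$ consists of $O(n)$ witnesses and $O(n)$ maximal-fill and perturbation negatives, each of size $O(|\q|+|\sigma|)$, so $|E|$ is polynomial in $|\q|$.

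The main obstacle is Stage~2, namely controlling the loops with only polynomially many examples: the naive approach requires one negative example per candidate value of each $\mu_j$, which is exponential in $|\sigma|$. The maximal-fill device of Example~\ref{twolambdas} collapses all of these into a single negative per position, and the delicate point is to prove that one such instance simultaneously $(i)$ fails $\q$ and $(ii)$ is satisfied by \emph{every} $\q'$ carrying a spurious loop at that position — this is precisely where the incomparability of $\lambda_i$ and $\rho_i$ is indispensable, as it guarantees that the "forbidden" atoms pinning down the fill exist and that the maximal sets are tight. A secondary subtlety is verifying that Stages~1 and 2 remain mutually consistent, i.e.\ that every listed positive genuinely satisfies $\q$ while every listed negative genuinely fails it; both rely essentially on minimality of $\q$ and on peerlessness.
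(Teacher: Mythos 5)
Your overall architecture matches the paper's (skeleton positives, single-loop witnesses $\D^i_\q$, maximal-fill negatives, then Lemma~\ref{matching} plus peerlessness), and your Stage~1 is essentially sound. But Stage~2 has a genuine gap, and it is exactly the part you flag as ``the delicate point'' without resolving it. First, your positive examples are too weak: you only unfold one loop at a time ($\D^i_\q$), whereas the paper's example set also contains the instances $(\mathfrak p_2)$ of the form $\rho_0\dots\rho_{i-1}\lambda_i^k\rho_i\dots\rho_{j-1}\lambda_j\rho_j\dots\rho_n$ with \emph{two} loops unfolded and $k\in\{1,2\}$. These are indispensable in the case where $\q'$ \emph{drops} a loop of $\q$ ($\lambda_i\neq\bot$ but $\mu_i=\bot$): there peerlessness forces $\lambda_i$ to match some earlier $\mu_j$ with $\mu_j=\rho_j=\dots=\rho_{i-1}=\lambda_i$, and ruling this configuration out requires the paper's lengthy descent argument through the instances $\D^i_{j',1}$, $\D^i_{j+1,1}$, $\D^i_{j+1,2}$ — none of which exist in your example set. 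Your claim that peerless rigidity makes the matching ``essentially positional'' and directly yields $\mu_i=\lambda_i$ silently skips this whole case.

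Second, your maximal-fill negatives are the wrong shape for the case $\mu_i\not\subseteq\lambda_i$ with $\mu_i$ subsuming a later $\lambda_j$ (so $\rho_i\subseteq\mu_i$). You fill a single position inside the plain skeleton $\rho_0\dots\rho_n$ and filter by ``still fails $\q$''. The paper's analysis shows that the negative which catches such a $\q'$ must in general have the form $\rho_0\dots\rho_{i-1}(\sigma\setminus\{A\})\rho_i\lambda_{i+1}^{k_{i+1}}\dots\lambda_n^{k_n}\rho_n$, i.e.\ it must contain nontrivial unfoldings of $\q$'s \emph{own later loops} (obtained via minimality of $\q'$ and Lemma~\ref{singlelambda}), with length bounded by $(n+1)^2$; moreover the correct filter is failure of $\q^\dag$ (the query with all $\lambda_j$, $j\le i$, replaced by $\bot$), not failure of $\q$ — it is peerlessness that then guarantees these instances are genuine negatives for $\q$ at all (cf.\ Example~\ref{badexample}). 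This is the paper's family $(\mathfrak n_2)$, and without it a query $\q'$ carrying such a spurious $\mu_i$ fits your $E$. Also note that ``the largest subset of $\sigma$ keeping the instance unsatisfied by $\q$'' is not unique; the usable polynomial family is $\sigma\setminus\{A,B\}$ for all $A\in\lambda_i\cup\{\bot\}$, $B\in\rho_i\cup\{\bot\}$, which is the paper's $(\mathfrak n_1)$. So the proposal identifies the right obstacles but does not supply the two constructions ($(\mathfrak p_2)$ and $(\mathfrak n_2)$) on which the actual proof turns.
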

\begin{proofsketch}
We show that any $\q = \rho_0 \lambda_1^* \rho_1 \lambda_2^* \dots \lambda_n^* \rho_n\emptyset^* $ in $\Qpi[\U]$ is characterised by the example set $E = (E^+,E^-)$ where $E^+$ contains all data instances of the following forms:
\begin{description}
\item[$(\mathfrak p_0)$] $\rho_0\dots\rho_n$,

\item[$(\mathfrak p_1)$] $\rho_0\dots\rho_{i-1}\lambda_i\rho_i \dots \rho_n = \D^i_\q$,

\item[$(\mathfrak p_2)$] 
$\rho_0 \dots \rho_{i-1} \lambda_{i}^k \rho_{i} \dots \rho_{j-1} \lambda_{j}\rho_j\dots\rho_n=\D^j_{i,k}$, for  $i < j$ and $k = 1,2$;

\end{description}
and $E^-$ has all instances that are \emph{not} in $\boldsymbol{L}(\q)$ of the forms:
\begin{description}
\item[$(\mathfrak n_0)$] 
$\sigma^{n}$ and $\sigma^{n-i}\sigma \setminus \{A\} \sigma^{i}$, for $A\in\rho_i$,

\item[$(\mathfrak n_1)$] $\rho_0\dots\rho_{i-1}\sigma \setminus \{A,B\}\rho_i\dots\rho_n$, for $A\in \lambda_i\cup\{\bot\}$ and $B\in\rho_i\cup\{\bot\}$,

\item[$(\mathfrak n_2)$]
for all $i$ and $A\in\lambda_i\cup\{\bot\}$, \emph{some} data instance
\begin{equation}\label{raznost}
\D^i_{\!A} = \rho_0 \dots \rho_{i-1} (\sigma\setminus\{A\})\rho_i\lambda_{i+1}^{k_{i+1}} \dots \lambda_{n}^{k_n}\rho_n,
%
\end{equation}
if any, such that $\max(\D^i_{\!A}) \le (n+1)^2$
%
%
%
and $\D^i_{\!A}\not\models \q^\dag$ for $\q^\dag$ obtained from $\q$ by replacing $\lambda_j$, for all $j \le i$, with $\bot$.

Note that $\D^i_{\!A} \not\models \q$ for peerless $\q$.
\end{description}
By definition, $\q$ fits $E$ and $|E|$ is polynomial in $|\q|$.
We prove in the Appendix that $E$ uniquely characterises $\q$.
\end{proofsketch}

One reason why this construction does not generalise to the whole $\Qp^\sigma[\U]$ is that $\D^i_{\!A} \not\models \q^\dag$ does not imply $\D^i_{\!A} \not\models \q$ for non-peerless $\q$, as shown by the following example:

\begin{example}\label{badexample}\em
Let $\q=XA^*AB^*A\bot^*AB^*AA^*BB^*X\emptyset^*$. For any data instance $\D^3_{\bot}$ satisfying~\eqref{raznost}---for example, $\D^3_{\bot} = XAA\sigma ABABX$---we have $\D^3_\bot\models\q$.
\end{example}


\section{Characterisability in $\mathcal{Q}[\Diamond]$}
\label{sec:branching}
In the previous two sections, we have investigated characterisability of path-shaped queries. Here, we first justify that restriction by exhibiting two examples that show how temporal branching can destroy polynomial characterisability in $\mathcal{Q}[\Diamond]$. Both examples make use of \emph{unbalanced} queries, in which different branches have different length. We then show that this is no accident: one can at least partially restore polynomial characterisability for classes without unbalanced queries.

We start by observing that, without loss of generality, it is enough to consider conjunctions of path queries only:

\begin{lemma}\label{lem:conjof}
	For every $\q\in \mathcal{Q}[\Diamond]$, one can compute
	in polynomial time an equivalent query of the form $\q_{1}\wedge \cdots \wedge\q_{n}$ with $\q_{i}\in \mathcal{Q}_{p}[\Diamond]$, for $1 \le i \leq n$.
\end{lemma}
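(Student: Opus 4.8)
The plan is to convert the query-tree representation of $\q$ into a conjunction of its root-to-leaf branches, each of which is a path query, and to show that the resulting conjunction is equivalent to $\q$ and computable in polynomial time.

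Recall that every $\q \in \mathcal{Q}[\Diamond]$ is a tree-shaped CQ $Q(t_0)$ over the single binary predicate $<$ (since $\Diamond$ corresponds to $<$), with unary atoms attached to the variables. First I would make the tree structure explicit: the variables of $Q(t_0)$ form a tree rooted at $t_0$, where the children of a variable $t$ are exactly those $t'$ with $<(t,t') \in Q(t_0)$. For each leaf variable $\el$ of this tree, let $\pi_\el$ be the unique path $t_0 < t_1 < \dots < t_m = \el$ from the root to $\el$, and let $\q_\el \in \mathcal{Q}_p[\Diamond]$ be the path query obtained by collecting, along $\pi_\el$, the conjunctions $\rho_i$ of unary atoms holding at each $t_i$ together with the $<$ relations between consecutive variables. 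Concretely, $\q_\el = \rho_0 \land \Diamond(\rho_1 \land \Diamond(\dots \land \Diamond \rho_m))$.

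The key step is to verify the equivalence $\q \equiv \bigwedge_\el \q_\el$, the conjunction ranging over all leaves $\el$. For the direction $\q \models \bigwedge_\el \q_\el$, note that each $\q_\el$ is obtained from $Q(t_0)$ by dropping all atoms and variables outside the path $\pi_\el$; since dropping conjuncts only weakens a query, any homomorphism witnessing $\D,0\models\q$ restricts to a homomorphism witnessing $\D,0\models\q_\el$. For the converse $\bigwedge_\el \q_\el \models \q$, suppose $\D,0 \models \q_\el$ for every leaf $\el$; I would assemble the individual path-homomorphisms into a single homomorphism from $Q(t_0)$ to $\D$. The crucial point enabling this is that the only binary predicate is the \emph{order} $<$, which is transitive and imposes no $\suc$-type rigidity: once each branch can be mapped independently into $\D$ respecting $<$, the branch witnesses can be combined freely because the order constraints on distinct branches share only the common prefix, and any two witnesses for a shared prefix variable can be unified by taking (say) the branch that maps it earliest and shifting the others rightward using transitivity of $<$. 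Here I would use the fact that $\D$ can be padded with trailing $\emptyset$'s (the paper allows identifying $\D$ with its padded variants), so there is always room to push later parts of a branch further right. The main obstacle is precisely this combination argument: I expect the care to lie in showing that the separate branch-homomorphisms can be reconciled on their shared prefixes into one global homomorphism, and it is here that the absence of $\nxt$ (i.e.\ of $\suc$) is essential—with $\nxt$, the exact positions would be forced and independent branch witnesses could not in general be merged.

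Finally, for the complexity claim, the tree $Q(t_0)$ has at most $|\q|$ variables, hence at most $|\q|$ leaves, and each path query $\q_\el$ has size at most $|\q|$; thus the total size of $\bigwedge_\el \q_\el$ is $O(|\q|^2)$, and the whole construction—extracting the tree, enumerating leaves, and reading off each root-to-leaf path—runs in polynomial time. This yields the desired equivalent conjunction of $\mathcal{Q}_p[\Diamond]$-queries.
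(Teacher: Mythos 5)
Your proposal is correct and is essentially the paper's own argument: the paper obtains the same root-to-leaf decomposition by (repeatedly) applying the equivalence $\rho_0 \wedge \Diamond\bigl(\rho_1 \wedge \bigwedge_{i}\Diamond\q_i\bigr) \equiv \rho_0 \wedge \bigwedge_{i}\Diamond\bigl(\rho_1 \wedge \Diamond\q_i\bigr)$, whose nontrivial direction is proved by exactly your ``earliest witness'' idea---map each shared prefix variable to the minimum of its branch witnesses, whereupon every deeper witness of every branch already lies strictly to its right. Note that this makes your ``shifting rightward''/padding step unnecessary (padding with $\emptyset$'s could not supply non-empty labels anyway); the minimum argument alone suffices to merge the branch homomorphisms.
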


The first example showing non-polynomial characterisability is rather straightforward but requires unbounded branching and an unbounded number of atoms.
We write queries $\q \in \mathcal{Q}_{p}^{\sigma}[\Diamond]$ of the form
\begin{equation}\label{sdnpath}
	\q = \rho_{0} \land \Diamond (\rho_1 \land \Diamond (\rho_2 \land \dots \land \Diamond \rho_n) )
\end{equation}
as words $\rho_{0}\rho_{1}\dots\rho_{n}$ over $2^{\sigma}$ (omitting but not forgetting $\lambda_i^* = \emptyset^*$ from~\eqref{regupath}) and also use $\rho_{0}\rho_{1}\ldots\rho_{n}$ to denote the data instance defined by $\q$.

\begin{example}
\label{thm:superpolb}\em	
Consider $\q_n = \s_1 \land \dots \land \s_n$, where $n \ge 2$ and each $\s_i$ is a word repeating $n$ times the sequence $A_1\dots A_n$ (of singletons) with omitted $A_i$.
	%
	%
Now, consider the queries $\q^{\avec{p}}_n = \q_n \land \avec{p}$, where $\avec{p} = \Diamond (A_{i_1} \land \Diamond (A_{i_2} \land \dots \land \Diamond A_{i_n}) )$ and $A_{i_1}\dots A_{i_n}$ is a permutation of $A_1\dots A_n$. Then $\q^{\avec{p}}_n \models \q_n$ and $\q_n \not\models \q^{\avec{p}}_n$ as shown by the data instance $\s_{i_1} \s_{i_2} \dots \s_{i_n}$. Moreover, if
	%
$\D \models \q_n$, $\D \not\models \avec{p}$ and $\avec{p}'\ne \avec{p}$, then $\D \models \avec{p}'$. It follows that, in any $E = (E^+,E^-)$ uniquely characterising $\q_n$, the set $E^+$ contains at least $n!$ data instances.
	%
\end{example}
The class $\mathcal{Q}_{\leq n}[\Diamond]$ of queries of \emph{branching factor} at most $n$ contains all queries in
$\mathcal{Q}[\Diamond]$ that are equivalent to a query of the form
$\q_{1} \wedge \cdots \wedge \q_{m}$ with $m\leq n$ and $\q_{i}\in \mathcal{Q}_{p}[\Diamond]$. We next provide an example of non-polynomial characterisability that requires four atoms and bounded branching only.
\begin{example}
\label{thm:cc}\em
Let $\sigma=\{A_{1},A_{2},B_{1},B_{2}\}$, $\q_{1}= \emptyset (\s\sigma)^{n} \s$, and
$\q_{2} = \emptyset\sigma^{2n+1}$, where
$
\s=\{A_{1},A_{2}\}\{B_{1},B_{2}\}$. Consider the set $P$ of $2^{n+1}$-many queries of the form
$\emptyset\s_{1}\dots \s_{n+1}
$
with $\s_{i}$ either $\{A_{1}\}\{A_{2}\}$ or $\{B_{1}\}\{B_{2}\}$. Then $\q_{1}\wedge \q_{2}\not\models \q$ for any
$\q\in P$ and, for any $\mathcal{D}$ with $\mathcal{D}\models \q_{1}\wedge \q_{2}$,  there is at most one $\q\in P$ with $\mathcal{D}\not\models \q$ (the proof is rather involved).
It follows that $\q_{1} \wedge \q_{2} \not\models \q_{1} \wedge \q_{2} \wedge \q$ for all $\q\in P$, but $2^{n+1}$ positive examples are needed to separate $\q_{1} \wedge \q_{2}$ from all $\q_{1} \wedge \q_{2} \wedge \q$ with $\q\in P$.
\end{example}

We next identify polynomially characterisable classes of $\mathcal{Q}[\Diamond]$-queries, assuming as before that $\rho_{n}\ne\emptyset$ in any $\q$ of the form \eqref{dnpath}. We call a query $\q_{1}\wedge \cdots \wedge \q_{n}\in \mathcal{Q}[\Diamond]$ with $\q_{1},\dots,\q_{n}\in \mathcal{Q}_{p}[\Diamond]$ \emph{balanced} if all $\q_{i}$ have the same depth; further, we call it \emph{simple} if, in each $\q_i$ given by~\eqref{dnpath}, $|\rho_j| = 1$ for all $j$. Let $\mathcal{Q}_{b}[\Diamond]$ denote the class of queries in $\mathcal{Q}[\Diamond]$ that are equivalent to a balanced query.
\begin{theorem}\label{thm:branching1}
$(i)$ The class of simple queries in $\mathcal{Q}_{b}[\Diamond]$ is polynomially characterisable within $\mathcal{Q}_{b}[\Diamond]$.
	
$(ii)$ For any $n$, the class $\mathcal{Q}_{b}[\Diamond]\cap \mathcal{Q}_{\leq n}[\Diamond]$ is polynomially characterisable.
\end{theorem}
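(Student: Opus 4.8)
The plan is to characterise a balanced query $\q$ by the two-sided frontier method used elsewhere in the paper: a polynomial set $E^{+}\subseteq\L(\q)$ forcing every fitting competitor $\q'$ to satisfy $\q\models\q'$, together with a polynomial set $E^{-}$ consisting of models of the minimal generalisations of $\q$, forcing $\q'\models\q$; combined, any fitting $\q'$ is equivalent to $\q$. Throughout I view a $\mathcal{Q}_{p}[\Diamond]$-query as a subsequence pattern over $2^{\sigma}$, so that $\D\models\q_{1}\wedge\dots\wedge\q_{m}$ iff each $\q_{i}$ embeds into $\D$ by a strictly increasing map sending the root to position $0$, and ``balanced'' means all $m$ patterns have the same length $d+1$, where $d=\dep(\q)$. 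Using Lemma~\ref{lem:conjof} I first bring $\q$ into a conjunction-of-paths form and delete every conjunct that is already implied by another (a pattern embedding into another conjunct's canonical word), obtaining a reduced balanced query with the same models.

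For part $(i)$, the positive set $E^{+}$ will consist of the aligned word $\D_{\q}=\delta_{0}\cdots\delta_{d}$, where $\delta_{j}$ is the union of the level-$j$ atoms of all paths, together with a polynomial family of ``spread-out'' instances that place the atoms of the individual paths at separated timestamps. The role of $\D_{\q}$ is to bound the depth of any balanced competitor by $d$ (a single pattern of length $d+1$ cannot embed into a shorter word), and the role of the spread-out instances is to prevent a competitor pattern from being satisfied by borrowing atoms from distinct paths in an order not realised in $\q$. The key lemma to prove is that this polynomial $E^{+}$ is a \emph{test set} for $\q$: every pattern of length at most $d+1$ that embeds into all of $E^{+}$ already embeds into every model of $\q$, whence $\q\models\q'$ for every balanced $\q'$ fitting $E^{+}$. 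This is exactly where balancedness is indispensable: the factorially many positive examples forced in Example~\ref{thm:superpolb} are needed only because the offending competitors there are \emph{unbalanced}, and they disappear once competitors are required to have uniform depth.

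The negative set $E^{-}$ realises a frontier of $\q$ within $\mathcal{Q}_{b}[\Diamond]$. I would enumerate the covers (minimal generalisations) $\q_{c}$ of the reduced $\q$: (a) deleting one of the $m$ path conjuncts; (b) replacing a single singleton at some level by the empty conjunction $\top$, keeping all patterns of length $d+1$; and the few covers that lower the common depth uniformly. For each such $\q_{c}$ we have $\q\models\q_{c}$ and $\q\not\equiv\q_{c}$, so there is a model of $\q_{c}$ falsifying $\q$, which can be taken of polynomial size by a small-separating-instance argument for $\Diamond$-queries, and we put it into $E^{-}$. The standard frontier argument then closes the proof: if $\q'\in\mathcal{Q}_{b}[\Diamond]$ fits $E$ and $\q\not\equiv\q'$, then $\q\models\q'$ (from $E^{+}$) yields a cover $\q_{c}$ with $\q\models\q_{c}\models\q'$, hence the $E^{-}$-witness of $\q_{c}$ also satisfies $\q'$, contradicting that $\q'$ fits $E$. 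The obstacle here is to show that, under the balanced restriction, there are only polynomially many covers, each with a polynomial witness; the uniform-depth constraint is what rules out the interleaving of patterns of different lengths responsible for the exponentially many incomparable generalisations in Example~\ref{thm:cc}.

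For part $(ii)$ the target queries need not be simple and the characterisation is against all of $\mathcal{Q}[\Diamond]$, but branching is bounded by the constant $n$. The same template applies, with two changes. First, since there are at most $n$ paths, the number of their interleavings is at most $n!$, a constant, so a \emph{constant} number of spread-out positive examples suffices to pin down the link structure even though positions now carry arbitrarily many atoms and the alphabet is unbounded; this is precisely the quantity that blows up in Example~\ref{thm:superpolb} only when branching is unbounded. Second, competitors may now have unbounded branching, so the frontier must be taken within the full $\mathcal{Q}[\Diamond]$; for this I would invoke a polynomial frontier construction for tree-shaped $\Diamond$-queries (in the style of the $\mathcal{ELI}$-frontier results cited in the paper) and argue, as above, that balancedness keeps the relevant covers polynomial. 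The main obstacle for $(ii)$ is exactly this last point --- establishing a polynomial-size frontier against \emph{arbitrary-branching} competitors --- since the positive side is made routine by the constant bound on branching.
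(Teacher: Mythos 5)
Your proposal has the right two\--sided skeleton ($E^{+}$ forces $\q\models\q'$, $E^{-}$ forces $\q'\models\q$), but the mechanism you choose for $E^{-}$ is where it breaks, and it is not how the paper proceeds. You build $E^{-}$ from models of an enumerated set of ``covers'' (delete a conjunct, weaken a singleton to $\top$, lower the depth) and rely on these forming a frontier --- which you yourself flag as the main obstacle. In fact that cover list is provably incomplete. Take $\q = C\wedge\Diamond(A\wedge\Diamond B)$ (a single path, simple and balanced) and $\q' = C\wedge\Diamond A\wedge\Diamond B$ (balanced, simple, branching $2$). Then $\q\models\q'$ and $\q'\not\models\q$ (witness $\{C\}\{B\}\{A\}$), yet none of your covers entails $\q'$: $C\wedge\Diamond\Diamond B\not\models\Diamond A$, $C\wedge\Diamond(A\wedge\Diamond\top)\not\models\Diamond B$, $\Diamond(A\wedge\Diamond B)\not\models C$, and no depth\--lowered weakening works either. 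With the natural minimal witnesses for these covers (e.g.\ $\{C\}\emptyset\{B\}$ and $\{C\}\{A\}$), $\q'$ is false on every negative example and, being implied by $\q$, true on every positive one, so $\q'$ fits your example set and the characterisation fails. The missing generalisations are the \emph{path\--splitting} covers (turning $\Diamond(A\wedge\Diamond B)$ into $\Diamond A\wedge\Diamond B$), and making that enumeration complete and polynomial is precisely the unresolved hard part. The paper avoids frontiers entirely on the negative side: Lemma~\ref{lem:positive} constructs, for each path conjunct, a single \emph{duality} word $\mathcal{D}_{\q,k}=\sigma\s_{1}^{k}\sigma\cdots\s_{n-1}^{k}\sigma\s_{n}^{k}$ (with $\s_{i}$ the sequence of $\sigma\setminus\{A\}$ for the atoms $A$ of level $i$) that falsifies the conjunct but satisfies \emph{every} query of depth $\le k$ not entailing it; together with the instances $\sigma\setminus\{A\}\sigma^{m}$ for the atoms of the initial conjunct, these polynomially many negative examples (Lemma~\ref{lem:negexamples}) force $\q'\models\q$ for any fitting competitor of any branching --- no cover enumeration at all, and this works for all of $\mathcal{Q}^{\sigma}[\Diamond]$, not just balanced queries.

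The positive side is also not carried out. For part $(i)$ your ``spread\--out instances'' and the test\--set lemma are exactly the content that has to be proved, and you give no construction; the paper's is concrete: besides $\{A_{0}\}\sigma^{n}$ and the concatenation word $\{A^{1}_{0}\}\cdots\{A^{1}_{n}\}\cdots\{A^{m}_{0}\}\cdots\{A^{m}_{n}\}$ (whose role is to force fitting competitors to be simple and over $\sigma$), it uses the prefix\--testing instances of Lemma~\ref{th:words}, $\mathcal{D}_{w,n}=\sigma(\sigma\setminus\{A_{1}\})^{n}\sigma(\sigma\setminus\{A_{2}\})^{n}\cdots(\sigma\setminus\{A_{k}\})^{n}\sigma^{n-k-1}$, one for each one\--atom extension $w$ of a prefix of a conjunct of $\q$ that is not itself such a prefix; the equivalence ``$\mathcal{D}_{w,n}\not\models\q'$ iff $\q'$ has prefix $w$'' is what pins competitors down. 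For part $(ii)$ your counting is off: what the positive examples must distinguish is not the $\le n!$ interleavings of the paths but the per\--level mismatch patterns, and the paper takes one example $\mathcal{D}_{f}$ for \emph{every} map $f\colon\{1,\dots,m\}\to\{1,\dots,N\}$ (inserting $\rho^{i}_{f(i)}$ into $\sigma^{N}$ at position $f(i)$), i.e.\ $N^{m}\le|\q|^{n}$ instances --- polynomial for fixed $n$, but certainly not constantly many. So both halves of your construction need to be replaced: duality instances rather than a frontier on the negative side, and the $\mathcal{D}_{w,n}$ / $\mathcal{D}_{f}$ families rather than unspecified spread\--out words on the positive side.
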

\begin{proofsketch}
Let $\q\in \mathcal{Q}_{p}^{\sigma}[\Diamond]$. We start with a lemma on the existence of polynomial-size $\sigma$-data instances $\mathcal{D}_{\q,k}$ such that  $\mathcal{D}_{\q,k}\not\models \q$ and $\mathcal{D}_{\q,k}\models \q'$ for all $\q'\in \mathcal{Q}_{p}^{\sigma}[\Diamond]$ with $\q'\not\models \q$ and $\dep(\q')\leq k$. Note that such $\mathcal{D}_{\q,k}$ do not exist in general.
\begin{example}\em
	Let $\q=A\wedge B$. Then $A\not\models \q$ and $B\not\models\q$ but there does not exist any $\mathcal{D}_{\q,0}$ such that $\mathcal{D}_{\q,0}\not\models \q$, $\mathcal{D}_{\q,0}\models A$ and $\mathcal{D}_{\q,0}\models B$.
\end{example}

In the following lemma, we therefore assume that $\q$ does not speak about the initial timepoint.
\begin{lemma}\label{lem:positive}
Let $\q\in \mathcal{Q}_{p}^{\sigma}[\Diamond]$ be of the form $\Diamond\q'$
	and let $k>0$. Then one can construct in polynomial time a $\sigma$-data instance
	$\mathcal{D}_{\q,k}$ such that $\mathcal{D}_{\q,k}\not\models\q$ and $\mathcal{D}_{\q,k}\models \q'$ for all $\q'\in \mathcal{Q}_{p}^{\sigma}[\Diamond]$
		with $\q'\not\models \q$ and $\dep(\q')\leq k$.
\end{lemma}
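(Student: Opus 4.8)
The plan is to reduce the lemma to a purely combinatorial problem about words over $2^\sigma$ and then solve that problem by an explicit construction. Write $\q=\Diamond\psi$ with $\psi=\rho_1\dots\rho_n$ its forward part (so $\q$ imposes no constraint at the initial timepoint), and for words $u,v$ over $2^\sigma$ write $u\preceq v$ if $u$ embeds into $v$ as a scattered subword respecting $\subseteq$ letterwise, i.e.\ $u=u_1\dots u_p$ and there are $1\le j_1<\dots<j_p\le|v|$ with $u_i\subseteq v_{j_i}$. I set $\mathcal{D}_{\q,k}=\sigma\, w$, i.e.\ the full letter $\sigma$ at position $0$ followed by a word $w$ over $2^\sigma$ still to be constructed. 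Using the homomorphism characterisation of $\models$ already employed for $\Qp[\nxt,\Diamondw]$, one has $\mathcal{D}_{\q,k}\models\q$ iff $\psi\preceq w$, and for any $\q'=\rho'_0\rho'_1\dots\rho'_d\in\mathcal{Q}_{p}^{\sigma}[\Diamond]$ one has $\q'\models\q$ iff $\psi\preceq\rho'_1\dots\rho'_d$ (evaluate $\q$ on the canonical model of $\q'$). Since $\rho'_0\subseteq\sigma$ always holds at position $0$, the lemma reduces to constructing, in polynomial time, a word $w$ with $(i)$ $\psi\not\preceq w$ and $(ii)$ $v\preceq w$ for every $\psi$-avoiding word $v$ (i.e.\ $\psi\not\preceq v$) of length $\le k$: then $(i)$ gives $\mathcal{D}_{\q,k}\not\models\q$ and $(ii)$ gives $\mathcal{D}_{\q,k}\models\q'$ for every $\q'$ of depth $\le k$ with $\q'\not\models\q$.

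For the construction I track the greedy leftmost matcher of $\psi$: after reading a word it is in ``state $i$'', where $\rho_1\dots\rho_i$ is the longest prefix of $\psi$ with $\rho_1\dots\rho_i\preceq$ the word, and reading a letter $\delta$ at state $i<n$ advances to $i+1$ exactly when $\rho_{i+1}\subseteq\delta$. For each level $i\in\{0,\dots,n-1\}$ let $F_i=\bigl(\prod_{A\in\rho_{i+1}}(\sigma\setminus\{A\})\bigr)^{k}$, a concatenation (in a fixed order of the atoms of $\rho_{i+1}$) repeated $k$ times; every letter of $F_i$ satisfies $\rho_{i+1}\not\subseteq\delta$, so it is ``safe'' at level $i$. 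I then set
\[
w=F_0\,\sigma\,F_1\,\sigma\,F_2\cdots\sigma\,F_{n-1},
\]
with $n-1$ separators $\sigma$. Its size is $O\bigl(k\,|\psi|+n\bigr)$ letters, hence polynomial in $|\q|$ and $k$; an empty $\rho_i$ simply makes the corresponding $F_{i-1}$ empty.

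For $(i)$, running the greedy matcher on $w$ the state stays put throughout each $F_i$ (safe letters) and advances by one at each separator $\sigma\supseteq\rho_{i+1}$; hence it reaches only state $n-1$ after the last $\sigma$ and cannot advance in $F_{n-1}$, so $\psi\not\preceq w$. For $(ii)$, take a $\psi$-avoiding $v$ with $|v|\le k$ and run the matcher on it: it reaches some final state $r\le n-1$, advancing at positions $p_1<\dots<p_r$ with $v_{p_j}\supseteq\rho_j$, and the letters strictly between $p_j$ and $p_{j+1}$ (as well as before $p_1$ and after $p_r$) are safe at the respective level, i.e.\ each is $\not\supseteq\rho_{i+1}$, hence $\subseteq\sigma\setminus\{A\}$ for some $A\in\rho_{i+1}$. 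Such a letter can therefore be placed into a fresh copy of the product inside $F_i$; as each block has at most $k$ letters and $F_i$ has $k$ copies, the whole level-$i$ block embeds into $F_i$, while each $v_{p_j}$ embeds into the $j$-th separator $\sigma$. Concatenating these embeddings (which land inside $F_0\,\sigma\,F_1\cdots\sigma\,F_r$, all present since $r\le n-1$) witnesses $v\preceq w$.

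The main obstacle is property $(i)$: although each separator is the maximal letter $\sigma$ and there are $n-1$ of them, $\psi$ must still fail to embed. This is exactly what the fillers secure — they block the matcher from advancing between separators, so the $n-1$ separators can complete only the prefix $\rho_1\dots\rho_{n-1}$ — and the argument rests on the standard fact that greedy leftmost matching computes the longest embeddable prefix, which I would isolate as a small preliminary lemma. The remaining points (the canonical-model reduction of the first paragraph and the block decomposition of $\psi$-avoiding words in $(ii)$) are routine given the homomorphism characterisation.
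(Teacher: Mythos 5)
Your construction is exactly the paper's: writing $\s_i$ for the concatenation of the letters $\sigma\setminus\{A\}$ over $A\in\rho_i$, the paper sets $\mathcal{D}_{\q,k}=\sigma\,\s_1^{k}\,\sigma\,\s_2^{k}\cdots\sigma\,\s_n^{k}$, which coincides letter-for-letter with your $\sigma\,F_0\,\sigma\,F_1\cdots\sigma\,F_{n-1}$. The paper dismisses correctness with ``one can show by induction''; your reduction to scattered-subword embedding and the greedy-leftmost-matcher argument is a sound way of filling in that omitted verification, including the degenerate case of empty $\rho_i$.
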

\begin{proof}
Assuming that
	$
	\q = \Diamond (\rho_1 \land \Diamond (\rho_2 \land \dots \land \Diamond \rho_n) )
	$
	with $\rho_{i}= \{A_{1}^{i},\dots , A_{n_{i}}^{i}\}$ for $i\geq 1$, we set
	$$
	\mathcal{D}_{\q,k}= \sigma\s_{1}^{k}\sigma\cdots \s_{n-1}^{k}\sigma \s_{n}^{k},
	$$
	where $\s_{i}= \sigma\setminus\{A^{i}_{1}\}\dots\sigma\setminus\{A^{i}_{n_{i}}\}$. One can show by induction that $\mathcal{D}_{\q,k}$ is as required.
\end{proof}
Using Lemma~\ref{lem:positive}, for any $\q\in \mathcal{Q}^{\sigma}[\Diamond]$, one can construct  a polynomial-size
set of negative examples as follows. Suppose $\q=\q_{1}\wedge \cdots \wedge \q_{n}\in \mathcal{Q}^{\sigma}[\Diamond]$ with
$$
\q_{i} = \rho_{0}^{i} \land \Diamond (\rho_1^{i} \land \Diamond (\rho_2^{i} \land \dots \land \Diamond \rho_{n_{i}^{i}}) ).
$$
Let $\rho=\bigwedge_{i=1}^{n} \rho_{0}^{i}$ and let $\q_{i}^{-}$ be $\q_{i}$ without the conjunct $\rho_{0}^{i}$, so Lemma~\ref{lem:positive} is applicable to $\q_{i}^{-}$.
Now let $E^{-}_{\q,m}$ contain the $\sigma$-data instances $\mathcal{D}_{\q_{i}^{-},m}$ and $\sigma\setminus\{A\}\sigma^{m}$ for all $A\in\rho$.
\begin{lemma}\label{lem:negexamples}
$(i)$ For any $\mathcal{D}\in E^{-}_{\q,m}$, we have $\mathcal{D}\not\models\q$.

$(ii)$ For any $\q'\in \mathcal{Q}^{\sigma}[\Diamond]$ with $\q'\not\models \q$ and $\dep(\q')\leq m$,  there exists $\mathcal{D}\in E^{-}_{\q,m}$ with $\mathcal{D}\models \q'$.
\end{lemma}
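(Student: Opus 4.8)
The plan is to prove the two parts separately, leaning on the entailment chain $\q\models\q_i\models\q_i^-$ (each $\q_i$ is a conjunct of $\q$, and $\q_i^-$ is $\q_i$ with a conjunct dropped) together with Lemma~\ref{lem:positive} and the semantics of $\Diamond$ over the two kinds of instance in $E^-_{\q,m}$. For part $(i)$ I would argue instance by instance. For an instance $\sigma\setminus\{A\}\sigma^m$ with $A\in\rho$, note that $A\in\rho_0^{i}$ for some $i$, so $\q_i$ (hence $\q$) asserts $A$ at timepoint $0$; since $A\notin\sigma\setminus\{A\}$, this instance already fails $\q_i$ at its root and thus $\not\models\q$. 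For an instance $\mathcal{D}_{\q_i^-,m}$, Lemma~\ref{lem:positive} gives $\mathcal{D}_{\q_i^-,m}\not\models\q_i^-$; as every model of $\q$ is a model of $\q_i^-$, we get $\mathcal{D}_{\q_i^-,m}\not\models\q$. This settles $(i)$.

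For part $(ii)$, take $\q'\in\mathcal{Q}^\sigma[\Diamond]$ with $\q'\not\models\q$ and $\dep(\q')\le m$. Since $\q=\q_1\wedge\dots\wedge\q_n$, there is some $i$ with $\q'\not\models\q_i$, and as $\q_i\equiv\rho_0^i\wedge\q_i^-$ this means $\q'\not\models\rho_0^i$ or $\q'\not\models\q_i^-$. In the first case $\q'$ fails to force some atom $A\in\rho_0^i\subseteq\rho$ at timepoint $0$, i.e.\ $A$ is not among the atoms asserted at the root $t_0$ of the tree-shaped CQ $\q'$. Mapping $t_0$ to $0$ and every remaining variable of $\q'$ to the timepoint equal to its depth then yields a homomorphism from $\q'$ into $\sigma\setminus\{A\}\sigma^m$: the root label lies in $\sigma\setminus\{A\}$, all deeper variables land in the all-$\sigma$ suffix, and they stay within the available $m+1$ timepoints because $\dep(\q')\le m$. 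Hence $\sigma\setminus\{A\}\sigma^m\models\q'$, and this instance is in $E^-_{\q,m}$.

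In the second case ($\q'\not\models\q_i^-$) I would use $\mathcal{D}_{\q_i^-,m}$ and first apply Lemma~\ref{lem:conjof} to write $\q'\equiv\q_1'\wedge\dots\wedge\q_k'$ with each $\q_j'\in\mathcal{Q}_p^\sigma[\Diamond]$ of depth $\le m$. The crucial observation is that $\q'\models\q_j'$ for every $j$, so if any single $\q_j'$ entailed the path query $\q_i^-$ then $\q'$ would entail $\q_i^-$ as well; consequently $\q'\not\models\q_i^-$ forces $\q_j'\not\models\q_i^-$ for \emph{all} $j$. Lemma~\ref{lem:positive}, applied to the path query $\q_i^-$ and each path conjunct $\q_j'$, then gives $\mathcal{D}_{\q_i^-,m}\models\q_j'$ for every $j$, and therefore $\mathcal{D}_{\q_i^-,m}\models\q'$, as required.

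I expect the main obstacle to be precisely this last step. Lemma~\ref{lem:positive} only promises that $\mathcal{D}_{\q_i^-,m}$ satisfies \emph{path} queries that do not entail $\q_i^-$, whereas $\q'$ may be genuinely branching, and in general a branching query can fail to entail a path query without any single branch entailing it, so one cannot lift non-entailment from a conjunct to $\q'$. The resolution is to exploit the only direction that does hold: non-entailment \emph{propagates downward} from the conjunction to every conjunct, which is exactly what lets me invoke Lemma~\ref{lem:positive} uniformly over all $\q_j'$. The rest is routine bookkeeping, namely verifying that the homomorphism in the first case respects the $m+1$ timepoints (guaranteed by $\dep(\q')\le m$) and that the depth bound survives the decomposition of Lemma~\ref{lem:conjof}.
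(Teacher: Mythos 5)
Your proof is correct and takes essentially the same route as the paper's: both arguments decompose $\q'$ into path conjuncts via Lemma~\ref{lem:conjof}, use the key observation that $\q'\not\models\q_i^-$ forces $\q_j'\not\models\q_i^-$ for \emph{every} conjunct $\q_j'$ (since $\q'\models\q_j'$) so that Lemma~\ref{lem:positive} applies to each of them, and handle the root-atom failure with the instance $\sigma\setminus\{A\}\sigma^m$. The only difference is organizational — the paper splits cases globally on whether some $A\in\rho$ is missing from the root atoms of $\q'$, whereas you first fix a violated conjunct $\q_i$ and then split on $\rho_0^i$ versus $\q_i^-$ — which does not change the substance of the argument.
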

It follows from Lemma~\ref{lem:negexamples} that non-polynomial characterisability of $\mathcal{Q}[\Diamond]$-queries can only be caused by the need for super-polynomially-many positive examples. We now discuss the construction of positive examples in the proof of Theorem~\ref{thm:branching1} $(ii)$; part $(i)$ is dealt with in the Appendix. Let $\q= \q_{1}\wedge \cdots \wedge \q_{m}\in \mathcal{Q}^{\sigma}_{b}[\Diamond]\cap \mathcal{Q}^{\sigma}_{\leq n}[\Diamond]$ with $m\leq n$ and
$$
\q_{i} = \rho_{0}^{i}\wedge \Diamond (\rho_1^{i} \land \Diamond (\rho_2^{i} \land \dots \land \Diamond \rho_N^{i}) ).
$$
For any map $f \colon \{1,\ldots,m\}\rightarrow \{1,\dots,N\}$, construct a $\sigma$-data instance $\mathcal{D}_{\!f}$ by inserting $\rho^{i}_{f(i)}$ into the data instance $\sigma^{N}$ in position $f(i)$. Let $E^{+}$ contain the data instance $\rho\sigma^{N}$ for $\rho=\bigcup_{i=1}^{m}\rho_{0}^{i}$ and all the data instances $\mathcal{D}_{f}$. One can show that $(E^{+},E^{-})$ characterises $\q$ in $\mathcal{Q}_{b}[\Diamond]\cap \mathcal{Q}_{\leq n}[\Diamond]$.
\end{proofsketch}


\section{2D Temporal Instance Queries}
\label{sec:instquery}
Now we consider `two-dimensional' query languages that combine instance queries (over the object domain) in the standard description logics $\mathcal{EL}$ and $\mathcal{ELI}$~\cite{DL-Textbook} with the \LTL-queries (over the temporal domain) considered above. Our aim is to understand how far the characterisability results of the previous sections can be generalised to the 2D case.
A \emph{relational signature} is a finite set $\Sigma \ne \emptyset$ of unary
and binary predicate symbols. A $\Sigma$-\emph{data instance} $\A$ is a finite set of \emph{atoms} $A(a)$ and $P(a,b)$ with $A,P \in \Sigma$ and \emph{individual names} $a,b$. Let  $\ind(\A)$ be the set of individual names in~$\A$. We assume that $P^-(a,b)\in\A$ iff $P(b,a)\in\A$, calling $P^-$ the \emph{inverse} of $P$ (with $P^{--} = P$). Let $S := P \mid P^-$.
\emph{Temporal instance queries} are defined by the grammar
\begin{equation*}
\q  \ := \ \top \ \mid \  \bot \ \mid \ A \ \mid \ \exists S.\q \ \mid \ \q_1 \land \q_2 \ \mid \ \mathop{\boldsymbol{op}} \q \ \mid \ \q_1\U \q_2 ,
\end{equation*}
where $\boldsymbol{op} \in \{\nxt, \Diamond, \Diamondw\}$. Such queries without temporal operators are called $\mathcal{ELI}$-\emph{queries}; those of them without inverses $P^-$ are $\mathcal{EL}$-\emph{queries}. A \emph{temporal $\Sigma$-data instance} $\D$ is a finite sequence $\mathcal{A}_{0},\ldots,\mathcal{A}_{n}$ of $\Sigma$-data instances. We set $\ind(\D)=\bigcup_{i=1}^{n}\ind(\A_{i})$.
For any $\ell \in \mathbb N$ and $a \in \ind(\D)$, the \emph{truth-relation} $\D,a,\ell \models \q$  is defined by induction:
\begin{align*}
& \D,a,\ell \models A \ \text{ iff } \ A(a) \in \A_{\ell}, \\ 
& \D,a,\ell \models \exists S. \q \text{ iff \ there is $b \in \ind(\A_{\ell})$ such that } \\
&\mbox{}\hspace{4cm}  S(a,b) \in \A_{\ell}\text{ and } \D,b,\ell \models \q,
%
\end{align*}
with the remaining clauses being obvious generalisations of the \LTL{} ones.
An \emph{example set} is a pair $E = (E^+, E^-)$ with finite sets $E^+$ and $E^-$ of \emph{pointed temporal data instances} $(\D,a)$ such that $a \in \ind(\D)$. We say that $\q$ \emph{fits} $E$ if $\D^+,a^+,0 \models \q$ and $\D^-,a^-,0 \not\models \q$, for all $(\D^+,a^+) \in E^+$ and $(\D^-,a^-) \in E^-$\!. As before, $E$ \emph{uniquely characterises} $\q$ if $\q$ fits it and every $\q'$ fitting $E$ is logically equivalent to $\q$.


We need the following result on the unique characterisability of $\mathcal{ELI}$-queries.
\begin{theorem}[ten Cate and Dalmau 2021]\label{thm:tenCate0}
The class of $\mathcal{ELI}$-queries is polynomially characterisable.
\end{theorem}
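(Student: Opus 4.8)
The plan is to follow the standard recipe for uniquely characterising a conjunctive query by one positive and several negative examples, specialised to the tree-shaped case. First I would use the correspondence, already recorded in Section~\ref{Sec:prelims}, between an $\mathcal{ELI}$-query $\q$ and its pointed canonical structure $(\A_{\q},a_{\q})$, obtained by freezing the variables of the tree-shaped CQ into individual names, with $a_\q$ the image of the answer variable. Under this correspondence query containment becomes homomorphism: for $\mathcal{ELI}$-queries $\q,\q'$ we have $\q\models\q'$ iff $\A_{\q},a_{\q}\models\q'$ iff there is a root-preserving homomorphism from $(\A_{\q'},a_{\q'})$ into $(\A_{\q},a_{\q})$. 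The single positive example will be $(\A_{\q},a_{\q})$ itself: any $\q'$ fitting it satisfies $\A_{\q},a_{\q}\models\q'$, hence $\q\models\q'$, so $\q'$ is a generalisation of $\q$ (or equivalent to it).

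The negative examples will be drawn from a \emph{frontier} of $\q$ in the homomorphism lattice of $\mathcal{ELI}$-queries: a finite set $F$ of strict generalisations of $\q$ (each $f\in F$ satisfies $\q\models f$ and $\q\not\equiv f$) that \emph{covers} all strict generalisations, i.e.\ every $\mathcal{ELI}$-query $\q'$ with $\q\models\q'$ and $\q\not\equiv\q'$ admits some $f\in F$ with $f\models\q'$. For each $f\in F$ I would add $(\A_{f},a_{f})$ to $E^{-}$; this is a legitimate negative example because $f\not\models\q$ gives $\A_{f},a_{f}\not\models\q$. Correctness is then immediate from the frontier property: if $\q'$ fits $E$ but $\q'\not\equiv\q$, the positive example forces $\q\models\q'$, and the covering property yields $f\in F$ with $f\models\q'$, whence $\A_{f},a_{f}\models\q'$, contradicting that $(\A_{f},a_{f})\in E^{-}$. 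Hence $\q'\equiv\q$, so $E$ uniquely characterises $\q$.

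The whole weight of the theorem therefore rests on the combinatorial heart: showing that every $\mathcal{ELI}$-query admits a frontier $F$ whose members are $\mathcal{ELI}$-queries, are polynomially many, each of polynomial size, and computable in polynomial time. This is precisely where general CQs fail (cyclic CQs need not have finite frontiers at all), and it is the step I expect to be hardest. I would build $F$ from \emph{local} minimal generalisations of the tree $(\A_{\q},a_{\q})$: dropping a single unary atom, deleting a leaf, and---crucially---loosening an edge by replacing the subtree hanging below a node with a more general tree obtained as a product of that subtree with neighbouring subtrees, so that homomorphisms previously forced to hit the child are allowed to fold. Two things must then be verified: (a) each generalisation stays tree-shaped, which is where the inverse roles $P^{-}$ of $\mathcal{ELI}$ (absent in $\mathcal{EL}$) demand care in how products are taken along directed edges; and (b) the family is genuinely covering while staying of polynomial total size, argued by showing that any strict generalisation $\q'$ of $\q$, coming with a non-surjective homomorphism $(\A_{\q'},a_{\q'})\to(\A_{\q},a_{\q})$, must factor through one of the local steps. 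Securing (a) and (b) simultaneously under a polynomial size bound is the crux; the rest is the routine lattice-theoretic bookkeeping sketched above.
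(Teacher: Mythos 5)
Your proposal is correct and is essentially the paper's own proof: the paper likewise takes the single positive example $\hat{\q}$ (the canonical tree-shaped instance of $\q$) and negative examples $\{\hat{\r} \mid \r \in \mathcal{F}(\q)\}$ for a frontier $\mathcal{F}(\q)$, with exactly your fitting/covering correctness argument. The combinatorial crux you flag---that every $\mathcal{ELI}$-query has a polynomially computable frontier consisting of polynomially many, polynomial-size queries---is precisely what the paper imports as a black box from ten Cate and Dalmau (Theorem~\ref{thm:tenCate}) rather than reproving (its appendix construction, a recursion over the tree that replaces child subtrees by their frontiers and glues in copies of the whole instance, is itself credited to that work), so your sketch of how to build the frontier goes beyond what the paper establishes and need not be assessed here.
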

Theorem~\ref{thm:tenCate} is proved by constructing frontiers in the set of $\mathcal{ELI}$-queries partially ordered by entailment, where a set $\mathcal{F}$ of $\mathcal{ELI}$-queries is called a \emph{frontier} of an $\mathcal{ELI}$-query $\q$ if the following hold:
\begin{itemize}
	\item $\q\models \q'$ and $\q'\not\models \q$, for all $\q'\in \mathcal{F}$;
	\item if $\q\models \q''$ for some $\mathcal{ELI}$-query $\q''$,
	then $\q''\models \q$ or there exists $\q'\in \mathcal{F}$ with $\q'\models \q''$.
\end{itemize}
\begin{theorem}[ten Cate and Dalmau 2021]\label{thm:tenCate} A frontier $\mathcal{F}(\q)$ of any $\mathcal{ELI}$-query $\q$ can be computed in polynomial time.
\end{theorem}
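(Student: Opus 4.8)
The plan is to identify each $\mathcal{ELI}$-query $\q$ with the finite pointed tree $T_\q$ whose nodes carry the sets of atomic concepts asserted there, whose edges are labelled by roles $S\in\{P,P^-\}$, and whose root is the answer node. The fact I would build on is that entailment between $\mathcal{ELI}$-queries is captured by homomorphisms: $\q\models\q'$ iff there is a root-preserving homomorphism $h\colon T_{\q'}\to T_\q$ respecting concept labels and sending each $S$-edge to an $S$-edge. Thus computing $\mathcal{F}(\q)$ amounts to computing a set of \emph{upper covers} of $\q$ in the homomorphism preorder: queries $\q'$ with $T_{\q'}\to T_\q$ but $T_\q\not\to T_{\q'}$, taken maximal and jointly dominating every strictly weaker query. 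Since trees are acyclic, and in particular c-acyclic, I would instantiate the frontier construction of ten Cate and Dalmau for c-acyclic CQs, recast for a relational signature with symmetric roles $P,P^-$.

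The construction proceeds by two families of elementary weakenings. First, \emph{label weakenings}: for each node $v$ of $T_\q$ and each atom $A$ in its label, delete $A$ from $v$. Second, \emph{structural weakenings}: process edges bottom-up, and for each edge from $v$ to a child $u$ with subtree query $\q_u$, replace the child $\exists S.\q_u$ using the recursively computed frontier $\mathcal{F}(\q_u)$. The delicate point is that a homomorphism from $T_\q$ might \emph{fold} the modified subtree into a sibling and so collapse the weakening; to block this, each candidate must be glued with enough homomorphic material from $\q$ to pin down the root type, which is the role of the product-with-$\q$ gadget in the ten Cate--Dalmau argument. I would adapt it so that the altered branch becomes incomparable from above while all other branches of $\q$ are kept intact. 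I would then verify the two defining clauses of a frontier: $\q\models\q'$ together with $\q'\not\models\q$ for every produced $\q'$ (the easy direction, by exhibiting or refuting the relevant homomorphism), and the covering clause, that any $\q''$ with $\q\models\q''\not\models\q$ satisfies $\q'\models\q''$ for some produced $\q'$.

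For the polynomial bounds I would count sites: there are linearly many label positions and linearly many edges, and the recursion depth equals the depth of $T_\q$, so a straightforward induction shows each frontier element has size polynomial in $|\q|$ and that there are polynomially many of them; acyclicity is exactly what prevents the recursion from branching exponentially. The main obstacle is the covering clause rather than the size bound. Given an arbitrary weaker $\q''$ and a non-invertible homomorphism $T_{\q''}\to T_\q$, I must localise a single place where that homomorphism drops an atom or fails to be surjective, and show that the corresponding elementary weakening $\q'$ already satisfies $T_{\q''}\to T_{\q'}$, i.e.\ $\q'\models\q''$. Making this localisation uniform --- and in particular ensuring that weakening at one place does not inadvertently re-admit a homomorphism from $T_\q$ through the inverse edges $P^-$ --- is the genuinely hard step, and it is precisely where the gluing construction and the symmetry of the edge relation must be handled with care.
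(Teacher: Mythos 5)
Your proposal follows essentially the same route as the paper: the paper likewise treats $\mathcal{ELI}$-queries as pointed tree-shaped instances under the homomorphism order and reproduces ten Cate and Dalmau's recursive construction---single-atom label deletions, replacement of each child subtree by its recursively computed (pre-)frontier, and a final step that glues disjoint copies of the original query onto every element of each candidate to secure the covering clause---while, just as you do, deferring the detailed correctness argument to ten Cate and Dalmau. The one detail to pin down when fleshing this out is that all weakenings of a given child must be attached \emph{simultaneously} as siblings under the same role (as in the paper's structures $\mathcal{D}_i$), not one weakening per candidate, since otherwise covering already fails for $\q = \exists P.(A \wedge B)$ against $\q'' = \exists P.A \wedge \exists P.B$.
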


Theorem~\ref{thm:tenCate0} follows from Theorem~\ref{thm:tenCate}. For any $\mathcal{ELI}$-query $\q$, we denote by $\hat{\q}$ the tree-shaped data instance defined by $\q$ with designated root $a$. Then $\q$ is characterised by $E$ with $E^{+}=\{\hat{\q}\}$ and $E^{-}=\{\hat{\r} \mid \r \in \mathcal{F}(\q)\}$.

For any unrestricted temporal query language $\mathcal{Q}[\Phi]$ and $\mathcal{L}\in \{\mathcal{EL},\mathcal{ELI}\}$, we denote by $\mathcal{Q}[\Phi]\otimes \mathcal{L}$
the set of all temporal instance queries with operators in $\Phi$ with (for $\mathcal{ELI}$) or without (for $\mathcal{EL}$) inverse predicates. We generalise
the path-shaped queries $\mathcal{Q}_{p}[\Phi]$ as follows: $\mathcal{Q}_{p}[\Phi]\otimes \mathcal{L}$ denotes the class of queries $\q$ in $\mathcal{Q}[\Phi]\otimes\mathcal{L}$ such that, for any subquery $\q_{1}\wedge \q_{2}$ of $\q$,  either $\q_{1}$ or $\q_{2}$ do not have an occurrence of any operator in $\Phi$ that is not in the scope of $\exists S$. To illustrate, $\exists S.\Diamond A \wedge \Diamond \exists S.A$ is in $\mathcal{Q}_{p}[\Phi]\otimes\mathcal{L}$, but $\Diamond A \wedge \Diamond \exists S.A$ is not. We make two observations about unique characterisability in these `full' combinations.

\begin{theorem}\label{thm:firstone}
	$(i)$ $\mathcal{Q}[\nxt,\Diamond]\otimes \mathcal{EL}$ is uniquely characterisable.
	
	$(ii)$ $\Qp[\nxt] \otimes\mathcal{ELI}$ and $\Qp[\Diamond] \otimes \mathcal{ELI}$ are polynomially characterisable.
\end{theorem}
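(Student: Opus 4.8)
The plan is to treat both parts through a homomorphism (CQ-style) reading of the combined queries and then to lift the one-dimensional arguments of Section~\ref{sec:chara} along the object dimension. Since every operator in $\mathcal{Q}[\nxt,\Diamond]\otimes\mathcal{EL}$ is positive and existential, each query $\q$ determines a finite two-dimensional tree-shaped data instance $\hat\q$, namely a temporal sequence of $\mathcal{EL}/\mathcal{ELI}$-trees in which $\nxt$-edges are read as successors and $\Diamond$-edges as strictly later timestamps, so that $\D,a,\ell\models\q$ iff there is a homomorphism from $\hat\q$ into $\D$ respecting atoms, roles, successors and the order $<$. This is the two-dimensional analogue of the CQ reading of $\mathcal{Q}[\nxt,\Diamond,\Diamondw]$ from Section~\ref{Sec:prelims}, and it reduces both statements to questions about tree-shaped queries in the ordered $2$D structure.

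For part $(i)$ I would reprove Theorem~\ref{th:u-uc} one dimension up. First, over a fixed signature there are, up to equivalence, only finitely many queries of bounded temporal-plus-role depth, say $N_d$ of depth $\le d$; this replaces the count used in Section~\ref{sec:chara}. Second, I would establish the depth-detection analogue of \eqref{depth}: a fully saturated temporal instance (all atoms and all roles present) of sufficient temporal length and object role-depth satisfies exactly those queries whose respective depths it dominates, so that any competitor $\q'$ fitting a few such saturated examples is forced to have the same temporal depth \emph{and} the same role depth as $\q$. Third, I would prove the $2$D separation lemma: if $\q\not\models\q'$ with both of depth $\le d$, a minimal separating pointed instance has size bounded in terms of $N_d$, by the same type-contraction argument as in Lemma~\ref{gen-size}, where now two (object,\,time)-points carrying the same set of satisfied subqueries of $\q,\q'$ may be merged. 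Taking as $E$ \emph{all} positive and \emph{all} negative pointed instances up to this bound then characterises $\q$ exactly as in the proof of Theorem~\ref{th:u-uc}.

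For part $(ii)$ the positive example is again $\hat\q$, but the negative examples must now be polynomially many, so I would build a \emph{combined frontier} by structural recursion along the query, exploiting that the path-restriction keeps the temporal structure path-shaped at every level while roles may still branch. Along each temporal path I would reuse the polynomial characterisations of $\Qp[\nxt]$ and $\Qp[\Diamond]$, which hold by Theorem~\ref{thm:nextdiamond}$(iv)$ together with anti-monotonicity, transferring their negative instances to the $2$D setting; in particular, for the $\Diamond$-case I would localise to each timestamp the $\emptyset^{b}$-paddings of Theorem~\ref{thm:nextdiamond} that pin down the number of diamonds between consecutive object trees. For every object edge $\exists S.\kappa$ met along a path I would instead splice in, at the corresponding timestamp, the $\mathcal{ELI}$-frontier $\mathcal{F}(\cdot)$ of Theorem~\ref{thm:tenCate}, recursing into the temporal material nested under $\exists S$. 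Each operation yields polynomially many instances of polynomial size, and one checks as usual that every instance so produced fails $\q$ yet is satisfied by any genuinely weaker $\q'$.

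The crux, and the step I expect to be hardest, is the correctness of this combined frontier: one must show that a query $\q'\in\Qp[\Phi]\otimes\mathcal{ELI}$ fitting $E$ is simultaneously pinned down in the temporal dimension and in every object subtree. The obstacle is the coupling of the two dimensions, since a homomorphism witnessing $\D,a,0\models\q'$ may route the temporal fragment of $\q'$ through object successors introduced beneath an $\exists S$, so the temporal separators and the $\mathcal{ELI}$-frontier cannot be applied in isolation. Handling this requires a careful analysis of how temporal structure may attach below a role edge, combined with the gap-detection bookkeeping inherited from Theorem~\ref{thm:nextdiamond} in the $\Diamond$-case; keeping this interaction under control while preserving the polynomial bound is where the real work lies.
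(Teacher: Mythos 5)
Your plan for part $(i)$ is essentially the paper's own proof: the paper introduces the two depth parameters (temporal depth $\dep$ and role depth $\rdep$), observes that there are only finitely many, say $N_{dr}$, pairwise non-equivalent queries with both depths bounded, constructs saturated instances $\D^d$ and $\D^{dr}$ that detect the two depths, proves a bounded-size separation property for non-equivalent queries of equal depths (the 2D analogue of Lemma~\ref{gen-size}), and then takes as $E$ \emph{all} positive and negative pointed instances with temporal length and number of individuals bounded by $N_{dr}$. So part $(i)$ is fine.

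Part $(ii)$ contains a genuine gap, and it is exactly the step you yourself flag as unresolved. Your plan is to combine, in isolation, the 1D negative examples of Theorem~\ref{thm:nextdiamond}$(iv)$ along "temporal paths" with $\mathcal{ELI}$-frontiers spliced in at role edges, and you concede that the correctness of this combination---the coupling between the two dimensions---is "where the real work lies." The paper does not attempt this combination at all; it resolves the coupling with an idea absent from your proposal. Namely, it encodes each $\q \in \Qp[\Diamond]\otimes\mathcal{ELI}$ as a \emph{single atemporal tree-shaped structure} $\mathcal{A}^{\q}$ over an extended signature: individuals are pairs $(y,m)$ of a query variable and a temporal level, role atoms live inside each level, and a fresh role $T$ links $(y,m)$ to $(y,m+1)$; writing $\textit{cl}(\mathcal{A}^{\q})$ for its $T$-transitive closure, it proves that entailment between 2D queries coincides with the existence of a homomorphism between these closed structures. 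This collapses both dimensions into one homomorphism order, so there is no separate temporal pinning-down and object pinning-down to coordinate. On top of this encoding the paper $(1)$ computes in polynomial time a core of $\textit{cl}(\mathcal{A}^{\q})$ and shows the core is again of query form, and $(2)$ builds one unified frontier by structural recursion on the query tree, combining the ten~Cate--Dalmau frontiers (Theorem~\ref{thm:tenCate}) of the node labels with recursively computed frontiers of the children, proving the three frontier properties for the result; the single positive example is the canonical instance of the core query and the negatives are the canonical instances of its frontier (the $\nxt$ case is the same but needs no transitive closure). Two concrete deficiencies of your sketch relative to this: first, you never take cores, yet the paper's correctness theorem for the 2D frontier explicitly assumes $\textit{cl}(\mathcal{A}^{\q})$ is a core---without this, frontier elements need not be strictly weaker than $\q$; second, your reduction to "temporal paths" is already shaky, because queries in $\Qp[\Diamond]\otimes\mathcal{ELI}$ may branch temporally across different object variables (e.g.\ $\exists S.\Diamond A \wedge \Diamond \exists S.A$ is in the class), so the temporal skeleton is a tree rather than a path, and the normal-form/lone-conjunct machinery behind Theorem~\ref{thm:nextdiamond} does not transfer verbatim.
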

Here,~$(i)$ is shown similarly to Theorem~\ref{th:u-uc} (it remains open whether it can be extended  to $\mathcal{Q}[\nxt,\Diamond]\otimes \mathcal{ELI}$);~$(ii)$ is proved by generalising Theorem~\ref{thm:tenCate} to temporal data instances.

We now show that the application of the DL constructor $\exists P$ to temporal queries with both $\nxt$ and $\Diamond$ destroys polynomial characterisability.
Denote by $\mathcal{EL}(\mathcal{Q}_{p}[\nxt,\Diamond])$ the class
of queries in $\mathcal{Q}_{p}[\nxt,\Diamond] \otimes \mathcal{EL}$ that contain no  $\exists P$ in the scope of a temporal operator.
\begin{theorem}\label{thm:badd}
$\mathcal{EL}(\Qp[\nxt,\Diamond])$ is not polynomially characterisable.
\end{theorem}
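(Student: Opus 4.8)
The plan is to establish a super-polynomial lower bound on the size of any characterising example set for a carefully chosen family of queries in $\mathcal{EL}(\Qp[\nxt,\Diamond])$, mirroring the structure of the counterexample in Example~\ref{thm:superpolb} but lifting the combinatorial obstruction through a single $\exists P$. The key observation is that a query of the form $\exists P.\q_1 \wedge \cdots \wedge \exists P.\q_n$, with each $\q_i \in \Qp[\nxt,\Diamond]$, is much weaker than the purely temporal conjunction $\q_1 \wedge \cdots \wedge \q_n$: each $\exists P.\q_i$ can be witnessed by a \emph{different} $P$-successor $b_i$ of the root, so these conjuncts are mutually ``independent'' in the object dimension. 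I would exploit exactly this independence to force exponentially (or at least super-polynomially) many positive examples.

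First I would fix a signature and, for a parameter $n$, define a target query $\q_n = \bigwedge_{k} \exists P.\s_k$, where the $\s_k$ are the path queries from Example~\ref{thm:superpolb} (each repeating the sequence $A_1 \dots A_n$ with one letter omitted). I would then consider, for every permutation $\avec{p}$ of $A_1 \dots A_n$, the ``permutation query'' $\exists P.\avec{p}$ and form the enriched target $\q_n^{\avec{p}} = \q_n \wedge \exists P.\avec{p}$. The crux is to reprove, in the 2D setting with $\exists P$ inserted, the two properties that drove Example~\ref{thm:superpolb}: (1) $\q_n^{\avec{p}} \models \q_n$ but $\q_n \not\models \q_n^{\avec{p}}$, witnessed by a pointed data instance $(\D_{\avec{p}}, a)$ in which the root $a$ has a single $P$-successor carrying the concatenation $\s_{i_1}\dots\s_{i_n}$; and (2) if such a $(\D,a)$ satisfies $\q_n$ but fails $\exists P.\avec{p}$, then it satisfies $\exists P.\avec{p}'$ for every other permutation $\avec{p}'$. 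Property (2) is what guarantees that no single positive example can simultaneously exclude two distinct permutation-refinements, forcing $|E^+| \ge n!$ for any $E$ characterising $\q_n$ within the class.

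The main obstacle will be property (2): in the purely temporal case it holds because the data instance is a single word and the relevant combinatorial argument about which permutation embeds is comparatively transparent, whereas here the root $a$ may have \emph{several} $P$-successors, and satisfaction of $\exists P.\avec{p}'$ must be traced through whichever successor witnesses $\q_n$. I would handle this by arguing that, to satisfy all the $\exists P.\s_k$ economically, any satisfying instance must contain a $P$-successor whose temporal word is ``rich enough'' (essentially a shuffle of the $\s_k$), and that such a word must embed all permutations except possibly one; the pigeonhole/counting argument then proceeds as before. I would keep the queries over a bounded number of roles (a single $P$ suffices) and note that the temporal depth is polynomial in $n$, so that the $n!$ bound is genuinely super-polynomial in $|\q_n|$.

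Finally, I would assemble these facts to conclude non-polynomial characterisability. Since each $\q_n^{\avec{p}}$ lies in $\mathcal{EL}(\Qp[\nxt,\Diamond])$ and is strictly entailed by $\q_n$, any example set $E$ that uniquely characterises $\q_n$ within the class must contain, for every permutation $\avec{p}$, a positive example that distinguishes $\q_n$ from $\q_n^{\avec{p}}$; by property (2) these positive examples are pairwise distinct, giving $|E^+| \ge n!$ while $|\q_n|$ is polynomial in $n$. This contradicts the existence of a polynomial bound $f$ with $|E| \le f(|\q_n|)$, establishing the theorem. The only remaining care is to confirm that the comparison is against the full class $\mathcal{EL}(\Qp[\nxt,\Diamond])$ and not merely the conjunctive-of-paths fragment, which follows because each $\q_n^{\avec{p}}$ is itself in the class; anti-monotonicity of characterisability then does the rest.
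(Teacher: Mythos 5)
Your reduction to Example~\ref{thm:superpolb} fails at exactly the point you yourself flag as the ``main obstacle'', and for the family of queries you chose the obstacle is not surmountable. Consider your target $\q_n = \bigwedge_{k} \exists P.\s_k$ and the pointed instance $(\D,a)$ in which $a$ has $n$ \emph{distinct} $P$-successors $b_1,\dots,b_n$ at time $0$, where the temporal word of $b_k$ is exactly $\s_k$. This instance satisfies $\q_n$, since each conjunct $\exists P.\s_k$ is witnessed by its own successor $b_k$; yet it fails $\exists P.\avec{p}$ for \emph{every} permutation $\avec{p}$, because the word of $b_k$ contains no occurrence of $A_k$ at all, while every permutation word contains all of $A_1,\dots,A_n$, so no successor embeds any $\avec{p}$. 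Hence your property (2) is false in the 2D setting: a \emph{single} positive example separates $\q_n$ from all $n!$ refinements simultaneously, and the counting argument yielding $|E^+|\ge n!$ collapses. Your proposed repair (``any satisfying instance must contain a $P$-successor whose temporal word is rich enough, essentially a shuffle of the $\s_k$'') is precisely what this instance refutes: nothing in the semantics forces the witnesses of different conjuncts to be combined into one successor, and there is no notion of ``economical'' satisfaction in the definition of fitting an example set.

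This is why the paper's proof is shaped quite differently. It keeps the signature fixed at two concept names $A,B$ (plus $P$) and builds the conjuncts $\q_i^n$ and the $2^n$ refinement queries $\q_n \land \exists P.\s$ so that all of them demand the same letters in the same block pattern $B\dots A\, B\dots A \cdots$; the refinements differ only in \emph{timing}, i.e., in the choice of each $\op_i$ between $\nxt$ and $\Diamond\nxt$ (blank or $\Diamond$ for $i=1$). Because every successor witnessing some $\q_i^n$ automatically carries all the letters that any refinement needs, a refinement can only fail through timing constraints, and the paper's (rather involved) combinatorial argument shows that the timing constraints imposed on the witnesses by failing one refinement force satisfaction of every other one --- exactly the analogue of property (2) that your construction cannot provide. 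The moral is that the $\exists P$-independence you wanted to exploit must be counteracted by making all candidate witnesses indistinguishable at the level of which atoms occur; any family whose refinements are told apart by the presence or absence of atoms (as with permutations of $A_1,\dots,A_n$) is defeated by the multi-successor instance above.
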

%
\begin{proofsketch}
Consider queries
$\q_{n} = \exists P.\q_1^n \land \dots \land \exists P.\q_n^n$, in which each $\q_i^n$ corresponds to the regular expression
$$
\underbrace{BB\emptyset^*A}_1 \dots \underbrace{BB\emptyset^*A}_{i-1} \underbrace{\emptyset^*B\emptyset^*A}_i \underbrace{BB\emptyset^*A}_{i+1}  \dots \underbrace{BB\emptyset^*A}_n \emptyset^*
$$
(with omitted $\bot^* = \varepsilon$ in $BB$).
One can show that any unique characterisation of $\q_{n}$ contains at least $2^n$ positive examples to separate it from all queries $\q_{n} \land \exists P.\s$ with
$$
\s = \op_1 (B \land \Diamond (A \land \op_2 (B \land \Diamond (A \land \dots \land \op_n (B \land \Diamond A) \dots) ))),
$$
where $\op_i$ is $\nxt$ or $\Diamond \nxt$ if $i> 1$, and blank or $\Diamond$ if $i = 1$.
\end{proofsketch}

The situation changes drastically if we do not admit temporal operators in the scope of $\exists P$.
We start by investigating the class $\Qp[\nxt,\Diamondw](\mathcal{ELI})$ of queries of the form
$$
	\q = \r_0 \land \op_1 (\r_1 \land \op_2 (\r_2 \land \dots \land \op_n \r_n) ),
$$
where the $\r_{i}$ are $\mathcal{ELI}$-queries and $\op_i \in \{\nxt,\Diamondw\}$.
We can generalise the CQ-representation, the normal form, and the notion of lone conjunct from $\Qp[\nxt,\Diamondw]$ to $\Qp[\nxt,\Diamondw](\mathcal{ELI})$ in a straightforward way. To formulate conditions ${\bf (n1)}$--${\bf (n4)}$, we replace the set inclusions `$\rho_{i}\subseteq \rho_{j}$' by entailment `$\r_{i}\models \r_{j}$'\!. For example, ${\bf (n4)}$ becomes
\begin{description}
	\item[(n4$'$)] $\r_{0}^{i+1}\not\models\r_{k_{i}}^{i}$ if $i>0$, $\q_{i}$ is primitive and $R_{i+1}$ is $\le$.
\end{description}
The condition for lone conjuncts now requires that $\r$ is not equivalent to any $\q_{1}\wedge \q_{2}$ with $\mathcal{ELI}$-queries $\q_{1},\q_{2}$ such that $\q_{i}\not\models \r$ for $i=1,2$. Then one can show again that every $\Qp[\nxt,\Diamondw](\mathcal{ELI})$-query is equivalent to a query in
normal form, which can be computed in polynomial time.

\begin{theorem}\label{thm:nextdiamond2}
	The statements of Theorem~\ref{thm:nextdiamond} $(i)$--$(iv)$ also hold if one replaces $\Qp[\nxt,\Diamondw]$ by  $\Qp[\nxt,\Diamondw](\mathcal{ELI})$.
\end{theorem}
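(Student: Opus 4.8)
The plan is to lift the proof of Theorem~\ref{thm:nextdiamond} to the two-dimensional setting almost verbatim, reading its homomorphism machinery two-dimensionally: a homomorphism witnessing $\D \models \q$ maps each variable $t$ of $\q$ to a timepoint of $\D$ and, simultaneously, witnesses at that timepoint a homomorphism of the $\mathcal{ELI}$-query $\r_j^i$ sitting at $t$ into the object-level instance $\A_{h(t)}$. Concretely, in the word-building of the proof of Theorem~\ref{thm:nextdiamond} I would replace every set of atoms $\rho_j^i$ by the canonical tree data instance $\hat{\r}_j^i$ of the $\mathcal{ELI}$-query $\r_j^i$, so that $\qw_i = \hat{\r}_0^i \dots \hat{\r}_{k_i}^i$ and the merge $\qw_i \Join \qw_{i+1}$ glues $\hat{\r}_{k_i}^i$ and $\hat{\r}_0^{i+1}$ into the single instance $\widehat{\r_{k_i}^i \wedge \r_0^{i+1}}$ at one timepoint. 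Since the 2D normal form is already available, the positive examples $\D_b$, $\D_i$ and the temporal negative examples $\D_i^-$ transfer unchanged: they only manipulate the temporal skeleton and the gaps $\emptyset^b$.

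The negative examples that touch the object dimension --- rules~(a) and~(d), which in one dimension delete a single atom from some $\rho_j^i$ --- must be reinterpreted through frontiers. For rule~(a) I would, at each block position and for each $\r' \in \mathcal{F}(\r_j^i)$, include the instance obtained from $\D_b$ by replacing $\hat{\r}_j^i$ with $\widehat{\r'}$. By Theorem~\ref{thm:tenCate} each $\mathcal{F}(\r_j^i)$ is polynomially computable and of polynomial size, so $E$ stays polynomial. Because $\r' \not\models \r_j^i$, such an instance is genuinely negative for $\q$; and a fitting competitor $\q'$ must satisfy $\hat{\r}_j^i \models \r''$ at that position (from $\D_b$), where $\r''$ is its $\mathcal{ELI}$-query there, so $\r_j^i \models \r''$. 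Ranging over all $\r' \in \mathcal{F}(\r_j^i)$ then rules out $\r'' \not\equiv \r_j^i$ and forces $\r'' \equiv \r_j^i$ --- the $\mathcal{ELI}$-analogue of Theorem~\ref{thm:tenCate0} at that timepoint. The splitting rules~(d) and~(e) likewise become ``place a frontier element $\widehat{\r'}$ before the gap and the full $\hat{\r}_j^i$ after it''\!, and where the one-dimensional argument exploits a genuine conjunctive split of a block-terminal $\rho$, I would invoke the new split-partner result for $\mathcal{EL}$ to enumerate polynomially many canonical splits of the corresponding query.

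It then remains to verify that $E$ characterises $\q$. Given a normal-form competitor $\q'$ with $\D_b \models \q'$, fix a witnessing two-dimensional homomorphism $h$. As in one dimension, either $h$ restricts to a block-wise isomorphism of temporal skeletons, in which case the frontier-based examples force the $\mathcal{ELI}$-query of $\q'$ to be equivalent to $\r_j^i$ at every position, or the temporal image is loose and $\q'$ maps into one of the structural negative examples; the examples $\D_i$ and $\D_i^-$ then pin down the $\Diamond/\Diamondw$ gaps between blocks exactly as before. I expect the \textbf{main obstacle} to be keeping the two dimensions non-interfering: one must argue that a single homomorphism witnessing $\D_b \models \q'$ decomposes so that the object-level frontier reasoning runs independently at each timepoint, including at the merged timepoints produced by $\Join$ (where two $\mathcal{ELI}$-queries coexist) and at the duplicated positions introduced by rules~(c) and~(d). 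Checking that $\widehat{\r_{k_i}^i \wedge \r_0^{i+1}}$ behaves like the intended conjunction under homomorphisms, and that frontier elements of a conjunction interact correctly with those of its conjuncts, is the delicate technical core.

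For the ``only if'' direction of part~$(i)$ I would mirror the one-dimensional pigeonhole argument for $\Diamondw(A \wedge B)$. Unsafety produces a lone conjunct: a primitive block carrying an $\mathcal{ELI}$-query $\r \equiv \q_1 \wedge \q_2$ with $\q_1,\q_2 \not\models \r$. Using this genuine split, for every $k$ I would build a competitor that spreads $\q_1$ and $\q_2$ across $k$ alternating $\Diamondw$-reachable timepoints and show that no finite example set separates it from $\q$, the split-partner machinery for $\mathcal{EL}$ supplying the invariants that keep these spread-out queries indistinguishable. Part~$(iv)$ instead targets the smaller class $\Qp[\nxt,\Diamond](\mathcal{ELI})$, where the strictness of $\Diamond$ forbids the problematic coincidence at a shared timepoint, so no genuine lone conjunct survives and the positive construction of part~$(ii)$ applies directly. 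Throughout, the point I would watch most carefully is the gap between the split-partner result, proved for $\mathcal{EL}$, and the $\mathcal{ELI}$ setting of the theorem, confirming that inverse roles do not break the enumeration of splits in the cases the construction actually invokes.
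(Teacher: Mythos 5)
Your core construction is the paper's: replace each $\rho^i_j$ by the pointed tree instance $\hat{\r}^i_j$ (all with a common root $a$), keep the temporal skeleton of $\D_b$, $\D_i$, $\D_i^-$ untouched, reinterpret the atom-deletion rules (a), (d), (e) as ``replace $\hat{\r}^i_j$ by the instance of an element of $\mathcal{F}(\r^i_j)$'' using Theorem~\ref{thm:tenCate} as a black box, and re-run the 1D argument with homomorphisms upgraded to \emph{generalised} homomorphisms (a map $h$ from variables to timepoints with $\A_{h(t)}\models\r(a)$ whenever $\r(t)$ occurs in the query). Your worry about the merged timepoints of $\Join$ is easily discharged: the merge is just the union of two tree instances sharing the root $a$, which is the canonical instance of the conjunction, and satisfaction of a conjunction at a timepoint is satisfaction of both conjuncts, so the two dimensions cannot interfere there.

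The genuine gap is your recourse to split partners. The paper's proof of this theorem never uses them, and it cannot afford to: the split-partner result is polynomial only for $\mathcal{EL}$, and by Theorem~\ref{thm:split} even a singleton set of $\mathcal{ELI}$-queries may admit no polynomial-size split partner. Since Theorem~\ref{thm:nextdiamond2} is stated for $\mathcal{ELI}$, any step of the construction that genuinely needs split partners inflates the example set to exponential size and destroys parts $(ii)$--$(iv)$; you flag exactly this as ``the point to watch most carefully'' but leave it open. The resolution is that no step needs them. Rule (b) splits a block purely temporally, with no object-level reasoning. In the verification step where the 1D proof exploits ``$\rho_\ell$ is a singleton'' (Case~2), the 2D proof uses safety directly: if $k\geq 2$ competitor variables $t_1,\dots,t_k$ with $\models$-incomparable consecutive queries were mapped to a primitive block carrying $\r_\ell$, then $\r_\ell\equiv\r_{t_1}\wedge(\r_{t_2}\wedge\dots\wedge\r_{t_k})$, and since no conjunct can entail $\r_\ell$ without violating incomparability (arguing inductively down the chain), $\r_\ell$ would be a lone conjunct, contradicting safety. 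And in the only-if direction of $(i)$, the split $\r\equiv\q_1\wedge\q_2$ with $\q_1,\q_2\not\models\r$ is handed to you by the very definition of a lone conjunct; you then spread $\q_1,\q_2$ exactly as the 1D proof spreads $\tau,\tau'$, and the pigeonhole argument goes through verbatim with generalised homomorphisms (two variables carrying $\q_1$ and $\q_2$ landing on one timepoint $\ell$ give $\A_\ell\models\r(a)$), so no split-partner ``invariants'' are required. A smaller inaccuracy: your justification of $(iv)$ --- that in $\Qp[\nxt,\Diamond](\mathcal{ELI})$ ``no genuine lone conjunct survives'' --- is wrong as stated ($\Diamond(A\wedge B)$ has one); lone conjuncts persist but are harmless there because a competitor without $\Diamondw$ cannot map two distinct variables to the same timepoint, which is why the paper simply omits rules (c)--(e) in that case.
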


The proof generalises the example set defined in Theorem~\ref{thm:nextdiamond} using the frontiers provided by Theorem~\ref{thm:tenCate} as a \emph{black box}.
Indeed, in the definition of examples, replace $\rho_{i}$ by $\hat{\r}_{i}$, the data instance corresponding to the $\mathcal{ELI}$-query $\r_{i}$, and replace `$\rho\setminus \{A\}$ for $A\in \rho$' by `the data instance corresponding to a query in $\mathcal{F}(\r)$'. We choose a single individual, $a$, as the root of these data instances.
For example, item (a) becomes:
\begin{description}
\item[\rm (a$'$)] replacing some $\r^{i}_{j}$ by the data instance corresponding to a query in $\mathcal{F}(\r^{i}_{j})$ or removing the whole $\r^{i}_{j}=\emptyset$ for $i\not=0$ and $j\not=0$ from some $\q_{i}$.
\end{description}
Next, consider the class $\mathcal{Q}_{p}[\U](\mathcal{L})$ of queries of the form	
\begin{equation}\label{upatheli}
	\q = \r_0 \land (\el_1 \U (\r_1 \land ( \el_2 \U ( \dots (\el_n \U \r_n) \dots )))),
\end{equation}
where $\r_{i}$ is an $\mathcal{L}$-query and $\el_{i}$ is either an $\mathcal{L}$-query or $\bot$, for $\mathcal{L}\in \{\mathcal{EL},\mathcal{ELI}\}$. For the same reason as in the 1D case, we fix a finite signature $\Sigma$ of predicate symbols. Denote by $\mathcal{L}(\Sigma)$ and $\mathcal{Q}_{p}[\U](\mathcal{L})$ the set of queries in $\mathcal{L}$ and $\mathcal{Q}_{p}^{\Sigma}[\U](\mathcal{L})$, respectively,
with predicate symbols in $\Sigma$. Aiming to generalise Theorem~\ref{uno}, we again translate set-inclusion to entailment, so  the \emph{peerless queries} $\mathcal{P}^{\Sigma}[\U](\mathcal{L})$ take the form \eqref{upatheli} such that either $\el_{i}=\bot$ or $\el_{i}\not\models \r_{i}$ and $\r_{i}\not\models \el_{i}$.

\begin{theorem}\label{uno2}
	Let $\Sigma$ be a finite relational signature. Then $\mathcal{P}^{\Sigma}[\U](\mathcal{EL})$ is polynomially characterisable within $\mathcal{Q}_{p}^{\Sigma}[\U](\mathcal{EL})$, while $\mathcal{P}^{\Sigma}[\U](\mathcal{ELI})$ is exponentially, but not polynomially, characterisable within $\mathcal{Q}_{p}^\Sigma[\U](\mathcal{ELI})$.
\end{theorem}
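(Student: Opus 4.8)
The plan is to treat the three assertions separately while reusing as much of the one-dimensional machinery as possible. I would obtain the positive $\mathcal{EL}$ result by lifting the example set of Theorem~\ref{uno} from the alphabet $2^\sigma$ to tree-shaped $\mathcal{EL}$-data instances; the exponential $\mathcal{ELI}$ bound by generalising the bounded-separation argument behind Theorem~\ref{separUpath}; and the non-polynomial $\mathcal{ELI}$ bound by a counting argument in the style of Examples~\ref{thm:superpolb}--\ref{thm:cc} and Theorem~\ref{thm:badd}.

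For $\mathcal{P}^{\Sigma}[\U](\mathcal{EL})$ I would reuse the six example families of Theorem~\ref{uno}, translating Boolean operations on letters into their query-theoretic counterparts: $\rho_i\subseteq\rho_j$ becomes $\r_i\models\r_j$, a letter $\rho_i$ becomes the canonical tree $\hat{\r}_i$ of the $\mathcal{EL}$-query $\r_i$, and ``delete an atom $A\in\rho$'' becomes ``replace $\hat{\r}$ by $\hat{\r}'$ for some $\r'\in\mathcal{F}(\r)$'', using the polynomial frontiers of Theorem~\ref{thm:tenCate}. With $\Subset$ between letters replaced by entailment between blocks, the positive families $(\mathfrak p_0)$--$(\mathfrak p_2)$ and the negative families $(\mathfrak n_0)$, $(\mathfrak n_1)$ transfer essentially verbatim, and Lemma~\ref{matching} together with peerlessness ($\el_i\not\models\r_i$ and $\r_i\not\models\el_i$) continues to control the matching pairs.

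The family that resists a direct translation is $(\mathfrak n_2)$, whose letters $\sigma\setminus\{A\}$ are the \emph{maximal} letters failing a single concept. Its DL analogue is a \emph{maximal counterexample} to an $\mathcal{EL}$-query $\r$, relative to the polynomially-many building blocks occurring in $\q$ and in its potential competitors: a polynomial-size data instance $\D_{\r}$ with $\D_{\r}\not\models\r$ that still satisfies every such block not entailing $\r$. Constructing $\D_{\r}$ is the new ingredient, the split-partner lemma; I would build it by exploiting that $\mathcal{EL}$-queries are preserved and reflected under direct products, gluing the products of the frontier trees of the offending subquery at the single successor where $\r$ must be violated while leaving the remaining successors as large as the relevant blocks require. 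The absence of inverse roles is exactly what keeps the glued branch from re-entailing $\r$ through a back-edge and keeps $\D_{\r}$ polynomial; the rest of the verification that the resulting $E$ characterises $\q$ within $\mathcal{Q}_{p}^{\Sigma}[\U](\mathcal{EL})$ then follows Theorem~\ref{uno}.

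For the exponential upper bound I would read a $\mathcal{Q}_{p}^{\Sigma}[\U](\mathcal{ELI})$-query through the automaton $\mathfrak A_{\q}$ with its alphabet symbols now being $\mathcal{ELI}$-queries: inequivalent queries are separated by a temporal instance whose length is quadratic in the temporal depth (as in Theorem~\ref{separUpath}) and whose slices are pinned down, time point by time point, by the polynomial $\mathcal{ELI}$-examples of Theorem~\ref{thm:tenCate0}, so collecting all instances up to this bound gives a characterising set of exponential size. The non-polynomial lower bound is the hardest part, and I expect it to be the main obstacle. Here I would design peerless queries $\q_n\in\mathcal{P}^{\Sigma}[\U](\mathcal{ELI})$ of size $O(n)$ with $2^n$ pairwise independent competitors $\q_n^{s}$ ($s\in\{0,1\}^n$) such that every example set fitting $\q_n$ must contain a distinct witness for each competitor. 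The role of inverse roles is precisely to defeat the split-partner lemma: inside a single $\mathcal{ELI}$ building block a back-edge can encode a product/branching constraint whose maximal counterexamples are exponentially many and pairwise incomparable, so that no polynomial family of examples separates $\q_n$ from all competitors at once. Making the queries simultaneously peerless, keeping them inside the path fragment (so the branching must be manufactured through inverse roles rather than through top-level conjunction as in Theorem~\ref{thm:badd}), and proving the ``one witness per competitor'' invariant is where the real work lies.
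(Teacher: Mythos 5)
Your overall architecture is the same as the paper's: lift the example families of Theorem~\ref{uno} with entailment in place of $\subseteq$ and frontiers (Theorem~\ref{thm:tenCate}) in place of atom deletion, isolate a new ``split partner'' lemma as the genuinely new ingredient, and use inverse roles to make that lemma fail for the non-polynomiality claim (the paper's witness is $\nxt\q$ for $\q=\exists r.\bigwedge_{i=1}^{n}\exists r^{-}.A_i$ with the $2^n$ competitors $\q_X\U\q$, essentially what you sketch). One organisational difference: instead of re-running the proof of Theorem~\ref{uno} in the 2D setting, the paper defines an abstraction map $f$ that turns the finitely many relevant $\mathcal{EL}$-subqueries into letters of a finite alphabet $\Gamma$, invokes Theorem~\ref{uno} for $f(\q)$ and $f(\q')$ as a black box, and pulls the separating examples back. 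However, your proposal has a genuine gap precisely at the new lemma.

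The gap is in your construction of the split partner $\D_\r$. First, it cannot be built ``relative to the polynomially-many building blocks occurring in $\q$ and in its potential competitors'': the example set must be computed from $\q$ alone, while competitors range over all of $\mathcal{Q}_{p}^{\Sigma}[\U](\mathcal{EL})$, whose blocks are arbitrary $\mathcal{EL}(\Sigma)$-queries. So $\D_\r$ must satisfy \emph{every} $\mathcal{EL}(\Sigma)$-query not entailing $\r$, including queries of unbounded depth such as $\exists R.\exists R.\cdots\exists R.\top$. Consequently no structure glued from (products of) frontier trees can work: such structures are acyclic and of depth bounded by $|\q|$, hence falsify sufficiently deep queries that say nothing about $\r$. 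This is exactly why the paper's split partners (Theorem~\ref{thm:split}) are \emph{non-tree-shaped}, built from looped gadgets --- e.g.\ for $\r=A$ the instance with root missing $A$ plus a point $b$ carrying all atoms and all loops $R(b,b)$, which absorbs arbitrary subtrees --- and why the paper stresses that non-tree-shaped instances are unavoidable here (products enter duality constructions only to combine split partners of several queries, not to build one). The same oversight affects your exponential $\mathcal{ELI}$ upper bound: the ``alphabet'' of possible time slices is infinite, so one cannot collect all bounded-length instances; the paper instead constructs exponential-size $\mathcal{ELI}$ split partners by type elimination and plugs them into the same example families. A smaller slip: it is not only $(\mathfrak n_2)$ that resists verbatim translation --- $(\mathfrak n_0)$ and $(\mathfrak n_1)$ also need split partners ($\sigma\setminus\{A\}$ becomes $\mathcal{S}(\{\r_i\})$, $\sigma\setminus\{A,B\}$ becomes $\mathcal{S}(\{\el_i,\r_i\})$, and even $\sigma$ itself becomes the looped instance $\mathcal{A}_\Sigma$). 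Finally, your lower-bound plan is the right one but is only a plan; the concrete query and the ``one negative example per competitor'' argument (no single instance can satisfy $\q_{X_1}\U\q$ and $\q_{X_2}\U\q$ while refuting $\nxt\q$) still have to be supplied.
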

To prove Theorem~\ref{uno2}, we generalise the example set from the proof of Theorem~\ref{uno}. The positive examples are straightforward: simply replace $\rho_{i}$ and $\lambda_{i}$ by the data instances corresponding to $\r_{i}$ and $\el_{i}$ (and choose a single root individual). For the negative examples, we have to generalise the construction of $\sigma$, $\sigma\setminus\{A\}$, and $\sigma\setminus\{A,B\}$. For $\sigma$, this is straightforward as its role can now be played by the $\Sigma$-data instance $\mathcal{A}_{\Sigma}=\{A(a),R(a,a)\mid A,R\in \Sigma\}$ for which $\mathcal{A}_{\Sigma}\models \q(a)$ for all $\q \in \mathcal{ELI}(\Sigma)$.
For $\sigma\setminus\{A\}$ and $\sigma\setminus\{A,B\}$, we require \emph{split partners} defined as follows. Let $Q$ be a finite set of $\mathcal{L}(\Sigma)$-queries. A set $\mathcal{S}(Q)$ of pointed $\Sigma$-data instances $(\mathcal{A},a)$ is called a \emph{split partner} of $Q$ in $\mathcal{L}(\Sigma)$ if the following conditions are equivalent for all $\mathcal{L}(\Sigma)$-queries $\q'$:
\begin{itemize}
	\item $\mathcal{A}\models \q'(a)$ for some $(\mathcal{A},a) \in \mathcal{S}(Q)$;
	\item $\q'\not\models \q$ for all $\q\in Q$.
\end{itemize}
\begin{example}\em
The split partner of $\{\q=A\}$ in $\mathcal{EL}(\Sigma)$ is the singleton set
containing $(\mathcal{A}_{\Sigma}^{-A},a)$ with $\mathcal{A}_{\Sigma}^{-A}$ defined as
$
\{B(a),R(a,b),R(b,b),B'(b)\mid B\in\Sigma\setminus\{A\},R,B'\in \Sigma\}.
$	
\end{example}

\begin{theorem}\label{thm:split}
Fix $n>0$. For any set $Q$ of $\mathcal{EL}(\Sigma)$-queries with $|Q|\leq n$, one can compute in polynomial time a split partner $\mathcal{S}(Q)$ of $Q$ in $\mathcal{EL}(\Sigma)$.	
For $\mathcal{ELI}$, one can compute a split partner in exponential time, which is optimal as even for singleton sets $Q$ of $\mathcal{ELI}(\Sigma)$-queries, no polynomial-size split partner of $Q$ in $\mathcal{ELI}(\Sigma)$ exists in general.
\end{theorem}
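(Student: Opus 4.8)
The plan is to prove Theorem~\ref{thm:split} by treating the $\mathcal{EL}$ case (positive, polynomial) and the $\mathcal{ELI}$ case (exponential upper bound plus a matching lower bound) separately, since the two differ sharply because of inverse roles.

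\medskip

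\noindent\textbf{The $\mathcal{EL}$ case.}
First I would recall the defining property: a split partner $\mathcal{S}(Q)$ must capture, via satisfiability, exactly those $\mathcal{EL}(\Sigma)$-queries $\q'$ that fail to entail \emph{every} member of $Q$. The key structural fact is that $\mathcal{EL}$-entailment $\q'\models\q$ is characterised by the existence of a homomorphism from the tree $\hat{\q}$ (the canonical tree-shaped data instance of $\q$) into the tree $\hat{\q'}$ sending root to root. So $\q'\not\models\q$ means no such homomorphism exists. The plan is to build, for each $\q\in Q$, a single pointed data instance $(\mathcal{A}_{\q},a)$ that is a \emph{split partner of the singleton} $\{\q\}$: essentially a maximal $\Sigma$-model that avoids homomorphisms \emph{from} $\hat{\q}$ while admitting homomorphisms from every $\q'$ with $\q'\not\models\q$. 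The example in the excerpt (the $\mathcal{A}_{\Sigma}^{-A}$ construction for $\{A\}$) shows the idea for depth-one queries: take the `full' instance $\mathcal{A}_{\Sigma}$ and surgically remove just enough atoms at the right depth to block $\hat{\q}$ while keeping everything else saturated. I would generalise this by unfolding $\q$ to bounded depth and, at each branch where $\hat{\q}$ would need to map, deleting one concept or role atom along that branch; the resulting instance has polynomial size because $|\hat{\q}|$ is polynomial in $|\q|$. To handle the whole set $Q$, since $|Q|\le n$ is fixed and the condition is a conjunction over $Q$ (we need $\q'\not\models\q$ for \emph{all} $\q\in Q$ simultaneously), I would take $\mathcal{S}(Q)=\bigcup_{\q\in Q}\mathcal{S}(\{\q\})$ and verify the equivalence: $\mathcal{A}\models\q'(a)$ for some $(\mathcal{A},a)\in\mathcal{S}(Q)$ iff $\q'$ maps into one of the singleton partners, which by construction happens exactly when $\q'\not\models\q$ for the corresponding $\q$. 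One must check that a single $\mathcal{A}_{\q}$ does not accidentally model some $\q'$ with $\q'\models\q$; this is where saturation-with-precise-deletions matters, and it is the main technical content of this direction.

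\medskip

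\noindent\textbf{The $\mathcal{ELI}$ upper bound.}
For $\mathcal{ELI}(\Sigma)$ the same homomorphism characterisation holds, but now the canonical instances carry inverse edges, so a blocking construction may have to branch in both directions and its naive unfolding can be exponential. I would argue an exponential upper bound by the same recipe applied to the $\mathcal{ELI}$-unfolding of each $\q\in Q$, whose size is at worst exponential in $|\q|$; the correctness proof is identical in spirit, using that $\mathcal{ELI}$-entailment is again homomorphism existence between (two-way) tree-shaped instances.

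\medskip

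\noindent\textbf{The $\mathcal{ELI}$ lower bound.}
The hard part---and the step I expect to be the main obstacle---is showing that \emph{no} polynomial-size split partner exists for $\mathcal{ELI}(\Sigma)$ even when $Q$ is a singleton. The strategy is to exhibit a family of $\mathcal{ELI}(\Sigma)$-queries $\q_m$ over a fixed $\Sigma$ such that the set of $\q'$ with $\q'\not\models\q_m$ is so rich that any single data instance witnessing satisfiability of all of them would have to encode exponentially much information. Concretely, I would look for $\q_m$ whose non-entailers $\q'$ force mutually incompatible local configurations---so that one data instance can satisfy at most one `type' of witness---and then count: if $2^{m}$ pairwise-incompatible non-entailers must each be modelled by \emph{some} instance in $\mathcal{S}(\{\q_m\})$, then $|\mathcal{S}(\{\q_m\})|\ge 2^m$. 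This mirrors the counting arguments of Examples~\ref{thm:superpolb} and~\ref{thm:cc}, where inverse roles let a bounded-size query constrain exponentially many incomparable refinements simultaneously. The delicate point is guaranteeing \emph{genuine} incompatibility: I must ensure that if $(\mathcal{A},a)$ models two of the designated non-entailers then it also models $\q_m$ itself (contradicting the split-partner property), which typically requires exploiting $\mathcal{ELI}$'s ability to merge witnesses back via inverse edges---precisely the feature absent in the $\mathcal{EL}$ case that makes that side polynomial. I would isolate this incompatibility lemma as the crux and prove it by a direct homomorphism-merging argument on the canonical models.
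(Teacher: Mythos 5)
Your proposal has a genuine error at the very step you present as routine: the reduction of the set case to singletons. By definition, $\mathcal{S}(Q)$ must satisfy: some $(\mathcal{A},a)\in\mathcal{S}(Q)$ has $\mathcal{A}\models\q'(a)$ \emph{iff} $\q'\not\models\q$ for \emph{all} $\q\in Q$. Your union $\mathcal{S}(Q)=\bigcup_{\q\in Q}\mathcal{S}(\{\q\})$ instead characterises the existence of \emph{some} $\q\in Q$ with $\q'\not\models\q$ -- a disjunction where a conjunction is required. Concretely, take $Q=\{A,B\}$ and $\q'=B$: since $\q'\models B\in Q$, no element of a split partner of $Q$ may satisfy $\q'$, yet the element of $\mathcal{S}(\{A\})$ (e.g.\ the paper's $\mathcal{A}^{-A}_{\Sigma}$) satisfies $B$ at its root. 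The standard repair is not a union but, for instance, direct products of one element from each $\mathcal{S}(\{\q\})$, $\q \in Q$ (homomorphisms from a connected pointed query into a product correspond to tuples of homomorphisms into the factors); this is exactly where the hypothesis that $n=|Q|$ is \emph{fixed} earns its keep, since the number of products is exponential in $|Q|$ but polynomial for fixed $n$. A second, related error: even for singletons, no \emph{single} pointed instance can work in general, contrary to your opening claim. For $\q=A\wedge B$, both $A$ and $B$ are non-entailers of $\q$, so a single instance would have to satisfy both at its root and hence satisfy $\q$ itself, a contradiction; split partners of singletons are genuinely sets (the paper's depth-$0$ construction has one instance per atomic conjunct, and its inductive step has one instance per conjunct of the query, each gluing the recursively computed split partners of the subqueries as successors of a saturated root).

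On the $\mathcal{ELI}$ side, your upper bound is asserted rather than argued: the claim that ``the same recipe'' applies to an exponential unfolding cannot be taken on faith, because the mechanism that makes the $\mathcal{EL}$ induction sound (forward-only homomorphisms, so existential conjuncts can be blocked independently at separate successors) is precisely what inverse roles destroy. The paper takes a different route entirely: a type-elimination (filtration) construction producing a single non-tree-shaped canonical instance over the satisfiable types, into which \emph{every} $\Sigma$-data instance maps via $a\mapsto t_{\mathcal{B}}(a)$; that universality is what gives the ``all non-entailers are satisfied'' direction. Your lower-bound strategy, by contrast, is structurally exactly the paper's: exhibit exponentially many non-entailers any two of which, if satisfied by one instance, force satisfaction of the target query via merging along inverse edges. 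What is missing is the witness family itself, which is the creative core: the paper uses $\q=\exists r.\bigwedge_{i=1}^{n}\exists r^{-}.A_{i}$ with the $2^{n}$ non-entailers $\q_{X}=(\bigwedge_{i\in X}A_{i})\wedge\bigwedge_{i\notin X}\exists r.\exists r^{-}.\bar{A}_{i}$, where $\bar{A}_{i}$ is the conjunction of all $A_{j}$ with $j\neq i$; the incompatibility lemma then follows by the two-predecessor merging argument you anticipate.
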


The proof, given in the Appendix, requires (as does $\mathcal{A}_{\Sigma}$) the construction of non-tree-shaped data instances. Our results for $\mathcal{ELI}$ are closely related to the study of generalised dualities for homomorphisms between relational structures~\cite{DBLP:journals/ejc/FoniokNT08,DBLP:journals/siamdm/NesetrilT05} but use pointed relational structures. The construction of $\mathcal{S}(Q)$ for $\mathcal{ELI}$ is based on a construction first introduced in~\cite{DBLP:journals/tods/BienvenuCLW14}.


We obtain the negative examples for $\q$ of the form~\eqref{upatheli} by taking the following pointed data instances $(\mathcal{D},a)$ (assuming that split partners take the form $(\mathcal{A},a)$ for a fixed $a$):
\begin{description}
\item[$(\mathfrak n_0')$] 
	$(\mathcal{A}_{\Sigma}^n,a)$ and $(\mathcal{A}_{\Sigma}^{n-i}\mathcal{A}\mathcal{A}_{\Sigma}^{i},a)$, for $(\mathcal{A},a)\in \mathcal{S}(\{\r_i\})$;
	
\item[$(\mathfrak n_1')$] $(\D,a)=(\hat{\r}_0\dots\hat{\r}_{i-1}\mathcal{A}\hat{\r}_i\dots\hat{\r}_n,a)$ with $\D,a,0\not\models\q$ and $(\mathcal{A},a)\in \mathcal{S}(\{\el_{i},\r_{i}\})\cup \mathcal{S}(\{\el_{i}\}) \cup \mathcal{S}(\{\r_{i}\})\cup \{(\mathcal{A}_{\Sigma},a)\}$;
	
	\item[$(\mathfrak n_2')$]
	for all $i$ and $(\mathcal{A},a) \in \mathcal{S}(\{\el_i\})\cup\{(\mathcal{A}_{\Sigma},a)\}$, \emph{some} data instance
	$$
	(\D^i_{\mathcal{A}},a) = (\hat{\r}_{0}\dots \hat{\r}_{i-1} \mathcal{A} \hat{\r}_{i}\hat{\el}_{i+1}^{k_{i+1}}\hat{\r}_{i+1} \dots \hat{\el}_{n}^{k_n}\r_n,a),
	$$
	if any, such that $\max(\D^i_{\mathcal{A}}) \le (n+1)^2$ and $\D^i_{\mathcal{A}},a,0\not\models \q^\dag$ for $\q^\dag$ obtained from $\q$ by replacing all $\el_j$, $j \le i$, with $\bot$.
\end{description}
We illustrate the construction by generalising Example~\ref{ex:second}.
\begin{example}\em
For $\q= \nxt A$ and any relational signature $\Sigma\ni A$, we obtain, after removing redundant instances, that $E^{+}=\{(\emptyset\{A(a)\},a)\}$ and $E^{-}=\{(\mathcal{A}_{\Sigma}\mathcal{A}_{\Sigma}^{-A}\{A(a)\},a)\}$
characterise $\q$ within $\mathcal{Q}_{p}^{\Sigma}[\U](\mathcal{EL})$.
\end{example}

We finally generalise Theorem~\ref{thm:branching1} $(ii)$ (part~$(i)$ is not interesting since simple queries do not generalise to any new class of $\mathcal{ELI}$-queries). Query classes such as $\mathcal{Q}[\Diamond](\mathcal{EL})$ are defined in the obvious way by replacing in $\mathcal{Q}[\Diamond]$-queries conjunctions of atoms by $\mathcal{EL}$-queries.
\begin{theorem}\label{thm:branching2}
	The class $\mathcal{Q}_{b}[\Diamond](\mathcal{EL})\cap \mathcal{Q}_{\leq n}[\Diamond](\mathcal{EL})$ is polynomially characterisable for any $n < \omega$.
\end{theorem}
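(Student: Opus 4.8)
The plan is to lift the proof of Theorem~\ref{thm:branching1}$(ii)$ to the 2D setting, using the split-partner machinery of Theorem~\ref{thm:split} as the 2D replacement for the purely propositional constructions $\sigma$, $\sigma\setminus\{A\}$ used in the 1D case. Throughout I fix a query $\q = \q_1 \wedge \cdots \wedge \q_m \in \mathcal{Q}_b^\Sigma[\Diamond](\mathcal{EL}) \cap \mathcal{Q}_{\le n}^\Sigma[\Diamond](\mathcal{EL})$ with $m \le n$, where each $\q_i = \r_0^i \wedge \Diamond(\r_1^i \wedge \Diamond(\r_2^i \wedge \dots \wedge \Diamond \r_N^i))$ has the common depth $N$ (by balancedness), and each $\r_j^i$ is an $\mathcal{EL}$-query. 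The first step is to isolate the analogue of Lemma~\ref{lem:positive}: for a single path query $\q = \Diamond \q'$ and a bound $k$, I would construct a polynomial-size pointed temporal data instance $(\mathcal{D}_{\q,k},a)$ with $\mathcal{D}_{\q,k},a,0\not\models\q$ but $\mathcal{D}_{\q,k},a,0\models\q''$ for every $\q''\in\mathcal{Q}_p^\Sigma[\Diamond](\mathcal{EL})$ of depth $\le k$ with $\q''\not\models\q$. The 1D construction built $\mathcal{D}_{\q,k}$ by interleaving full slices $\sigma$ with slices $\sigma\setminus\{A^i_l\}$ that delete exactly the atoms forcing the query at a given depth; in 2D I would replace each full slice $\sigma$ by the instance $\mathcal{A}_\Sigma$ (which satisfies every $\mathcal{ELI}(\Sigma)$-query) and each deletion slice by a member of the split partner $\mathcal{S}(\{\r_l^i\})$ supplied by Theorem~\ref{thm:split}, so that the slice fails exactly the queries entailing $\r_l^i$ while still satisfying all incomparable ones.

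The second step assembles the negative examples. Following the 1D proof, I set $\rho = \bigwedge_{i=1}^m \r_0^i$ (the conjunction of the root $\mathcal{EL}$-queries) and let $\q_i^-$ be $\q_i$ with its root conjunct stripped, so the depth-shifting lemma applies to each $\q_i^-$. The negative set $E^-$ then consists of the instances $\mathcal{D}_{\q_i^-,m}$ together with, for the initial timepoint, instances of the form $(\mathcal{A}\,\mathcal{A}_\Sigma^m, a)$ with $(\mathcal{A},a)$ ranging over a split partner $\mathcal{S}(\{\r_0^1,\dots,\r_0^m\})$ of the root queries (the 2D analogue of $\sigma\setminus\{A\}\,\sigma^m$ for $A\in\rho$). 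The 2D analogue of Lemma~\ref{lem:negexamples} is that each such instance refutes $\q$ and that any $\q'\not\models\q$ of depth $\le m$ is satisfied by one of them; the split-partner equivalence (the second bullet of its definition) is precisely what makes the ``some negative example captures $\q'$'' direction go through, since a branch of $\q'$ that is not subsumed by any branch of $\q$ must survive in the corresponding deletion slice.

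The third and decisive step is the positive examples, which is where non-polynomial characterisability was shown to arise in Example~\ref{thm:cc} for unbalanced queries; balancedness plus the branching bound $m\le n$ is exactly what rescues the construction. For each map $f\colon\{1,\dots,m\}\to\{1,\dots,N\}$ I build $\mathcal{D}_f$ by inserting $\hat{\r}^i_{f(i)}$ (the tree-shaped instance of the $\mathcal{EL}$-query $\r^i_{f(i)}$, rooted at the chosen $a$) into position $f(i)$ of the all-$\mathcal{A}_\Sigma$ instance $\mathcal{A}_\Sigma^N$; since $m\le n$ is fixed, there are at most $N^m$, i.e.\ polynomially many, such maps, and each $\mathcal{D}_f$ is a positive example. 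I also include the initial-timepoint positive example obtained by prefixing the conjunction-of-roots instance to $\mathcal{A}_\Sigma^N$. I expect the main obstacle to be the completeness argument for these positive examples: one must show that if $\q'\in\mathcal{Q}_b^\Sigma[\Diamond](\mathcal{EL})\cap\mathcal{Q}_{\le n}^\Sigma[\Diamond](\mathcal{EL})$ fits all of $E^+\cup E^-$ but $\q\not\models\q'$, a contradiction ensues. The key lemma here is that satisfaction of $\q'$ on every $\mathcal{D}_f$ forces a compatible assignment of each branch of $\q'$ to a branch-position of $\q$; balancedness guarantees all branches reach the same depth so positions line up, and the fixed branching bound keeps the number of maps polynomial, while the split-partner negatives prevent $\q'$ from having an ``extra'' branch. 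Carefully combining the branch-alignment from $E^+$ with the branch-subsumption from $E^-$ to conclude $\q\equiv\q'$ is the crux; the $\mathcal{EL}$ (rather than $\mathcal{ELI}$) restriction is what ensures $\mathcal{S}$ stays polynomial-size by Theorem~\ref{thm:split}, so the whole example set remains polynomial.
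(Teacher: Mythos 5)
Your overall route is the same as the paper's: the paper proves Theorem~\ref{thm:branching2} precisely by taking the example set of Theorem~\ref{thm:branching1}$(ii)$ and replacing $\sigma$ by $\mathcal{A}_{\Sigma}$ and the deletion slices $\sigma\setminus\{A\}$ by split partners from Theorem~\ref{thm:split}, and your Steps~1 and~3 implement this faithfully. However, Step~2 contains a genuine error. You draw the initial-timepoint negative examples from the split partner $\mathcal{S}(\{\r_0^1,\dots,\r_0^m\})$ of the \emph{set} of root queries, whereas the correct 2D analogue of the 1D negatives $\sigma\setminus\{A\}\,\sigma^m$, $A\in\rho$, is the split partner $\mathcal{S}(\{\r_0^1\wedge\dots\wedge\r_0^m\})$ of the \emph{singleton containing the conjunction} of the roots: in 1D, $\{\sigma\setminus\{A\}\mid A\in\rho\}$ is exactly a split partner of $\{\rho\}$ with $\rho$ read as one conjunction. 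The two choices are not interchangeable. By the definition of split partner, a member of $\mathcal{S}(\{\r_0^1,\dots,\r_0^m\})$ satisfies only queries that entail \emph{none} of the $\r_0^i$; so whenever the roots of a candidate $\q'$ entail some but not all of the $\r_0^i$, none of your initial-slice negatives satisfies $\q'$, and the 2D analogue of Lemma~\ref{lem:negexamples}$(ii)$ that your argument relies on is false.

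Concretely, let $\Sigma=\{A,B\}$, $\q=\q_1\wedge\q_2$ with $\q_1=A\wedge\Diamond A$, $\q_2=B\wedge\Diamond B$, and $\q'=(A\wedge\Diamond A)\wedge\Diamond B$. Both are balanced of depth $1$ with at most two branches, $\q\models\q'$, and $\q'\not\models\q$ (witness $\{A(a)\}\{A(a),B(a)\}$), so $\q\not\equiv\q'$. Yet $\q'$ fits your example set: every positive example satisfies $\q$ and hence $\q'$; the instances $\mathcal{D}_{\q_1^-,m}$ and $\mathcal{D}_{\q_2^-,m}$ refute $\Diamond A$ and $\Diamond B$ respectively, both entailed by $\q'$, so they refute $\q'$; and every $(\mathcal{A},a)\in\mathcal{S}(\{A,B\})$ has $\mathcal{A}\not\models A(a)$ (since the query $A$ entails a member of the set $\{A,B\}$), so $(\mathcal{A}\mathcal{A}_{\Sigma}^m,a)$ refutes $\q'$, which demands $A$ at time $0$. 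Thus your set fails to characterise $\q$ uniquely. With the conjunction version $\mathcal{S}(\{A\wedge B\})$ everything goes through: the root conjunction $A$ of $\q'$ does not entail $A\wedge B$, so some member satisfies $A(a)$, yielding a negative example that satisfies $\q'$ — this reproduces exactly the case split in the proof of Lemma~\ref{lem:negexamples} (``some $A\in\rho$ occurs in no root of $\q'$'' versus ``$\q'$ entails $\rho$''). The fix is local (replace the set by the singleton conjunction, which is still a bounded-size set for Theorem~\ref{thm:split}), but as written the step fails. A smaller cosmetic point: in Step~1 a single member of $\mathcal{S}(\{\r_l^i\})$ does not ``fail exactly the queries entailing $\r_l^i$''; it is the whole block obtained by concatenating \emph{all} members of $\mathcal{S}(\{\r_l^i\})$ that plays the role of the 1D block $\s_i$ of deletion slices.
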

Again the positive and negative examples are obtained from the 1D case by replacing $\sigma$ by $\mathcal{A}_{\Sigma}$ and $\sigma\setminus\{A\}$ by appropriate split partners.

\section{Applications to Learning}\label{Sec:learning}

We apply our results on unique characterisability to exact learnability of temporal instance queries. Given a class $\mathcal{Q}$ of such queries, we aim to
identify a \emph{target query} $\q\in \mathcal{Q}$ using queries to an oracle. The learner knows $\mathcal{Q}$ and the signature $\sigma$ ($\Sigma$ in the 2D case) of $\q$. We allow only one type of queries, called \emph{membership queries}, in  which the learner picks a $\sigma$-data instance $\D$ and asks the oracle whether $\mathcal{D}\models \q$ holds. (In the 2D case, the learner picks a pointed $\Sigma$-data instance $(\D,a)$ and asks whether $\mathcal{D},a,0\models \q$ holds.)
The oracle answers `yes' or `no' truthfully. The class $\mathcal{Q}$ is (\emph{polynomial time}) \emph{learnable with membership queries} if
there exists an algorithm that halts for any $\q\in\mathcal{Q}$ and computes
(in polynomial time in the size of $\q$ and $\sigma/\Sigma$),
using membership queries, a query $\q'\in\mathcal{Q}$ that is equivalent to
$\q$. By default, the learner does not know $|\q|$ in advance
but reflecting Theorem~\ref{thm:nextdiamond}~$(iii)$, we also consider the case when $|\q|$ is known (which is common in active learning).

Obviously, unique characterisability is a necessary condition for learnability with membership queries. Conversely, if there is an algorithm that computes, for every $\q\in \mathcal{Q}$, an example set that uniquely characterises $\q$ within $\mathcal{Q}^{\sig(\q)}$, then $\mathcal{Q}$ is learnable with membership queries: enumerate $\mathcal{Q}^{\sig(\q)}$ starting with the smallest query $\q$, compute a characterising set $E$ for $\q$ and check using membership queries whether $\q$ is equivalent to the target query. Eventually the algorithm will terminate with a query that is equivalent to the target query. As all of our positive results on unique characterisability provide algorithms computing example sets, we directly obtain learnability with membership queries. Moreover, if the example sets are computed in exponential time, then we obtain an exponential-time learning algorithm: in the enumeration above only $|\sig(\q)|^{|\q|}$ queries are checked before the target query is found. Unfortunately, we cannot infer polynomial-time learnability from polynomial characterisability in this way.

A detailed analysis of polynomial-time learnability using membership queries is beyond the scope of this paper. Instead, we focus on one main result, the polynomial-time learnability of
$\mathcal{Q}_{p}[\nxt,\Diamond](\mathcal{ELI})$.
\begin{theorem}\label{thm:learning}
$(i)$ The class of safe queries in $\Qp[\nxt, \Diamondw](\mathcal{ELI})$
is polynomial-time learnable with membership queries.

$(ii)$ The class $\Qp[\nxt, \Diamondw](\mathcal{ELI})$ is polynomial-time learnable with membership queries if the learner knows the size of the target query in advance.

$(iii)$ The class $\Qp[\nxt,\Diamond](\mathcal{ELI})$ is polynomially-time learnable with membership queries.
\end{theorem}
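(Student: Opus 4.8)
$(i)$ The class of safe queries in $\Qp[\nxt, \Diamondw](\mathcal{ELI})$ is polynomial-time learnable with membership queries.
$(ii)$ The class $\Qp[\nxt, \Diamondw](\mathcal{ELI})$ is polynomial-time learnable with membership queries if the learner knows the size of the target query in advance.
$(iii)$ The class $\Qp[\nxt,\Diamond](\mathcal{ELI})$ is polynomially-time learnable with membership queries.

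=== PROOF PROPOSAL ===

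The plan is to run the standard membership-query learning loop that maintains a positive example and generalises it until its associated query coincides with the target, adapting it to the two dimensions (time and object) of $\Qp[\nxt,\Diamondw](\mathcal{ELI})$. First I would dispose of $(iii)$: every $\Qp[\nxt,\Diamond](\mathcal{ELI})$-query is safe, since strict $\Diamond=\nxt\Diamondw$ forbids the lone-conjunct configurations responsible for non-safety (this is the content of Theorem~\ref{thm:nextdiamond}$(iv)$ as lifted in Theorem~\ref{thm:nextdiamond2}). Hence $(iii)$ is an instance of $(i)$, and I argue $(i)$ and $(ii)$ in detail. For $(i)$ the learner first recovers the temporal skeleton. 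Probing the instances $\mathcal{A}_{\Sigma}^{k}$, with $\mathcal{A}_{\Sigma}$ the full instance satisfying every $\mathcal{ELI}$-query at every point, gives by the analogue of \eqref{depth} a sharp threshold: $\mathcal{A}_{\Sigma}^{k}\models\q$ iff $k>\dep(\q)$. A binary search therefore determines $\dep(\q)$ and a polynomial bound on the number of time positions, and a second binary search over the $\exists S$-nesting depth of the generic witness trees bounds the object dimension. With these bounds I build an initial positive scaffold, namely $\dep(\q)+1$ copies of a generic witness tree separated by blocks of $\emptyset$, which is positive and in which every homomorphism from the target is pinned to the intended positions.

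Next I read off the query by two interleaved generalisation phases. For the operators $\op_{i}$ I probe, position by position, the effect of contracting the $\emptyset$-gaps between consecutive witness positions: contracting a gap preserves positivity exactly when the operator is $\Diamondw$ rather than $\nxt$, reproducing the role of the examples $\D_{i}$ and $\D_{i}^{-}$ in the proof of Theorem~\ref{thm:nextdiamond}. For each label $\r_{i}$ I run a frontier-guided minimisation in the object dimension, using Theorem~\ref{thm:tenCate} as a black box: keeping every other position generic so that a membership query isolates the constraint at position $i$, I repeatedly replace the current witness tree by the data instance of a query in its frontier whenever the result remains a positive example. The invariant is that the read-off $\mathcal{ELI}$-query at position $i$ always entails the true $\r_{i}$; the frontier property guarantees that, as long as the two are not yet equivalent, some frontier step preserves positivity, so the process strictly weakens the hypothesis and halts after polynomially many membership queries with the tree whose canonical query is $\r_{i}$.

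For $(ii)$ safety may fail, and then the minimisation can stall at a strictly more general query. This is exactly the lone-conjunct phenomenon: a label $\r$ equivalent to some $\q_{1}\wedge\q_{2}$ with $\q_{1},\q_{2}\not\models\r$ cannot be pinned down by monotone weakening. Knowing $|\q|$ repairs this, since by Theorem~\ref{thm:nextdiamond2}$(iii)$ the class is polynomially characterisable for bounded query size, so I may restrict attention to candidate labels of size at most $|\q|$ and test the polynomially many admissible lone-conjunct splittings of a stalled label against the oracle, bounding all searches by the characterisation polynomial $f(|\q|)$.

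The hard part will be the object-dimension convergence: proving that the interleaving of temporal-gap probing and frontier-guided tree minimisation $(a)$ keeps the whole running instance a positive example whose read-off query entails the target, $(b)$ never overshoots to an incomparable or strictly weaker query, and $(c)$ terminates within a polynomial number of membership queries even though the frontier is recomputed at every step. For safe queries this rests on the monotone ``stay above the target'' invariant supplied by the frontier property of Theorem~\ref{thm:tenCate}; the obstruction to extending it verbatim to the full class is precisely the non-monotone lone-conjunct case, which is why $(ii)$ must assume the target size is known.
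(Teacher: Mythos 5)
Your high-level strategy (probe the temporal skeleton with generic instances, then shrink/generalise a positive example using frontiers, then read off the operators from gap-contraction probes) is the same as the paper's, but two of your reductions fail. The fatal one is your disposal of $(iii)$: it is \emph{not} true that every query in $\Qp[\nxt,\Diamond](\mathcal{ELI})$ is safe. Safety means equivalence to a normal-form query \emph{without lone conjuncts}, which by Theorem~\ref{thm:nextdiamond}$(i)$ is the same as unique characterisability within the \emph{larger} class $\Qp[\nxt,\Diamondw]$. The query $\Diamond(A\wedge B)$ lies in $\Qp[\nxt,\Diamond]$, its normal form $t_0<t_1,\{A,B\}(t_1)$ has a lone conjunct, and by Example~\ref{EX:lone}$(i)$ together with anti-monotonicity it is not uniquely characterisable within $\Qp[\nxt,\Diamondw]$, hence not safe. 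Theorem~\ref{thm:nextdiamond}$(iv)$ only asserts characterisability \emph{within} $\Qp[\nxt,\Diamond]$, a strictly weaker property; you have conflated the two. Concretely, your $(i)$-algorithm diverges on such a target: monotone frontier weakening can never split the lone conjunct, since every instance of the form $\emptyset\{A\}\emptyset^b\{B\}\cdots$ is a \emph{negative} example for $\Diamond(A\wedge B)$, so no splitting probe ever succeeds, and your proposal offers no other route to $(iii)$. The paper instead reduces $(iii)$ to $(ii)$: with only strict operators the probing phase already determines the temporal depth $b-1$, which bounds $|\q|$ by $|\sigma|b$ (in the 1D case), so the ``size known in advance'' hypothesis of $(ii)$ is available for free.

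There is a second genuine gap in the object dimension. Your initial scaffold of ``generic witness trees'' whose $\exists S$-nesting depth you binary-search cannot be built in polynomial time: a \emph{tree-shaped} instance that satisfies all $\mathcal{ELI}(\Sigma)$-queries of $\exists$-depth $d$ must branch on every role and inverse role at every node, so it has size exponential in $d$, and $d$ can be linear in $|\q|$. The only small generic instance is the looped $\mathcal{A}_{\Sigma}=\{A(a),R(a,a)\mid A,R\in\Sigma\}$, which is not tree-shaped and satisfies \emph{every} $\mathcal{ELI}$-query irrespective of depth, so it also defeats your depth probe. The paper's key device here, which your proposal lacks, is the unwind--minimise step: start from copies of $\mathcal{A}_{\Sigma}$, locally unravel self-loops into fresh successors, and immediately minimise by membership queries, which keeps each $\mathcal{A}_i$ tree-shaped and of size at most $|\q|$ before the frontier machinery of Theorem~\ref{thm:tenCate} is applied. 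Finally, in $(ii)$ your plan to ``test the polynomially many admissible lone-conjunct splittings'' is not obviously polynomial (ordered splittings of a $k$-atom set are exponentially many); the paper's rule (f) instead substitutes the single fixed word $(\rho\setminus\{A_1\}\,\emptyset^b\cdots\emptyset^b\,\rho\setminus\{A_k\})^n$ for a stalled lone conjunct and guarantees progress because each application removes a symbol.
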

%
\begin{proofsketch}
We consider the 1D case without $\mathcal{ELI}$-queries first. $(i)$ Our proof strategy is to construct a query $\q'$ that agrees with
  $\q$ on the positive and negative examples for $\q'$ from
  Theorem~\ref{thm:nextdiamond}.
The algorithm proceeds by computing a data instance $\D$. Our aim is to
	arrive at $\D_b$ through iterations of steps, from which the required
	query can be  `read off'\!.
\begin{description}
\item[Step 1.] First, identify the number of $\nxt$ and $\Diamond$ in
		$\q$ by asking membership queries of the form $\sigma^k$ incrementally,  starting
		from $k=1$, and then set $b = \min\{k\mid\sigma^k\models \q\}+1$ and $\D_0 = \sigma^b$. Initialise $\D = \D_0$.
		
\item[Step 2.] Suppose that a data instance $\D'$ is obtained from $\D$
		by applying one of the rules $(a)$--$(e)$ of Theorem~\ref{thm:nextdiamond}.  If
		$\D'\models \q$ then replace $\D$ with $\D'$.
		Repeat as long as possible.
                One can show that the number of applications of each rule
                is bounded by a polynomial in $|\sigma|$ and the size of $\q$,
		and so \textbf{Step 2} finishes in polynomial time.


		

\item[Step 3.]
		Suppose $\D$ contains $\emptyset^b\rho^i_0\emptyset^b$ and
		$|\rho^i_0|\geq 2$. Since rule $(a)$ does not apply, every homomorphism
		\mbox{$h \colon \q\to \D$} sends some $t_1,\dots,t_l$ to $\rho^i_0$, for $l\geq 1$.
		As $\q$ does not contain lone conjuncts, $\q$ contains singleton primitive blocks at positions
		$t_1,\dots,t_l$. Suppose $\rho^i_0 = \{A_1,\dots, A_{|\rho^i_0|}\}$ and let $w = \{A_1\}\emptyset^b
		\{A_2\}\emptyset^b\dots \{A_{|\rho^i_0|}\}\emptyset^b$ (the order in which
		$A_1$,\dots, $A_{|\rho^i_0|}$, the elements of $\rho^i_0$,  are enumerated does
		not matter, we fix any one).
		Let $\D^i_k$ be obtained from $\D$ by replacing
		$\emptyset^b\rho^i_0\emptyset^b$ with $\emptyset^b(w)^k$.
		Notice that, for $k=|\q|$, we have $\D^i_k\models \q$; however, the
		algorithm is not given this $k$. Instead, the algorithm incrementally iterates
		starting from $k=1$ until $\D^i_k\models \q$. Since $k\leq |\q|$, this
		takes polynomially-many iterations.
		Let $\D'$ be obtained from $\D^i_k$ by removing primitive blocks as long as
		$\D'\models \q$. Notice that rules $(a)$--$(e)$ do not apply to $\D'$.
		Replace $\D$ with $\D'$. Repeat \textbf{Step 3} as long as possible.
		Since no new lone conjuncts are introduced, the process finishes
		after polynomially-many steps.

\item[Step 4.] At this point of computation, the algorithm has identified all
		blocks of $\q$ but not the sequences of $\Diamond$ and $\Diamondw$ between
		them. They can be easily determined based on the positive and negative examples $\mathcal{D}_{i}$ and $\mathcal{D}_{i}^{-}$.
%
	\end{description}
The proof of $(ii)$ is similar, with a modified \textbf{Step 3}. Finally, $(iii)$ is a consequence of $(ii)$ as the size of the query $\q$ does not exceed $n = |\sigma|b$.

We obtain a learning algorithm for $\Qp[\nxt, \Diamondw](\mathcal{ELI})$ by combining
the learning algorithm above with the learning algorithm for $\mathcal{ELI}$-queries by~\citeauthor{DBLP:conf/icdt/CateD21}~(\citeyear{DBLP:conf/icdt/CateD21}) using the positive and negative examples given in Theorem~\ref{thm:nextdiamond2}. Note that the data instance $\mathcal{A}_{\Sigma}$ is now used instead of $\sigma$ and that one has to `unfold' such non tree-shaped data instances into tree-shaped ones.
\end{proofsketch}

\section{Conclusions}

In this paper, we have considered temporal instance queries with \LTL{} operators and started investigating their unique (polynomial) characterisability and exact learnability using membership queries. We have obtained both positive and negative results, depending on the available temporal operators and the allowed interaction between the temporal and object dimensions in queries. The results indicate that finding complete classifications of 1D and 2D temporal queries according to (polynomial) characterisability and learnability could be a very difficult task. In particular, interesting open problems include the polynomial characterisability of full $\mathcal{Q}_{p}^{\sigma}[\U]$, more general criteria of polynomial characterisability for temporal branching queries and other temporal operators,  and the polynomial-time learnability of $\mathcal{Q}_{p}^{\sigma}[\U]$ and 2D extensions. From a conceptual viewpoint, it would be of interest to develop a framework that spells out explicitly the conditions that non-temporal queries should satisfy so that their combination with \LTL{}-queries preserves polynomial characterisability and polynomial-time learnability.


\section*{Acknowledgments}

This research was supported by the EPSRC UK grants EP/S032207 and EP/S032282 for the joint project `quant$^\text{MD}$: Ontology-Based Management for Many-Dimensional Quantitative Data'\!.

\bibliographystyle{kr}
\bibliography{bib,local}

\ifdefined\dontincludethispart\else
\appendix


\section{Proofs for Section~\ref{sec:chara}}

We prove the claims made in the introduction and Example~\ref{EX:lone}.

\medskip

(1) The query $\q=\Diamondw (A \wedge B)$ is not uniquely characterisable within $\Qp[\Diamondw]$. Indeed, consider the queries $\q_1 = \Diamondw (A \land \Diamondw B)$ and $\q_i = \Diamondw (A \land \Diamondw (B \land \Diamondw \q_{i-1}))$. Clearly, $\q \models \q_i$ and $\q_i \not\models \q$ for all $i \ge 1$. Suppose $\q$ fits $(E^+,E^-)$ and $n$ is the length of the longest example in $E^-$. Then $\q_{n+1}$ also fits $(E^+,E^-)$ as $\D,0 \not\models \Diamondw (A \land B)$, and so $\D,0 \not\models \q_{n+1}$, for any $\D \in E^-$.
	
(2) The query $\q = \bot \U A$ (i.e., $\nxt A$) is not uniquely characterisable within $\mathcal{Q}^{\{A\}}[\U]$. For suppose $\q$ fits $(E^+,E^-)$ and $n$ is the length of the longest example in $E^-$. Consider $\q' = (\nxt^{n+1} A) \U A$. Clearly, $\q' \not\models \q$ and $E \models \q'$ ($\D \not \models \q'$, for any $\D \in E^-$, because $\D,1 \not\models A$ and $\D,1 \not\models \nxt^{n+1} A$).

(3) While the query $\Diamondw (A \wedge B)$ is not characterisable,
there is a polynomial $f$ such that for all $n \in \mathbb{N}$, it is
characterisable within $\Qp^n[\nxt,\Diamondw]$ by some $E_n$ of size $\le f(n)$.
Take $E_n = (E^+,E^-)$ with
$E^+=\{(\{A,B\}), (\emptyset,\{A,B\})\}$ and
$E^-=\{(\underbrace{ \{A\},\{B\},\ldots,\{A\},\{B\} }_{n\text{ times}})\}$.
If $\q' \in \Qp^n[\nxt,\Diamondw]$ fits $E$ then we can assume without
loss of generality that it does not use $\nxt$, as $(\{A,B\}) \models \q'$.
This means that $\q'$ is of the form
$\rho_0 \land \Diamondw (\rho_1 \land \Diamondw (\rho_2 \land \cdots \land
(\Diamondw \rho_m) \cdots))$, with $m < n$.\nb{why? -Y}
Moreover, $\rho_0 = \emptyset$ as $(\emptyset,\{A,B\}) \models \q'$.
And since $(\{A,B\}) \models \q'$, we must have $\rho_i \subseteq \{A,B\}$
for all $i$.
Finally, as $(\underbrace{ \{A\},\{B\},\ldots,\{A\},\{B\} }_{n\text{ times}})
\not\models \q'$, there must be $i$ such that $\rho_i = \{A,B\}$.
Thus, $\q' \equiv \Diamondw(A \wedge B)$.

\section{Proofs for Section~\ref{Sec:weak-path}}
We show Theorem~\ref{thm:nextdiamond}. We use the fact that $\D \models \q$ iff there is a \emph{homomorphism} $h$ from the set $\var(\q)$ of variables in $\q$ to $[0,\len(\D)]$, i.e., $h(t_0) = 0$, $A(h(t)) \in \D$ if $A(t) \in \q$, $h(t') = h(t) +1$ if $\suc(t,t') \in \q$, and $h(t)\, R\, h(t')$ if $R(t,t') \in \q$ for $R \in \{<,\leq\}$.

\medskip
\noindent
{\bf Theorem~\ref{thm:nextdiamond}.}
{\em
	$(i)$ A query $\q\in \Qp[\nxt, \Diamondw]$ is uniquely characterisable within $\Qp[\nxt, \Diamondw]$ iff $\q$ is safe.
	
	$(ii)$ Those queries that are uniquely characterisable within $\Qp[\nxt, \Diamondw]$ are actually polynomially characterisable within $\Qp[\nxt, \Diamondw]$.
	
	$(iii)$ The class $\Qp[\nxt, \Diamondw]$ is polynomially characterisable for bounded query size.
	
	$(iv)$ The class $\Qp[\nxt,\Diamond]$ is polynomially characterisable.
}

\begin{proof}
	$(i)$ Assume that $\q$ is given. To show $(\Leftarrow)$, we may assume $\q$ in normal form~\eqref{fullq} does not contain lone conjuncts. Let $b$ be the number of $\nxt$ and $\Diamond$ in $\q$ plus 1.
	Consider the example set $E = (E^+, E^-)$ constructed in the main paper.
	Then $E$ is polynomial in $|\q|$ and $\D \models \q$ for all $\D \in E^+$. Using the condition that $\q$ is in normal form, we show the following:
	
	\smallskip
	\noindent
	\textbf{Claim 1.} $(i)$ \emph{There is only one homomorphism $h \colon \q \to \D_{b}$, and it maps isomorphically each $\q_i$ onto $\qw_i$.}
	
	$(ii)$ \emph{$\D^-_i \not\models \q$, for any $\mathcal{R}_i$ different from $\le$.}
	
	$(iii)$ \emph{If $\D_{b}'$ is obtained from $\D_b$ by replacing some $\qw_{i}$ with $\qw'_{i}$ such that $\qw'_{i},\ell \not\models \q_i$ for any $\ell \le \max (\qw'_i)$, then $\D_{b}' \not\models \q$, and so $\D \not\models \q$, for all $\D \in E^-$.}
	
	\smallskip
	\noindent
	\emph{Proof of claim.} $(i)$ As $\q$ is in normal form and the gaps between $\qw_i$ and $\qw_{i+1}$ are not shorter than any block in $\q$, each block $\q_i$ in $\q$ is mapped by $h$ to a single block $\qw_j$ of $D_{b}$. The function $f \colon [0,n] \rightarrow [0,n]$ defined by taking $f(i) = j$ is such that $f(0)=0$, $i<j$ implies $f(i)\leq f(j)$, and block $\q_{i}$ is satisfied in $\qw_{f(i)}$. It also follows from the definition of normal form that if $f(i)=i$, then $h$ isomorphically maps $\q_i$ onto $\qw_i$ and $f(i-1)<i$ and $f(i+1)>i$. To show that $f(i)=i$ for all $i$, we first observe that $f(1) \geq 1$ and $f(j)=j$, for $j = \max \{i\mid f(i) \ge i\}$, from which $f(j-1)<j$ and $f(j+1)>j$. Then we can proceed in the same way inductively by considering $f$ restricted to the smaller intervals $[j,n]$ and $[0,j]$.
	
	$(ii)$ Suppose $\mathcal{R}_i$ is not $\le$ but there is a homomorphism $h \colon \q \to \D^-_i$. Consider the location of $h(s^i_0) = \ell$. Suppose $\ell$ is in $\qw_i$. Since $\rho^{i+1}_{k_{i+1}} \ne \emptyset$ and by the construction of $\D^-_i$,   $h(s^{i+1}_{0})$ lies in some $\qw_j$ with $j > i+1$. But then there is a homomorphism $h'\colon \q \to \D_b$ different from the one in $(i)$, which is impossible. We arrive to the same contradiction if we assume that $\ell$ lies in $\qw_j$ with $j< i$ or $j >i$.
	
	$(iii)$ is proved analogously. \hfill \qed
	
	\smallskip
	
	Thus, $\q$ fits $E$. Suppose now $\q'$ is any $\Qp[\nxt,\Diamond, \Diamondw]$-query in normal form. For $t \in \var(\q')$, denote by $\tau_{t}$ the set of atoms $A$ with $A(t)\in \q'$ and call it the \emph{type} of $t$ in $\q'$. Similarly, for $\ell \in [0,\len(\D_b)]$, denote by  $\rho_{\ell}$ the set of atoms $A$ with $A(\ell) \in \D_{b}$ and call it the \emph{type} of $\ell$ in $\D_b$. A homomorphism $h \colon \q' \to \D_{b}$ is \emph{block surjective} if every point in every block $\qw_i$ of $\D_{b}$ is in the range $\textit{ran}(h)$ of $h$; it is \emph{type surjective} if $\rho_{\ell} = \bigcup_{h(t)=\ell} \tau_{t}$ for all $\ell \in \textit{ran}(h)$.
	The following claim follows immediately from the definitions:
	
	\smallskip
	\noindent
	\textbf{Claim 2.} $(i)$ \emph{If there is a homomorphism $h \colon \q' \to \D_{b}$ that is not block or type surjective, then $\D \models \q'$ for some $\D \in E^-$ obtained from $\D_b$ by $(a)$.}
	
	$(ii)$ \emph{If there exist a homomorphism $h \colon \q' \to \D_{b}$ and $(t<t')\in \q'$ or $(t\leq t') \in \q'$ such that $h(t)\ne h(t')$ and $h(t),h(t') \in \qw_i$, for some block $\qw_i$, then $\D \models \q'$ for some $\D \in E^-$ obtained from $\D_b$ by $(b)$.}

	\smallskip
	
	Suppose now that $h \colon \q' \to \D_{b}$ is a block and type surjective homomorphism, $(t\le t') \in \q'$ and $h(t) = h(t') = \ell$ lies in $\qw_{i}$.
	Then $h^{-1}(\ell) = \{t_{1},\dots,t_{k}\}$ with $k \ge 2$ and $(t_{j}\leq t_{j+1}) \in \q'$, $1 \le j < k$. By (n3) and (n4), $\tau_{t_j} \ne \emptyset$ for at least one $t_j$, and so $\rho_\ell \ne \emptyset$.
	Consider possible locations of $\ell$ in $\qw_i$.
	
	\emph{Case} 1: $\ell$ has both a left and a right neighbour in $\qw_{i}$. Then there is  $\D \in E^-$ obtained by (c)---i.e., by replacing the appropriate $\rho_l^i$ with $\rho_l^i \emptyset^b \rho_l^i$---and a homomorphism $h' \colon \q' \to \D$, which `coincides' with $h$ except that $h'(t_1)$ is the point with the first $\rho_l^i$ and  $h'(t_j)$, for $j = 2,\dots,k$, is the point with the second $\rho_l^i$.
	
	\emph{Case} 2: $\ell$ has no neighbours in $\qw_{i}$, so this block is primitive and $\rho_{\ell}$ is a singleton (as $\q$ has no lone conjuncts by our assumption). Then $t_{1}$ is the last variable in its block in $\q'$, $t_{k}$ is the first variable in its block in $\q'$, and the $t_i$ with $1 < i <k$, if any, are all primitive blocks. But then the types $\tau_{t_{i}}$ and $\tau_{t_{i+1}}$ are not comparable with respect to $\subseteq$, contrary to $\rho_{\ell}$ being a singleton. Thus, Case 2 cannot happen.
	
	\emph{Case} 3: $\ell$ has a left neighbour in $\qw_{i}$ but no right neighbour. As $h$ is type surjective and in view of (n3), $\tau_{t_1} \subsetneq \rho_\ell$. Let $A \in \rho_\ell \setminus \tau_{t_1}$ and let $\D \in E^-$ be obtained by the first part of  (d) by replacing $\rho^i_{k_i}$ with $\rho^i_{k_i} \setminus \{A\} \emptyset^b \rho^i_{k_i}$. Then there is a homomorphism $h' \colon \q' \to \D$ that sends $t_1$ to  the point with $\rho^i_{k_i} \setminus \{A\}$ and the remaining $t_j$ to the point with $\rho^i_{k_i}$.
	
	\emph{Case} 4: $\ell$ has a right neighbour in $\qw_i$, $i \ne 0$, but no left neighbour. This case is dual to Case 3 and we use the second part of (d).
	
	\emph{Case} 5: $\ell = 0$. If $\qw_0$ is primitive, then all of the $t_i$ are primitive blocks in $\q'$. By (n3), $\tau_{t_{2}}\not\subseteq \tau_{t_{1}}$; by type surjectivity,  $\tau_{t_{1}}\subsetneq \rho_{\ell}$, and so there is $A \in \rho_\ell \setminus \tau_{t_1}$. By the first part of (e), we have $\D \in E^-$ obtained by replacing $\rho^0_0$ with $\rho^0_0 \setminus \{A\} \emptyset^b \rho^0_0$. Then there is a homomorphism $h' \colon \q' \to \D$ that sends $t_1$ to the point with $\rho^0_0  \setminus \{A\}$ and the remaining $t_j$ to the point with $\rho^0_0$. Finally, if $\qw_0$ is not primitive, the second part of (e) gives $\D \in E^-$ by replacing $\rho_0^0$ in $\D_b$ with $\rho_0^0 \emptyset^b \rho_0^0$. We obtain a homomorphism from $\q'$ to $\D$ by sending $t_1$ to the first $\rho_0^0$ and the remaining $t_j$ to the second $\rho_0^0$.
	
	\smallskip
	
	It remains to consider the case when there is a homomorphism $h \colon \q' \to \D_b$ that is an isomorphism between the blocks in $\q'$ and the blocks in $\D_b$, and so the difference between $\q'$ and $\q$ can only be in the sequences of $\Diamond$ and  $\Diamondw$ between blocks. To be more precise, $\q$ is of the form~\eqref{fullq},
	\begin{align}
		\q' = \q_{0} \mathcal{R}'_{1} \q_{1} \dots \mathcal{R}'_{n} \q_{n}
	\end{align}
	and $\mathcal{R}_i \ne \mathcal{R}'_i$ for some $i$.
	Four cases are possible:
	\begin{description}
		\item[\rm $(i)$] $\mathcal{R}_i = (r_0 \le r_1)$ and $\mathcal{R}_i' = (s_0 < s_1) \dots (s_{l-1} < s_l)$, for $l \ge 1$. In this case, $\D_i \not \models \q'$, for $\D_i \in E^+$.
		
		\item[\rm $(ii)$] $\mathcal{R}_i = (r_0 < r_1) \dots (r_{k-1} < r_k)$, $\mathcal{R}_i' = (s_0 < s_1) \dots$ $(s_{l-1} < s_l)$, for $l > k$. Then again $\D_i \not \models \q'$.
		
		\item[\rm $(iii)$] $\mathcal{R}_i = (r_0 < r_1) \dots (r_{k-1} < r_k)$, $\mathcal{R}_i' = (s_0 \le s_1)$, for $k \ge 1$. In this case $\D_i^- \models \q'$, for $\D^-_i \in E^-$.
		
		
		\item[\rm $(iv)$] $\mathcal{R}_i = (r_0 < r_1) \dots (r_{k-1} < r_k)$ and $\mathcal{R}_i' = (s_0 < s_1) \dots$ $(s_{l-1} < s_l)$, for $l < k$. Then again $\D_i^- \models \q'$.
	\end{description}
	
	$(\Rightarrow)$ Suppose $\q$ in normal form~\eqref{fullq} does contain a lone conjunct $\q_i = \rho$. Let $\rho^-$ be the last type of the block $\q_{i-1}$ and let $\rho^+$ be the first type of the block $\q_{i+1}$. Then $\rho$ is a disjoint union of some nonempty $\tau$ and $\tau'$ such that at least one of the queries $\s'_1$ or $\s''_1$ below is in normal form:
	\begin{align*}
		& \s'_1 = \q_{0} \mathcal{R}_{1} \dots \mathcal{R}_{i} \tau (\le) \tau' \mathcal{R}_{i+1}  \dots \mathcal{R}_{n} \q_{n},\\
		& \s''_1 = \q_{0} \mathcal{R}_{1} \dots \mathcal{R}_{i} \tau (\le) \tau' (\le) \tau   \mathcal{R}_{i+1}  \dots \mathcal{R}_{n} \q_{n}
	\end{align*}
	For example, if $\rho^- = \{A,A'\}$, $\rho = \{A,B\}$, $\rho^+ = \{A,B'\}$ and $\mathcal{R}_{i}$ and $\mathcal{R}_{i+1}$ are both $\le$, we take $\tau = \{B\}$, $\tau' = \{A\}$, for which $\s'_1$ is not in normal form, while $\s''_1$ is. Pick one of $\s'_1$ and $\s''_1$, which is in normal form, and denote it by $\s_1$. For $n \ge 2$, let $\s_n$ be the query obtained from $\s_1$ by duplicating $n$ times the part $\tau (\le) \tau'$ in $\s_1$ and inserting $\le$ between the copies. It is readily seen that $\s_n$ is in normal form.
	Clearly, $\q \models \s_n$ and, similarly to the proof of Claim~1, one can show that $\s_n \not\models \q$, for any $n \ge 1$.
	
	Now suppose $E = (E^+,E^-)$ characterises $\q$ and let $n = \max \{\len(\D) \mid \D \in E^-\} + 1$. Then $E \models \s_n$, which is impossible. Indeed, consider any $\D \in E^-$. To show that $\D \not\models \s_n$, suppose otherwise. Then there is a homomorphism $h \colon \s_n \to \D$. By the pigeonhole principle, $h$ maps some variables of types $\tau$ and $\tau'$ in $\s_n$ to the same point in $\D$. But then  $h$ can be readily modified to obtain a homomorphism $h' \colon \q \to \D$, contrary to $E^- \not\models \q$.
	
	$(ii)$ follows from the proof of $(i)$ as $(E^{+},E^{-})$ is of polynomial size.
	
	$(iii)$ We aim to characterize $\q$ in normal form~\eqref{fullq} which may contain lone conjuncts within the class of queries in $\mathcal{Q}_{p}[\nxt,\Diamond]$ of size at most $n$, where $n$ is the size of $\q$. The set $E^{+}$ of positive examples is defined as before
	and we extend the set of rules (a) to (e) in the definition of $E^{-}$ as follows: if $\q_{i}= \rho(s)$ with $\rho=\{A_{1},\ldots,A_{k}\}$ is a block in $\mathcal{D}_{b}$ with $\rho$ a lone conjunct in $\q$, then
	\begin{description}
		\item[\rm (f)] replace $\rho$ with $(\rho\setminus \{A_{1}\}\emptyset^{b}\cdots \emptyset^{b} \rho\setminus \{A_{k}\})^{n}$.
	\end{description}
	For the proof that $(E^{+},E^{-})$ characterizes $\q$ within the class of queries of size at most $n$, observe that with the exception of \emph{Case 2} the proof of $(i)$ still goes through. In Case 2, however, we can now apply the assumption that the size of $\q'$ is bounded by $n$ as we then obtain a data instance $\mathcal{D}\in E^{-}$ and a homomorphism $h':\q'\rightarrow \mathcal{D}$.
	
	$(iv)$ Assume $\q$ in $\mathcal{Q}_{p}[\nxt,\Diamond]$ is given. The proof of $(i)$  shows that $(E^{+},E^{-})$, defined in the same way as in $(i)$ except
	that the rules (c),(d), and (e) are not used to construct $E^{-}$,
	characterizes $\q$ within $\mathcal{Q}_{p}[\nxt,\Diamond]$ even if $\q$ contains lone conjuncts.
\end{proof}


\section{Proofs for Section~\ref{Sec:until}}

\medskip
\noindent
{\bf Theorem~\ref{separUpath}.}
{\em Any $\Qp^\sigma[\U]$-queries $\q \not\equiv \q'$ can be separated by some $\D$ with $\max(\D) \le O((\min \{\dep(\q), \dep(\q')\})^2)$.}

\begin{proof}
Let $\q = \rho_0 \lambda_1^* \rho_1 \dots \rho_n\emptyset^*$ and $\q' = \tau_0 \mu_1^* \tau_1  \dots \tau_k\emptyset^*$. If $n < k$, then $\rho_0\rho_1\dots \rho_n$ separates $\q$ from $\q'$. Suppose $n = k$. If $\rho_i \subsetneq \tau_i$, for some $i \le n$, then again $\rho_0\rho_1\dots \rho_n$ separates $\q$ from $\q'$. So suppose $\rho_i=\tau_i$ for all $i\leq n$.

Let $\q \not\models \q'$. Then there is $\D = \rho_0\lambda_1^{k_1} \rho_1 \dots \lambda_n^{k_n}\rho_n\emptyset^{k_{n+1}}$ separating $\q$ from $\q'$. We show that  $k_i \le n + 1$, for all $i \le n$. To see this, we convert $\mathfrak A_{\q'}$ to a DFA $\mathfrak B_{\q'}$ using the subset construction and observe that whenever there are transitions $Q_1 \to_{\lambda_i} \dots \to_{\lambda_i} Q_{n+1} \to_{\lambda_i} Q_{n+2}$ in $\mathfrak B_{\q'}$ (with $Q_j \subseteq [0,n+1]$), then $Q_{n+1} =  Q_{n+2}$ because, by the structure of $\mathfrak A_{\q'}$, we have $\delta'_{\alpha^{n+1}} = \delta'_{\alpha^{n+2}}$, for any $\alpha$, where $\delta'_w$ is the transition function on the states of $\mathfrak B_{\q'}$ corresponding to the word $w$.
\end{proof}

\noindent
{\bf Lemma~\ref{matching}.}
{\em For any queries $\q$ and $\q'$ as above, either $(i)$ each $\lambda_i \ne \bot$ subsumes $\mu_j$ occurring in some matching pair $(\lambda_k,\mu_j)$ or $(ii)$ $\q$ and $\q'$ are separated by a data instance of the form $\D^i_\q$ 
 or $\smash{\D^j_{\q'}}$. Also, if $\q$ is peerless, $\lambda_i$ can only subsume $\mu_j$ in the matching pair $(\lambda_i,\mu_j)$ with $i \ge j$, in which case
$
\mu_j = \rho_j =  \dots = \rho_{i-1} = \lambda_i.
$}

\begin{proof}
If $\lambda_i \ne \bot$ does not subsume any $\mu_j$, then, as we know, $\D^i_\q \models \q$ and $\D^i_\q \not \models \q'$. So suppose $\lambda_i$ subsumes some $\mu_j$. Then either $(\lambda_i, \mu_j)$ is a matching pair or $\mu_j \subsetneq \lambda_i$. Note that the latter is impossible if $\q$ is peerless. If $\mu_j$ does not subsume any $\lambda_l$, then $\D^j_{\q'} \models \q'$ and $\D^j_{\q'} \not \models \q$. Otherwise, we consider $\lambda_l$ subsumed by $\mu_j$, etc. Since $\lambda_l \subsetneq \lambda_i$, sooner or later this process will terminate.
\end{proof}

\begin{lemma}\label{singlelambda}
Suppose that $\q = \rho_0 \lambda_1^* \rho_1 \lambda_2^* \dots \lambda_{n}^*\rho_n \emptyset^*$ and $\q' = \rho_0 \rho_1 \lambda_2^* \dots \lambda_{n}^*\rho_n \emptyset^*$ with $\q \not\models \q'$. Then there is a data instance $\D$ of the form $\rho_0\lambda_1\rho_1 \lambda_2^{k_2}\dots \lambda_n^{k_n}\rho_n$ such that $\D \models \q$ and $\D \not\models \q'$.
\end{lemma}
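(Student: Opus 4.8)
The plan is to build the required $\D$ from an arbitrary \emph{word} witness of $\q\not\models\q'$ and then cut the number of leading $\lambda_1$'s down to one. First I would replace the non-entailment by a word witness: since $\D'\models\q$ iff some $\D\in\L(\q)$ satisfies $\D\Subset\D'$, and since satisfaction of $\q'$ is monotone under $\Subset$ (if $\D\Subset\D'$ and $\D\models\q'$ then $\D'\models\q'$), a $\Subset$-smaller instance inherits $\not\models\q'$. Hence $\q\not\models\q'$ yields a word $w=\rho_0\lambda_1^{a_1}\rho_1\lambda_2^{a_2}\cdots\lambda_n^{a_n}\rho_n\in\L(\q)$ (trailing $\emptyset$'s dropped via the usual convention) with $w\not\models\q'$, and necessarily $a_1\ge 1$, for $a_1=0$ would give $w\in\L(\q')$, i.e.\ $w\models\q'$. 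Every word of the target shape $\rho_0\lambda_1\rho_1\lambda_2^{k_2}\cdots\lambda_n^{k_n}\rho_n$ already lies in $\L(\q)$, so it satisfies $\q$ automatically; the whole task is therefore to guarantee $\D\not\models\q'$, and for this it suffices to show we may take $a_1=1$ while keeping $k_i=a_i$ for $i\ge 2$.

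The core is a monotonicity claim in the number $m$ of leading $\lambda_1$'s. Let $u=\lambda_2^{a_2}\cdots\lambda_n^{a_n}\rho_n$ be the tail of $w$ and write $f(m)$ for the truth value of $\rho_0\lambda_1^m\rho_1 u\models\q'$. I would prove that $f$ is monotone for $m\ge 1$: if $f(m)$ holds, so does $f(m+1)$. Granting this, $f(a_1)$ is false (that is just $w\not\models\q'$), whence $f(1)$ is false, and $\D=\rho_0\lambda_1\rho_1\lambda_2^{a_2}\cdots\lambda_n^{a_n}\rho_n$ is the desired separator. The statement is cleanest through embeddings: $\rho_0\lambda_1^m\rho_1 u\models\q'$ means there is $z\in\L(\q')$, of the form $\rho_0\rho_1\lambda_2^{b_2}\rho_2\cdots\lambda_n^{b_n}\rho_n\emptyset^*$, with $z\Subset\rho_0\lambda_1^m\rho_1 u$. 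Passing from $m$ to $m+1$ inserts one further $\lambda_1$ into the $\lambda_1$-block, so I must modify $z$ by inserting a single letter $\subseteq\lambda_1$ into the part of $z$ sitting under that block, while staying inside $\L(\q')$; equivalently, on the NFA $\mathfrak A_{\q'}$ I must repeat one transition of an accepting run as it traverses the $\lambda_1$-block.

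The delicate point, and the main obstacle, is that $\mathfrak A_{\q'}$ has \emph{no} self-loop at its state $1$ (its first loop label is $\bot$): after $\rho_0$ the run is forced from state $1$ to state $2$ on the first $\lambda_1$, so the reachable sets are not $\subseteq$-monotone in $m$ and extra $\lambda_1$'s shift the run forward rather than enlarging it. When the accepting run uses a genuine self-loop $\lambda_1\supseteq\mu_s$ at some state $s$ in the block, I simply repeat it; the hard case is an \emph{all-advancing} run, where every $\lambda_1$ advances and no visited state admits a $\lambda_1$-self-loop. Here I would exploit the inclusions such a run certifies—namely $\lambda_1\supseteq\rho_1,\dots,\rho_m$—together with $\rho_1\subseteq\lambda_1$ and the fact (proved by an explicit run exactly as in the remark preceding Lemma~\ref{matching}) that $\rho_1\subseteq\lambda_1$ and $\rho_1\supseteq\lambda_2$ would already force $\q\models\q'$, so that $\q\not\models\q'$ gives $\rho_1\not\supseteq\lambda_2$. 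This forbids precisely the self-loop that could break the chain of inclusions, and one checks that at the state reached after the block the run either advances (then $\rho_1\supseteq\rho_{i}$ propagates $\lambda_1\supseteq\rho_1\supseteq\rho_{i}$, licensing the extra advance) or self-loops (then $\rho_1\supseteq\lambda_{i}$ and $\lambda_1\supseteq\rho_1\supseteq\lambda_{i}$ license a $\lambda_1$-self-loop), in both cases absorbing the extra letter and reconnecting with the original accepting run. Carrying out this re-routing uniformly establishes the monotonicity, and hence the lemma; the degenerate case $\rho_1\not\subseteq\lambda_1$ needs nothing, since then $f(m)$ fails for all $m\ge 1$ and already $\D^1_\q$ separates.
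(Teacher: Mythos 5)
Your proof stands or falls with the monotonicity claim that $f(m)$ implies $f(m+1)$ for $m\ge 1$ (equivalently, that $f(1)\Rightarrow f(a_1)$), and this claim is false under exactly the hypotheses of the lemma; with it fails your reduction ``it suffices to take $a_1=1$ while keeping $k_i=a_i$''\!. Concretely, let $\sigma=\{A,B\}$ and (writing singletons without braces) $\q=\emptyset\, A^*\, A\, B^*\, A\, B\, \emptyset^*$, $\q'=\emptyset\, A\, B^*\, A\, B\, \emptyset^*$, i.e.\ $\rho_0=\emptyset$, $\lambda_1=\rho_1=\rho_2=A$, $\lambda_2=B$, $\lambda_3=\bot$, $\rho_3=B$. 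Take the witness $w=\emptyset AAABAB$, so $a_1=2$, $a_2=1$, $a_3=0$. Then $w\in\L(\q)$ and $w\not\models\q'$ (any $\q'$-word $\emptyset A B^b A B$ needs $B\subseteq A$ at position $2$ if $b\ge 1$, or at position $3$ if $b=0$), so $\q\not\models\q'$ and $w$ is a legitimate choice in your first step. Yet $f(1)$ is true: your output $\D=\emptyset AABAB$ satisfies $\q'$, since $\emptyset AAB\in\L(\q')$ (take $b=0$) and $\emptyset AAB\Subset\D$, the inclusions used being precisely $\rho_2\subseteq\rho_1\subseteq\lambda_1$ and $\rho_3\subseteq\lambda_2$. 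So $f(1)$ holds, $f(2)$ fails, and your construction returns a non-separating instance. The error sits in the all-advancing case: when the run advances (rather than self-loops) at the state reached after the $\lambda_1$-block, the extra letter is \emph{not} absorbed---the modified run reaches the same state as the original one but now faces the suffix $\rho_1 u$ instead of $u$, so the off-by-one misalignment propagates into the fixed, non-pumpable part of $\q'$. In the example, after the extra advance the run is at state $3$ reading $\rho_1=A$, where the only moves are the loop $\lambda_3=\bot$ and the advance $\rho_3=B\not\subseteq A$; it is stuck, and no other run exists.

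This is exactly why the paper's proof does not keep the tail fixed. It takes any separating $\D=\rho_0\lambda_1^{k_1}\rho_1\lambda_2^{k_2}\dots\lambda_n^{k_n}\rho_n$, lets $i$ be \emph{minimal} with $\D_i\not\models\q'$ (where $\D_i$ is $\D$ with $i$ leading $\lambda_1$'s; note $\D_0\in\L(\q')$ forces $i\ge 1$), picks a witness $\D'=\rho_0\rho_1 u\in\L(\q')$ with $\D'\Subset\D_{i-1}$, and returns $\D''=\rho_0\lambda_1\rho_1 u$: since $\D''\Subset\D_i$, having $\D''\models\q'$ would contradict the choice of $i$. The tail of $\D''$ is inherited from the witness $\D'$, not from the original word. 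In the example this yields $\emptyset AAAB$ (tail exponent $k_2=0\neq a_2$), which does separate $\q$ from $\q'$, whereas the unique instance of your fixed-tail form, $\emptyset AABAB$, does not. So the gap is not cosmetic: the lemma's conclusion genuinely requires re-choosing the exponents $k_2,\dots,k_n$, and any correct proof must account for that.
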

\begin{proof}
Take any $\D = \rho_0\lambda_1^{k_1} \rho_1 \lambda_2^{k_2}\dots \lambda_n^{k_n}\rho_n$ with $\D \models \q$ and $\D \not\models \q'$. Let $\D_i$ be $\D$ with  $k_1 = i$. Choose $\D_i \not\models \q'$ with $\D_l \models \q'$, for all $l<i$. Consider some $\D'=\rho_0\rho_1 u$ with $\D' \Subset \D_{i-1}$ and set $\D''=\rho_0\lambda_1\rho_1 u$. If $\D'' \models \q'$, then $\D_i \models \q'$ as  $\D''\Subset \D_i$, which is impossible. Thus, $\D''$ is the data instance we need.
\end{proof}

\noindent
{\bf Theorem~\ref{uno}.}
{\em $\Qpi[\U]$ is polynomially characterisable within $\Qp^\sigma[\U]$.}

\begin{proof}
We show that any $\q = \rho_0 \lambda_1^* \rho_1 \lambda_2^* \dots \lambda_n^* \rho_n\emptyset^* $ in $\Qpi[\U]$ is characterised by $E = (E^+,E^-)$, where $E^+$ contains all data instances of the following forms:
\begin{description}
\item[$(\mathfrak p_0)$] $\rho_0\dots\rho_n$,

\item[$(\mathfrak p_1)$] $\rho_0\dots\rho_{i-1}\lambda_i\rho_i \dots \rho_n = \D^i_{\q}$,

\item[$(\mathfrak p_2)$] 
$\rho_0 \dots \rho_{i-1} \lambda_{i}^k \rho_{i} \dots \rho_{j-1} \lambda_{j}\rho_j\dots\rho_n=\D^j_{i,k}$, for $i < j$ and $k=1,2$;

\end{description}
and $E^-$ has all instances that are \emph{not} in $\boldsymbol{L}(\q)$ of the forms:
\begin{description}
\item[$(\mathfrak n_0)$] 
$\sigma^{n}$ and $\sigma^{n-i}\sigma \setminus \{A\} \sigma^{i}$, for $A\in\rho_i$,

\item[$(\mathfrak n_1)$] $\rho_0\dots\rho_{i-1}\sigma \setminus \{A,B\}\rho_i\dots\rho_n$, for $A\in \lambda_i\cup\{\bot\},B\in\rho_i\cup\{\bot\}$,

\item[$(\mathfrak n_2)$]
for all $i$ and $A\in\lambda_i\cup\{\bot\}$, \emph{some} data instance
\begin{equation}\label{raznost1}
\D^i_{\!A} = \rho_0 \dots \rho_{i-1} (\sigma\setminus\{A\})\rho_i\lambda_{i+1}^{k_{i+1}} \dots \lambda_{n}^{k_n}\rho_n,
%
\end{equation}
if any, such that $\max(\D^i_{\!A}) \le (n+1)^2$
%
%
%
and $\D^i_{\!A}\not\models \q^\dag$ for $\q^\dag$ obtained from $\q$ by replacing $\lambda_j$, for all $j \le i$, with $\bot$.

Note that $\D^i_{\!A} \not\models \q$ for peerless $\q$.
\end{description}
By definition, $\q$ fits $E$ and $|E|$ is polynomial in $|\q|$. Suppose $\q' = \tau_0 \mu_1^* \tau_1  \dots \mu_m^* \tau_m \emptyset^*$ also fits $E$.
By $(\mathfrak p_0)$ and $(\mathfrak n_0)$, we have $n=m$ and $\rho_i = \tau_i$, for $i \le n$.
%
Consider the maximal $i$ with $\lambda_i\ne\mu_i$ (if there is no such, $\q \equiv \q'$).
%
%
\begin{description}
\item[\rm\emph{\rm\it Case} 1:]  $\mu_i\ne\bot$,  $\mu_i\not\subseteq\lambda_i$. By Lemma~\ref{matching}, if $\mu_i$ does not subsume any $\lambda_j$, then $\mathfrak n_1$ separates $\q$ and $\q'$.
So suppose $\mu_i$ subsumes $\lambda_j$. As $\q$ is peerless, $j>i$ and we have $\rho_i\subseteq\mu_i$.
%
%
By the minimality of $\q'$, there is $\D$ such that $\D\models\rho_{i-1} \mu_i^*\rho_i \dots \mu_n^* \rho_n\emptyset^*$ but $\D\not\models\rho_{i-1} \rho_i \dots \mu_n^* \rho_n\emptyset^*$. 
By Lemma \ref{singlelambda}, we can choose $\D = \rho_{i-1}\mu_i\rho_i u$. Consider $\D'=\rho_{i-1}(\sigma\setminus\{A\})\rho_iu$, where $A\in\lambda_i\setminus\mu_i$ if $\lambda_i\ne\bot$, or $A=\bot$ otherwise.  Since $\rho_i\subseteq\mu_i\subseteq\sigma\setminus\{A\}$ and $\lambda_i\not\subseteq\sigma\setminus\{A\}$, we have $\rho_0\dots\rho_{i-2}\D'\models\rho_0 \dots \rho_{i-1} (\sigma\setminus\{A\})^*\rho_i\lambda_{i+1}^* \dots \lambda_{n}^*\rho_n$ but $\D'\not\models \rho_0\dots\rho_{i-1}\rho_i\lambda_{i+1}^* \dots \lambda_{n}^*\rho_n\emptyset^*$
, for  otherwise $\D\models\rho_{i-1} \rho_i \dots \mu_n^* \rho_n$ 
 as $\lambda_j=\mu_j$ for all $j>i$. Therefore, there is $\D^i_A\in\mathfrak n_2$ with $\D^i_A\models\q'$.

\item[\rm\emph{\rm\it Case} 2:] $\lambda_i \ne \bot$, $\mu_i\subsetneq\lambda_i$. By Lemma~\ref{matching}, if $\mu_i$ does not subsume any $\lambda_j$, then $\mathfrak n_1$ separates $\q$ and $\q'$.
So suppose $\mu_i$ subsumes $\lambda_j$. As $\q$ is peerless, $j>i$ and we have $\rho_i\subseteq\mu_i$. But then $\rho_i\subseteq\lambda_i$, which is a contradiction.

\item[\rm\emph{\rm\it Case} 3:] $\lambda_i\ne\bot$, $\mu_i=\bot$. Find the maximal  $j<i$ such that $\lambda_i$ subsumes $\mu_j$. Then $\mu_j=\rho_j=\ldots=\rho_{i-1}=\lambda_i$ by Lemma~\ref{matching}.

Suppose $\lambda_{j'}=\bot$ for all $j'\in [j+1,i-1]$. By Lemma~\ref{matching}, $\mu_{j'}=\bot$ for all $j'\in [j+1,i-1]$. For if $\mu_{j'}\ne \bot$, then it either does not subsume anything, or it subsumes $\lambda_i$. In the latter case, we have $\rho_{j'-1}\subsetneq\mu_{j'}$ and either there is $j''\in(j,j')$ with $\rho_{j''}\not\subset\mu_{j''}$, in which case $\mu_{j''}$ does not subsume anything, or we violate the minimality condition. The queries $\rho_{j-1}\mu_j^*\rho_j\ldots\rho_{i-1}\rho_i\ldots$ and $\rho_{j-1}\rho_j\ldots\rho_{i-1}\mu_j^*\rho_i\ldots$ are obviously equivalent, so we can change this part of $\q'$ and look for the next place where the queries differ.

Now assume that there is $\lambda_{j'}\ne\bot$ with $j'\in[j+1,i-1]$.  Suppose there is no data instance in $(\mathfrak p_1)$, $(\mathfrak p_2)$ or $(\mathfrak n_1)$ separating $\q$ and $\q'$. Then $\mu_{j'}=\lambda_{j'}$. If $j'>j+1$, consider $\D^i_{j',1}$. Suppose $\D^i_{j',1}\models\q'$.  Then there is $\D'=\rho_0\dots\rho_{m-1}\mu_m\rho_m\dots\rho_{l-1}\mu_l\rho_l\dots\rho_n\Subset D^i_{j',1}$.
By analysing all possible orderings of $i,j',m,l$  we see that it is only possible when $l=j'-1$ and $\mu_{j'-1}\subseteq\lambda_{j'}$. Then we have $\lambda_{j'-1}=\mu_{j'-1}$. After that, we consider $\D^i_{j'-1,1}$ and so on. We now have $\mu_{j+1}\subseteq\mu_{j+2}\subseteq\dots\subseteq\mu_{j'}= \lambda_{j'}$.

Consider $\D^i_{j+1,1}$. If $\D^i_{j+1,1}\models\q'$, then, in view of $\mu_j\not\subseteq\lambda_{j+1}$, we have $\rho_{j-1}\subseteq\lambda_{j+1}$.

Considering $\D^i_{j+1,2}$, we see that either $\rho_{j-2}\subseteq\lambda_{j+1}$ and $\mu_{j-1}\subseteq\rho_j$ or $\rho_{j-2}\subseteq\rho_{j}$ and $\mu_{j-1}\subseteq\lambda_{j+1}$ with $\mu_{j-1}$ and $\rho_{j-2}$ incomparable, since otherwise we will violate the peerlessness of $\q$.

If $\rho_{j-2}\subseteq\lambda_{j+1}$ and $\mu_{j-1}\subseteq\rho_j$, then there is $j_1<j-1$ such that $\mu_{j_1}\subseteq\rho_{j_1}\subseteq\ldots\subseteq\rho_{j-1}\subseteq\lambda_{j+1}$. This $\mu_{j_1}$ is paired with some $\lambda_l$ for $l>j$. By peerlessness, we have $l=j$ and $\mu_j=\rho_{j_1}=\ldots=\rho_{j-1}=\lambda_{j}\subseteq\lambda_{j+1}$. It follows that $\lambda_{j-1}=\mu_{j-1}$ and we can repeat a previous argument and move further.

If $\rho_{j-2}\subseteq\rho_{j}$ and $\mu_{j-1}\subseteq\lambda_{j+1}$, then we have $\lambda_j=\rho_{j-1}=\mu_{j-1}$ and either $\mu_{j-2}\subseteq\rho_{j-1}$ or $\rho_{j-3}\subseteq\rho_{j-1}$.

Suppose $\mu_{j-2}\subseteq\rho_{j-1}$. Then $\rho_{j-3}\subseteq\rho_{j-2}$ and there is $j_2\le j-3$ such that $\mu_{j_2}\subseteq \rho_{j_2}\subseteq\ldots\subseteq\rho_{j-2}$. We know that $\mu_{j_2}$ subsumes some $\lambda_l$; by peerlessness it can only be $\lambda_{j-1}$ with $\mu_{j_2}=\rho_{j_2}=\ldots=\rho_{j-2}=\lambda_{j-1}$, and $\rho_{j-2}\not\subseteq\mu_{j-2}$. So we repeat a previous argument and move further.

Suppose  $\rho_{j-3}\subseteq\rho_{j-1}$. In this case we have either $\mu_{j-3}\subseteq\rho_{j-2}$ or $\rho_{j-4}\subseteq\rho_{j-2}$ and we move further.

Either way we cannot stop moving further and the process cannot terminate. Therefore,  there is a data instance from $(\mathfrak p_1)$, $(\mathfrak p_2)$ or $(\mathfrak n_1)$ separating $\q$ and $\q'$.
\end{description}
This completes the proof of the theorem.
\end{proof}


\section{Proofs for Section~\ref{sec:branching}}
We start by giving the proof of Lemma~\ref{lem:conjof}.

\medskip
\noindent
{\bf Lemma~\ref{lem:conjof}.}
{\em
	For every $\q\in \mathcal{Q}[\Diamond]$ one can compute
	in polynomial time an equivalent query of the form $\q_{1}\wedge \cdots \wedge\q_{n}$ with $\q_{i}\in \mathcal{Q}_{p}[\Diamond]$ for $i \leq n$.
}
\begin{proof}
	It is sufficient to observe that a query $\q$ of the form
	$$
	\rho_{0} \wedge \Diamond(\rho_{1} \wedge \bigwedge_{i=1}^{n}\Diamond\q_{i})
	$$
	with $\q_{1},\ldots,\q_{n}\in \mathcal{Q}[\Diamond]$ is equivalent to
	$$
	\q'=\rho_{0} \wedge \bigwedge_{i=1}^{n} \Diamond(\rho_{1} \wedge \Diamond \q_{i})
	$$
	To see this, observe that $\q\models\q'$ is trivial. For the converse direction
	consider any data instance $\mathcal{D}$ with $\mathcal{D},0\models \q'$. Then take the minimum $\ell_{0}$ of all $\ell>0$ such that $\mathcal{D},\ell \models \rho_{1} \wedge \Diamond \q_{i}$. We have $\mathcal{D},\ell_{0} \models \rho_{1} \wedge \Diamond \q_{i}$ for all $i\leq n$ and so  $\mathcal{D},0\models \q$.
\end{proof}

\bigskip
\noindent
{\bf Details for Example~\ref{thm:cc}.}
Recall that
$$
\q_{1}= \emptyset (\s,\sigma)^{n} \s, \quad \q_{2} = \emptyset\sigma^{2n+1}
$$
where
$$
\s=\{A_{1},A_{2}\}\{B_{1},B_{2}\}
$$
and consider the set $P$ of queries of the form
$$
\emptyset\s_{1}\cdots \s_{n+1}
$$
with $\s_{i}$ either $\{A_{1}\}\{A_{2}\}$ or $\{B_{1}\}\{B_{2}\}$. $P$ contains $2^{n+1}$ queries.
\begin{lemma}\label{lemma:helpp}
	$\q_{1}\wedge \q_{2}\not\models \q$ for any $\q\in P$. For any data instance
	$\mathcal{D}$ with $\mathcal{D}\models \q_{1}\wedge \q_{2}$ there is at most one $\q\in P$ with $\mathcal{D}\not\models \q$.
\end{lemma}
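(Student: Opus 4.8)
The plan is to analyse everything through the regular-language view of Section~\ref{Sec:until}: identify each $\q=\emptyset\s_1\cdots\s_{n+1}\in P$ with the length-$2(n+1)$ word $w_\q=\ell_1\cdots\ell_{2(n+1)}$ over $\{A_1,A_2,B_1,B_2\}$ whose $i$-th block is $A_1A_2$ if $\s_i=\{A_1\}\{A_2\}$ and $B_1B_2$ otherwise, so that $\D\models\q$ iff $w_\q$ embeds as a strictly increasing subsequence into $\D$ at positions $>0$ (each letter $\ell$ landing in a position whose label contains $\ell$). Call a position $>0$ \emph{full} if its label is $\sigma$. Since $\D\models\q_2$ forces at least $2n+1$ full positions and any $2n+2$ full positions embed every $w_\q$, both claims are non-trivial only when $\D$ has \emph{exactly} $2n+1$ full positions $f_1<\dots<f_{2n+1}$. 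Writing $G_0=(0,f_1)$, $G_j=(f_j,f_{j+1})$, $G_{2n+1}=(f_{2n+1},\infty)$ for the gaps and $L_j$ for the atoms occurring in $G_j$, a greedy/exchange argument yields the characterisation $\D\models\q$ iff $\ell_k\in L_{k-1}$ for some $k$ (for $\Leftarrow$ place all letters but $\ell_k$ on full positions and $\ell_k$ in $G_{k-1}$; for $\Rightarrow$, in any embedding $p_1<\dots<p_{2n+2}$ the least $k$ with $p_k<f_k$, or $k=2n+2$, has $\ell_k\in L_{k-1}$). Reading this per block, $\D\not\models\q$ iff every block $i$ is \emph{blocked} for its type, where type $A$ is blocked at $i$ when $A_1\notin L_{2i-2}$ and $A_2\notin L_{2i-1}$, and type $B$ symmetrically; hence $\#\{\q\in P:\D\not\models\q\}=\prod_{i=1}^{n+1}N_i$ with $N_i\in\{0,1,2\}$ the number of blocked types at block $i$.

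For the first claim I would, given a target $\q\in P$, construct $\D_\q\models\q_1\wedge\q_2$ with all blocks blocked. Put the full positions at $c_i=f_{2i}$ for $i\le n$ (these serve as the $\sigma$'s of $\q_1$) with $f_{2i-1}$ spare, and fill slot $i$ so the $\q_1$-witnesses $a_i<b_i$ (carrying $\{A_1,A_2\}$ and $\{B_1,B_2\}$) fit while the gaps omit the blocking atoms: if $\s_i=\{A_1\}\{A_2\}$ set $G_{2i-2}=\emptyset$, take $a_i=f_{2i-1}$ and place $\{B_1,B_2\}$ in $G_{2i-1}$; if $\s_i=\{B_1\}\{B_2\}$ place $\{A_1,A_2\}$ in $G_{2i-2}$ and take $b_i=f_{2i-1}$. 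Then $\D_\q$ has exactly $2n+1$ full positions (so $\D_\q\models\q_2$), the indicated embedding witnesses $\D_\q\models\q_1$, and every block is blocked, so $\D_\q\not\models\q$.

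For the second claim it remains to prove $\prod_iN_i\le1$, i.e.\ to exclude the case where all $N_i\ge1$ but some $N_{i^\ast}=2$. Here I would invoke $\D\models\q_1$ through its embedding $a_1<b_1<c_1<\dots<c_n<a_{n+1}<b_{n+1}$: its $n$ full positions $c_i$ cut the $2n+1$ full positions into $n+1$ slots, and if $g_i$ counts the full positions strictly inside slot $i$ then $\sum_ig_i=(2n+1)-n=n+1$. If some $g_i=0$, slot $i$ is a single gap carrying both $a_i$ and $b_i$, hence all four atoms, which unblocks both types of the block owning that gap and gives $N_\kappa=0$, against $N_i\ge1$ for all $i$. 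So every $g_i\ge1$, whence every $g_i=1$; this forces $c_i=f_{2i}$, so slot $i$ is exactly the gaps $G_{2i-2},G_{2i-1}$ separated by the lone full position $f_{2i-1}$. At the doubly-blocked block $i^\ast$ we then have $A_1,B_1\notin L_{2i^\ast-2}$ and $A_2,B_2\notin L_{2i^\ast-1}$, so inside slot $i^\ast$ neither $a_{i^\ast}$ (which needs $A_1\wedge A_2$) nor $b_{i^\ast}$ (which needs $B_1\wedge B_2$) can sit in a gap; both are forced onto $f_{2i^\ast-1}$, contradicting $a_{i^\ast}<b_{i^\ast}$.

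The greedy characterisation of $\D\models\q$ is routine; the delicate part, and where the unbalanced depths of $\q_1$ and $\q_2$ do the real work, is the slot count of the last paragraph. The identity $\sum_ig_i=n+1$ is what rigidly pins every slot to width one, and it is precisely this rigidity that turns a doubly-blocked block into the impossible requirement $a_{i^\ast}=b_{i^\ast}$; getting the bookkeeping of slots, gaps, and block ownership exactly right will be the main thing to be careful about.
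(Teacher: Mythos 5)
Your proof is correct, and for the harder half of the lemma it takes a genuinely different route from the paper's. For the first claim the two proofs essentially coincide: the paper also builds, for each $\q\in P$, the witness $\D_\q$ obtained from the data instance of $\q_1$ by upgrading one position per block to $\sigma$ (exactly your construction, written with consecutive positions), and verifies $\D_\q\not\models\q$ via a canonical greedy assignment that must fail at the last letter, which is just another phrasing of your gap argument. For the second claim, however, the paper argues that a hypothetical $\D\models\q_1\wedge\q_2$ falsifying two distinct queries of $P$ may ``w.l.o.g.''\ be assumed to arise from the data instance of $\q_1$ by inserting a single $\sigma$ or upgrading a single position, and then checks by cases that any such $\D$ satisfies every query in $P$; that normalisation step is left quite informal. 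You avoid it entirely: your characterisation ($\D\models\q$ iff $\ell_k\in L_{k-1}$ for some $k$, valid whenever there are exactly $2n+1$ full positions) turns non-satisfaction into a per-block ``blocked'' condition and yields the exact count $\prod_i N_i$ of falsified queries, and your slot-counting argument ($\sum_i g_i=n+1$, so $N_i\ge 1$ for all $i$ forces every $g_i=1$ and $c_i=f_{2i}$, after which a doubly-blocked block would force $a_{i^\ast}=b_{i^\ast}=f_{2i^\ast-1}$) rules out $\prod_i N_i\ge 2$ outright. What your approach buys is a self-contained, fully rigorous replacement of the paper's w.l.o.g.\ reduction together with a sharper statement (the product formula counting falsified queries exactly); what the paper's route buys is brevity of machinery, since its case analysis stays close to the concrete instances $\D_\q$ and never needs a general satisfaction criterion. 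I checked your characterisation in both directions (including the boundary cases $k=1$ and $k=2n+2$), the identification of a zero-width slot with a single gap owned by one block, and the rigidity step; all are sound.
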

\begin{proof}
	We first construct for every $\q =\emptyset\s_{1},\ldots\s_{n+1}\in P$ a data instance $\mathcal{D}$ with $\mathcal{D}\models \q_{1}\wedge \q_{2}$ and $\mathcal{D}\not\models\q$.
	The data instance $\mathcal{D}_{\q}$ is defined by taking the data instance
	$\overline{\q}_{1}=\rho_{0},\ldots,\rho_{3n+2}$ and
	replacing $\rho_{i}$ by $\sigma$ in it if, for some $j\leq n$:
	\begin{itemize}
		\item $i=3j+1$ and $\s_{j+1}=\{A_{1}\}\{A_{2}\}$; or
		\item $i=3j+2$ and $\s_{j+1}=\{B_{1}\}\{B_{2}\}$.
	\end{itemize}
	\begin{figure}
		\begin{tikzpicture}[every node/.append style={align=center,inner sep=2pt},node distance=0.2cm and 0.1cm,xscale=0.83]
			\begin{scope}[font=\scriptsize]
				\node (0) at (0,0) {$0$} ;
				\node (1) at (1,0) {$1$} ;
				\node (2) at (2,0) {$2$} ;
				\node (3) at (3,0) {$3$} ;
				\node (d1) at (3.7,0) {$\cdots$} ;
				\node (j1) at (4.5,0) {$3j+1$} ;
				\node (j2) at (5.5,0) {$3j+2$} ;
				\node (j3) at (6.55,0) {$3j+3$} ;
				\node (d2) at (7.2,0) {$\cdots$} ;
				\node (n1) at (8,0) {$3n+1$} ;
				\node (n2) at (9,0) {$3n+2$} ;
			\end{scope}
			
			\node[below=of 0] (0') {$\emptyset$} ;
			\node[below=of 1] (1') {$A_1$\\$A_2$\\$\textcolor{red}{B_1}$\\$\textcolor{red}{B_2}$} ;
			\node[below=of 2] (2') {$B_1$\\$B_2$} ;
			\node[below=of 3] (3') {$A_1$\\$A_2$\\$B_1$\\$B_2$} ;
			\node[below=of d1,yshift=-0.2cm] {$\cdots$} ;
			\node[below=of j1] (j1')  {$A_1$\\$A_2$\\$\textcolor{red}{B_1}$\\$\textcolor{red}{B_2}$} ;
			\node[below=of j2] (j2') {$B_1$\\$B_2$} ;
			\node[below=of j3] (j3') {$A_1$\\$A_2$\\$B_1$\\$B_2$} ;
			\node[below=of d2,yshift=-0.2cm] (d2') {$\cdots$} ;
			\node[below=of n1] (n1') {$A_1$\\$A_2$\\$\textcolor{red}{B_1}$\\$\textcolor{red}{B_2}$} ;
			\node[below=of n2] (n2') {$B_1$\\$B_2$} ;
			
			\begin{pgfonlayer}{background}
				\fill[gray!25,rounded corners=4pt] (-0.2,0.2) rectangle (9.5,-2.2) ;
			\end{pgfonlayer}
			
			\node at (4.5,0.5) {$\mathcal{D}_\q$} ;

			\begin{scope}[yshift=-4cm,font=\scriptsize]
				\node (0q) at (0,0) {$0$} ;
				\node (1q) at (1,0) {$1$} ;
				\node (2q) at (2,0) {$2$};
				\node (d1q) at (2.75,0) {$\cdots$} ;
				\node (j1q) at (3.5,0) {$2j+1$} ;
				\node (j2q) at (4.5,0) {$2j+2$} ;
				\node (d2q) at (5.25,0) {$\cdots$} ;
				\node (n1q) at (6,0) {$2n+1$} ;
				\node (n2q) at (7,0) {$2n+2$} ;
			\end{scope}
			
			\node[above=of 0q] (0q') {$\emptyset$} ;
			\node[above=of 1q] (1q') {$A_1$} ;
			\node[above=of 2q] (2q') {$A_2$} ;
			\node[above=of d1q] (d1q') {$\cdots$} ;
			\node[above=of j1q] (j1q') {$A_1$} ;
			\node[above=of j2q] (j2q') {$A_2$} ;
			\node[above=of d2q] (d2q') {$\cdots$} ;
			\node[above=of n1q] (n1q') {$A_1$} ;
			\node[above=of n2q] (n2q') {$A_2$} ;
			
			\begin{pgfonlayer}{background}
				\fill[gray!25,rounded corners=3pt] ($(0q')+(-0.2,0.4)$) rectangle ($(n2q)+(0.5,-0.3)$);
			\end{pgfonlayer}
			\node[below=0.1cm of j1q] {$\overline{\q}$} ;
			
			\path[-stealth]
			(0q') edge node[left,pos=0.3] {$f_\q$} (0')
			(1q') edge (1')
			(2q'.north) edge (3'.south)
			(j1q'.north) edge (j1'.south)
			(j2q'.north) edge (j3'.south)
			(n1q'.north) edge (n1'.south) ;
			
		\end{tikzpicture}
		\caption{Definition of $\mathcal{D}_{\q}$ (additions to $\overline{\q}_{1}$ are indicated in red) and $f_\q$ for $\q = \emptyset(\{A_1\}\{A_2\})^{n+1}$.}
		\label{fig:example7}
	\end{figure}
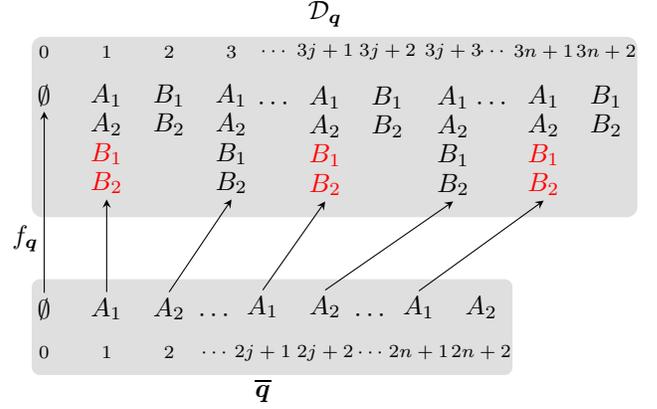
	The data instance $\mathcal{D}_{\q}$ with $\q = \emptyset(\{A_1\}\{A_2\})^{n+1}$ is depicted in Figure~\ref{fig:example7}. It also illustrates that $\mathcal{D}_{\q}\not\models \q$.
	
	We have $\mathcal{D}_{\q}\models \q_{1} \wedge \q_{2}$. We next show that $\mathcal{D}_{\q}\not\models\q$ for all $\q\in P$. We introduce a method to prove this that regards the query $\q$ as a CQ. Namely, for $\q=\rho_{0}\ldots\rho_{n}$ and  $\mathcal{D}=\delta_{0}\ldots\delta_{m}$ we have $\mathcal{D}\models \q$ iff the following inductive definition of a partial assignment $f$ from $[0,n]$ to $[0,m]$ has domain $[0,n]$:
	\begin{itemize}
		\item $f(0)=0$ if $\rho_{0} \subseteq \delta_{0}$, otherwise $f(0)$ is undefined;
		\item if $f(i)$ is defined then let $f(i+1)$ be the minimal $j>f(i)$ such that $j\leq m$ and $\rho_{i+1}\subseteq \delta_{j}$, if such a $j$ exists. Otherwise $f(i+1)$ is undefined.
	\end{itemize}
	If the domain of $f$ does not equal $[0,n]$, then the definition either fails at $i_{0}=0$ or at some induction step $i_{0}=i+1$. In either case we say that \emph{the definition of a satisfying assignment fails at $i_{0}$.}
	
	We return to the proof that $\mathcal{D}_{\q}\not\models\q$. We show that the definition of an assignment fails at $2n+2$. In fact, the partial assignment $f_{\q}$ in $\mathcal{D}_{\q}$ is defined as follows
	\begin{itemize}
		\item $f_{\q}(0) = 0$,
		\item $f_{\q}(2i+1)= 3i + 1$ if $\s_{i+1}=\{A_{1}\}\{A_{2}\}$, for $0\leq i\leq n$;
		\item $f_{\q}(2i+1)= 3i + 2$ if $\s_{i+1}=\{B_{1}\}\{B_{2}\}$, for $0\leq i \leq n$;
		\item $f_{\q}(2i+2) = 3i + 3$, for $0\leq i<n$,
	\end{itemize}
	and $f_{\q}(2n+2)$ remains undefined.
	
	Next we show that there does not exist a database $\mathcal{D}$ with $\mathcal{D}\models \q_{1}\wedge \q_{2}$ such that $\mathcal{D}\not\models \q$ and $\mathcal{D}\not\models \q'$ for distinct $\q,\q'$ in $P$. Assume for a proof by contradiction that $\mathcal{D}=\delta_{0}\ldots\delta_{m}$ is such a data instance.
	
	Let $\mathcal{D}_{\q}=\delta_{0}'\ldots\delta_{3n+2}'$ and $\mathcal{D}_{\q'}=\delta_{0}''\ldots\delta_{3n+2}''$ be the data instances defined above by extending $\rho_{0}\ldots\rho_{3n+2}$ with $\sigma$ depending on $\q$ and $\q'$, respectively.
	
	As $\mathcal{D}\models \q_{1}$ there exists a bijective function $g$ from some subset
	$A\subseteq [1,m]$ to $[1,3n+2]$ satisfying the following
	\begin{itemize}
		\item $i<j$ iff $g(i)<g(j)$;
		\item $\delta_{i}\supseteq \rho_{g(i)}$ for all $i\in A$.
	\end{itemize}
	As $\mathcal{D}\models \q_{2}$ and $\q\not=\q'$,
	\begin{itemize}
		\item there exists $i\in [1,m]\setminus A$ such that $\delta_{i}=\sigma$; or
		\item there exists $i\in A$ such that $\delta_{i}=\sigma$ and $\delta_{g(i)}'\not=\sigma$; or
		\item there exists $i\in A$ such that $\delta_{i}=\sigma$ and $\delta_{g(i)}''\not=\sigma$.
	\end{itemize}
	We may thus assume w.l.o.g. that $\mathcal{D}$ is obtained from $\rho_{0}\ldots\rho_{3n+2}$ by (1) inserting $\sigma$ after some $\rho_{i}$ with $i>0$ or (2) by replacing some $\rho_{i}$ with $i>0$ and $\delta_{i}'\not=\sigma$ by $\sigma$ or (3) by replacing some $\rho_{i}$ with $i>0$ and $\delta_{i}''\not=\sigma$ by $\sigma$.
	
	Case~2 and Case~3 are equivalent, so we only only consider Case~1 and Case~2.
	
	\medskip
	
	Case 1. One can show that then $\mathcal{D}\models \q''$ for all $\q''=\emptyset\s_{1}''\ldots\s_{n+1}''\in P$. We do this for $\q$.
	
	Assume first that $\sigma$ is inserted in $\rho_{0}\ldots\rho_{m}$ directly before timepoint 1. We define an assignment $f_{\q}^{\ast}$ by mapping $1$ to the new node and
	(we assume that the original nodes are still numbered as before and not shifted):
	\begin{itemize}
		\item $f_{\q}^{\ast}(0) = 0$,
		\item $f_{\q}^{\ast}(2i+2)= 3i+1$ if $\s_{i+1}=\{A_{1}\}\{A_{2}\}$, for $0\leq i\leq n$;
		\item $f_{q}^{\ast}(2i+2)= 3i + 2$ if $\s_{i+1}=\{B_{1}\}\{B_{2}\}$, for $0\leq i \leq n$;
		\item $f_{\q}^{\ast}(2(i+1)+1) = 3i + 3$, for $0\leq i<n$.
	\end{itemize}
	Hence $\mathcal{D}\models \q$. If $\sigma$ is inserted later, the argument is similar. In this case the assignment is defined by first taking $f_{\q}$
	as defined above and then from the point where $\sigma$ is inserted $f_{\q}^{\ast}$.
	
	Case 2. This is proved similarly to Case 1.
\end{proof}

We next provide further details of the proof of Theorem~\ref{thm:branching1}.
We first provide the missing proof of Lemma~\ref{lem:negexamples}.

\medskip
\noindent
{\bf Lemma~\ref{lem:negexamples}.}
{\em $(i)$ For any $\mathcal{D}\in E^{-}_{\q,m}$, we have $\mathcal{D}\not\models\q$.
	
	$(ii)$ For any $\q'\in \mathcal{Q}^{\sigma}[\Diamond]$ with $\q'\not\models \q$ and $\dep(\q')\leq m$,  there exists $\mathcal{D}\in E^{-}_{\q,m}$ with $\mathcal{D}\models \q'$.
}

\begin{proof}
	$\mathcal{D}\not\models\q$ for all $\mathcal{D}\in E^{-}_{\q,m}$ holds by definition and Lemma~\ref{lem:positive}.
	
	Now assume $\q'\in \mathcal{Q}^{\sigma}[\Diamond]$ with $\q'\not\models \q$ and $\dep(\q')\leq m$. $\q'$ is equivalent to a conjunction
	$\q_{1}'\wedge \cdots \wedge \q_{k}'$ with $\q_{i}'\in \mathcal{Q}_{p}[\Diamond]$ and
	$$
	\q_{i}' = \tau_{0}^{i} \land \Diamond (\tau_1^{i} \land \Diamond (\tau_2^{i} \land \ldots \land \Diamond \tau_{\ell_{i}}^{i}) )
	$$
	If there exists $A\in\rho$ with $A$ not in any $\tau_{0}^{i}$, then $\sigma^{A}\sigma^{m}\models \q'$, as required. Otherwise, $\q'\not\models \bigwedge_{i=1}^{n}\q_{i}^{-}$. Pick an $i$ with $\q'\not\models\q_{i}^{-}$.
	Then $\q_{j}'\not\models\q_{i}^{-}$ for all $j$. Hence $\mathcal{D}_{\q_{i}^{-},m}\models \q_{j}'$ for all $j$, and we obtain $\mathcal{D}_{\q_{i}^{-},m}\models \q'$.
\end{proof}
We come to the construction of positive examples required for the proof of Part~1 of Theorem~\ref{thm:branching1}. For
$$
\q = A_{0} \land \Diamond (A_1 \land \Diamond (A_2 \land \dots \land \Diamond A_{n}) )
$$
we call any $\q_{|k}:=A_{1},\ldots,A_{k}$ with $k\leq n$ a \emph{prefix} of $\q$ (note that $A_{0}$ is not taken into account).
\begin{lemma}\label{th:words}
	Let $w=A_{1},\ldots,A_{k}$ be a sequence of atomic concepts in a signature $\sigma$ and assume $k<n$. Then one can construct in polynomial time a $\sigma$-data instance $\mathcal{D}_{w,n}$ such that for all simple
	$\q\in \mathcal{Q}_{p}^{\sigma}[\Diamond]$ with $\dep(\q)= n$:
	$\q_{|k}=w$ iff $\mathcal{D}_{w,n}\not\models\q$.
\end{lemma}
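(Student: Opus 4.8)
The plan is to take $\mathcal{D}_{w,n}$ to be a word of the form
$$\mathcal{D}_{w,n} \ = \ \sigma \cdot u \cdot \sigma^{\,n-k},\qquad u = (\sigma\setminus\{A_1\})\,\sigma\,(\sigma\setminus\{A_2\})\,\sigma\cdots\sigma\,(\sigma\setminus\{A_k\}),$$
so that $u$ interleaves the $k$ \emph{blockers} $\sigma\setminus\{A_i\}$ with full letters $\sigma$ (giving $|u|=2k-1$), followed by a tail of exactly $n-k$ full letters. This is clearly computable in time polynomial in $n$ and $|\sigma|$, using $k<n$.

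First I would reduce satisfaction of a simple depth-$n$ query $\q=A_0A_1\cdots A_n$ to a subsequence-embedding question inside $u$. Writing $\mathcal{D}_{w,n}\models\q$ as the existence of positions $0<m_1<\cdots<m_n$ with $A_i$ belonging to the letter at position $m_i$ (the conjunct $A_0$ is automatic, since position $0$ carries the full letter $\sigma$), I would argue by counting on the tail: it contributes only $n-k$ positions, so at most $n-k$ of the $A_i$ can be matched there; by monotonicity of the embedding the atoms matched inside $u$ form an initial segment $A_1,\dots,A_j$, and $\mathcal{D}_{w,n}\models\q$ holds iff this $j$ can be made $\ge k$, i.e.\ iff $A_1\cdots A_k$ embeds into $u$ as a subsequence. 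Hence it remains to show that $A_1\cdots A_k$ \emph{fails} to embed into $u$ exactly when $A_1\cdots A_k=w$, which is precisely $\q_{|k}=w$.

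The core of the argument is that $w$ is the \emph{unique} length-$k$ word not embeddable into $u$, which I would establish using greedy (leftmost) subsequence matching, optimal for embeddings of this kind. For $w$ itself the greedy match is forced past every blocker: $A_i=w_i$ cannot sit on its blocker $\sigma\setminus\{A_i\}$ and is pushed onto the following full letter, so greedily $A_i$ lands at position $2i$; thus $A_k$ would require position $2k$, beyond the last position $2k-1$ of $u$, and the match fails. Conversely, for any $v=v_1\cdots v_k\ne w$ with first mismatch at index $j$, the atoms $v_1,\dots,v_{j-1}$ match $w$ and land at $2,4,\dots,2(j-1)$ as before, but $v_j\ne A_j$ now fits on the blocker $\sigma\setminus\{A_j\}$ at position $2j-1$, gaining one step of slack; the remaining $v_{j+1},\dots,v_k$ are then parked on the $k-j$ full letters at positions $2j,2j+2,\dots,2k-2$, so $v$ embeds. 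Combining this with the reduction gives $\mathcal{D}_{w,n}\not\models\q$ iff $\q_{|k}=w$.

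The main obstacle is getting the calibration exactly right. The separating full letters inside $u$ are what guarantee that a single mismatch buys exactly \emph{one} position of slack, and that no spurious length-$k$ word other than $w$ is blocked; the tail length $n-k$ is what makes the embedding \emph{tight}, so that the one unit of slack lost on the prefix $w$ is fatal, while any other prefix leaves just enough room for an arbitrary continuation $A_{k+1},\dots,A_n$. I expect the delicate points to be verifying that greedy matching is genuinely optimal here and that the monotone ``initial-segment-in-$u$'' structure underlying the reduction holds in full generality, including the degenerate cases $j=1$, $j=k$, and repeated atoms among the $A_i$, all of which the interleaving with full letters is designed to handle uniformly.
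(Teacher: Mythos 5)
Your proof is correct, and it follows the same basic strategy as the paper's: build a single word out of blocker letters $\sigma\setminus\{A_i\}$ and full letters $\sigma$, calibrated so that $w$ is the unique length-$k$ prefix that cannot be threaded through the blockers, with a tail of full letters absorbing the arbitrary suffix $A_{k+1},\dots,A_n$. The differences lie in calibration and rigour. The paper takes $\mathcal{D}_{w,n}=\sigma(\sigma\setminus\{A_1\})^{n}\sigma(\sigma\setminus\{A_2\})^{n}\cdots(\sigma\setminus\{A_k\})^{n}\sigma^{n-k-1}$ and simply asserts it ``is as required''; you use single blockers separated by single full letters and a tail $\sigma^{n-k}$, and you actually prove correctness: the reduction to subsequence embedding in $u$ (via the initial-segment and tail-counting argument), the forced greedy match for $w$, and the one-unit-of-slack argument for any $v\neq w$ all check out, including the edge cases $j=1$, $j=k$, and repeated atoms. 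Notably, your tail length is the right one: under the literal reading of the paper's word, the positive direction of the lemma can fail. For instance, with $\sigma=\{A,B\}$, $w=A$, $k=1$, $n=2$, the paper's instance is $(\sigma,\{B\},\{B\})$, which falsifies the simple depth-$2$ query $A\wedge\Diamond(B\wedge\Diamond A)$ even though its prefix $B$ differs from $w$, contradicting the ``iff''; your instance $(\sigma,\{B\},\sigma)$ satisfies that query, as the lemma demands. So your argument not only matches the paper's approach but supplies the verification the paper omits, and its calibration corrects what appears to be an off-by-one in the paper's tail ($\sigma^{n-k}$ rather than $\sigma^{n-k-1}$, or equivalently a full letter after the last blocker block).
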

\begin{proof}
	Assume $w$ and $n$ are given. Then $\mathcal{D}_{w,n}$ defined as
	$$
	\sigma(\sigma\setminus \{A_{1}\})^{n}\sigma(\sigma\setminus \{A_{2}\})^{n}\ldots(\sigma\setminus\{A_{k}\})^{n}\sigma^{n-k-1}
	$$
	is as required.
\end{proof}
Assume $\q=\q_{1}\wedge \cdots \wedge \q_{m}$ is balanced, simple, $n=\dep(\q)$, and
$$
\q_{i}= A_{0}^{i} \land \Diamond (A_1^{i} \land \Diamond (A_2^{i} \land \dots \land \Diamond A_{n}^{i}) )
$$
with $A_{0}=A_{0}^{1}=\cdots=A_{0}^{m}$. Then let $E^{+}$ contain the $\sigma$-data instances $$
\{A_{0}\}\sigma^{n}, \quad
\{A_{0}^{1}\}\ldots\{A_{n}^{1}\}\ldots \{A_{0}^{m}\}\ldots\{A_{n}^{m}\},
$$
and
$\mathcal{D}_{w,n}$, for $w\in I$, where
$I$ is the set of all $A_{1}^{i},\ldots,A_{k}^{i},A$ such that
$A\in \sigma$, $1\leq i\leq m$, and $A_{1}^{i},\ldots,A_{k}^{i},A$ is not a prefix of any  $\q_{j}$. Clearly $\mathcal{D}\models \q$ for all $\mathcal{D}\in E^{+}$.
\begin{lemma}
	$(E^{+},E^{-})$ characterises $\q$ within the class of balanced queries in $\mathcal{Q}_{b}[\Diamond]$.
\end{lemma}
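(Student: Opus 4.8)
The plan is to prove both entailments $\q'\models\q$ and $\q\models\q'$ for an arbitrary balanced $\q'\in\mathcal{Q}_{b}[\Diamond]$ that fits $(E^{+},E^{-})$; together these give $\q\equiv\q'$. By Lemma~\ref{lem:conjof} I may assume $\q'=\q'_{1}\wedge\dots\wedge\q'_{k}$ with each $\q'_{j}\in\mathcal{Q}_{p}[\Diamond]$, and since $\q'$ is balanced all branches $\q'_{j}$ share a common depth. That $\q$ itself fits $E$ is immediate: every positive example satisfies $\q$ by construction (as already noted), and every negative example fails $\q$ by Lemma~\ref{lem:negexamples}$(i)$, taking $m=n$.

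For the inclusion $\q'\models\q$ I would use the negative examples. A $\Diamond$-query of temporal depth $d$ is satisfiable only in instances of length at least $d$, so the positive example $\{A_{0}\}\sigma^{n}$, which has length $n$, forces $\dep(\q')\le n$. Fixing $m=n$, we have $E^{-}\supseteq E^{-}_{\q,n}$, and I apply Lemma~\ref{lem:negexamples}$(ii)$ contrapositively: if $\q'\not\models\q$ then some $\mathcal{D}\in E^{-}_{\q,n}$ satisfies $\q'$, contradicting that $\q'$ fits $E$. Hence $\q'\models\q$. This inclusion also pins down the depth from below: a balanced $\q'$ of common depth $n'<n$ has a model of length $n'<n$, which cannot satisfy $\q$; so $n'\ge n$, and with the bound above $\dep(\q')=n$.

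The substance is the reverse inclusion $\q\models\q'$, which I reduce to $\q\models\q'_{j}$ for every branch $\q'_{j}$, since entailing a conjunction is entailing each conjunct. Here the positive examples do the work. Consider the concatenation example $\hat P=\{A_{0}^{1}\}\dots\{A_{n}^{1}\}\dots\{A_{0}^{m}\}\dots\{A_{n}^{m}\}$, which consists entirely of singletons. Since $\hat P\models\q'_{j}$ requires an order-preserving embedding of $\q'_{j}$ into singleton positions, every conjunct of $\q'_{j}$ carries at most one atom, and every atom of $\q'_{j}$ already occurs in $\hat P$, hence in $\q$. Together with $\dep(\q')=n$ this makes each $\q'_{j}$ a simple path query of depth $n$ over the atoms of $\q$, so that Lemma~\ref{th:words} becomes applicable to the single path $\q'_{j}$.

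The matching of $\q'_{j}$ to a branch of $\q$ is then an induction on prefix length $k$, using the family $\{\mathcal{D}_{w,n}\mid w\in I\}\subseteq E^{+}$, with the invariant that the length-$k$ prefix of $\q'_{j}$ equals the length-$k$ prefix $A_{1}^{i}\dots A_{k}^{i}$ of some branch $\q_{i}$. For $k=1$, any singleton $(A)$ that is not a first symbol of a branch lies in $I$, so $\mathcal{D}_{(A),n}\models\q'_{j}$, and by Lemma~\ref{th:words} $(A)$ is not the length-$1$ prefix of $\q'_{j}$; thus the first symbol of $\q'_{j}$ is some $A_{1}^{i}$. For the step, if the length-$k$ prefix is $A_{1}^{i}\dots A_{k}^{i}$ with next symbol $B$, then were $A_{1}^{i}\dots A_{k}^{i}B$ not a prefix of any branch it would lie in $I$, forcing $\mathcal{D}_{A_{1}^{i}\dots A_{k}^{i}B,\,n}\models\q'_{j}$ and hence, by Lemma~\ref{th:words}, that this word is not the length-$(k+1)$ prefix of $\q'_{j}$ -- a contradiction. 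So the prefix extends within $\q$'s branch tree, and at $k=n$ the whole word of $\q'_{j}$ coincides with a branch $\q_{i}$, giving $\q_{i}\models\q'_{j}$ and therefore $\q\models\q'_{j}$; ranging over $j$ yields $\q\models\q'$. I expect the main obstacle to be making the third paragraph airtight: ruling out that a branch $\q'_{j}$ is "too weak'' (empty conjuncts, fresh atoms, or deficient depth) while still fitting $E$. The clean observation is that $\hat P$ contains only singletons and only atoms of $\q$, so any weakening or superfluous conjunct in $\q'_{j}$ is exposed either by $\hat P$ or by some $\mathcal{D}_{w,n}$; carefully combining this with the forced value $A_{0}$ at the shared root is where the care lies, whereas the prefix induction itself is routine once Lemma~\ref{th:words} applies.
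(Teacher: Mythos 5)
You follow the paper's own proof almost step for step: the negative examples together with Lemma~\ref{lem:negexamples} give $\q'\models\q$; the example $\{A_0\}\sigma^n$ combined with $\q'\models\q$ pins $\dep(\q')=n$; the concatenation example is used to force simplicity and the signature; and the branches of $\q'$ are matched to branches of $\q$ via the instances $\mathcal{D}_{w,n}$ of Lemma~\ref{th:words} (the paper phrases your prefix induction as a single ``take $k$ maximal with matching prefix'' step, which is the same argument). So the route is the paper's route.

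The genuine gap is exactly the step you flag at the end, and it cannot be closed by care alone, because the claim you propose to close it with is false. The concatenation example forces every conjunct of every branch $\q'_j$ to contain \emph{at most} one atom; it does not force \emph{exactly} one, i.e.\ it does not exclude $\top$-conjuncts, and Lemma~\ref{th:words} applies only to \emph{simple} branches. Concretely, let $n=2$, $\sigma=\{A_0,A_1,A_2,B_1,B_2\}$, $\q=(A_0\wedge\Diamond(A_1\wedge\Diamond A_2))\wedge(A_0\wedge\Diamond(B_1\wedge\Diamond B_2))$, and consider the balanced query $\q'=\q\wedge\Diamond(\top\wedge\Diamond B_1)$, i.e.\ $\q\wedge\Diamond\Diamond B_1$. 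Since $\q'\models\q$ and no negative example satisfies $\q$ (Lemma~\ref{lem:negexamples}$(i)$), no negative example satisfies $\q'$. Moreover, $\{A_0\}\sigma^2\models\q'$; the concatenation $\{A_0\}\{A_1\}\{A_2\}\{A_0\}\{B_1\}\{B_2\}$ satisfies $\Diamond\Diamond B_1$ because $B_1$ recurs at position~$4$; and every $\mathcal{D}_{w,2}$ with $w\in I$ satisfies $\Diamond\Diamond B_1$: if $|w|=1$ then $w\neq(B_1)$, since $(B_1)$ is a prefix of the second branch and hence not in $I$, so position~$2$ of $\mathcal{D}_{w,2}$ is $\sigma\setminus\{w_1\}\ni B_1$, while if $|w|\geq 2$ then $\mathcal{D}_{w,2}$ has a full $\sigma$ at position~$3$. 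So $\q'$ fits $(E^+,E^-)$, yet $\q\not\models\Diamond\Diamond B_1$ (witness the model $\{A_0\}\{A_1,B_1\}\{A_2,B_2\}$ of $\q$), so $\q'\not\equiv\q$. Thus with this example set the statement actually fails once branches with empty intermediate conjuncts are admitted; a correct proof must either restrict the comparison class to branches with non-empty conjuncts (under that restriction your ``at most one atom plus depth $n$ implies simple'' inference becomes valid) or enlarge $E$. For what it is worth, this defect is not one you introduced: the paper's own proof disposes of non-simple $\q'$ by the unproven observation that the concatenation example falsifies them, which is precisely the false claim above; you reproduced the paper's argument together with its hole, but your proposed way of closing the hole does not work.
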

\begin{proof}	
	First observe that
	$$
	\{A_{0}^{1}\}\ldots\{A_{n}^{1}\}\ldots \{A_{0}^{m}\}\ldots\{A_{n}^{m}\}\not\models \q'
	$$
	for any $\q'$ that uses symbols not in $\sigma$ or that is not simple. It remains to show that if $\q\not\models\q'$ and $\q'\models \q$, where $\q'=\q_{1}'\wedge \cdots \wedge \q_{m'}$ is simple, balanced, and uses symbols in $\sigma$ only, then
	there exists $\mathcal{D}\in E^{+}$ with $\mathcal{D}\not\models \q'$.
	Recall that $n=\dep(\q)$. If $\dep(\q')>n$, then $\{A_{0}\}\sigma^{n}\not\models \q'$ and we are done. As $\q'\models \q$, we then have $\dep(\q')= n$.
	Then there exists $\q_{i}'$ with $\q\not\models\q_{i}'$. If $\q_{i}$ does not start with $A_{0}$, then $\{A_{0}\}\sigma^{n}\not\models \q_{i}'$ and we are done. Otherwise let $k$ be maximal such that there exists $\q_{j}$ with ${\q_{j}}_{|k}={\q_{i}'}_{|k}$. Then $k<n$
	and $w={\q'_{i}}_{|k+1}\in I$. Hence $\mathcal{D}_{w,n}\not\models \q_{i}'$.
\end{proof}
We finally show that the positive examples defined in the main paper are as required for Part 2 of Theorem~\ref{thm:branching1}. Assume $\q= \q_{1}\wedge \cdots \wedge \q_{m}\in \mathcal{Q}^{\sigma}_{b}[\Diamond]\cap \mathcal{Q}^{\sigma}_{\leq n}[\Diamond]$ and that $\q\not\models \q'$ with $\q'\in \mathcal{Q}_{b}[\Diamond]\cap \mathcal{Q}_{\leq n}[\Diamond]$. Assume $\q'=\q_{1}'\wedge \cdots \wedge \q_{m}'$. If $\q'$ uses a symbol
not in $\sigma$ or $\dep(\q')>N$, then $\rho\sigma^{N}\not\models \q'$ and we are done. If the initial
conjunct of some $\q_{i}'$ contains an $A\not\in\rho$, then $\rho\sigma^{N}\not\models \q'$, and we are done.
If $\dep(\q')<N$, then $\q'\not\models \q$, and so there exists $\mathcal{D}\in E^{-}$ with $\mathcal{D}\models \q'$. Hence we may assume that $\dep(\q')=N$.
Take a $j$ with $\q\not\models \q_{j}'$ and assume that
$$
\q_{j}'= \tau_{0}\wedge \Diamond (\tau_{1} \land \Diamond (\tau_{2} \land \dots \land \Diamond \tau_N) )
$$
Define $f \colon \{1,\ldots,m\}\rightarrow \{1,\ldots,N\}$ by taking for every
$i$ a $\rho_{j}^{i}$ such that $\tau_{j}\not\subseteq \rho_{j}^{i}$ and setting $f(i)=j$. Such $j$ exist since $\q\not\models \q_{j}'$ and since we excluded any other reason for non-entailment already.
Then $\mathcal{D}_{f}\not\models \q_{j}'$, as required.

\section{Proofs for Section~\ref{sec:instquery}}
We begin by introducing some notation.
A \emph{pointed} atemporal $\Sigma$-data instance is a pair $(\mathcal{A},a)$ with $\mathcal{A}$ an atemporal $\Sigma$-data instance and $a\in \text{ind}(\mathcal{A})$. We associate with $\mathcal{A}$ the undirected graph
$$
G_{\mathcal{A}}=(\text{ind}(\mathcal{A}),\bigcup_{P\in \sigma}\{\{a,b\} \mid P(a,b)\in \mathcal{A}\})
$$
and call $\mathcal{A}$ \emph{acyclic} if $G_{\mathcal{A}}$ is acyclic and $P(a,b)\in \mathcal{A}$ implies $Q(a,b)\not\in \mathcal{A}$ for any $Q\not=P$ and $Q(b,a)\not\in \mathcal{A}$ for any $Q$. $\mathcal{A}$ is \emph{connected} if $G_{\mathcal{A}}$ is connected. We sometimes call acyclic and connected data instances \emph{tree-shaped}.
We assume the standard representation of an $\mathcal{ELI}$-query $\q$ as a set of atoms of the form $A(x)$, $P(x,y)$ with $x,y$ variables and a \emph{distinguished variable}, also called the \emph{answer variable} of $\q$. For instance, $\r = B \land \exists P. \exists P^-. A$ is represented as $\{B(x), P(x,y), P(z,y), A(z) \}$ with the distinguished variable $x$.

Every $\mathcal{ELI}$-query $\q$ defines a pointed data instance $\hat{\q} = (\q,x)$ (with $\ind(\q)$ being the variables), where $\q$ is tree shaped. Conversely, every pointed database $(\mathcal{A},a)$ with tree-shaped $\mathcal{A}$ defines an $\mathcal{ELI}$-query.

Let $\mathcal{A}$ and $\mathcal{B}$ be data instances. We call a mapping $h$ from $\text{ind}(\mathcal{A})$ to $\text{ind}(\mathcal{B})$ a \emph{homomorphism} from $\mathcal{A}$ to $\mathcal{B}$, in symbols $h: \mathcal{A} \rightarrow \mathcal{B}$, if
\begin{itemize}
	\item $A(a) \in \mathcal{A}$ implies $A(h(a)) \in \mathcal{B}$;
	\item $P(a,b) \in \mathcal{A}$ implies $P(h(a),h(b))\in \mathcal{B}$.
\end{itemize}
We call $h$ a \emph{homomorphism} from pointed $(\mathcal{A},a)$ to pointed $(\mathcal{B},b)$ if it is a homomorphism from $\mathcal{A}$ to $\mathcal{B}$ and $h(a)=b$.
We write $(\mathcal{A},a) \rightarrow (\mathcal{B},b)$ if there exists a homomorphism from $(\mathcal{A},a)$ to $(\mathcal{B},b)$.
\begin{lemma}
	For all $\mathcal{ELI}$-queries $\q_{1}$ and $\q_{2}$, we have $\q_{1}\models \q_{2}$ iff
	$\hat{\q}_{2} \rightarrow \hat{\q}_{2}$.
\end{lemma}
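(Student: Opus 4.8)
The plan is to prove the pointed‑homomorphism characterisation of entailment for $\mathcal{ELI}$-queries, which is the $\mathcal{ELI}$-variant of the classical Chandra--Merlin theorem for conjunctive queries. (Note that the displayed statement contains a typo: the right‑hand side should read $\hat{\q}_{2}\rightarrow\hat{\q}_{1}$, i.e.\ there is a homomorphism from the canonical instance of $\q_{2}$ to that of $\q_{1}$ sending the answer variable of $\q_{2}$ to that of $\q_{1}$; as written, the condition $\hat{\q}_{2}\rightarrow\hat{\q}_{2}$ is witnessed by the identity and would make $\q_{1}\models\q_{2}$ hold universally.) The single ingredient on which everything rests is the correspondence between query satisfaction and homomorphisms out of the canonical instance:
\begin{equation*}
\mathcal{A},a\models\q \quad\text{iff}\quad \hat{\q}\rightarrow(\mathcal{A},a),
\end{equation*}
for every $\mathcal{ELI}$-query $\q$ with answer variable $x$ and every pointed $\Sigma$-data instance $(\mathcal{A},a)$, where the homomorphism is required to send $x$ to $a$. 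I would establish this first, by induction on the structure of $\q$.

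For the inductive proof, the base case $\q=A$ is immediate: $\hat{\q}$ consists of the single node $x$ carrying $A(x)$, and a map with $x\mapsto a$ is a homomorphism exactly when $A(a)\in\mathcal{A}$, which is the meaning of $\mathcal{A},a\models A$. For $\q=\q_{1}\wedge\q_{2}$, the canonical instance $\hat{\q}$ is obtained by identifying the two roots, so a root‑preserving homomorphism out of $\hat{\q}$ exists iff one exists for each conjunct, matching the semantics of $\wedge$. For $\q=\exists S.\q'$, a homomorphism $h$ with $h(x)=a$ must send the $S$-successor $y$ of $x$ to some $b$ with $S(a,b)\in\mathcal{A}$ and restrict to a homomorphism $\hat{\q'}\rightarrow(\mathcal{A},b)$; by the induction hypothesis this is possible for some $b$ iff $\mathcal{A},b\models\q'$ for some such $b$, i.e.\ iff $\mathcal{A},a\models\exists S.\q'$. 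Tree‑shapedness of $\hat{\q}$ is what makes the decomposition at the root clean, since distinct subqueries share only the root.

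Granting the correspondence, the lemma is short. For the $(\Leftarrow)$ direction, assume $h\colon\hat{\q}_{2}\rightarrow\hat{\q}_{1}$ with $h(x_{2})=x_{1}$, and let $(\mathcal{A},a)$ satisfy $\q_{1}$. By the correspondence there is $g\colon\hat{\q}_{1}\rightarrow(\mathcal{A},a)$ with $g(x_{1})=a$; then $g\circ h\colon\hat{\q}_{2}\rightarrow(\mathcal{A},a)$ sends $x_{2}$ to $a$, so $\mathcal{A},a\models\q_{2}$. As $(\mathcal{A},a)$ was arbitrary, $\q_{1}\models\q_{2}$. For the $(\Rightarrow)$ direction, the identity homomorphism $\hat{\q}_{1}\rightarrow\hat{\q}_{1}$ shows, via the correspondence, that $\hat{\q}_{1}$ satisfies $\q_{1}$ at its root $x_{1}$; from $\q_{1}\models\q_{2}$ we get that $\hat{\q}_{1}$ satisfies $\q_{2}$ at $x_{1}$, and applying the correspondence once more yields exactly a homomorphism $\hat{\q}_{2}\rightarrow\hat{\q}_{1}$.

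The only real work is the inductive correspondence, and even there no step is genuinely difficult; the point requiring the most care is the $\exists S$ case together with the bookkeeping of the answer/distinguished variable, ensuring that homomorphisms preserve the root on both sides. The treatment of $\bot$, if admitted, is trivial, since no instance satisfies it and no homomorphism out of its canonical instance exists.
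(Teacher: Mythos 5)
Your proof is correct: it is the standard Chandra--Merlin-style argument (first establish the correspondence $\mathcal{A},a\models\q$ iff $\hat{\q}\rightarrow(\mathcal{A},a)$ by induction on the structure of $\q$, then obtain one direction by composing homomorphisms and the other by evaluating $\q_2$ on the canonical instance $\hat{\q}_1$ at its root), and this is precisely the folklore fact the authors rely on, since the paper states this lemma without any proof at all. You are also right about the typo: the statement should read $\hat{\q}_{2}\rightarrow\hat{\q}_{1}$, as the printed condition $\hat{\q}_{2}\rightarrow\hat{\q}_{2}$ is trivially witnessed by the identity.
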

We call a pointed data instance $(\mathcal{A},a)$ \emph{core} if every homomorphism $h:(\mathcal{A},a)\rightarrow (\mathcal{A},a)$ is an isomorphism. Pointed structures $(\mathcal{A},a)$ and $(\mathcal{B},b)$ are
\emph{homomorphically equivalent} if $(\mathcal{A},a)\rightarrow (\mathcal{B},b)$ and
$(\mathcal{B},b) \rightarrow (\mathcal{A},a)$.
\begin{theorem}\label{thm:core}
	For every tree-shaped $(\mathcal{A},a)$, one can construct in polynomial time a core that is tree-shaped and homomorphically equivalent to $(\mathcal{A},a)$.
\end{theorem}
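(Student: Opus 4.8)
The plan is to reduce the statement to a local \emph{subtree-folding} criterion and then iterate it. I would start from the standard facts about cores of finite pointed structures: every finite pointed structure has a core, and that core is the image of an idempotent endomorphism $r$ (a \emph{retraction}) with $r(a)=a$ that is the identity on its own image; this follows from a finiteness argument on the endomorphism monoid, since a suitable power of an endomorphism of minimal image is idempotent with the same image. With such an $r$ in hand, tree-shapedness comes for free: the image of $r$ is a homomorphic image of a connected structure, hence connected, and it is an induced substructure of the tree $\mathcal{A}$ containing $a$; a connected induced substructure of a tree is again a subtree, so the core is automatically tree-shaped. Thus the whole difficulty is to compute a minimal retract in polynomial time.

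The key step is a local characterisation. For a non-root node $v$ with parent $p$ and connecting role $S$, let $T_v$ be the subtree hanging below $v$, and call $T_v$ \emph{foldable} if there is a homomorphism $g\colon\mathcal{A}\to\mathcal{A}\setminus T_v$ that is the identity on $\mathcal{A}\setminus T_v$. I would prove:

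\textbf{Lemma.} $(\mathcal{A},a)$ is a core iff no $T_v$ is foldable.

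One direction is immediate: a foldable $T_v$ gives a non-surjective endomorphism, so $\mathcal{A}$ is not a core. For the converse I would use the retraction $r$ onto the core $\mathcal{C}\subsetneq\mathcal{A}$. Picking any $u\notin\mathcal{C}$ and walking along the (unique) tree path from $a$ to $u$, let $v$ be the first node off $\mathcal{C}$, so its parent $p$ lies in $\mathcal{C}$. Because $\mathcal{A}$ is a tree and $\mathcal{C}$ is a connected induced substructure, any $\mathcal{C}$-path from a hypothetical node of $T_v\cap\mathcal{C}$ to $a$ would have to run through $v$, forcing $v\in\mathcal{C}$, a contradiction; hence $T_v\cap\mathcal{C}=\emptyset$. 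The map that agrees with $r$ on $T_v$ and is the identity elsewhere then witnesses foldability of $T_v$: its image avoids $T_v$ since $r(T_v)\subseteq\mathcal{C}$, and it is a homomorphism because the only edge crossing the cut is $\{p,v\}$, which is preserved as $r(p)=p$ and $r$ is a homomorphism, so the direction of this single edge is immaterial. The algorithm now simply iterates the Lemma: repeatedly search for a foldable subtree and delete it. Testing whether a fixed $T_v$ is foldable reduces to deciding, for each candidate $S$-neighbour $w$ of $p$ in $\mathcal{A}\setminus T_v$, whether the rooted tree $(T_v,v)$ maps homomorphically into $(\mathcal{A}\setminus T_v,w)$; as the source is a tree, this is decidable in polynomial time by a bottom-up dynamic program over pairs (tree node, target node). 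There are polynomially many $v$ and $w$, each deletion strictly decreases $|\mathcal{A}|$, and each deletion is a retraction, so homomorphic equivalence with the original is preserved throughout and the result stays a subtree containing $a$; when no foldable subtree remains, the Lemma certifies a core.

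The main obstacle I expect is the nontrivial direction of the Lemma, together with its interaction with inverse roles: a requirement deep inside a subtree may be satisfied by folding \emph{back up} towards an ancestor, as in $\exists S.\exists S.\top$ collapsing one level, so one cannot restrict attention to downward or sibling-only folds. The argument above is designed to handle this uniformly, since it works with an \emph{arbitrary} idempotent retraction onto the core and uses only the uniqueness of tree paths to show that a whole subtree $T_v$ lands outside $\mathcal{C}$. A secondary point to verify carefully is that foldability of a single whole subtree really suffices, i.e.\ that non-cores always admit such a global-to-subtree fold rather than merely some partial identification; this is exactly what the choice of $v$ as the first off-core node on a path guarantees.
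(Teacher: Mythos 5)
Your proof is correct, but there is nothing in the paper to compare it against: Theorem~\ref{thm:core} is stated in the appendix \emph{without any proof}, as a known fact about cores of acyclic structures, and is then used as a black box (e.g., in Theorem~\ref{th:eli-diamond-2}, and in the learning algorithm of Section~\ref{Sec:learning}, where non-tree-shaped instances are first unwound into tree-shaped ones precisely so that this result applies). So what you have produced is a self-contained justification of a result the paper treats as folklore, and your argument is sound in all its steps: (1)~the existence of an idempotent retraction $r$ with $r(a)=a$ onto a core is the standard finiteness argument on the endomorphism monoid; (2)~tree-shapedness of the retract correctly reduces to the facts that an induced substructure of an acyclic instance is acyclic and that the homomorphic image of a connected instance is connected --- here it is worth making explicit that, since tree-shaped instances contain no atoms $P(c,c)$, a homomorphism can never collapse an edge, so edges go to edges; (3)~your folding lemma is the crux, and both directions check out: $T_v\cap\mathcal{C}=\emptyset$ follows from connectedness of $\mathcal{C}$ and uniqueness of tree paths, and the patched map is a homomorphism because the unique edge crossing the cut is $\{p,v\}$ and $r(p)=p$; (4)~the reduction of foldability to rooted homomorphism tests $(T_v,v)\to(\mathcal{A}\setminus T_v,w)$, over candidates $w$ carrying the same crossing atom (same predicate \emph{and} same orientation as the $p$--$v$ atom, a point you should state explicitly), is exactly right, and each test is polynomial by bottom-up dynamic programming since the source is a tree; (5)~at most $|\mathcal{A}|$ deletions occur, each a retraction fixing $a$, so pointed homomorphic equivalence and tree-shapedness are loop invariants, and on termination the lemma certifies a core. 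For comparison, a slightly shorter route to the same end exists: since any pointed homomorphic image of a tree rooted at $a$ is again a tree rooted at $a$, one can simply test, for each node $u\neq a$, whether $(\mathcal{A},a)$ maps homomorphically into the induced substructure $(\mathcal{A}\setminus\{u\},a)$ and, if so, replace $\mathcal{A}$ by the induced image; your subtree-folding invariant costs a little more work in the lemma but makes the correctness of the stopping condition completely transparent.
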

We have defined frontiers within the set of $\mathcal{ELI}$-queries partially ordered by entailment.
It is sometimes more convenient to define frontiers on the class of tree-shaped data instance partially ordered
by homomorphisms. A set $\mathcal{F}$ of tree-shaped $(\mathcal{A}',a')$ is called a \emph{frontier} of tree-shaped $(\mathcal{A},a)$ if
\begin{itemize}
	\item $(\mathcal{A}',a')\rightarrow (\mathcal{A},a)$ for all $(\mathcal{A}',a') \in \mathcal{F}$;
	\item $(\mathcal{A},a)\not\rightarrow (\mathcal{A}',a')$ for all $(\mathcal{A}',a')\in \mathcal{F}$;
	\item if $(\mathcal{B},b)\rightarrow (\mathcal{A},a)$, then either
	$(\mathcal{A},a)\rightarrow (\mathcal{B},b)$ or there exists $(\mathcal{A}',a')\in \mathcal{F}$ with $(\mathcal{B},b)\rightarrow (\mathcal{A}',a')$.
\end{itemize}
The frontier of $(\mathcal{A},a)$ is denoted by $\mathcal{F}(\mathcal{A},a)$.

\subsection{Proof of Theorem~\ref{thm:tenCate0}}

We describe the construction of the frontier $\mathcal{F}(\mathcal{A},a)$ for an tree-shaped pointed data instance $\mathcal{A},a$. (This construction is a minor adaptation of the construction in~\cite{DBLP:conf/icdt/CateD21}, where the proof of its correctness is given.) In this case, we can view $(\mathcal{A},a)$ as a labelled directed tree with the set of nodes $\ind(\mathcal{A})$ labelled with (possibly empty) sets of concept names $\{A_1, \dots, A_n\}$, edges $(b,c)$ labelled with either $P$ or $P^-$, for a role name $P \in \Sigma$, and rooted at $a$. First, we construct \emph{pre-frontier}, which is a set of structures $\mathcal{P}(\mathcal{A},a)$ defined inductively as follows:
\begin{itemize}
  \item if $a$ is a leave of $\mathcal{A}$ and $a$ has no concept name labels, then $\mathcal{P}(\mathcal{A},a) = \emptyset$;
  \item if $a$ is a leave of $\mathcal{A}$ labelled with $\{A_1, \dots, A_m\}$, then $\mathcal{P}(\mathcal{A},a) = \{ (\mathcal{B}^1, a), \dots, (\mathcal{B}^m, a) \}$, where $\mathcal{B}^i$ is\\
\begin{tikzpicture}
\node[label=above:{$\{A_1, \dots, A_{i-1}, A_{i+1}, A_m\}$}] {$a$};
\end{tikzpicture}\\
  \item otherwise, let $\mathcal A$ be\\
  \begin{tikzpicture}[grow=down, sloped]
\node[label=above:{$\{A_1, \dots, A_m\}$}] {$a$}
    child {
        node {$\mathcal A_1$}
            edge from parent[->]
            node[above] {$S_1$}
    }
    child[->] {node {$\dots$}}
    child {
        node {$\mathcal A_i$}
            edge from parent[->]
            node[above] {$S_i$}
    }
    child[->] {node {$\dots$}}
    child {
        node {$\mathcal A_n$}
            edge from parent[->]
            node[above] {$S_n$}
    };
\end{tikzpicture}
\end{itemize}
where each $\mathcal{A}_i$ has a root $a_i$ and let $\mathcal{P}(\mathcal{A}_i,a_i) = \{ (\mathcal{B}_i^1, a_i), \dots, (\mathcal{B}_i^k, a_i) \}$, for $k \geq 0$. We construct another set $\{(\mathcal{C}_i^j, c_i^j)\}$ of pointed structures, where the domain of $\mathcal{C}_i^j$ is equal to the set of sequences $bj$ where $b$ is in the domain of $\mathcal{B}_i^j$, and $S(bj, cj)$ is in $\mathcal{C}_i^j$ iff $S(b, c)$ is in $\mathcal{B}_i^j$ (and similarly for concept assertions $A(b)$). We set $c_i^j = a_ij$. We note that every element $d$ in the domain of $\mathcal{B}_i^j$ or $\mathcal{C}_i^j$ has the form $a j_1 \dots j_\ell$, for $\ell \geq 0$ and $a$ from the domain of $\mathcal{A}$. We write $\textit{orig}(d) = a$ to indicate that $a$ is the \emph{original} of $d$. Then, $\mathcal{P}(\mathcal{A},a)$ is defined as $\{(\mathcal{D}_1, a), \dots, (\mathcal{D}_n, a)\} \cup \{(\mathcal{E}_1, a), \dots, (\mathcal{E}_m, a)\}$, where $\mathcal{D}_i$ is the structure:\\
  \begin{tikzpicture}[grow=down, sloped, xscale = .85]
\node[label=above:{$\{A_1, \dots, A_m\}$}] {$a$}
    child {
        node {$\mathcal A_1$}
            edge from parent[->]
            node[above] {$S_1$}
    }
    child[->] {node {$\dots$}}
    child {
        node {$\mathcal C_i^1$}
            edge from parent[->]
            node[below] {$S_i$}
    }
    child[->] {node {$\dots$}}
    child {
        node {$\mathcal C_i^k$}
            edge from parent[->]
            node[below] {$S_i$}
    }
    child[->] {node {$\dots$}}
    child {
        node {$\mathcal A_n$}
            edge from parent[->]
            node[above] {$S_n$}
    };
\end{tikzpicture}\\
and each subtree $\mathcal{C}_i^j$ is rooted at $c_i^j$, and $\mathcal{E}_i$ is\\
  \begin{tikzpicture}[grow=down, sloped]
\node[label=above:{$\{A_1, \dots, A_{i-1}, A_{i+1}, \dots, A_m\}$}] {$a$}
    child {
        node {$\mathcal A_1$}
            edge from parent[->]
            node[above] {$S_1$}
    }
    child[->] {node {$\dots$}}
    child {
        node {$\mathcal A_i$}
            edge from parent[->]
            node[above] {$S_i$}
    }
    child[->] {node {$\dots$}}
    child {
        node {$\mathcal A_n$}
            edge from parent[->]
            node[above] {$S_n$}
    };
\end{tikzpicture}

Finally, the frontier $\mathcal{F}(\mathcal{A},a)$ is $\{(\mathcal{F}_1, a), \dots, (\mathcal{F}_n, a)\}$, where each $(\mathcal{F}_i, a)$ is obtained from $(\mathcal{D}_i, a) \in \mathcal{F}(\mathcal{A},a)$ in the following way. Let the domain of $\mathcal{D}_i$ be $\{a, d_1, \dots, d_k\}$. We construct a $j$-th copy $\mathcal{A}^j$ of $\mathcal{A}$, where domain of the structure $\mathcal{A}^j$ is the set of $b\;{-j}$, for $b$ in the domain of $\mathcal A$ and $S(b\;{-j}, c\;{-j})$ is in $\mathcal{A}^j$ iff $S(b, c)$ is in $\mathcal{A}$ (and similarly for concept assertions $A(b)$; note that we use negative numbers). Then, $\mathcal{F}_i$ is the union of $\mathcal{D}_i$, $\mathcal{A}^1$, \dots, $\mathcal{A}^k$ (note that they are all pairwise disjoint), where we add $S(b\;{-j}, d_j)$ to $\mathcal{F}_i$ iff $S(b,c)$ is in $\mathcal A$ and $\textit{orig}(d_j) = c$.


\subsection{Proof of Theorem~\ref{thm:firstone} $(i)$}

 Let $\dep(\q)$ be the maximum number of nested temporal operators in $\q$  (assuming that no subquery of $\q$ starting with a temporal operator is equivalent to $\top$). Let $\rdep(\q)$ be the length $n$ of the longest sequence $\exists S_1 \dots \exists S_n$ such that $\exists S_{i+1}$ is in the scope of $\exists S_i$ but not in the scope of a temporal operator in the scope of $\exists S_i$. The set of $\mathcal{Q}[\nxt,\Diamond]\otimes \mathcal{EL}$-queries $\q$ with $\dep(q) \le d$ and $\rdep(\q) \le r$, for some fixed $d,r < \omega$, contains finitely-many, say $N_{dr}$-many, non-equivalent queries. It is easy to construct data instances $\D^d$ and $\D^{dr}$ such that
	\begin{itemize}
		\item[--] $\D^d,a,0 \models \q$ iff $\dep(\q) \le d$, for all $\q$;
		
		\item[--] $\D^{dr},a,0 \models \q$ iff $\rdep(\q) \le r$, for all $\q$ with $\dep(\q) =  d$.
	\end{itemize}
	It is also not hard to show that any pair of nonequivalent queries $\q, \q' \in \mathcal{Q}[\nxt,\Diamond]\otimes \mathcal{EL}$ with $\dep(\q) = \dep(\q')$ and $\rdep(\q) = \rdep(\q')$ is distinguished by a data instance $\D$ with $\max(\D) \leq N_{dr}$ and $|\ind(\D)| \leq N_{dr}$.
	
	Now, given a $\q$ with $\dep(\q) = d$ and $\rdep(\q) = r$, we construct an example set $(E,a,0)$ with
	\begin{align*}
		E^+ \!= \{\D \mid \D,a, 0 \models \q, \max( \D) \leq N_{dr}, |\ind(\D)| \leq N_{dr}\},\\
		E^- \!= \{\D \mid \D,a, 0 \not \models \q, \max( \D) \leq N_{dr}, |\ind(\D)| \leq N_{dr}\}.
	\end{align*}
	%
	%
	It follows from the observations above that this example set uniquely characterises $\q$.

\subsection{Proof of Theorem~\ref{thm:firstone} $(ii)$}

We observe that the queries $\q \in \mathcal{Q}[\nxt,\Diamond]\otimes \mathcal{ELI}$ can be viewed as the formulas $\q(x)$ of the monodic fragment of temporal FO~\cite{DBLP:journals/apal/HodkinsonWZ00} (see also~\cite{DBLP:conf/epia/Schild93}). In particular, the queries $\q \in \Qp[\Diamond] \otimes \mathcal{ELI}$ naturally correspond to the formulas of the form:
\begin{align*}
\q_0&(x_0) =  \exists x_1,\dots, x_s \\
&\big(\bigwedge_{j \in I_1} A_j(x_{i_j}) \land \bigwedge_{k,\ell \in I_2} P_{k,\ell}(x_{i_k}, x_{i_\ell}) \land
\bigwedge_{m \in I_3} \Diamond \q_m(x_{i_m})\big),
\end{align*}
where $s \geq 0$, $0 \leq i_j, i_k, i_\ell, i_m \leq s$, $\q_m$ are of the same form as $\q_0$, and $x_{i_{m}} \neq x_{i_{m'}}$ for all $m, m' \in I_3$ with $m \neq m'$ (i.e., we do not allow more than one $\Diamond$-subformula for with the same free variable). Moreover, the set of atoms $\r_0 = \{A_j(x_{i_j} \mid j \in I_1 \} \cup \{ P_{k,\ell}(x_{i_k}, x_{i_\ell}) \mid k,\ell \in I_2\}$ is an $\mathcal{ELI}$-query with a distinguished variable $x_0$ (i.e., $\hat{\r}$ it is tree-shaped pointed data instance).

Thus, the $\q \in \Qp[\Diamond] \otimes \mathcal{ELI}$ queries $\q$ can be represented as trees:
\begin{center}
\begin{tikzpicture}[grow=right, sloped]
\node {} child
{
node {$\r_0$}
    child {
        node {$\r_1$}
            child {
                node {$\r_{i}$} child[->] { node {$\dots$}}
                edge from parent[->]
                node[above] {$x_{i}$}
            }
            child {
                node {$\r_{j}$} child[->] { node {$\dots$}}
                edge from parent[->]
                node[above] {$x_{j}$}
            }
            edge from parent[->]
            node[above] {$x_{1}$}
    }
    child[->] { node {$\dots$}
    }
    child {
        node {$\r_n$}
            child {
                node {$\r_{k}$} child[->] { node {$\dots$}}
                edge from parent[->]
                node[above] {$x_{k}$}
            }
            child {
                node {$\r_{l}$} child[->] { node {$\dots$}}
                edge from parent[->]
                node[above] {$x_{l}$}
            }
            edge from parent[->]
            node[above] {$x_{n}$}
    }
    edge from parent[->]
    node[above] {$x_{0}$}
    };

\end{tikzpicture}
\end{center}
where the nodes are labelled with $\mathcal{ELI}$-queries $\r_i$ and the edges between the nodes are labelled with (query) variables, constructed as follows. The nodes of this tree, for a query $\q_0$, are from the set $\{\q_0\} \cup \{\q_i \mid \Diamond \q_i \text{ is a subformula of } \q_0\}$. The label of each $\q_i$ is the $\mathcal{ELI}$-query $\r_i$ constructed from the nontemporal atoms of $\q_i$; see $\r_0$ for $\q_0$ above. (We always omit the nodes on pictures as above; they are clear from the context.) The set of edges is defined as $\{(\q_i, \q_j)\mid \Diamond \q_j \text{ occurs in } \q_i \text{ not in scope of a }\Diamond\}$. The label of the edge $(\q_i, \q_j)$ is the (unique) free variable of $\q_j$. By the construction of $\q_i$, we note that there are no outgoing edges of a node with the same label.  For example, the query $\q = \Diamond (C \land \Diamond B) \land A \land \exists P. \exists P^-.\Diamond B$ is represented as follows:
\begin{center}
\begin{tikzpicture}[grow=right, sloped]
\node[label=left:{$\xrightarrow{x}\{A(x), P(x,y), P(z,y)\}$} ] {}
    child {
        node {$\{ C(x)\}$}
            child {
                node {$\{B(x)\}$}
                edge from parent[->]
                node[above] {$x$}
            }
            edge from parent[->]
            node[above] {$x$}
    }
    child {
        node {$\{B(z)\}$}
            edge from parent[->]
            node[above] {$z$}
    };
\end{tikzpicture}
\end{center}

For every $\q \in \Qp[\Diamond] \otimes \mathcal{ELI}$ (in what follows we do not distinguish between $\q$ and its tree representation) and every node $\r_i$ in it, we define $\textit{dep}(\r_i)$ to be equal to the length of the path between the root $\r$ and $\r_i$. With such a $\q$, we associate the data instance $\D^{\q} = \D_0 \dots \D_n$ with $n = \max \{\textit{dep}(\r_i) \mid \r_i \text{ in }\q\}$, $\ind(\D^{\q})$ equal to the set of variables of $\q$, and $\D_i = \{A(x) \mid A(x) \in \r_i \text{ for some }\r_i \text{ in }\q \} \cup \{P(x,y) \mid P(x,y) \in \r_i \text{ for some }\r_i \text{ in }\q \}$. We also set $a^{\q} = x$, for the distinguished variable $x$ of $\q$. On the other hand, we associate with $\q$ the atemporal pointed database $(\mathcal A^{\q}, (x,0))$, where $x$ is the distinguished variable of $\q$, with
$\ind(\mathcal A^{\q}) = \{(y, m) \mid y \text{ occurs in some }\r_i \text{ in }\q \text{ and }\textit{dep}(\r_i) = m \}$, $A((y,m)) \in \mathcal A^{\q}$ iff $A(y) \in \r_i$ and $\textit{dep}(\r_i) = m$, and $P((y,m),(z,m)) \in \mathcal A^{\q}$ iff $P(y,z) \in \r_i$ and $\textit{dep}(\r_i) = m$. Moreover, for a fresh role name $T$, we add $T((y,m), (y,n))$ to $\mathcal A^{\q}$ for all $(y,m), (y,n) \in \ind(\mathcal A^{\q})$, such that $n = m+1$. We observe that $\mathcal A^{\q}$ is tree-shaped. Also, every connected (taking into account $T$-atoms) substructure of $\mathcal A^{\q}$ induces a query $\q' \in \Qp[\Diamond] \otimes \mathcal{ELI}$.
Denote by $\textit{cl}(\mathcal A^{\q})$ the structure that extends $\mathcal A^{\q}$ by adding the $T$-transitive closure.
\begin{lemma}\label{th:eli-diamond-1}
Let $\q, \q' \in \Qp[\Diamond] \otimes \mathcal{ELI}$ and assume that $\q$ (respectively, $\q'$) has a distinguished variable $x$ ($y$). Then $\q \models \q'$ iff $\D^{\q}, a^{\q} \models \q'$ iff $(\textit{cl}(\mathcal A^{\q}), x) \to (\textit{cl}(\mathcal A^{\q'}), y)$.
\end{lemma}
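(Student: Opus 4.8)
The plan is to reduce all three conditions to a single statement about homomorphisms between the \emph{flattened} two-dimensional structures, and then obtain the stated equivalences purely by composing homomorphisms. First I would extend the flattening construction from queries to an arbitrary pointed temporal data instance $(\D,b)$, exactly as $\mathcal{A}^{\q}$ is built from $\D^{\q}$: the individuals of $\mathcal{A}^{\D}$ are the pairs $(c,m)$ with $c\in\ind(\D)$ present at time $m$, each concept/role atom of $\D$ at time $m$ becomes the corresponding atom among the $(\cdot,m)$-individuals, and $T((c,m),(c,m+1))$ records the temporal successor of a persisting object. Writing $\textit{cl}$ for the $T$-transitive closure, the point of the closure is that the strict future operator $\Diamond$ (``at some strictly later time for the \emph{same} object'') is faithfully captured by a single $\exists T$-step once $T$ is read as $T^{+}$. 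Note that $\mathcal{A}^{\q}$ is literally the flattening of $\D^{\q}$, so $\mathcal{A}^{\D^{\q}}=\mathcal{A}^{\q}$.

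The core claim I would prove is the following correspondence, of which the lemma is a direct consequence. Claim (C): for every $\varphi\in\Qp[\Diamond]\otimes\mathcal{ELI}$ with distinguished variable $x_{\varphi}$ and every pointed temporal data instance $(\D,b)$, we have $\D,b,0\models\varphi$ iff there is a homomorphism $h\colon\mathcal{A}^{\varphi}\to\textit{cl}(\mathcal{A}^{\D})$ over $\Sigma\cup\{T\}$ with $h((x_{\varphi},0))=(b,0)$. I would prove (C) by induction on the tree representation of $\varphi$, essentially the standard ``the query pattern embeds homomorphically into the model'' argument for $\mathcal{ELI}$, adapted to the flattening. Two things must be checked: (i) each $\exists S$ stays within a single timepoint, so it is matched by a role-edge among the $(\cdot,m)$-individuals for a fixed $m$; and (ii) each $\Diamond$, which the tree representation records as moving the \emph{same} variable from depth $m$ to a child at depth $m+1$, is a single $T$-edge in $\mathcal{A}^{\varphi}$ whose match into $\textit{cl}(\mathcal{A}^{\D})$ is precisely a $T^{+}$-edge, i.e.\ some strictly later time. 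Since $\mathcal{A}^{\varphi}$ is tree-shaped, this is exactly the homomorphism characterisation of satisfaction for $\mathcal{ELI}(\Sigma\cup\{T\})$-queries; branching of the tree is harmless because distinct branches are matched independently, which is what repairs the fact that $\D^{\q}$ is \emph{not} universal under naive timepoint-preserving maps (two $\Diamond$-branches at a common depth may be witnessed at different times in $\D$).

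Given (C), all three equivalences follow by composition. Taking $\varphi=\q$ and $(\D,b)=(\D^{\q},a^{\q})$, the identity is a homomorphism $\mathcal{A}^{\q}\to\textit{cl}(\mathcal{A}^{\q})$, so $\D^{\q},a^{\q},0\models\q$; hence $\q\models\q'$ gives $\D^{\q},a^{\q},0\models\q'$. Conversely, (C) with $\varphi=\q'$ and $\D=\D^{\q}$ says $\D^{\q},a^{\q},0\models\q'$ iff $\mathcal{A}^{\q'}\to\textit{cl}(\mathcal{A}^{\q})$ with matched roots; since the target is $T$-closed one may freely replace the source $\mathcal{A}^{\q'}$ by $\textit{cl}(\mathcal{A}^{\q'})$, yielding the homomorphism condition of the lemma, oriented as $(\textit{cl}(\mathcal{A}^{\q'}),y)\to(\textit{cl}(\mathcal{A}^{\q}),x)$ — the same orientation as the $\mathcal{ELI}$ fact $\q_{1}\models\q_{2}$ iff $\hat{\q}_{2}\to\hat{\q}_{1}$. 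Finally, to obtain $\D^{\q},a^{\q},0\models\q'\Rightarrow\q\models\q'$, take any $(\D,b)\models\q$; by (C) for $\q$ we get $\mathcal{A}^{\q}\to\textit{cl}(\mathcal{A}^{\D})$, which extends to $\textit{cl}(\mathcal{A}^{\q})\to\textit{cl}(\mathcal{A}^{\D})$ by transitivity of $T^{+}$ in the target, and composing with $\mathcal{A}^{\q'}\to\textit{cl}(\mathcal{A}^{\q})$ and applying (C) for $\q'$ gives $\D,b,0\models\q'$.

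The main obstacle I expect is (C) itself, and within it the $\Diamond$/$T^{+}$ bookkeeping: one must verify that collapsing all nodes of a given depth into one timepoint of $\D^{\q}$ creates no spurious satisfactions, and that reading $\Diamond$ as a \emph{single} $\exists T$-step against the transitively closed $\mathcal{A}^{\D}$ neither loses strictness ($m'>m$) nor over-constrains the witness time. The two technical points to pin down exactly are the ``closed-vs.\ un-closed source'' equivalence and the tree-shapedness of $\mathcal{A}^{\varphi}$, which is what licenses the invocation of the $\mathcal{ELI}$ homomorphism characterisation.
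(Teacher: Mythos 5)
The paper states this lemma without proof, so there is no official argument to compare against; your two-step plan — a satisfaction-as-homomorphism correspondence (C) for flattened structures, proved by induction on the query tree, followed by composition — is certainly the intended route, and your re-orientation of the third condition as $(\textit{cl}(\mathcal A^{\q'}),y)\to(\textit{cl}(\mathcal A^{\q}),x)$, in line with the paper's $\mathcal{ELI}$ convention, is the right reading of what is evidently a typo. Claim (C) itself is true and your inductive sketch of it is sound, but only under one resolution of your hedge ``present at time $m$'': the flattening $\mathcal A^{\D}$ must contain $(c,m)$ for \emph{every} $c\in\ind(\D)$ and every time $m$, with full $T$-chains. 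Under the other reading (only pairs occurring in atoms), (C) is false: for $\varphi=A\wedge\Diamond B$ and $\D=(\{A(c)\},\emptyset,\{B(c)\})$ we have $\D,c,0\models\varphi$, yet $(c,1)$ is absent, $(c,0)$ has no outgoing $T$-edge, and no root-preserving homomorphism exists.

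The genuine gap is that, with the reading that makes (C) true, your identification $\mathcal A^{\D^{\q}}=\mathcal A^{\q}$ is false, and your derivation leans on it. $\ind(\mathcal A^{\q})$ contains $(y,m)$ only when $y$ is a variable of a depth-$m$ node of $\q$, whereas $\mathcal A^{\D^{\q}}$ contains every variable of $\q$ at every time (in the paper's running example it additionally contains $(y,1),(y,2),(z,2)$ and the $T$-edges through them). So (C) only yields ``$\D^{\q},a^{\q},0\models\q'$ iff $\mathcal A^{\q'}\to\textit{cl}(\mathcal A^{\D^{\q}})$'', and passing from the larger target $\textit{cl}(\mathcal A^{\D^{\q}})$ to $\textit{cl}(\mathcal A^{\q})$ — which is what the lemma asserts — is a real step that can fail. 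Concretely, for $\q=A\wedge\exists P.\top\wedge\Diamond C$ and $\q'=A\wedge\exists P.\Diamond\top$ (both in $\Qp[\Diamond]\otimes\mathcal{ELI}$) one has $\q\models\q'$ and $\D^{\q},a^{\q},0\models\q'$, yet there is no root-preserving homomorphism between the closed flattenings in either direction: in $\mathcal A^{\q'}$ the $P$-successor individual has an outgoing $T$-edge, while $(y,0)$ in $\mathcal A^{\q}$ has none. This shows both that your steps (3) and (4) break and that the lemma needs the convention, adopted elsewhere in the paper, that no subquery beginning with a temporal operator is equivalent to $\top$. Under that convention the gap is fixable, but it needs an argument you do not give: the only individuals of $\mathcal A^{\q'}$ carrying no atom are distinguished variables of empty nodes; these form $T$-chains whose endpoints carry atoms (or are the root) and therefore land in $\ind(\mathcal A^{\q})$; and since the set of depths at which a fixed variable of $\q$ occurs is a contiguous interval, each such chain can be re-threaded through the $T$-edges of $\textit{cl}(\mathcal A^{\q})$ between the images of its endpoints. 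Without this re-routing step (or a padding of $\mathcal A^{\q}$ consistent with the rest of the paper's constructions), your proof of the second and third equivalences does not go through.
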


The following result shows the important property of the core of $\mathcal A^{\q}$ (note that $(i)$ does not follow from Theorem~\ref{thm:tenCate0}):
\begin{theorem}\label{th:eli-diamond-2}
  Let $\q \in \Qp[\Diamond] \otimes \mathcal{ELI}$. Then
  \begin{itemize}
  \item $(i)$ a pointed substructure $(\mathcal C, (x,0))$ of $(\textit{cl}(\mathcal A^{\q}), (x,0))$ that is a core is computable in polynomial time;

   \item $(ii)$ there is a query $\q' \in \Qp[\Diamond] \otimes \mathcal{ELI}$ such that $\textit{cl}(\mathcal A^{\q'}) = \mathcal C$.
   \end{itemize}
\end{theorem}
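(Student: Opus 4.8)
The plan is to exploit two features of $\textit{cl}(\mathcal A^{\q})$: it is a \emph{tree} $\mathcal A^{\q}$ (spatial $P$-edges inside each level, temporal $T$-edges between consecutive levels) to which only the \emph{transitive closure of $T$} has been added, and, by Lemma~\ref{th:eli-diamond-1}, homomorphic equivalence of such structures coincides with equivalence of queries. Everything hinges on one structural lemma: \textbf{(L)} every endomorphism $h$ of the pointed structure $(\textit{cl}(\mathcal A^{\q}),(x,0))$ is level non-decreasing, i.e.\ $\text{lev}(h(u))\ge \text{lev}(u)$ for all $u$, and, under the standing normalisation that no temporal subquery of $\q$ is equivalent to $\top$, is in fact level-preserving. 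The easy half (non-decrease) I would prove by induction on the level: level $0$ is the single $P$-connected $\mathcal{ELI}$-tree $\hat{\r}_0$ containing the fixed root $(x,0)$, and $P$-edges never leave a level, so $h$ maps level $0$ into level $0$; and any level-$(m{+}1)$ node is $P$-connected to the root of its $\mathcal{ELI}$-tree, whose $T$-parent sits at level $m$, so the inductive hypothesis together with $T$-preservation pushes its image to level $\ge m{+}1$.

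For part $(i)$ I would then compute the core by a \emph{top-down sweep} over the levels $n=\dep(\q),\dots,0$. Because $n$ is the maximal level, (L) forces $h$ to map level-$n$ nodes to level-$n$ nodes, so on level $n$ a retraction acts as an endomorphism of a disjoint union of $\mathcal{ELI}$-trees coupled to level $n{-}1$ only through $T$; I would core these trees with Theorem~\ref{thm:core} and, using the transitive $T$-order as a strict order that blocks collapsing a node into one of its own $T$-predecessors (no self-loops exist), fold each maximal $T$-chain-with-decorations into a parallel one whenever an $\mathcal{ELI}$-homomorphism between them exists. Having fixed the higher levels in core form, the same step applies at level $m$, the remaining freedom being exactly (a) $\mathcal{ELI}$-folding inside level-$m$ trees and (b) folding whole level-$m$ subtrees hanging off distinct $\Diamond$-variables; both are checked by tree-to-tree homomorphism tests, hence in polynomial time, and the number of foldings is bounded by the number of elements of $\textit{cl}(\mathcal A^{\q})$, which is polynomial in $|\q|$, since each folding removes at least one element. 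The output $\mathcal C$ is an induced substructure on which no further retraction exists, i.e.\ a core.

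For part $(ii)$ the key observation is that a core is (isomorphic to) an \emph{induced} connected substructure of $\textit{cl}(\mathcal A^{\q})$ containing $(x,0)$. Induced substructures inherit all the shape invariants: $P$-edges stay within a level, the restriction of the transitively closed $T$ is again transitive and irreflexive, and each node keeps at most one \emph{immediate} $T$-successor (path-shapedness). Hence I can reconstruct a tree $\mathcal A^{\q'}$ by taking the Hasse diagram of $T$ restricted to $\mathcal C$ as the temporal edges and the surviving $P$-edges as the spatial ones; reading the concept and role labels level by level yields a query $\q'\in\Qp[\Diamond]\otimes\mathcal{ELI}$, and by construction $\textit{cl}(\mathcal A^{\q'})=\mathcal C$. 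Connectivity of $\mathcal C$ follows because a retract of a connected structure is connected, and fixing the root guarantees $(x,0)\in\mathcal C$.

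The main obstacle is the transitive closure of $T$, which makes $\textit{cl}(\mathcal A^{\q})$ non-tree-shaped, so that neither Theorem~\ref{thm:core} nor Theorem~\ref{thm:tenCate0} applies and, a priori, deciding coreness is intractable. Concretely, the delicate point inside Lemma~\textbf{(L)} is the upper bound ruling out a \emph{temporal compression} that maps a node strictly upward in level; this is where the normalisation forbidding $\top$-equivalent temporal subqueries is essential, and I expect the careful argument to be a ``no room at the top of a $T$-chain'' induction combined with idempotence of the retraction. Once (L) is secured, the decoupling of the levels and the reduction of all folding to $\mathcal{ELI}$-homomorphism tests make both the polynomial running time and the shape-preservation of part $(ii)$ fall out.
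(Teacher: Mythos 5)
The paper itself states Theorem~\ref{th:eli-diamond-2} without proof (it is used as a black box on the way to Theorem~\ref{th:eli-diamond-3}), so there is no proof of record to compare against; judged on its own terms, your argument has a genuine gap: the level-preservation half of your Lemma \textbf{(L)} is false, and your normalisation does not rescue it. Take $\q=\exists P.\Diamond A\wedge\exists P.\Diamond\Diamond A$. This query is in $\Qp[\Diamond]\otimes\mathcal{EL}$ (all temporal operators sit under $\exists P$, so the path condition is vacuous), and none of its temporal subqueries is equivalent to $\top$. Its structure $\textit{cl}(\mathcal A^{\q})$ has level-$0$ elements $(x,0),(y_1,0),(y_2,0)$ with $P((x,0),(y_1,0))$ and $P((x,0),(y_2,0))$, a level-$1$ element $(y_1,1)$ labelled $A$, an unlabelled level-$1$ element $(y_2,1)$, a level-$2$ element $(y_2,2)$ labelled $A$, and $T$-chains $(y_1,0)\to(y_1,1)$ and $(y_2,0)\to(y_2,1)\to(y_2,2)$ whose closure contains $T((y_2,0),(y_2,2))$. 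The map fixing $(x,0),(y_2,0),(y_2,1),(y_2,2)$ and sending $(y_1,0)\mapsto(y_2,0)$, $(y_1,1)\mapsto(y_2,2)$ is an endomorphism fixing the root: $P((x,0),(y_1,0))$ goes to $P((x,0),(y_2,0))$ and $T((y_1,0),(y_1,1))$ goes to the closure edge $T((y_2,0),(y_2,2))$. It moves a level-$1$ node strictly upward, so endomorphisms need not be level-preserving; this simply reflects $\exists P.\Diamond\Diamond A\models\exists P.\Diamond A$, i.e.\ $\q\equiv\exists P.\Diamond\Diamond A$.

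Consequently your top-down sweep for part $(i)$ is incorrect on this very input: the structure admits \emph{no} level-preserving folding (the level-$1$ tree $\{(y_1,1)\}$ cannot map to $\{(y_2,1)\}$ for lack of the label $A$, and collapsing $(y_1,0)$ into $(y_2,0)$ at level $0$ forces the cross-level move above), so your procedure returns all of $\textit{cl}(\mathcal A^{\q})$, which is not a core; the true core is $\textit{cl}(\mathcal A^{\q'})$ for $\q'=\exists P.\Diamond\Diamond A$. What survives of \textbf{(L)} is exactly the half you proved — endomorphisms are level \emph{non-decreasing} — and a correct algorithm must therefore allow folding a level-$m$ subtree, together with its anchoring $T$-edge, into targets at \emph{strictly higher} levels along the $T$-order above the image of its anchor; your ``no room at the top of a $T$-chain'' heuristic only pins down the last element of a chain, not intermediate elements once a longer parallel chain is available as a target. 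Your reconstruction in part $(ii)$ (Hasse diagram of $T$ restricted to $\mathcal C$ plus surviving $P$-edges) is sound in outline, but it too silently uses facts needing proof — e.g.\ that a core never skips a level inside a chain, and that the $P$-root of any tree meeting the core lies in the core — and these follow from the non-decreasing half of \textbf{(L)}, not from level preservation.
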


\paragraph{Frontier for 2D queries.} We now define a \emph{frontier} $\mathcal{F}(\q, \r, x)$ for a node $\r$ with a distinguished variable $x$ in a query $\q \in \Qp[\Diamond] \otimes \mathcal{ELI}$ by induction on the construction of (the tree of) $\q$:

\begin{itemize}
  \item if $\q$ is $\xrightarrow{x} \r$ (i.e., an $\mathcal{ELI}$-query $\r$ with a distinguished variable $x$), then we set $\mathcal{F}(\q, \r, x)$ equal to $\mathcal{F}(\r, x)$, which takes the form $\{ \xrightarrow{x} \r^1, \dots, \xrightarrow{x} \r^k \}$.
  \item otherwise, let $\q$ be \\
\begin{tikzpicture}[grow=right, sloped]
\node {$\xrightarrow{x} \r$}
    child {
        node {$\q_1$}
            edge from parent[->]
            node[above] {$x_1$}
    }
    child[->] {node {$\dots$}}
    child {
        node {$\q_n$}
            edge from parent[->]
            node[above] {$x_n$}
    };
\end{tikzpicture}\\
In this case, we set $\mathcal{F}(\q, \r, x)$ equal to $\avec{Q}_1 \cup \dots \cup \avec{Q}_n \cup \avec{Q}'$, where%
    \begin{itemize}
    \item assuming $\mathcal{F}(\q_i, \r_i, x_i) = \{\xrightarrow{x_i} \avec{f}^1, \dots, \xrightarrow{x_i} \avec{f}^\ell\}$, we set $\avec{Q}_i = \{\xrightarrow{x} \q_i^1, \dots, \xrightarrow{x} \q_i^\ell \}$ and $\q_i^j$:\\
\begin{tikzpicture}[grow=right, sloped]
\node {$\xrightarrow{x} \r$}
    child {
        node {$\q_1$}
            edge from parent[->]
            node[above] {$x_1$}
    }
    child {
        node {$\avec{f}^j$}
            edge from parent[->]
            node[above] {$x_i$}
    }
    child {
        node {$\q_n$}
            edge from parent[->]
            node[above] {$x_n$}
    };
\end{tikzpicture}\\
    \item assuming $\mathcal{F}(\r, x) = \{ \xrightarrow{x} \r^1, \dots, \xrightarrow{x} \r^k \}$, we set $\avec{Q}' = \{ \xrightarrow{x} \q^1, \dots, \xrightarrow{x} \q^k \}$ and $\q^j$:\\
\begin{tikzpicture}[grow=right, sloped]
\node {$\xrightarrow{x} \r^j$}
    child {
        node {$\q_{i_1}$}
            edge from parent[->]
            node[above] {$y_1$}
    }
    child[->] {node {$\dots$}}
    child {
        node {$\q_{i_m}$}
            edge from parent[->]
            node[above] {$y_m$}
    };
\end{tikzpicture}\\
where $\{y_1, \dots, y_m\}$ is the domain of $\r^j$ and $\textit{orig}(y_\ell) = x_{i_\ell}$.
    \end{itemize}
\end{itemize}

\begin{example}\em
Let $\q = \exists P.\Diamond A \land \exists P.\Diamond (B \land C)$, i.e.,
\begin{center}
\begin{tikzpicture}[grow=right, sloped]
\node[label=left:{$\xrightarrow{x}\{P(x,x_1), P(x,x_2)\}$} ] {}
    child {
        node {$\{ B(x_2), C(x_2)\}$}
            edge from parent[->]
            node[above] {$x_2$}
    }
    child {
        node {$\{A(x_1)\}$}
            edge from parent[->]
            node[above] {$x_1$}
    };
\end{tikzpicture}
\end{center}
Clearly, $\q$ is core. Then, $\mathcal{F}(\q, \{A(x_1)\}, x_1) = \{(\emptyset, x_1)\}$ and $\mathcal{F}(\q, \{B(x_2), C(x_2)\}, x_2) = \{(\{B(x_2)\}, x_2), (\{C(x_2)\}, x_2) \}$. Therefore, $\mathcal F(\q, \{P(x,x_1), P(x,x_2)\}, x) = \avec{Q}_1 \cup \avec{Q}_2 \cup \avec{Q}'$, where:
\begin{itemize}
\item $\avec{Q}_1$ contains a single structure:\\
\begin{tikzpicture}[grow=right, sloped]
\node[label=left:{$\xrightarrow{x}\{P(x,x_1), P(x,x_2)\}$} ] {}
    child {
        node {$\{ B(x_2), C(x_2)\}$}
            edge from parent[->]
            node[above] {$x_2$}
    }
    child {
        node {$\emptyset$}
            edge from parent[->]
            node[above] {$x_1$}
    };
\end{tikzpicture}
\item $\avec{Q}_2$ contains two structures:\\
\begin{tikzpicture}[grow=right, sloped]
\node[label=left:{$\xrightarrow{x}\{P(x,x_1), P(x,x_2)\}$} ] {}
    child {
        node {$\{ B(x_2)\}$}
            edge from parent[->]
            node[above] {$x_2$}
    }
    child {
        node {$\{A(x_1)\}$}
            edge from parent[->]
            node[above] {$x_1$}
    };

\node[label=left:{$\xrightarrow{x}\{P(x,x_1), P(x,x_2)\}$} ] at (0, -2.2) {}
    child {
        node {$\{ C(x_2)\}$}
            edge from parent[->]
            node[above] {$x_2$}
    }
    child {
        node {$\{A(x_1)\}$}
            edge from parent[->]
            node[above] {$x_1$}
    };
\end{tikzpicture}
\item $\avec{Q}'$ contains two structures:\\
\begin{tikzpicture}[grow=right, sloped]
\node[label=left:{$\begin{aligned}\xrightarrow{x}\{&P(x,y_1), P(y_2,y_1),\\ &P(y_2, y_3), Q(y_2, y_4)\}\end{aligned}$} ] {}
    child {
        node {$\{ A(y_1)\}$}
            edge from parent[->]
            node[above] {$y_1$}
    }
    child {
        node {$\{A(y_3)\}$}
            edge from parent[->]
            node[above] {$y_3$}
    }
    child {
        node {$\{B(y_4), C(y_4)\}$}
            edge from parent[->]
            node[above] {$y_4$}
    };

\node[label=left:{$\begin{aligned}\xrightarrow{x}\{&Q(x,y_1), Q(y_2,y_1),\\ &P(y_2, y_3), Q(y_2, y_4)\}\end{aligned}$} ] at (0, -4) {}
    child {
        node {$\{ B(y_1), C(y_1))\}$}
            edge from parent[->]
            node[above] {$y_1$}
    }
    child {
        node {$\{A(y_3)\}$}
            edge from parent[->]
            node[above] {$y_3$}
    }
    child {
        node {$\{B(y_4), C(y_4)\}$}
            edge from parent[->]
            node[above] {$y_4$}
    };
\end{tikzpicture}
\end{itemize}
\end{example}

We observe the following important property, which can be proved by induction on the construction of (the tree of) $\q$ using Lemma~\ref{th:eli-diamond-1} and Theorem~\ref{th:eli-diamond-2}:
\begin{theorem}\label{th:eli-diamond-3}
Let $\q \in \Qp[\Diamond] \otimes \mathcal{ELI}$ be such that $\textit{cl}(\mathcal A^{\q})$ is a core and $\r$ be the root of $\q$ with a distinguished variable $x$. Then $(i)$ $\mathcal F(\q, \r, x) \subseteq \Qp[\Diamond] \otimes \mathcal{ELI}$; $(ii)$ $\mathcal F(\q, \r, x)$ is polynomial in the size of $\q$; $(iii)$ the following properties hold:
\begin{itemize}
\item $\q \models \q'$ for every $\q' \in \mathcal F(\q, \r, x)$;
\item $\q' \not \models \q$ for every $\q' \in \mathcal F(\q, \r, x)$;
\item if $\q \models \q''$, then either $\q'' \models \q$ or there exits $\q' \in \mathcal F(\q, \r, x)$ such that $\q' \models \q''$, for any $\q'' \in \Qp[\Diamond] \otimes \mathcal{ELI}$.
\end{itemize}
\end{theorem}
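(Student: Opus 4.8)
The plan is to prove the three items simultaneously by induction on the tree representation of $\q$, systematically converting entailment statements into homomorphism statements between the closures $\textit{cl}(\mathcal A^{\cdot})$ via Lemma~\ref{th:eli-diamond-1}. The base case is when $\q$ is a single $\mathcal{ELI}$-node $\xrightarrow{x}\r$: here $\mathcal F(\q,\r,x)=\mathcal F(\r,x)$, and all three properties are inherited directly from the $\mathcal{ELI}$-frontier guaranteed by Theorem~\ref{thm:tenCate}, with Lemma~\ref{th:eli-diamond-1} ensuring that $\mathcal{ELI}$-entailment coincides with $\Qp[\Diamond]\otimes\mathcal{ELI}$-entailment on temporal-operator-free queries. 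Before each inductive step I would invoke Theorem~\ref{th:eli-diamond-2} to replace every child subtree $\q_i$ by an equivalent query whose closure is a core, so that the induction hypothesis (which assumes a core closure) applies to $\q_i$; since the frontier properties are stated up to equivalence, this preprocessing is harmless.

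For the inductive step, write $\q$ as a root $\mathcal{ELI}$-query $\r$ with children $\q_1,\dots,\q_n$ attached at variables $x_1,\dots,x_n$, so that $\mathcal F(\q,\r,x)=\avec{Q}_1\cup\dots\cup\avec{Q}_n\cup\avec{Q}'$. Item $(i)$ is a syntactic check: the elements of $\avec{Q}_i$ replace one subtree $\q_i$ by a member of $\mathcal F(\q_i,\r_i,x_i)$, which lies in $\Qp[\Diamond]\otimes\mathcal{ELI}$ by the induction hypothesis, and replacement plainly preserves tree shape; the elements of $\avec{Q}'$ replace $\r$ by an $\mathcal{ELI}$-frontier element $\r^j$ and re-attach each $\q_{i_\ell}$ at the variable $y_\ell$ with $\textit{orig}(y_\ell)=x_{i_\ell}$. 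The key observation here is that distinct copies of a variable in $\r^j$ carry distinct names, so each copy receives at most one $\Diamond$-child and the path-shapedness constraint of $\Qp[\Diamond]\otimes\mathcal{ELI}$ is respected. Item $(ii)$ follows because $|\mathcal F(\q,\r,x)|=\sum_i|\mathcal F(\q_i,\r_i,x_i)|+|\mathcal F(\r,x)|$ is a polynomial sum by the induction hypothesis and Theorem~\ref{thm:tenCate}, and because the $\mathcal{ELI}$-frontier elements $\r^j$ are of polynomial size, bounding the number of re-attached copies and hence the size of each member of $\avec{Q}'$.

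For item $(iii)$ the first two conditions are the comparatively easy direction. To see $\q\models\q'$ for every $\q'\in\mathcal F(\q,\r,x)$, I would build the required homomorphism on closures by composing the sub-homomorphism supplied by $\q_i\models\avec{f}^{j}$ (induction hypothesis) or by $\r\models\r^j$ (Theorem~\ref{thm:tenCate}) with the identity on the untouched part of the tree, invoking Lemma~\ref{th:eli-diamond-1}. For $\q'\not\models\q$ I would argue by contradiction: a homomorphism from $\textit{cl}(\mathcal A^{\q})$ into $\textit{cl}(\mathcal A^{\q'})$ witnessing the hypothetical $\q'\models\q$, composed with the homomorphism witnessing $\q\models\q'$, would yield an endomorphism of the core $\textit{cl}(\mathcal A^{\q})$ that is not onto (because the frontier element strictly omits something in the relevant subtree or root), contradicting coreness.

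The third, covering condition is the crux. Given $\q''\in\Qp[\Diamond]\otimes\mathcal{ELI}$ with $\q\models\q''$ but $\q''\not\models\q$, Lemma~\ref{th:eli-diamond-1} gives a homomorphism $h\colon\textit{cl}(\mathcal A^{\q''})\rightarrow\textit{cl}(\mathcal A^{\q})$ that, by coreness of $\textit{cl}(\mathcal A^{\q})$ together with $\q''\not\models\q$, cannot be an isomorphism and therefore misses some part of the structure. The plan is to localize this failure: if $h$ already fails to be onto within a single child component, the inductive covering property for $\mathcal F(\q_i,\r_i,x_i)$ produces a member $\avec{f}^{j}$ entailing the corresponding restriction of $\q''$, which reassembles into an element of $\avec{Q}_i$ entailing $\q''$; if instead the deficiency is at the root, the $\mathcal{ELI}$-frontier covering of Theorem~\ref{thm:tenCate} supplies an $\r^j$ whose re-attachment gives an element of $\avec{Q}'$ entailing $\q''$. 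I expect the main difficulty to be exactly this case analysis—cleanly attributing the non-surjectivity of $h$ to one component while controlling the transitive-closure edges that link the temporal levels, and verifying that the reassembled query indeed entails $\q''$. A secondary subtlety, already noted, is maintaining core closures for the subqueries through the recursion so that the induction hypothesis remains applicable.
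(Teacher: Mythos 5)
Your proposal is correct and takes essentially the same route as the paper: the paper's entire ``proof'' is the remark that the theorem ``can be proved by induction on the construction of (the tree of) $\q$ using Lemma~\ref{th:eli-diamond-1} and Theorem~\ref{th:eli-diamond-2}'', which is exactly your plan — induction on the tree representation, translating entailments into homomorphisms between the closures via Lemma~\ref{th:eli-diamond-1}, using Theorem~\ref{th:eli-diamond-2} to keep core closures available for the subqueries, and falling back on the ten Cate--Dalmau frontier (Theorem~\ref{thm:tenCate}) at the $\mathcal{ELI}$ level. You in fact supply more detail than the paper does (including correctly fixing the homomorphism direction, which is typo'd in the paper's statement of Lemma~\ref{th:eli-diamond-1}), and you accurately identify the covering condition's case analysis as the genuinely delicate step.
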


\paragraph{Constructing examples.} Given a $\q \in \Qp[\Diamond] \otimes \mathcal{ELI}$, we take $\q'$ from Theorem~\ref{th:eli-diamond-2} and set $E^+_{\q} = \{\D^{\q'}, a^{\q'} \}$. We take $E^-_{\q} = \{ \D^{\q''}, a^{\q''} \mid \q'' \in \mathcal F(\q', \r, x') \}$, where $\r$ is the root of $\q'$ with distinguished variable $x'$. Finally, using Lemma~\ref{th:eli-diamond-1} and Theorems~\ref{th:eli-diamond-2} and~\ref{th:eli-diamond-3}, we obtain the following:
\begin{theorem}
$(E^+_{\q}, E^-_{\q})$ uniquely characterises $\q$, for any $\q \in \Qp[\Diamond] \otimes \mathcal{ELI}$.
\end{theorem}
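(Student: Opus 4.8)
The plan is to run the standard frontier-based uniqueness argument of Theorem~\ref{thm:tenCate0}, now on top of the 2D frontier $\mathcal{F}$ and the homomorphism characterisation of entailment supplied by Lemma~\ref{th:eli-diamond-1} together with Theorems~\ref{th:eli-diamond-2} and~\ref{th:eli-diamond-3}. First I would reduce to the core case. Given an arbitrary $\q \in \Qp[\Diamond]\otimes\mathcal{ELI}$, Theorem~\ref{th:eli-diamond-2} produces a query $\q'$ with $\textit{cl}(\mathcal{A}^{\q'})=\mathcal{C}$, a core pointed substructure of $\textit{cl}(\mathcal{A}^{\q})$. Since a core is homomorphically equivalent to the structure of which it is a core, Lemma~\ref{th:eli-diamond-1} gives $\q\equiv\q'$. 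As unique characterisation is only up to equivalence, it suffices to characterise $\q'$; the point of this reduction is that now $\textit{cl}(\mathcal{A}^{\q'})$ is a core, so Theorem~\ref{th:eli-diamond-3} is applicable to $\q'$, and $E=(E^+_{\q},E^-_{\q})$ is exactly the example set built from $\q'$ and $\mathcal{F}(\q',\r,x')$.

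Next I would verify that $\q'$ fits $E$. For the single positive example, Lemma~\ref{th:eli-diamond-1} (taking both arguments equal to $\q'$) gives $\D^{\q'},a^{\q'}\models\q'$ iff $\q'\models\q'$, which holds trivially. For each negative example $\D^{\q''},a^{\q''}$ with $\q''\in\mathcal{F}(\q',\r,x')$, Lemma~\ref{th:eli-diamond-1} gives $\D^{\q''},a^{\q''}\models\q'$ iff $\q''\models\q'$, and the second bullet of Theorem~\ref{th:eli-diamond-3} states $\q''\not\models\q'$; hence no negative example satisfies $\q'$.

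For uniqueness, let $\q^{*}\in\Qp[\Diamond]\otimes\mathcal{ELI}$ also fit $E$; I claim $\q^{*}\equiv\q'$. From $\q^{*}$ fitting the positive example and Lemma~\ref{th:eli-diamond-1}, $\D^{\q'},a^{\q'}\models\q^{*}$ yields $\q'\models\q^{*}$. I would then apply the third (frontier) bullet of Theorem~\ref{th:eli-diamond-3} with the entailed query taken to be $\q^{*}$: either $\q^{*}\models\q'$, whence $\q^{*}\equiv\q'$ and we are done, or there is some $\q'''\in\mathcal{F}(\q',\r,x')$ with $\q'''\models\q^{*}$. In the latter case Lemma~\ref{th:eli-diamond-1} gives $\D^{\q'''},a^{\q'''}\models\q^{*}$; but $\D^{\q'''},a^{\q'''}\in E^-_{\q}$, contradicting that $\q^{*}$ fits $E^-_{\q}$. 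Therefore $\q^{*}\models\q'$, so $\q^{*}\equiv\q'\equiv\q$, and $E$ uniquely characterises $\q$.

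The main obstacle is not located in this final argument: once the three frontier properties and the entailment/homomorphism bridge are in hand, the characterisation is the textbook frontier argument lifted verbatim to the temporal setting. The real work sits in the already-granted results, namely establishing that the inductively defined $\mathcal{F}(\q',\r,x')$ is genuinely a frontier that stays inside $\Qp[\Diamond]\otimes\mathcal{ELI}$ and is of polynomial size (Theorem~\ref{th:eli-diamond-3}), and the core reduction of Theorem~\ref{th:eli-diamond-2}. The latter is essential rather than cosmetic: without first passing to a query whose temporal closure is a core, the second frontier condition $\q''\not\models\q'$ can fail, and both the fitting of $E^-_{\q}$ and the contradiction step in the uniqueness argument would break down.
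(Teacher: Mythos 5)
Your proof is correct and is precisely the argument the paper intends: the paper states this theorem as an immediate consequence of Lemma~\ref{th:eli-diamond-1} and Theorems~\ref{th:eli-diamond-2} and~\ref{th:eli-diamond-3}, and you fill in exactly that derivation (core reduction to make Theorem~\ref{th:eli-diamond-3} applicable, fitting via the entailment/homomorphism bridge plus the second frontier property, and uniqueness via the third frontier property with the negative examples supplying the contradiction). No gaps.
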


For $\q \in   \Qp[\nxt] \otimes \mathcal{ELI}$ the construction of $\mathcal F(\q, \r)$ is even easier. Indeed, it is sufficient to take $\textit{cl}(\mathcal A^{\q}) = \mathcal A^{\q}$. This completes the proof of Theorem~\ref{thm:firstone} $(ii)$.

\subsection{Proof of Theorem~\ref{thm:badd}}

	For $n \ge 1$, $1 \le i \le n$, $j \le i$, define $\Qp[\mathcal{EL}/\nxt,\Diamond]$-queries $\q_{i,j}^n$ recursively by taking:
	\begin{align*}
		& \q^n_{ii} = \r_i^n, \quad \q_{i,j-1}^n = B \land \nxt (B \land \Diamond (A \land \nxt \q_{i,j}^n)),\\
		& \r_n^n = \Diamond (B \land \Diamond A), \ \ \r_l^n = \Diamond (B \land \Diamond (A \land \nxt \s_{n-l})), \ \ l < n,\\
		& \s_1 = B \land \nxt (B \land \Diamond A), \ \ \s_{i+1} = B \land \nxt (B \land \Diamond (A \land \nxt \q_{i}))
	\end{align*}
	and set $\q_i^n = \q_{i,1}^n$. Minimal models of $\q_i^n$ look as follows:
	$$
	\underbrace{BB\emptyset^*A}_1 \dots \underbrace{BB\emptyset^*A}_{i-1} \underbrace{\emptyset^*B\emptyset^*A}_i \underbrace{BB\emptyset^*A}_{i+1}  \dots \underbrace{BB\emptyset^*A}_n
	$$
	%
	%
	Consider the $\Qp[\mathcal{EL}/\nxt,\Diamond]$-query $\q = \exists P.\q_1^n \land \dots \land \exists P.\q_n^n$. We claim that any unique characterisation of $\q$ contains at least $2^n$ positive examples (in $E^+$). Indeed, let $\mathcal Q$ be the set of all queries of the form $\q \land \exists P.\s$ with
	$$
	\s = \op_1 (B \land \Diamond (A \land \op_2 (B \land \Diamond (A \land \dots \land \op_n (B \land \Diamond A) \dots) ))),
	$$
	where each $\op_i$ is either $\nxt$ or $\Diamond \nxt$ if $i> 1$, and either blank or $\Diamond$ if $i = 1$. Observe that $|\mathcal Q| = 2^n$ and $\q' \models \q$ for each $\q' \in \mathcal Q$. On the other hand, $\q \not \models \q'$. Indeed, let $I_0$ be the set of indices $i$ such that $\op_i = \nxt$ and $i > 1$ or $\op_i$ is blank and $i = 1$. Let $I_1 = [1,n] \setminus I_0$. Define a data instance $\D_{I_0}$ with $\ind(\D_{I_0}) = \{a_0, \dots, a_n\}$ and $\max (\D_{\I_0}) = 3n-1$ as follows. To begin with, $\D_{I_0}$ contains $P(a_0, a_1, 0), \dots, P(a_0, a_n, 0)$. For $i \in [1, n]$ and $k \in [0, 3n-1]$, $\D_{I_0}$ contains $A(a_i, k)$ if $k \equiv 2 \,(\text{mod}\, 3)$, and $B(a_i, k)$ if $k \not \in [3(i-1), 3i)$ and $k \not \equiv 2 \,(\text{mod}\, 3)$. Also, for $i \in I_0$ and $k \in [3(i-1), 3i)$, $\D_{I_0}$ contains  $B(a_i, k)$ if $k \equiv 1 \,(\text{mod}\, 3)$. Finally, for each $i \in I_1$ and $k \in [3(i-1), 3i)$, it has $B(a_i, k)$ if $k \equiv 0 \,(\text{mod}\, 3)$. For example, $\D_{I_0}$ for $n=3$ and $I_0 = \{1,3\}$ is shown below:\\
	\begin{tikzpicture}[>=latex, yscale=0.7, xscale=.75, semithick,
		spoint/.style={circle,fill=gray,draw=black,minimum size=1.3mm,inner sep=0pt},
		qpoint/.style={circle,fill=white,draw=black,minimum size=2mm,inner sep=0pt},
		graypoint/.style={circle,fill=white,draw=gray,minimum size=1.1mm,very thin,inner sep=0pt}]\footnotesize
		\begin{scope}[ultra thin]\small
			\draw[gray] (0.5, 0) -- ++(10,0); \node at (0.2, 0) {$a_0$};
			\draw[gray] (0.5, 1.5) -- ++(10,0); \node at (0.2, 1.5) {$a_1$};
			\draw[gray] (0.5, 3) -- ++(10,0); \node at (0.2, 3) {$a_2$};
			\draw[gray] (0.5, 4) -- ++(10,0); \node at (0.2, 4) {$a_3$};
			%
			\draw[gray] (1,-0.5) -- ++(0,5.5); \node at (1,-0.8) {\scriptsize $0$};
			\draw[gray] (2,-0.5) -- ++(0,5.5); \node at (2,-0.8) {\scriptsize $1$};
			\draw[gray] (3,-0.5) -- ++(0,5.5); \node at (3,-0.8) {\scriptsize $2$};
			\draw[gray] (4,-0.5) -- ++(0,5.5); \node at (4,-0.8) {\scriptsize $3$};
			\draw[gray] (5,-0.5) -- ++(0,5.5); \node at (5,-0.8) {\scriptsize $4$};
			\draw[gray] (6,-0.5) -- ++(0,5.5); \node at (6,-0.8) {\scriptsize $5$};
			\draw[gray] (7,-0.5) -- ++(0,5.5); \node at (7,-0.8) {\scriptsize $6$};
			\draw[gray] (8,-0.5) -- ++(0,5.5); \node at (8,-0.8) {\scriptsize $7$};
			\draw[gray] (9,-0.5) -- ++(0,5.5); \node at (9,-0.8) {\scriptsize $8$};
			\draw[gray] (10,-0.5) -- ++(0,5.5); \node at (10,-0.8) {\scriptsize $\dots$};
			
		\end{scope}
		%
		\node[spoint] (a1) at (1,0) {};
		\node[spoint] (b1) at (1,1.5) {};
		\node[spoint] (c1) at (1,3) {}; \node at (.7,3) {$B$};
		\node[spoint] (d1) at (1,4) {}; \node at (.7,4) {$B$};
		
		\node (a2) at (2,0) {};
		\node (b2) at (2,1.5) {$B$};
		\node (c2) at (2,3) {$B$};
		\node (d2) at (2,4) {$B$};
		
		\node (a3) at (3,0) {};
		\node (b3) at (3,1.5) {$A$};
		\node (c3) at (3,3) {$A$};
		\node (d3) at (3,4) {$A$};
		
		\node (a4) at (4,0) {};
		\node (b4) at (4,1.5) {$B$};
		\node (c4) at (4,3) {$B$};
		\node (d4) at (4,4) {$B$};
		
		\node (a5) at (5,0) {};
		\node (b5) at (5,1.5) {$B$};
		\node (c5) at (5,3) {};
		\node (d5) at (5,4) {$B$};
		
		\node (a6) at (6,0) {};
		\node (b6) at (6,1.5) {$A$};
		\node (c6) at (6,3) {$A$};
		\node (d6) at (6,4) {$A$};
		
		\node (a7) at (7,0) {};
		\node (b7) at (7,1.5) {$B$};
		\node (c7) at (7,3) {$B$};
		\node (d7) at (7,4) {};
		
		\node (a8) at (8,0) {};
		\node (b8) at (8,1.5) {$B$};
		\node (c8) at (8,3) {$B$};
		\node (d8) at (8,4) {$B$};
		
		\node (a9) at (9,0) {};
		\node (b9) at (9,1.5) {$A$};
		\node (c9) at (9,3) {$A$};
		\node (d9) at (9,4) {$A$};
		\begin{scope}[ultra thick]
			\draw[->] (a1) to  node[midway,left] {$P$} (b1);
			\draw[->,bend right] (a1) to  node[near end,left] {$P$} (c1);
			\draw[->,bend right] (a1) to  node[midway,right] {$P$} (d1);

		\end{scope}
	\end{tikzpicture}
	Then $\D_{I_0},a_0,0 \models \q$ but $\D_{I_0},a_0,0 \not \models \q'$. To illustrate, take $\s = B \land \Diamond (A \land \land \Diamond \nxt (B \land \Diamond (A \land \nxt (B \land \Diamond A))))$, for which $\q' = \q \land \exists P.\s \in \mathcal Q$ with $I_0 = \{1,3\}$. One can readily see that $\q'$ cannot be satisfied at $a_0,0$ in $\D_{I_0}$ depicted above.
	
	To\ complete the proof that $|E^+| \ge 2^n$, we prove the following property (cf. Example~\ref{thm:superpolb}):
	\begin{equation*}
		\D \models \q \ \ \&\ \ \D \not\models \q' \ \ \& \ \ (\q'\ne \q'') \ \ \Rightarrow \ \ \D \models \q'',
	\end{equation*}
	for all data instances $\D$ and $\q', \q'' \in \mathcal Q$. Indeed, take an arbitrary $\q$ such that $\D \models \q$ and let $\{a_0, \dots a_n\} \subseteq \ind(\D)$ be the elements such that $P(a_0, a_i, 0)$ is in $\D$ and $\D, a_i, 0 \models \q_i^n$, for each $1 \leq i \leq n$. Let $(\avec{K}_{i,1}, \dots, \avec{K}_{i,n})$ be a vector of pairs/triples over $\mathbb N$ such that, for $j \neq i$, $\avec{K}_{i,j} = (b, b', a)$ and $\avec{K}_{i,i} = (b,a)$, where:
	\begin{itemize}
		\item $b < b' < a$, $b < a$ in every $\avec{K}_{i,j}$;
		
		\item $b$ from $\avec{K}_{i,j}$ is strictly greater than $a$ from $\avec{K}_{i,j-1}$;
		
		\item $a$ from $\avec{K}_{i,j}$ is strictly smaller than $b$ from $\avec{K}_{i,j+1}$;
		
		\item $\D, a_i, b \models B$, $\D, a_i, b' \models B$, $\D, a_i, a \models A$ for all $(b, b', a) \in \avec{K}_{i,j}$;
		
		\item $\D, a_i, b \models B$, $\D, a_i, a \models A$ for all $(b, a) \in \avec{K}_{i,i}$.
	\end{itemize}
	Note that, for each $1 \leq i \leq n$, a vector $(\avec{K}_{i,1}, \dots, \avec{K}_{i,n})$ as above exists.
	Take $\q' \in \mathcal Q$ such that $\D \not\models \q'$ and let $I_0$ be the set corresponding to $\q'$. We observe that $\D, a_i, 0 \not \models \s$, for all $1 \leq i \leq n$. Therefore, for any $i \in I_0$ and any vector $(\avec{K}_{i,1}, \dots, \avec{K}_{i,n})$, we have that $b > a+1$ for $a$ from $\avec{K}_{i,i-1}$ and $b$ from $\avec{K}_{i,i}$, if $i > 0$, and $b > 0$ for $b$ from $\avec{K}_{i,i}$, if $i = 1$. Similarly, for any $i \in [1,n] \setminus I_0$ and any vector $(\avec{K}_{i,1}, \dots, \avec{K}_{i,n})$, we have that $b = a+1$ for $a$ from $\avec{K}_{i,i-1}$ and $b$ from $\avec{K}_{i,i}$, if $i > 0$, and $b = 0$ for $b$ from $\avec{K}_{i,i}$, if $i = 1$. Now, take an arbitrary $\q'' = \q \land \exists P.\s'' \in \mathcal Q$ such that $\q'' \neq \q'$ and let $J_0$ be the set corresponding to it. Clearly, $J_0 \neq I_0$. Suppose first that there exists $i \in J_0 \setminus I_0$. Therefore, $i \in [1,n] \setminus I_0$ and then, as it is easy to see, $\D, a_i, 0 \models \s''$. Thus, $\D, a_0, 0 \models \q''$ as was required. Now, if there is $i \in I_0 \setminus J_0$, the proof is analogous and left to the reader.

\subsection{Proof of Theorem~\ref{thm:nextdiamond2}}
Recall that we consider the class
$\Qp[\nxt,\Diamondw](\mathcal{ELI})$ of queries of the form
\begin{equation}\label{dnpathappendix}
	\q = \r_0 \land \op_1 (\r_1 \land \op_2 (\r_2 \land \dots \land \op_n \r_n) ),
\end{equation}
where the $\r_{i}$ are $\mathcal{ELI}$-queries and $\op_i \in \{\nxt,\Diamond,\Diamondw\}$. We first generalise the
normal form introduced for $\mathcal{Q}_{p}[\Diamondw,\nxt]$.
Any $\q$ in $\Qp[\nxt,\Diamondw](\mathcal{ELI})$
can be represented as a sequence
$$
\r_{0}(t_0), R_{1}(t_0,t_1), \dots, \r_{n-1}(t_{n-1}), R_{n}(t_{n-1},t_n),\r_{n}(t_n),
$$
where $R_{i}\in \{\suc,<,\leq\}$ and $\r_i$ is an $\mathcal{ELI}$-query. As before, we divide $\q$ into \emph{blocks} $\q_i$ such that
\begin{align}\label{fullq2appendix}
	\q = \q_{0} \mathcal{R}_{1} \q_{1} \dots \mathcal{R}_{n} \q_{n}
\end{align}
with $\mathcal{R}_{i} = R_{1}^{i}(t_{0}^{i},t_{1}^{i}) \dots  R_{n_{i}}^{i}(t_{n_{i}-1}^{i},t_{n_{i}}^{i})$, for \mbox{$R_{j}^{i}\in \{<,\leq\}$},
\begin{align*}
	\q_{i}= \r_{0}^{i}(s_{0}^{i}) \suc (s_{0}^{i},s_{1}^{i}) \r_{1}^{i}(s_{1}^{i}) \dots \suc(s_{k_{i}-1}^{i},s_{k_{i}}^{i}) \r_{k_{i}}^{i}(s_{k_{i}}^{i})
\end{align*}
and $s_{k_{i}}^{i}=t_{0}^{i+1}$, $t_{n_{i}}^{i}=s_{0}^{i}$. If $k_{i}=0$, the block $\q_{i}$ is \emph{primitive}. A primitive block $\q_{i}=\r_{0}^{i}(s_{0}^{i})$ with $i>0$ such that $\r_{0}^{i}$ is not equivalent to a conjunction of $\mathcal{ELI}$-queries that are not equivalent to $\r_{0}^{i}$ is called a \emph{lone conjunct}.
Now, we say that $\q$ is in \emph{normal form} if the
following conditions hold:
\begin{description}
	\item[(n1$'$)] $\r_{0}^{i}\not\equiv\top$ if $i>0$, and $\r_{k_{i}}^{i}\not\equiv\top$ if $i>0$ or $k_{i}>0$
	(thus, of all the first/last $\r$ in a block only $\r_0^0$ can be trivial);
	
	\item[(n2$'$)] each $\mathcal{R}_{i}$ is either a single $t_{0}^{i}\leq t_{1}^{i}$ or a sequence of $<$;
	
	\item[(n3$'$)] $\r_{k_{i}}^{i}\not\models \r_{0}^{i+1}$ if $\q_{i+1}$ is primitive and $R_{i+1}$ is $\le$;
	
	\item[(n4$'$)] $\r_{0}^{i+1}\not\models\r_{k_{i}}^{i}$ if $i>0$, $\q_{i}$ is primitive and $R_{i+1}$ is $\le$.
\end{description}
\begin{lemma}
	Every query in $\Qp[\nxt,\Diamondw](\mathcal{ELI})$ is equivalent to a query in normal form that can be computed in polynomial time.
\end{lemma}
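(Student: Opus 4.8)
The plan is to mirror the one-dimensional normal-form construction of Section~\ref{Sec:weak-path}, replacing every set-theoretic test on the $\rho_i$ by an entailment test on the $\mathcal{ELI}$-queries $\r_i$. Such tests remain tractable, since $\r\models\r'$ holds iff there is a homomorphism from the tree-shaped data instance $\hat{\r}'$ into $\hat{\r}$, and checking whether a tree-shaped structure maps homomorphically into another is polynomial. First I would translate $\q$ into its CQ representation $\r_0(t_0)R_1(t_0,t_1)\dots R_n(t_{n-1},t_n)\r_n(t_n)$, with $\nxt,\Diamond,\Diamondw$ becoming $\suc,<,\leq$, and cut it into \emph{blocks} $\q_i$ at every relation that is not $\suc$. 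Thus each block is a maximal $\suc$-chain and every gap $\mathcal{R}_i$ is a word over $\{<,\leq\}$ only. This structural fact is exactly what keeps the procedure inside the query class: immediately after a primitive block the connecting relation is never $\suc$.

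Then I would normalise by a terminating system of rewrite rules, one for each violated condition. For {\bf (n2$'$)} I would compose each gap $\mathcal{R}_i$ using $\Diamondw\Diamondw\equiv\Diamondw$ and $\Diamondw\Diamond\equiv\Diamond\Diamondw\equiv\Diamond$, so that a chain with $k\ge 1$ strict steps collapses to $k$ consecutive $<$ and a chain of $\leq$ only collapses to a single $\leq$. For {\bf (n1$'$)} I would absorb a trivial boundary label of a block into the adjacent gap via the absorption laws $\Diamondw\nxt\equiv\Diamond\equiv\nxt\Diamondw$ (and their strict analogues), re-running the gap collapse afterwards. For {\bf (n3$'$)} and {\bf (n4$'$)}, whenever a primitive block $\q_{i+1}$ (resp.\ $\q_i$) is attached by $\le$ and $\r_{k_i}^i\models\r_0^{i+1}$ (resp.\ $\r_0^{i+1}\models\r_{k_i}^i$), I would delete that primitive block and merge the two surrounding gaps. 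Soundness of the deletion rests on the equivalences
\[
\r\wedge\Diamondw\r'\equiv\r,\qquad \r\wedge\Diamondw(\r'\wedge\Diamondw\chi)\equiv\r\wedge\Diamondw\chi,\qquad \r\wedge\Diamondw(\r'\wedge\Diamond\chi)\equiv\r\wedge\Diamond\chi,
\]
all valid when $\r\models\r'$: reflexivity of $\Diamondw$ lets the witness for $\r'$ be taken at the current position, and the continuation after the deleted block is (since gaps contain no $\suc$) always $\top$, a $\Diamondw$, or a strict $\Diamond$-chain, i.e.\ exactly the cases covered above.

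For termination and the complexity bound I would observe that each rule strictly decreases the size $|\q|$: every rule removes at least one label or one temporal operator and never introduces new ones. Hence at most $|\q|$ rewrites occur, each step preserving equivalence, and applying the rules until none is enabled yields a query for which all of {\bf (n1$'$)}--{\bf (n4$'$)} hold. Since each rewrite is driven by at most polynomially many $\mathcal{ELI}$-entailment checks of polynomial cost, the whole procedure runs in polynomial time (the $1$D bound is merely linear precisely because there the inclusion tests are constant-time).

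I expect the main obstacle to be the soundness of {\bf (n3$'$)}/{\bf (n4$'$)}: one must pin down the exact equivalence used to delete a subsumed primitive block and verify that the result stays path-shaped and in $\Qp[\nxt,\Diamondw](\mathcal{ELI})$. This is the most delicate point in the $2$D setting, because the naive ``drop the weaker conjunct'' step \emph{fails} when the continuation is $\nxt$ (as witnessed by $A\wedge\Diamondw(A\wedge\nxt C)\not\equiv A\wedge\Diamond C$), and the argument only goes through because the block decomposition guarantees that this $\nxt$ case never arises at a block boundary. A secondary point is to confirm that the $\mathcal{ELI}$-entailment checks are genuinely polynomial, which follows from tree-shaped homomorphism checking.
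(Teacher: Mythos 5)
Your proposal is correct, and it in fact supplies a proof that the paper itself omits: both the 1D normal-form lemma in Section~\ref{Sec:weak-path} and this 2D version are stated without proof, so the only comparison available is with the paper's evident intent (replace the set-inclusion tests of the 1D normal form by $\mathcal{ELI}$-entailment tests, decided via homomorphisms between tree-shaped instances), which is exactly the route you take. Your block/gap decomposition matches the paper's, the gap-collapse laws for \textbf{(n2$'$)}, the absorption laws for \textbf{(n1$'$)}, and the subsumption-deletion equivalences for \textbf{(n3$'$)} are all valid, and the size-decreasing termination argument yields the polynomial bound (with each rewrite triggered by polynomially many polynomial-time entailment checks). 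The one point left implicit in your write-up is the soundness of the \textbf{(n4$'$)} deletion: your displayed equivalences handle \textbf{(n3$'$)}, where the deleted (later) primitive block's witness is taken at the current position, whereas deleting the earlier primitive block $\q_i$ requires the variant $\Diamondw(\r\wedge\Diamondw(\r'\wedge\chi))\equiv\Diamondw(\r'\wedge\chi)$ when $\r'\models\r$ (and its analogue with a strict $\Diamond$-chain in place of the outer $\Diamondw$), in which the weaker label sits under the \emph{preceding} gap operator. This is precisely why \textbf{(n4$'$)} carries the side condition $i>0$: at $i=0$ the deletion is unsound, e.g.\ $A\wedge\Diamondw(A\wedge B)\not\equiv\Diamondw(A\wedge B)$, and your rule avoids this case only implicitly, because a first block has no preceding gap to merge; spelling this out would close the argument completely.
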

We call a query in $\Qp[\nxt,\Diamondw](\mathcal{ELI})$ \emph{safe} if it is equivalent to a query in normal form in $\Qp[\nxt,\Diamondw](\mathcal{ELI})$
without lone conjuncts.

\medskip
\noindent
{\bf Theorem~\ref{thm:nextdiamond2}.}
{\em
	$(i)$ A query $\q\in \Qp[\nxt,\Diamondw](\mathcal{ELI})$ is uniquely characterisable within $\Qp[\nxt,\Diamondw](\mathcal{ELI})$ iff $\q$ is safe.
	
	$(ii)$ Those queries that are uniquely characterisable within $\Qp[\nxt,\Diamondw](\mathcal{ELI})$ are actually polynomially characterisable within $\Qp[\nxt,\Diamondw](\mathcal{ELI})$.
	
	$(iii)$ The class $\Qp[\nxt,\Diamondw](\mathcal{ELI})$ is polynomially characterisable for bounded query size.
	
	$(iv)$ The class $\Qp[\nxt,\Diamond](\mathcal{ELI})$ is polynomially characterisable.
}

\medskip
We show how the construction of positive and negative examples provided in the proof of Theorem~\ref{thm:nextdiamond} can be generalised.

Suppose $\q$ in normal form~\eqref{fullq2appendix} does not contain lone conjuncts.
Let $b$ be again the number of $\nxt$ and $\Diamond$ in $\q$ plus 1.
We construct $E = (E^+, E^-)$ characterising $\q$ as follows. Pick an individual name $a$ and let, for every $\mathcal{ELI}$-query $\r$ in $\q$, $\hat{\r}$ denote the pointed tree-shaped data instance defined by $\r$ (note that we take the same individual $a$ for every $\r$).

For each block $\q_i$ in~\eqref{fullq2appendix}, we take two
temporal data instances
$$
\qw_{i} = \hat{\r}_{0}^{i}, \dots \hat{\r}_{k_{i}}^{i}
$$
$$
\qw_{i} \Join \qw_{i+1} = \hat{\r}_{0}^{i}, \dots \hat{\r}_{k_{i}}^{i}\cup \hat{\r}_{0}^{i+1}, \dots \hat{\r}_{k_{i+1}}^{i+1}.
$$
The set $E^{+}$ contains the data instances given by
\begin{itemize}
	\item[--] $\D_{b} = \qw_{0} \emptyset^{b} \dots \qw_{i} \emptyset^{b} \qw_{i+1} \dots \emptyset^{b} \qw_{n}$,
	
	\item[--] $\D_{i} = \qw_{0} \emptyset^{b} \dots \qw_{i} \! \Join \! \qw_{i+1} \dots \emptyset^{b} \qw_{n}$ if $\mathcal{R}_{i+1}$ is $\leq$,
	
	\item[--] $\D_{i} = \qw_{0} \emptyset^{b} \dots \qw_{i} \emptyset^{n_{i+1}} \qw_{i+1} \dots \emptyset^{b} \qw_{n}$ otherwise.
\end{itemize}
The set $E^{-}$ contains all data instances of the form
\begin{itemize}
	\item[--] $\D_i^- = \qw_{0} \emptyset^{b} \dots \qw_{i} \emptyset^{n_{i+1} - 1} \qw_{i+1} \dots \emptyset^{b} \qw_{n}$ if $n_{i+1} > 1$;
	
	\item[--] $\D^-_i = \qw_{0} \emptyset^{b} \dots \qw_{i} \! \Join \! \qw_{i+1} \dots \emptyset^{b} \qw_{n}$ if $\mathcal{R}_{i+1}$ is a single $<$,
\end{itemize}
and also the data instances obtained from $\D_{b}$ by
\begin{description}
	\item[\rm (a)] replacing $\hat{\r}^i_j \ne \emptyset$ by an element of $\mathcal{F}(\hat{\r}^i_j)$ or removing the whole $\hat{\r}^i_j = \emptyset$, for $i \ne 0$ and $j \ne 0$, from some $\qw_i$;
	
	\item[\rm (b)] replacing $\qw_i = \hat{\r}_0^i, \dots \hat{\r}^i_l,\hat{\r}^i_{l+1} \dots \hat{\r}_{k_i}^i$ ($k_i > 0$) by $\qw'_i \emptyset^b \qw''_i$, where $\qw'_i = \hat{\r}_0^i \dots \hat{\r}^i_l$, $\qw''_i = \hat{\r}^i_{l+1} \dots \hat{\r}_{k_i}^i$ and $l \ge 0$;
	
	
	\item[\rm (c)] replacing some $\hat{\r}_l^i \ne \emptyset$, $0 < l < k_i$, by $\hat{\r}_l^i \emptyset^b \hat{\r}_l^i$;
	
	
	\item[\rm (d)] replacing $\hat{\r}^i_{k_i}$ ($k_i > 0$)
	with $\mathcal{A}\emptyset^b \hat{\r}^i_{k_i}$, for some $\mathcal{A}\in \mathcal{F}(\hat{\r}^i_{k_i})$, or replacing $\hat{\r}^i_{0}$ ($k_i > 0$) with $\hat{\r}^i_{0}  \emptyset^b \mathcal{A}$, for some $\mathcal{A}\in \mathcal{F}(\hat{\r}^i_{0})$;
	\item[\rm (e)] replacing $\hat{\r}_0^0 \ne \emptyset$ with $\mathcal{A}\emptyset^{b}\hat{\r}^{0}_{0}$, for $\mathcal{A}\in \mathcal{F}(\hat{\r}^0_0)$, if $k_0 = 0$, and with $\hat{\r}_0^0 \emptyset^b \hat{\r}_0^0$ if $k_0 > 0$.
\end{description}
Let $\q$ be of the form \eqref{fullq2appendix}. We generalise the notion of a homomorphism as follows. Let $\mathcal{D}=\mathcal{A}_{0},\ldots,\mathcal{A}_{n}$
and $a\in \ind(\D)$.
A mapping $h$ from the set $\var(\q)$ of variables in $\q$ to $[0,\len(\D)]$ is a \emph{generalised homomorphism} from $\q$ to $\mathcal{D}$ for $a$ if
$h(t_{0})=0$, $\mathcal{A}_{h(t)}\models \r(a)$ if $\r(t)$, $h(t') = h(t) +1$ if $\suc(t,t') \in \q$, and $h(t)\, R\, h(t')$ if $R(t,t') \in \q$ for $R \in \{<,\leq\}$. Then one can show that $\mathcal{D},a,0\models \q$ if there exists
a generalised homomorphism from $\q$ to $\D$ for $a$.

It is now almost trivial to extend the proof of Theorem~\ref{thm:nextdiamond}
to a proof of Theorem~\ref{thm:nextdiamond2} by replacing homomorphisms by generalised homomorphisms. For example, block surjectivity is generalised in a straightforward way as follows: a generalised homomorphism $h \colon \q' \to D_{b}$ is \emph{block surjective} if every point in every block $\qw_i$ of $\D_{b}$ is in the range $\textit{ran}(h)$ of $h$. To define type surjectivity let, for $\ell\in \textit{ran}(h)$, $\r_{\ell}$ denote the conjunction of all $\mathcal{ELI}$-queries $\r'$ with $\r'(t)$ in $\q'$ such that $h(t)=\ell$. Clearly $\mathcal{A}_{\ell}\models \r_{\ell}(a)$. $h$ is \emph{type surjective} if $\mathcal{A}\not\models \r_{\ell}(a)$, for every $(\mathcal{A},a)\in \mathcal{F}(\hat{\r})$.

\subsection{Proof of Theorem~\ref{uno2}}
Suppose we are given a $\mathcal{P}^{\Sigma}[\U](\mathcal{EL})$-query
$$
\q= \r_0 \land (\el_1 \U (\r_1 \land ( \el_2 \U ( \dots (\el_n \U \r_n) \dots )))).
$$
%
To show that it is uniquely characterised by the polynomial-size example set with $(\mathfrak p'_0)$--$(\mathfrak p'_2)$ and $(\mathfrak n'_0)$--$(\mathfrak n'_2)$, consider any $\mathcal{Q}^{\Sigma}[\U](\mathcal{EL})$-query
$$
\q' = \r'_0 \land (\el'_1 \U (\r'_1 \land ( \el'_2 \U ( \dots (\el'_m \U \r'_m) \dots )))).
$$
such that $\q\not\equiv\q'$.

We now define a map $f$ that reduces the 2D case to the 1D case. Consider the alphabet
$$
\Gamma =\{\r_0,\dots,\r_n,\el_1,\dots,\el_n,\r'_0,\dots,\r_m,\el'_1,\dots,\el'_m, \}\setminus\{\bot\}
$$
in which we regard the $\mathcal{EL}$-queries $\r_i,\el_i,\r'_j,\el'_j$ as symbols.  Let $\hat{\Gamma}=\{(\hat{\avec a},a) \mid \avec a \in\Gamma\}$, that is, $\hat{\Gamma}$ consists of the pointed databases corresponding to the $\mathcal{EL}$-queries $\avec{a} \in \Gamma$.

For any $\mathcal{EL}$ instance query $\avec{a}$, we set
$$
f(\avec{a}) = \{\avec{b} \in \Gamma \mid (\hat{\avec{a}},a) \models \avec{b}\}.
$$
Similarly, for any $\mathcal{EL}$ pointed data instance $(\A,a)$, we set
$$
f(\A,a) = \{\avec{b} \in \Gamma \mid (\A,a) \models \avec{b}\}
$$
and, for any temporal data instance $\D = (\delta_0,\dots,\delta_k)$ with $\mathcal{EL}$ pointed data instances $(\delta_i,a_i)$, set
$$
f(\D) = (f(\delta_0,a_0), \dots, f(\delta_k,a_k)),
$$
which is an \LTL{}-data instance over the signature $\Gamma$. Finally, we define a query
$$
f(\q) = \rho_0 \land (\lambda_1 \U (\rho_1 \land (\lambda_2 \U( \dots (\lambda_n \U \rho_n) \dots ))))
$$
by taking $\rho_i = f(\r_i)$ and  $\lambda_i = f(\el_i)$, and similarly for $\q'$.

By definition, $f(\q)$ is a $\mathcal{P}^\Gamma[\U]$-query (indeed, since $\hat\r_i,a\not\models\el_i$, we have $\el_i\in f(\el_i)\setminus f(\r_i)$, and since $\hat\el_i,a\not\models\r_i$, we have $\r_i\in f(\r_i)\setminus f(\el_i)$), and $f(\q')$ is a $\mathcal{Q}_p^\Gamma[\U]$-query such that $f(\q) \not\equiv f(\q')$: it follows immediately from the definition that, for any data instance $\D$, we have $\D \models \q$ iff $f(\D) \models f(\q)$ and similarly for $\q'$. By Theorem~\ref{uno}, $\q$ and $\q'$ are separated by the corresponding example set with $(\mathfrak p_0)$--$(\mathfrak p_2)$ and $(\mathfrak n_0)$--$(\mathfrak n_2)$. Notice that the positive examples from $(\mathfrak p_0)$--$(\mathfrak p_2)$ are exactly the $f$-images of the examples $(\mathfrak p'_0)$--$(\mathfrak p'_2)$. So if $f(\q)$ and $f(\q')$ are separated by some $\D$ from $(\mathfrak p_0)$--$(\mathfrak p_2)$, the corresponding member of $(\mathfrak p'_0)$--$(\mathfrak p'_2)$ separates $\q$ and $\q'$.

So suppose $f(\q)$ and $f(\q')$ are separated by some $\D$ from $(\mathfrak n_0)$--$(\mathfrak n_2)$. If $\D=\Gamma^n$, then it means that the temporal depth of $f(\q')$ is less than the temporal depth of $f(\q')$, so $m<n$, and the queries $\q$ and $\q'$ are separated by $\mathcal A_\Sigma^n$.

Suppose $\D=\Gamma^{n-i}(\Gamma\setminus \{\avec{a}\})\Gamma^{i}$. Since $\D\not\models f(\q)$, we have $f(\r_i)\not\subseteq\Gamma\setminus\{\avec{a}\}$, and so $\hat\r_i,a\models \avec a$ but $f(\r'_i)\subseteq\Gamma\setminus\{\avec{a}\}$. Then $\hat{\r}'_i,a\not\models \avec a$, and so $\r'_i\not\models\r_i$.  Therefore, there is $(\mathcal A,a)\in \mathcal S(\{\r_i\})$ such that $\mathcal A\models\r'_i(a)$ and $\mathcal A_\Sigma^{n-i}\mathcal A\mathcal A_\Sigma^{i}$ separates $\q$ and $\q'$. 

The cases when $\D$ is from $(\mathfrak n_1)$ or $(\mathfrak n_2)$ are treated in a similar manner.

If $\q\in \mathcal{P}^{\Sigma}[\U](\mathcal{ELI})$, the characterisation is exponential as the size of $(\mathfrak n_0)$--$(\mathfrak n_2)$ is exponential because the exponential size of constructed split partners for $\mathcal{ELI}$-queries.

\subsection{Proof of Theorem~\ref{thm:split}}

{\bf Theorem~\ref{thm:split}.}
{\em Fix $n>0$. For any set $Q$ of $\mathcal{EL}(\Sigma)$-queries with $|Q|\leq n$, one can compute in polynomial time a split partner $\mathcal{S}(Q)$ of $Q$ in $\mathcal{EL}(\Sigma)$.	
For $\mathcal{ELI}$, one can compute a split partner in exponential time, which is optimal as even for singleton sets $Q$ of $\mathcal{ELI}(\Sigma)$-queries, no polynomial-size split partner of $Q$ in $\mathcal{ELI}(\Sigma)$ exists in general.
}
\begin{proof}
	We begin by proving for every set $Q$ of $\mathcal{EL}(\Sigma)$-queries with $|Q|\leq n$, one can compute in polynomial time a split partner $\mathcal{S}(Q)$ of $Q$ in $\mathcal{EL}(\Sigma)$.	We prove the statement for $n=1$, the generalisation is straightforward. Let $Q=\{\q\}$. The construction is by induction over the depth of $\q$.
	Assume $\text{depth}(\q)=0$. Thus $\q=\bigwedge_{i=1}^{k}A_{i}$ with $A_{i}$ atomic concepts. Then let for $i \leq k$:
	\begin{eqnarray*}
		\mathcal{A}_{A_{i}} & = & \{B(a) \mid B \in \Sigma\setminus \{A\}\}\cup\\
		& &  \{ R(a,b), R(b,b) \mid R\in \Sigma\} \cup \\
		&  & \{ B(b) \mid B\in \Sigma\}
	\end{eqnarray*}
	and set $\mathcal{S}(\q) = \{ (\mathcal{A}_{A_{i}},a) \mid 1\leq i \leq k\}$. We show that $\mathcal{S}(\q)$ is as required. Assume
	$$
	\q'=\bigwedge_{i=1}^{m_{1}}B_{i} \wedge \bigwedge_{i=1}^{m_{2}} \exists R_{i}.\q_{i}
	$$
	If $\q'\not\models\q$, then there exist $A_{i}$ with $A_{i}\not\in \{B_{i}\mid 1\leq i \leq m_{1}\}$. Then $\mathcal{A}_{A_{i}} \models \q'(a)$, as required.
	Conversely, if $\mathcal{A}_{A_{i}} \models \q'(a)$ for some $A_{i}$, then $A_{i}\not\in \{B_{i}\mid 1\leq i \leq m_{1}\}$. Hence $\q'\not\models\q$, as required.
	
	\medskip
	
	Assume now that $\text{depth}(\q)=n+1$ and that split partners $\mathcal{S}(\q')$ have been defined for queries $\q'$ of depth $\leq n$. Assume
	$$
	\q=\bigwedge_{i=1}^{n_{1}}A_{i} \wedge \bigwedge_{i=1}^{n_{2}} \exists S_{i}.\q_{i}.
	$$
	Then assume $\mathcal{S}(\q_{i})= \{(\mathcal{A}_{1},a_{1}),\ldots,(\mathcal{A}_{k_{i}},a_{k_{i}})\}$ and let $c_{1},\ldots,c_{k_{i}}$ be fresh individuals. Define for $i \leq n_{2}$ the data instance
	\begin{eqnarray*}
		\mathcal{A}_{i} & = & \{B(a) \mid B \in \Sigma\}\cup\\
		&   &  \{ R(a,b), S(b,b), B(b) \mid R\in \Sigma\setminus\{S_{i}\},B,S\in \Sigma\} \cup \\
		& &  \{ S_{i}(a,c_{j}) \mid 1\leq j \leq k_{i}\} \cup\\
		& &  \mathcal{A}_{1}(c_{1}/a_{1}) \cup \cdots \cup \mathcal{A}_{k_{i}}(c_{k_{i}}/a_{k_{i}})
	\end{eqnarray*}
	with $\mathcal{A}(c/a)$ the result of replacing $a$ by $c$ in $\mathcal{A}$. Let
	$\mathcal{S}(\q)$ be the union of $\mathcal{S}(\bigwedge_{i=1}^{n_{1}}A_{i})$ and $\{(\mathcal{A}_{i},a)\mid 1\leq i\leq n_{2}\}$.
	
	We show that $\mathcal{S}(\q)$ is as required. Assume
	$$
	\q'=\bigwedge_{i=1}^{m_{1}}B_{i} \wedge \bigwedge_{i=1}^{m_{2}} \exists R_{i}.\q_{i}'
	$$
	If $\q'\not\models\q$, then either there exists $A_{i}$ with $A_{i}\not\in \{B_{i}\mid 1\leq i \leq m_{1}\}$ or there exists $\exists S_{i}.\q_{i}$ such that for every
	$R_{j}.\q_{j}$ with $S_{i}=R_{j}$, we have $\q_{j}'\not\models \q_{i}$. In the former case
	$\mathcal{A}_{A_{i}}\models \q'(a)$ and in the latter case, by induction hypothesis,
	$\mathcal{A}_{i}\models \q'(a)$. Conversely, if $\mathcal{A}\not\models \q'(a)$ for some $\mathcal{A},a\in \mathcal{S}(\q)$, then either there exists $A_{i}$ with $A_{i}\not\in \{B_{i}\mid 1\leq i \leq m_{1}\}$ or there exists $\exists S_{i}.\q_{i}$ such that for every
	$R_{j}.\q_{j}$ with $S_{i}=R_{j}$, there exists $\mathcal{A},a\in \mathcal{S}(\q_{i})$ such that $\mathcal{A}\not\models \q_{i}'(a)$. By induction hypothesis, $\q_{i}'\not\models \q_{i}$. But then $\q'\not\models \q$, as required.
	
	\bigskip
	
	We next show that for every set $Q$ of $\mathcal{ELI}(\Sigma)$-queries one can compute in exponential time a split partner $\mathcal{S}(Q)$ of $Q$ in $\mathcal{ELI}(\Sigma)$.	
	
	Assume $Q=\{\q_{1},\ldots,\q_{n}\}$. Let $S$ denote the closure under single negation of the set of subqueries of queries in $Q$. A \emph{type} $t$ is a subset of $S$ with the following properties:
	\begin{itemize}
		\item if $C_{1}\wedge C_{2}\in S$, then $C_{1},C_{2}\in t$ iff $C_{1}\sqcap C_{2}\in t$;
		\item if $\neg C \in t$, then $C\in t$ iff $\neg C \not\in t$.
	\end{itemize}
    We say that types $t_{1},t_{2}$ are \emph{$P$-coherent} if $C\in t_{2}$ implies $\exists P.C\in t_{1}$ for all $\exists P.C\in S$ and $C\in t_{1}$ implies $\exists P^{-}.C\in t_{2}$ for all $\exists P^{-}.C\in S$. Let $T$ denote the set of all types. We next obtain from $T$ in exponential time the set of all satisfiable types using a standard type elimination procedure: eliminate from $T$, recursively, all $t$ such that there exists $\exists P.C\in t$ such that there does not exist any $t'\in T$ such that $t,t'$ are $P$-coherent and $t'$ contains $C$ or there exists $\exists P^{-}.C\in t$ such that there does not exist any $t'\in T$ such that $t',t$ are $P$-coherent and $t'$ contains $C$. Denote by $T'$ the resulting set of types. $T'$ contains exactly the satisfiable types. Define a $\Sigma$-data instance $\mathcal{A}$ as the set of all $A(t)$ with $t\in T'$ and $A\in t$
    and $P(t,t')$ with $t,t'\in T'$ and $t,t'$ $P$-coherent, where we regard the types in $T'$ as individual names. Note that 
\begin{itemize}
	\item $\mathcal{D}\models C(t)$ iff $C\in t$ for all $C\in S$ and $t\in T'$;
    \item for any $\Sigma$-data instance $\mathcal{B}$ the mapping $a\mapsto t_{\mathcal{B}}(a)$ defined by setting $t_{\mathcal{B}}(a)=\{ C \in S \mid \mathcal{B}\models C(a)\}$ is a $\Sigma$-homomorphism from $\mathcal{B}$ to $\mathcal{A}$. 
\end{itemize}
Now set $$
\mathcal{S}(Q) = \{ (\mathcal{A},t) \mid t\in T', Q \cap t = \emptyset\}.
$$  
We show that $\mathcal{S}(Q)$ is a split partner of $Q$ in $\mathcal{ELI}(\Sigma)$. Assume that $\q'$ is an $\mathcal{ELI}(\Sigma)$-query. If $\mathcal{A}\models \q'(t)$ for some $(\mathcal{A},t)\in \mathcal{S}(Q)$, then by (i) $\q'\not\models \q$ for any $\q\in \mathcal{Q}$. If $\q'\not\models \q$ for all $\q\in \mathcal{Q}$, then we find a pointed $\Sigma$-data instance $\mathcal{B},a$ with $\mathcal{B}\models \q'(a)$ and $\mathcal{B}\not\models \q(a)$ for all $\q\in Q$. By (ii) $\mathcal{A}\models \q'(t_{\mathcal{B}}(a))$ and by (i) $(\mathcal{A},t_{\mathcal{B}}(a))\in \mathcal{S}(Q)$.

\bigskip
	
	We finally show that even for singleton sets $Q$ of $\mathcal{ELI}(\Sigma)$-queries, in general no polynomial size split partner of $Q$ in 
	$\mathcal{ELI}(\Sigma)$ exists. Let
	$$
	\q= \exists r.\bigwedge_{i=1}^{n}\exists r^{-}.A_{i},
	$$
	$Q=\{\q\}$, and $\Sigma=\{r,A_{1},\ldots,A_{n}\}$. We show that any split partner of $Q$ in $\mathcal{ELI}(\Sigma)$ contains at least $2^{n}$ pointed data instances.
	
	Let for any $i$, $\bar{A}_{i}= A_{i} \wedge \cdots A_{i-1} \wedge A_{i+1} \wedge \cdots \wedge A_{n}$ and let for every $X\subseteq \{1,\ldots,n\}$: 
	$$
	\q_{X}= (\bigwedge_{i\in X}A_{i}) \wedge \bigwedge_{i\not\in X}\exists r.\exists r^{-}.\bar{A}_{i}
	$$
	Observe that $\q_{X}\not\models \q$, for all $X\subseteq \{1,\ldots,n\}$. However, there does not exist any pointed data instance $\mathcal{D},a$ such that 	$\mathcal{D}\not\models \q(a)$ and $\mathcal{D}\models \q_{X_{1}}(a)$ and $\mathcal{D}\models \q_{X_{2}}(a)$ for distinct $X_{1}$ and $X_{2}$.
	To see this consider $X_{1}\not=X_{2}$ and assume $\mathcal{D}\models \q_{X_{i}}(a)$ for $i=1,2$. Take $i\in (X_{1}\setminus X_{2}) \cup (X_{2}\setminus X_{1})$. Then 
	$\mathcal{D}\models A_{i}(a)$ and $\mathcal{D}\models \exists r.\exists r^{-}.\bar{A}_{i}(a)
	$
	entail $\mathcal{D}\models \q(a)$. \\ \mbox{}
\end{proof}

We show that the lower bound shown above entails the lower bound for Theorem~\ref{uno2}.
We use the query 
$$
\q= \exists r.\bigwedge_{i=1}^{n}\exists r^{-}.A_{i},
$$
from the proof above and show that $\nxt\q$ is not polynomially characterisable within $\mathcal{Q}_{p}^\Sigma[\U](\mathcal{ELI})$, where $\Sigma=\{r,A_{1},\ldots,A_{n}\}$.
Let the queries $\q_{X}$ be defined as above. We show that if $X_{1}\not=X_{2}$, then there does not exist a pointed temporal database $\mathcal{D},a$
such that $\mathcal{D},a,0\not\models \nxt\q$ but $\mathcal{D},a,0\models \q_{X_{1}}\U\q$ and
$\mathcal{D},a,0\models \q_{X_{2}}\U\q$. Then it follows that at least $2^{n}$ distinct 
negative example are required to characterise $\q$ within $\mathcal{Q}_{p}^\Sigma[\U](\mathcal{ELI})$. To prove our claim assume that  $\mathcal{D},a,0\models \q_{X_{1}}\U\q$ and
$\mathcal{D},a,0\models \q_{X_{2}}\U\q$ but that $\mathcal{D},a,0\not\models \nxt\q$.
Then $\mathcal{D},a,1\not\models \q$ but $\mathcal{D},a,1\models \q_{X_{1}}$ and $\mathcal{D},a,1\models \q_{X_{2}}$. But by the proof above no such data instance exists.

\section{Proofs for Section~\ref{Sec:learning}}
We provide further details of the proof of Theorem~\ref{thm:learning}.

\medskip
\noindent
{\bf Theorem~\ref{thm:learning}.}
{\em
	$(i)$ The class of safe queries in $\Qp[\nxt, \Diamondw](\mathcal{ELI})$
	is polynomial-time learnable with membership queries.
	
	$(ii)$ The class $\Qp[\nxt, \Diamondw](\mathcal{ELI})$ is polynomial-time learnable with membership queries if the learner knows the size of the target query in advance.
	
	$(iii)$ The class $\Qp[\nxt,\Diamond](\mathcal{ELI})$ is polynomially-time learnable with membership queries.
}

\medskip

We start by completing the proof for the 1D case. For $(i)$, it remains to consider
\textbf{step 4}. At that point of the computation, the algorithm has identified all
blocks of $\q$ but not the sequences of $\Diamond$ and $\Diamondw$ between
them.  Suppose that $\D = \qw_0\emptyset^b\dots\qw_i\emptyset^b\qw_{i+1}\dots\emptyset^b\qw_n$.
We construct $\q'$ from the blocks of $\D$ by selecting $\mathcal{R}_{i+1}$ as
follows: If $\D_{i} \models \q$ for $\D_{i} = \qw_{0} \emptyset^{b} \dots
\qw_{i} \! \Join \! \qw_{i+1} \dots \emptyset^{b} \qw_{n}$ then set
$\mathcal{R}_{i+1}$ to be $\leq$.  Otherwise, let $n_{i+1}$ be the smallest
number such that $\D_i\models \q$ for $\D_{i} = \qw_{0} \emptyset^{b} \dots
\qw_{i} \emptyset^{n_{i+1}} \qw_{i+1} \dots \emptyset^{b} \qw_{n}$. Set
$\mathcal{R}_{i+1}$ to be a sequence of $<$ of length $n_{i+1}$. It is now easy to see that
$\q'$ fits the example set $(E^{+},E^{-})$ and so $\q' \equiv \q$, as required.

\medskip

For $(ii)$ we have to replace \textbf{step 3} by a computation step that ensures that
after applying (f) one does not obtain a data instance $\D$ with $\D\models \q$. Recall
that for bound $n=|\q|$ and $\q_{i}= \rho(s)$ a lone conjunct in $\q$ with
$\rho=\{A_{1},\ldots,A_{k}\}$ the rule (f) is defined as follows:
\begin{description}
	\item[\rm (f)] replace $\rho$ with $(\rho\setminus \{A_{1}\}\emptyset^{b}\cdots \emptyset^{b} \rho\setminus \{A_{k}\})^{n}$.
\end{description}is satisfied.
Now, after having computed $\D$ in \textbf{step 2}, and a data instance $\D'$ is obtained from $\D$
by applying rule $(f)$ such that $\D'\models \q$, then replace $\D$ with $\D'$ and return to \textbf{step 2}. If no such $\D'$ exists, proceed to \textbf{step 4}. To bound the number of applications of rule $(f)$ notice that at the end of
\textbf{step 2}, the number of time points in $\D$ other than $\emptyset$ does not
exceed $|\q|$. Indeed, any time point in $\D$ not in the range of some
$h:\q\to\D$ would be eliminated by rule $(a)$. We obtain a polynomial bound on the number
of applications by observing that each application of rule $(f)$ removes a symbol from $\rho$.

\medskip

We now move to the 2D case. The proof extends the argument given above for the 1D case and uses the example set $E=(E^{+},E^{-})$ defined in the proof of Theorem~\ref{thm:nextdiamond2}. Intuitively, whenever in the argument above we replace $\rho$ by $\rho\setminus \{A\}$ for a set $\rho$ of atoms, we now replace $\hat{\r}$ by an element of the frontier $\mathcal{F}(\hat{\r})$. There is only one difficulty: in the 1D case the algorithm starts with data instances $\sigma\cdots \sigma$ with $\sigma$ the signature of the target query, whereas now we have to start with data instances  $\mathcal{A}_{\Sigma}\cdots\mathcal{A}_{\Sigma}$ with $\Sigma$ the signature of $\q$ and $\mathcal{A}_{\Sigma}=\{R(a,a),A(a)\mid R,A\in \Sigma\}$. The atemporal data instance $\mathcal{A}_{\Sigma}$, however, is not tree-shaped, and we have not yet discussed frontiers for data instances that are not tree-shaped.
Indeed, in \cite{DBLP:conf/icdt/CateD21}, frontiers are not only computed for tree-shaped data instances but for a generalisation called \emph{c-acyclic} data instances with cycles through the distinguished node. We could at this point introduce the relevant machinery from \cite{DBLP:conf/icdt/CateD21} and work with frontiers for c-acyclic data instances.
Instead, we show that one can use the machinery we have introduced already and work with frontiers of tree-shaped data instances. But we require a straightforward intermediate step that transfers
the data instance $\mathcal{A}_{\Sigma}$ into a tree-shaped data instance $\mathcal{D}$ with $\mathcal{D}\models \q$ using membership queries.

\newcommand*{\rw}{\ensuremath{\hat{r}}}

\medskip

We adjust \textbf{step 1} of the learning algorithm from the 1D case as follows.
Assume $\q$ is the target query and let $\Sigma=\sig(\q)$. We aim to identify an initial temporal data instance $\D_0 = \A_0,\dots,\A_n$ with a
designated individual $a$ such that
\begin{itemize}
	\item $\D_0, a, 0\models\q$,
	\item if $\D'$ is obtained from $\D_0$ by removing an atom then
	$\D', a, 0\not\models\q$, and
	\item all $\A_i$ are tree-shaped with distinguished node $a$.
\end{itemize}
By asking incrementally membership queries of the form `$\A_\Sigma^k,a,0\models \q$?', we
identify the number of $\nxt$ and $\Diamond$ in $\q$.
Let $b = \min\{k\mid(\A_\Sigma)^k,a,0\models \q\}+1$.

Let $\D_0 = \A_0,\dots, \A_n$, where $n=b-2$ and, initially, $A_i = \A_\Sigma$
for $i=1,\dots,n$. Before progressing, we make $\A_i$ tree-shaped by applying
the following unwind and minimise operations:
\begin{description}
	\item[unwind] Suppose that $\A_i$ contains an atom $S(c,c)$.
	Then introduce fresh individuals $c_{R},c_{R^{-}}$ for every binary predicate $R$ with $R(c,c)\in \A_{i}$, remove all $R(c,c)$ from $\A_{i}$, and add instead $R(c,c_{R})$, $R(c_{R^{-}},c)$, and $A(c_{R})$, $A(c_{R^{-}})$
	$R'(c_{R},c_{R})$, and $R'(c_{R^{-}},c_{R^{-}})$ for all $A,R'\in \Sigma$.
	Let $\A'_{i}$ denote the resulting data instance and set
    $\D' = \A_0,\dots,\A_{i-1}, \A'_i, \A_{i+1},\dots, \A_n$.
	\item[minimise]
	Remove exhaustively atoms from $\A_i'$ as long as $\D',a,0\models\q$.
\end{description}
Observe that if $\D\models \q$ and $\D'$ is obtained from $\D$ by an unwinding step, then $\D'\models \q$. After the minimise step, the size of $\D'$ does not exceed the size of $\q$. Therefore we can replace $\D_0$ with $\D'$. By applying the unwind-minimise steps exhaustively, we eventually eliminate all loops from $\A_i$. It remains to notice that the minimise step can be implemented by querying the membership oracle.

The remaining steps of the learning algorithm remain the same as before by replacing the removal of an atom by taking an element of the frontier of a tree-shaped data instance.
For $(iii)$, notice that in part $(ii)$ the learner actually only needs to know
the temporal depth of the goal query, not the overall size. Hence $(iii)$ also
reduces to $(ii)$.

\fi

\end{document}